\let\latextextsuperscript\textsuperscript
\definecolor{color1}{RGB}{230,57,70}
\definecolor{color2}{RGB}{29,53,87}
\definecolor{color3}{RGB}{69,123,157}
\author[1]{James D. Watson}
\author[2]{Johannes Bausch}
\affil[1]{\small Department of Computer Science, University College London, UK}
\affil[2]{\small Department of Applied Mathematics and Theoretical Physics, University of Cambridge, UK}
\newcommand{\mysymbol}[1]{\mbox{\raisebox{-0.3em}{\includegraphics[height=14pt]{images/#1}}}}
\newcommand{\midend}{\mysymbol{middle.pdf}}
\Crefname{lemma}{Lemma}{Lemmas}
\Crefname{proposition}{Proposition}{Propositions}
\Crefname{definition}{Definition}{Definitions}
\Crefname{theorem}{Theorem}{Theorems}
\Crefname{conjecture}{Conjecture}{Conjectures}
\Crefname{corollary}{Corollary}{Corollaries}
\Crefname{example}{Example}{Examples}
\Crefname{section}{Section}{Sections}
\Crefname{appendix}{Appendix}{Appendices}
\Crefname{figure}{Fig.}{Figs.}
\Crefname{equation}{Eq.}{Eqs.}
\Crefname{table}{Table}{Tables}
\Crefname{item}{Property}{Properties}
\Crefname{remark}{Remark}{Remarks}
\newtheorem{theorem}{Theorem}
\newtheorem{definition}[theorem]{Definition}
\newtheorem{corollary}[theorem]{Corollary}
\newtheorem{lemma}[theorem]{Lemma}
\newtheorem{remark}[theorem]{Remark}
\newcommand\prob\textsc
\newcommand{\probleminput}[1]{\gdef\@probleminput{#1}}
\newcommand{\problemquestion}[1]{\gdef\@problemquestion{#1}}
\newcommand{\problempromise}[1]{\gdef\@problempromise{#1}}
  \par\addvspace{0\baselineskip}
\DeclareMathOperator{\poly}{poly}
\DeclareMathOperator{\C}{\field C}
\DeclareMathOperator{\ox}{\otimes}
\DeclareMathOperator{\HS}{\mathcal{H}}
\DeclareMathOperator{\Tr}{Tr}
\DeclareMathOperator{\BigO}{O}
\DeclareMathOperator{\Span}{span}
\newcommand{\HTM}{H_\mathrm{QTM}}
\newcommand{\HM}{H^{(\boxplus,f)}}
\newcommand\1{\mathds{1}}
\newcommand\field\mathds
\newcommand\YES{{\normalfont{\textsc{Yes}}}\xspace}
\newcommand\NO{{\normalfont{\textsc{No}}}\xspace}
\newcommand\alphal{\alpha'_\ell}
\newcommand\ii{\mathrm i}
\newcommand\ee{\mathrm e}
\DeclareMathOperator{\spec}{spec}
\DeclareMathOperator{\diag}{diag}
\DeclareMathOperator{\argmax}{argmax}
\DeclareMathOperator{\enc}{enc}
\newcommand\PQMAEXP{{\normalfont\textsf{P\textsuperscript{QMA\textsubscript{EXP}}}}\xspace}
\newcommand\QMAEXP{{\normalfont\textsf{QMA\textsubscript{EXP}}}\xspace}
\newcommand\QMA{{\normalfont\textsf{QMA}}\xspace}
\newcommand\PQMALOG{{\normalfont\textsf{P\textsuperscript{QMA[log]}}}\xspace}
\newcommand{\Sbr}{\mathcal S_\mathrm{br}}
\DeclareMathOperator{\lmin}{\lambda_\mathrm{min}}
\newcommand\paramath[1]{\ensuremath{\boldsymbol{\mathbf{#1}}}}
\newcommand\checked[1]{}
\newcommand\todo[1]{}
\title{The Complexity of Approximating Critical Points of Quantum Phase Transitions}
\date{}
\begin{document}
\maketitle

\begin{abstract}
    Phase diagrams chart material properties with respect to one or more external or internal parameters such as pressure or magnetisation; as such, they play a fundamental role in many theoretical and applied fields of science.
    In this work, we prove that provided the phase of the Hamiltonian at a finite size reflects the phase in the thermodynamic limit, approximating the critical boundary in its phase diagram to constant precision is \PQMAEXP-complete.
    This holds even for translationally-invariant nearest neighbour couplings, and even if the system's phase diagram is promised to have a single critical boundary delineating two phases.
    For the simpler case of a single parameter, the same problem remains \QMAEXP-hard.
    
    Our results extend the study of quantum phases to systems with more realistic phase diagrams than previously studied.
    Furthermore, our findings place complexity-theoretic constraints on the effectiveness of (computational or analytic) methods based on finite size criteria, similar in spirit to the Knabe bound, for the task of extrapolating the properties (e.g.~gapped/gapless) of a system from finite-size observations to the thermodynamic limit.
\end{abstract}

\begin{abstract}
\end{abstract}


\clearpage
\enlargethispage{2cm}
\tableofcontents

\section{Introduction and Motivation}\label{sec:intro}
Phase transitions describe a change in the state of matter, such as the transition of liquid water to ice, or vapour.
While the most commonly known such transitions occur at a variety of temperatures, quantum phase transitions describe changes of matter at zero temperature, by varying a physical parameter such as a magnetic field.
Such quantum phase transitions are characterised by an abrupt change in the properties of the ground state of a many-body system in the limit of infinite system size.

Many important physical phenomena are delineated by quantum phase transitions, such as superconducting and insulating phases, or the various regimes of the quantum Hall effect\cite{Gantmakher_2010, Wen_Wu_1993} .
The latter features an intricate quantum phase diagram, a Hofstadter butterfly, which maps the various integer Hall conductances as a fractal pattern \cite{Hofstadter_1976}.

While quantum phase transitions are among the most important and intriguing phenomena studied in modern physics, they are also one of the most complex ones to address, computationally and theoretically, as the previous example highlights.
Indeed a huge amount of effort has gone into designing algorithms and analogue simulations to examine quantum phase transitions, all in order to determine fixed points and universality classes \cite{Keesling_2019,Sepehr_2020}.

How can one determine when a phase transition is about to occur?
A necessary condition is the closing of a spectral gap, that is the energy difference between the minimum energy state and the first excited state.
The question of determining the spectral gap is made even more important by the fact it characterises many other properties of Hamiltonians, beyond playing the role of an indicator for phase transitions \cite{Ambainis2013, Deshpande_Gorshkov_Fefferman_2020}.
A continuous spectrum above the ground state is associated with critical phenomena and power-law decaying correlation functions.
In contrast, a constant spectral gap implies exponentially decaying correlation functions \cite{Hastings_Koma_2006}.
And, in the case of one-dimensional systems, that the ground state can be well approximated by Matrix Product States, and that the ground states obey an entanglement entropy area law \cite{Hastings_2007}.

A key tool in addressing the question of whether a system is gapped in the infinite-size limit that is the so-called ``Knabe bound'' \cite{Knabe_1988}.
Loosely speaking, the Knabe bound is a ``finite-size criterion'' saying that, given a frustration-free Hamiltonian, if the spectral gap of the Hamiltonian decays slowly enough with the system size, then it is necessarily gapped in the thermodynamic limit.
Over time, multiple improvements and variations of the Knabe bound have been derived \cite{Bravyi_Gosset_2015, Gosset_Mozgunov_2016,Lemm_Mozgunov_2019,  Lemm_2019, Anshu_2020}.
Indeed, the Knabe bound has been used extensively to determine phases and spectral gaps of frustration-free systems, including variants of the AKLT model \cite{Abdul-Rahman_et_al_2020,  Lemm_Sandvik_Wang_2020}, to characterise the phases of translationally invariant Hamiltonians on 1D chains of qubits \cite{Bravyi_Gosset_2015, Gosset_Mozgunov_2016}, and characterising gaps for Product Vacua with Boundary States (PVBS) models \cite{Lemm_Nachtergaele_2019}.

Finite-size criteria similar to this are often (implicitly) assumed in condensed matter physics when performing numerical studies: the idea that the phase (or gap) at finite lattice sizes allows us to extrapolate to the phase (or gap) in the thermodynamic limit \cite{Fath_Solyom_1993,Totsuka_Nishiyama_Hatano_Suzuki_1995, Yamamoto_1997, Garcia-Saez_Murg_Wei_2013}. 

Beyond the Knabe method, there exist other techniques for determining whether spectral gaps close or remain open, including the commonly-used Martingale Method \cite{Nachtergaele_1996} and variants thereof \cite{Spitzer_Starr_2003, Kastoryano_Lucia_2018}.
The Martingale Method says that given an absorbing sequence of increasingly large sections of a lattice which tend towards the full lattice, there are three extra conditions placed on the local terms of the Hamiltonian which must be uniformly satisfied along the sequence. 
If so, there is a lower bound on the spectral gap.
Furthermore, several numerical algorithms have been developed to  compute spectral gaps, including variational algorithms \cite{Higgott_Wang_Brierley_2018, Tyson_2019} and using density matrix renormalisation group techniques \cite{Chepiga_Mila_2017}

On the other hand, \citeauthor{Cubitt_Perez-Garcia_Wolf2015} have
shown that the general problem of determining whether a local Hamiltonian is gapped or gapless in the thermodynamic limit is undecidable \cite{Cubitt_Perez-Garcia_Wolf2015},
a result later extended to translationally invariant, nearest neighbour interactions on a 1D spin chain \cite{Bausch_2020_Undecidability}.
Further results show that for a one-parameter Hamiltonian in 2D, determining the phase diagram is uncomputable \cite{Bausch_Cubitt_Watson2019}, and that renormalisation group methods provably fail to resolve these systems \cite{Watson_Onorati_Cubitt_2021}.
Nonetheless, these undecidability and uncomputablity results only imply that it is impossible to study the phase diagrams or spectral gap of models in full generality; no claims are made regarding more restricted sub-types, which may well be solvable.
By considering restricted families of Hamiltonians (e.g.\ frustration-free Hamiltonians as above), we can still hope to determine useful properties about these families.

\paragraph{Our Contribution.}
In this work, we focus our attention on either one-parameter or two-parameter continuous family of Hamiltonian which satisfy a computable ``finite-size criterion'' similar in spirit to the Knabe bound, which gives an avenue to extrapolate to the infinite-size limit, given conditions on how the local gap scales.
Furthermore, the family of many-body systems we include are promised to have a \emph{single} critical boundary; either a single critical point for the one-parameter case or a critical line in the two-parameter setting.
The two phases we delineate are a gapped and gapless phase, but can, in principle, be any other type of phase of interest (e.g.~topological).

The finite size criterion assumes that if the gap decays sufficiently slowly (resp.~quickly) for increasing lattice sizes, then the overall Hamiltonian must be gapped (resp.~gapless) in the thermodynamic limit.
This finite-size criterion immediately implies that the system cannot have an uncomputable phase in the thermodynamic limit, as solving for its phase on some finite-sized region is always a computationally bounded task (if intractable, and where we exclude pathological cases of uncomputable matrix entries or threshold sizes).

As a first contribution, we show that these finite size conditions place an upper bound on the computational complexity of approximating critical lines, by placing them within the class \PQMAEXP---i.e.~solvable by a polynomial-time Turing Machine with a \QMAEXP oracle---and which is deemed ``slightly harder'' than just \QMAEXP \cite{Ambainis2013}. 
The complexity class \QMAEXP is defined as \QMA, but with an exponential-time verifier.\footnote{The EXP subscript is a technicality, as we are interested in translationally-invariant problems, whose input is given by specifying a local term only. For an in-depth discussion see \cite[Sec.~3.4]{Bausch2016}.}

\begin{figure}[!tb]
    \centering\vspace{-1.5cm}
    \hspace*{-3.5cm}
    \begin{minipage}{18cm}
    \includegraphics[height=7cm,trim=1cm 0cm 0cm 0cm, clip]{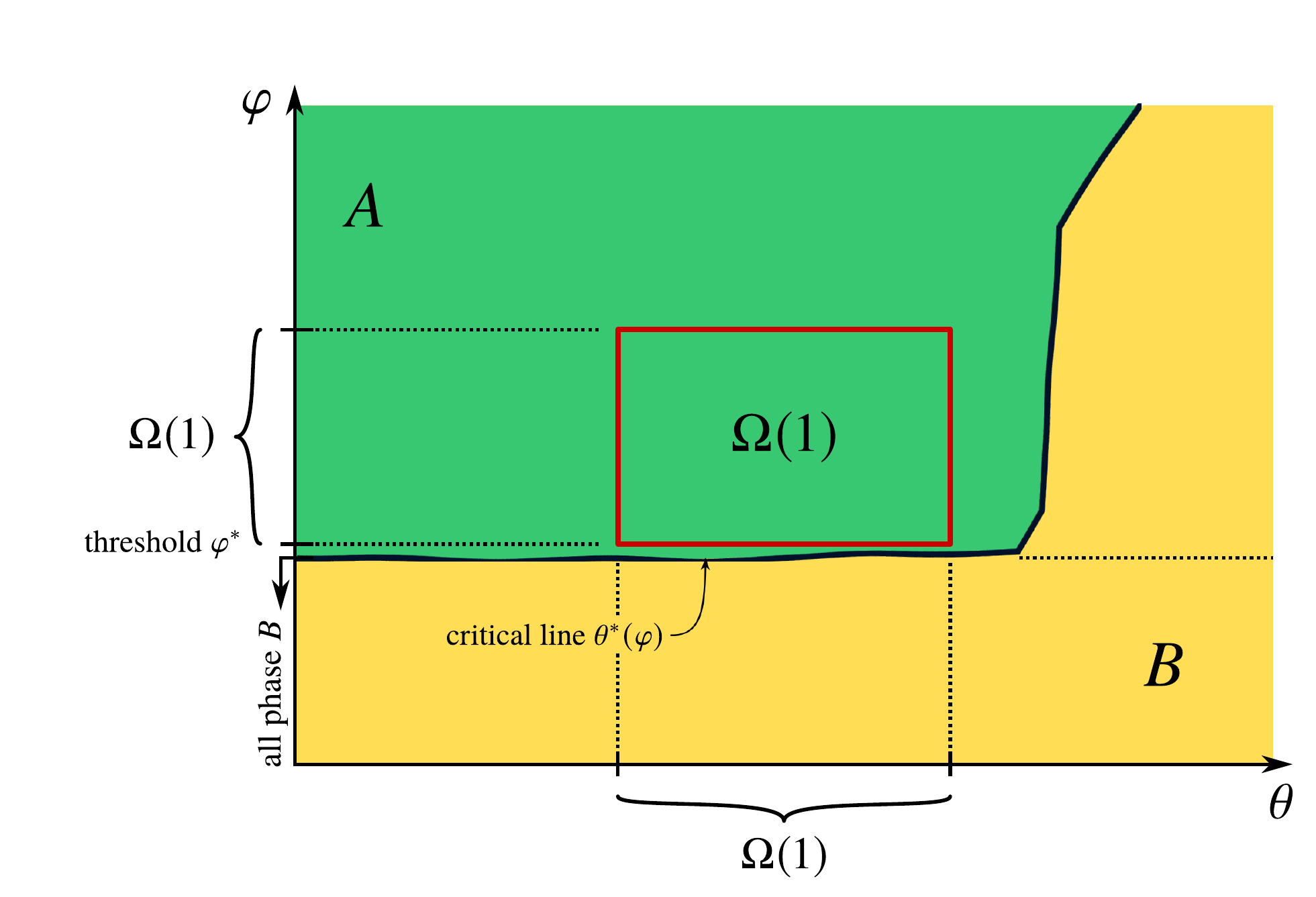}
    \hspace*{-0.5cm}
    \includegraphics[height=7cm,trim=1cm 0cm 0cm 0cm, clip]{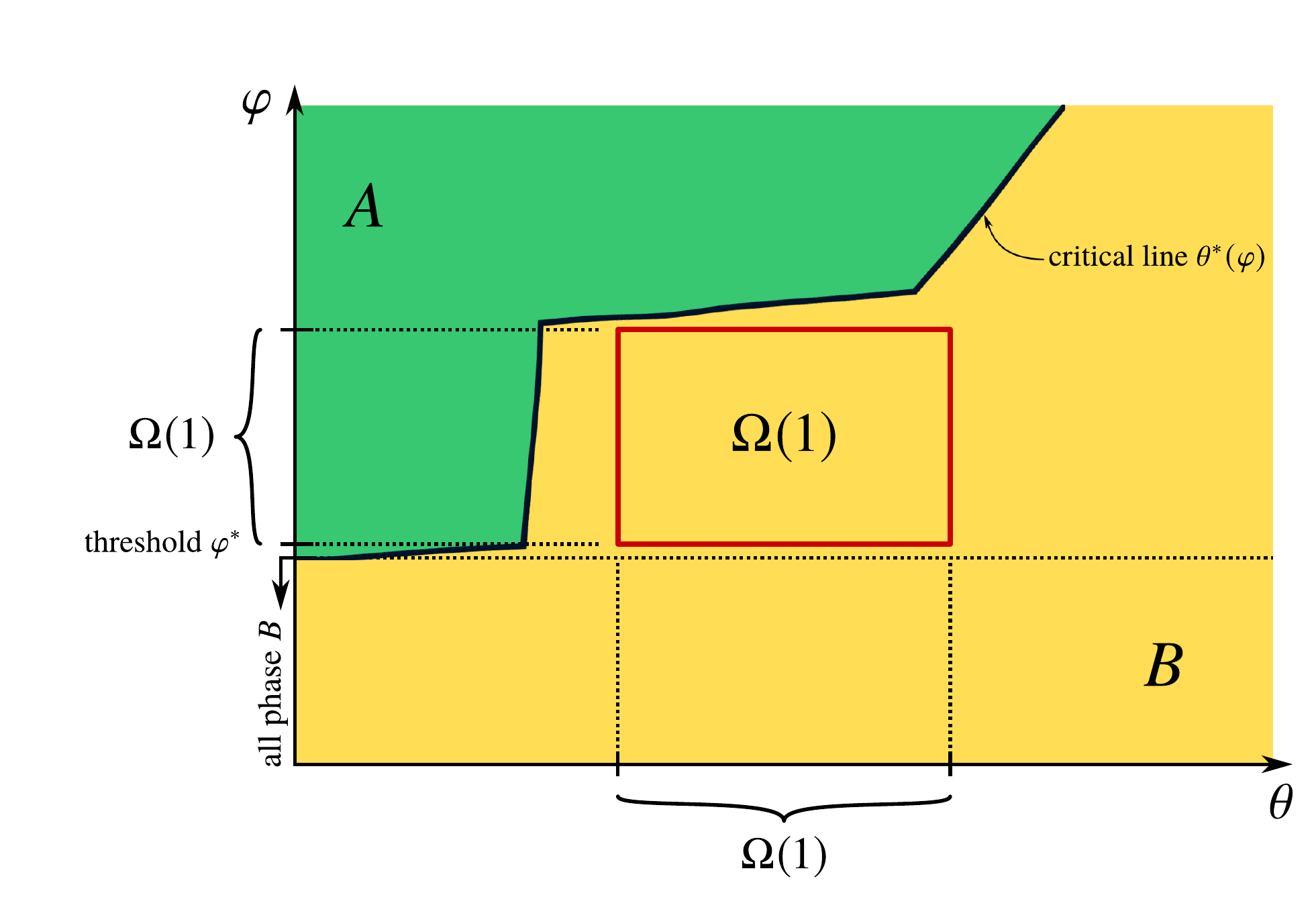}
    \end{minipage}
    \caption{The two possible phase diagrams for the family of 2-parameter Hamiltonians we construct.
    The critical line $\varphi^*(\theta)$ is continuous, and there is a $\Omega(1)$-sized area which in the first case is guaranteed to be completely in phase $A$ (e.g.\ a gapped phase), and in the second case completely in phase $B$ (e.g.\ a gapless phase), for parameters $\theta \times \varphi \in [0, 1] \times [0, \poly N]$.
    We prove that determining which of the two cases holds is a \PQMAEXP-complete problem.}
    \label{fig:phasediags-intro}
\end{figure}

For these restricted sets of one or two-parameter Hamiltonians, we show that the problem of determining the critical boundary in parameter space to even constant precision is \QMAEXP (for the one-parameter case) resp.~\PQMAEXP-hard (for the two-parameter case).

Loosely speaking, we prove the two following theorems (the rigorous versions are given in \cref{sec:main-results}).
\begin{theorem}[Informal]\label{th:main-1-intro}
\checked{James}
Let $\{ H_N \}$ be a family of local translationally invariant Hamiltonians on an infinite lattice, indexed by $N\in \field N$, and such that $H_N=H_N(\varphi)$ depends on an external parameter $\varphi \in [0, 1]$.
Suppose the following promises hold for Hamiltonians in this family: i.~For all $\varphi$, the phase in the thermodynamic limit can be extrapolated from the order parameter obtained from a finite lattice size; and ii.~There exists precisely one critical parameter $\varphi^*$ such that for $\varphi<\varphi^*$, the system is in phase A; otherwise in phase $B$.
Then the problem of determining $\varphi^*$ to even constant precision is \QMAEXP-hard;
and determining $\varphi^*$ up to polynomial precision (in $N$) is contained in \PQMAEXP.
\end{theorem}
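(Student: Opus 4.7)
The plan is to prove the upper bound and the hardness separately, each by a natural strategy that exploits promises (i) and (ii) respectively.

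For the containment in \PQMAEXP, I would combine a binary search over $\varphi$ with an oracle that decides, for any fixed $\varphi_0 \in [0,1]$, which of the two phases the system is in. By promise (i), the infinite-volume phase of $H_N(\varphi_0)$ is determined by a computable order parameter evaluated on a finite system of size $n(N)$ that is itself computable from $N$. The corresponding decision problem---``is the ground state of $H_{n(N)}(\varphi_0)$ in phase $A$?''---can be cast as a \QMAEXP language: the prover supplies the (exponentially large, since $N$ is given in binary) candidate ground state, and the verifier computes the order parameter and accepts if it falls in the range consistent with phase $A$. Because promise (ii) guarantees a single monotone transition from $A$ to $B$ at $\varphi^{*}$, a binary search with $O(\log \poly(N))=O(\log N)$ queries to this oracle pins $\varphi^{*}$ down to $1/\poly(N)$ precision, placing the problem in \PQMAEXP.

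For the \QMAEXP-hardness at constant precision, the plan is to reduce from the translationally invariant Local Hamiltonian problem, which is \QMAEXP-complete by Gottesman--Irani. Given a local term $h$ encoding such an instance with an $\Omega(1/\poly(N))$ promise gap, I would construct a one-parameter family $H_N(\varphi)$ whose phase behaviour at every size $N$ reflects the answer to the original instance. A natural route is to start from a translationally invariant Hamiltonian whose low-energy subspace encodes the QMA verification circuit (via the history-state / clock machinery used in the undecidability constructions of \textcite{Cubitt_Perez-Garcia_Wolf2015,Bausch_2020_Undecidability}, stripped down to a single computational branch), and then add a $\varphi$-dependent ``driver'' term so that the relative energy between a gapped and a gapless reference Hamiltonian shifts linearly with $\varphi$. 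The critical $\varphi^{*}$ at which these two reference energies cross then lies at $\varphi^{*}_{\YES}$ in the \YES case and at $\varphi^{*}_{\NO}$ in the \NO case, separated by $\Omega(1)$; any constant-precision estimate of $\varphi^{*}$ distinguishes the two cases and hence solves the original \QMAEXP-hard problem.

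The main obstacle will be simultaneously satisfying both promises in the constructed family: it must have \emph{exactly one} critical $\varphi^{*}$ (ruling out spurious intermediate phases that a naive driver term might introduce), and the finite-size extrapolation (i) must hold at every $\varphi$ with a computable witness size $n(N)$. This requires carefully marrying the history-state construction with a monotone, translationally invariant, nearest-neighbour $\varphi$-dependence---e.g.\ engineering the couplings so that raising $\varphi$ uniformly tilts one reference phase against the other---and verifying via a Knabe-type finite-size bound that the chosen order parameter on the finite window correctly predicts the thermodynamic phase on both sides of $\varphi^{*}$. I expect the bulk of the technical work to lie in this Hamiltonian engineering, rather than in the complexity-theoretic reduction itself, which once the construction is in place is essentially immediate.
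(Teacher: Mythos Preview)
Your containment argument has the right shape and matches the paper's route (binary search over $\varphi$, oracle-query the phase at each point). One caveat: the single query ``is the system at $\varphi_0$ in phase $A$?'' is not obviously a \QMAEXP language as you claim, because a dishonest prover can supply an excited state with the wrong order-parameter value, and the verifier cannot check proximity to the ground space without already knowing $\lmin$ to $1/\poly$ precision. The paper instead treats each such query as an instance of $\forall$-TI-APX-SIM (measuring a local observable on all low-energy states), which is \PQMAEXP-complete; since you allow polynomially many oracle calls anyway and $\mathrm P^{\PQMAEXP}=\PQMAEXP$, the final containment is unaffected, but the intermediate claim should be corrected.

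The hardness sketch has a genuine gap. A linear $\varphi$-driver that tilts one reference phase against another places the crossing at $\varphi^*\approx\lmin(G_N)$ (plus a computable offset), so $|\varphi^*_{\YES}-\varphi^*_{\NO}|$ inherits the $1/\poly(N)$ promise gap of the Gottesman--Irani Hamiltonian, not $\Omega(1)$. Amplifying by scaling the driver by $\poly(N)$ would blow up the local interaction norm, which the problem forbids. The paper's key idea is \emph{not} to insert $\varphi$ as an energy shift but to encode it as the eigenphase of a one-qubit gate $U_\varphi$, and have the history-state QTM run a \emph{phase comparator}: QPE on $\exp(\ii\pi G_N)\,U_\varphi^\dagger$ against the unconstrained witness, writing the sign of $\lambda-\varphi$ onto a flag qubit that is then penalised. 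Because $\varphi$ enters through a phase rather than an energy, one can rescale $\varphi\mapsto\varphi/\poly(N)$ \emph{inside} the QTM (implementing fractional powers of $U_\varphi$) without touching any coupling norms, which is what blows the separation up to $\Omega(1)$. The single-critical-point promise then requires a separate monotonicity analysis of the QPE acceptance amplitude in $\varphi$ (and its Solovay--Kitaev approximation), and the $\pm$ energy-density dichotomy on the infinite lattice is obtained by balancing against a tightly-tuned negative-energy Marker/checkerboard Hamiltonian rather than a bare linear reference---both pieces go substantially beyond ``add a driver''. You correctly anticipated that this engineering is where the work lies, but the mechanism you propose would not reach constant precision.
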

\noindent
And the two-parameter case reads as follows.
\begin{theorem}[Informal]\label{th:main-2-intro}
\checked{James}
Let the setup be as in \cref{th:main-1-intro}, but such that now $H_N = H_N(\varphi, \theta)$ depends on two parameters $\varphi \in [0, \poly N]$ and $\theta \in [0, 1]$.
Suppose the following promises hold for all Hamiltonians in this family: i.~For all $\varphi$ and $\theta$, the phase in the thermodynamic limit can be extrapolated from the order parameter obtained from a finite lattice size; and ii.~There exists precisely one critical boundary $\varphi^*(\theta)$ such that on one side the system is in phase A; on the other side in phase B.
Then determining the critical boundary $\varphi^*(\theta)$ to constant precision is \PQMAEXP-complete.
\end{theorem}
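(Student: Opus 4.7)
The plan is to prove both containment in \PQMAEXP and \PQMAEXP-hardness for the critical-boundary approximation problem. For containment, I would exploit the finite-size criterion promise: for any fixed $(\varphi,\theta)$, the thermodynamic-limit phase is determined by an order-parameter (or gap) computation on a lattice of computably bounded size. Such a finite-size phase determination is a \QMAEXP problem, by the translationally invariant Hamiltonian complexity results discussed in the introduction (Gottesman-Irani-style constructions lifted to \QMAEXP as in \cite{Bausch_2020_Undecidability}). A polynomial-time Turing machine then performs binary search on $\varphi$ at the queried $\theta$, at each step asking the \QMAEXP oracle to certify which phase holds; since the range of $\varphi$ is $\poly N$, constant (and indeed inverse-polynomial) precision is achievable with polynomially many queries, placing the problem in \PQMAEXP.

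For hardness, I would reduce from a canonical \PQMAEXP-complete problem, namely approximating the ground-state energy of a translationally invariant local Hamiltonian to polynomial additive precision, i.e.\ the \QMAEXP-scaled analogue of the construction of \cite{Ambainis2013}. The two-parameter family $H_N(\varphi,\theta)$ would be assembled from three ingredients. A \emph{base} translationally invariant Hamiltonian encodes the given \PQMAEXP instance via standard history-state and quantum Turing machine machinery. The parameter $\varphi$ enters as an on-site bias that tunes the relative energy between a ``gapped'' witness-carrying sector and a ``gapless'' trivial sector, playing a role analogous to the dense-spectrum constructions of \cite{Cubitt_Perez-Garcia_Wolf2015,Bausch_2020_Undecidability} but adapted so that each finite-size instance already exhibits the thermodynamic-limit phase. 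The parameter $\theta$ parametrises the polynomially many \QMAEXP oracle queries that a \PQMAEXP computation would issue; the critical curve $\varphi^*(\theta)$ then traces a continuous function of $\theta$ whose value at a distinguished $\theta_0$ encodes the combined oracle answers, and therefore the final \PQMAEXP output bit, to constant separation.

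The main obstacles are threefold. First, maintaining the single-critical-boundary promise across the full two-parameter window while encoding rich oracle-query structure will require a carefully monotone ``dial'' in $\varphi$, in the spirit of \cite{Bausch_Cubitt_Watson2019}, so that for each $\theta$ there is exactly one transition in $\varphi$ and the curve $\varphi^*(\theta)$ is continuous. Second---and I expect this to be the dominant technical hurdle---the finite-size criterion rules out the usual dense-spectrum and uncomputability tricks that detach thermodynamic phase from finite-size behaviour; hardness must therefore stem from the computational cost of \emph{finding} the witness at a fixed finite lattice size, rather than from any non-extrapolable asymptotic effect. Third, the post-processing of the oracle answers into a final decision bit must be realised translationally invariantly and locally, with $\varphi^*(\theta)$ varying continuously in $\theta$ and satisfying the single-boundary promise at every intermediate value. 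Resolving the second obstacle will likely force the reduction to place an explicit QTM computation inside the history-state construction whose output energy directly shifts $\varphi^*(\theta_0)$ by a constant amount depending on the \PQMAEXP answer, after which the upper and lower bounds combine to give the claimed \PQMAEXP-completeness.
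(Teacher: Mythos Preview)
Your containment sketch is essentially the paper's: the local-global promise reduces phase determination at any fixed $(\varphi,\theta)$ to a finite-size order-parameter (or gap) computation answerable by a \QMAEXP oracle, and binary search over the $\poly N$ parameter range yields \PQMAEXP containment.

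The hardness argument, however, has a genuine gap. You describe the source problem as ``approximating the ground-state energy to polynomial precision'', which is only \QMAEXP-hard; your citation of \cite{Ambainis2013} suggests you may intend APX-SIM, but the description does not match. More seriously, your plan to let $\theta$ ``parametrise the polynomially many \QMAEXP oracle queries'' conflicts with the single-critical-boundary promise. Different oracle queries have independent \YES/\NO answers, so there is no reason $\varphi^*(\theta)$ should vary continuously---let alone monotonically---as $\theta$ sweeps through distinct queries, and no mechanism is offered for combining these answers into a single decision bit at a distinguished $\theta_0$ while guaranteeing exactly one transition in $\varphi$ at \emph{every} intermediate $\theta$. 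This is not a technicality: it is precisely the obstruction that forces the two-parameter problem to differ from a trivial product of one-parameter problems.

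The paper sidesteps this by reducing from $\forall$-TI-APX-SIM (\cref{Def:forall_APX-SIM}, \cref{Lemma:APX-SIM_Complexity}), which is \PQMAEXP-complete and already carries \emph{two} natural scalar thresholds: the low-energy cutoff $\delta$ and the observable bounds $a,b$. The parameter $\varphi$ is tied to the energy threshold via a phase-comparator QPE against $\exp(\ii\pi K_N)$ (for $K_N$ the APX-SIM Hamiltonian of \cite{Watson_Bausch_Gharibian_2020}), and $\theta$ enters as a scalar shift in the output penalty $\ketbra{11}_f\otimes(B-\theta\1)$ on the observable register. The \YES/\NO answer of APX-SIM then places $\theta^*(\varphi)$ in one of two $\Omega(1)$-separated intervals for all $\varphi$ in a window above $\lmin(K_N)$, and monotonicity of the penalty in $\theta$ (\cref{Lemma:eta_2_Decreasing}) yields uniqueness of the boundary for free. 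No serialisation of oracle queries along $\theta$ is needed; the \PQMAEXP-hardness is inherited wholesale from APX-SIM, and the reference point $y$ in \cref{def:2-crt-prm} (below which the system is entirely in phase $B$) corresponds to $\lmin(K_N)$, whose determination is itself what consumes the polynomially many oracle calls.
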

We emphasise that in \cref{th:main-2-intro}, the precision to which $\varphi*(\theta)$ is to be resolved in the $\varphi$-direction is constant, but over a $\poly N$ parameter range, in contrast to the $\theta$-direction, and the \cref{th:main-1-intro} case.
The two phase diagram cases that are hard to differentiate between are shown in \cref{Fig:1Param_Phase_Diagram} for the 1-parameter case and \cref{fig:phasediags-intro} for the 2-parameter case.

This means that even for systems satisfying the finite size criterion and a promise of a single phase transition within the parameter range, there is unlikely to be an efficient analytic method or efficient computational algorithm to determine gapped/gaplessness---or, more generally, phase diagrams (unless \QMAEXP is efficiently solvable).
In other words, even if we restrict to the set of Hamiltonians which are known to be gapped/gapless in the thermodynamic limit based on some finite-size scaling criteria, and even if we know that the regions within which the system is gapped or gapless are delineated in a smooth manner, determining the critical boundary between these regions is generally computationally intractable.
Alternatively, any efficiently-computable condition cannot fully characterise the set of all many-body phase diagrams, even if the phase diagrams are simple, and even if the many-body systems satisfy finite-size scaling criteria.

Furthermore, our results imply that for nearest-neighbour, translationally invariant Hamiltonians which have exactly two phases and a single critical boundary, determining the phase diagram to $\BigO(1)$ precision is computationally difficult.
In particular, we show that there exists a Hamiltonian with two parameters $(\varphi,\theta)$ with a phase diagram which looks like one of the cases illustrated in \cref{fig:phasediags-intro}; however determining which one of the two is \PQMAEXP-complete.
We contrast this with the undecidability/uncomputability results which require an infinite number of phases, and which only satisfy a local-global promise for an uncomputably large system.

\paragraph{Proof Idea.}
At a high level, the idea of the paper is to translate known \QMAEXP or \PQMAEXP-complete problems to the spectrum and phase of a Hamiltonian, in such a way that the spectrum becomes dense---or stays gapped---if the reduced problem was a \YES or \NO instance, respectively.

Two natural problems to consider are the translationally-invariant version of the local Hamiltonian problem, proven to be \QMAEXP complete in \cite{Gottesman_Irani_2009}; and the translationally-invariant version of the approximate simulation (APX-SIM) problem, proven to be \PQMAEXP-complete in \cite{Watson_Bausch_Gharibian_2020}.
Conveniently, the local Hamiltonian problem has a single parameter, namely the ground state energy; whereas APX-SIM features two parameters: the ground state energy, \emph{and} a threshold for a local observable.


In the one-parameter case, we construct a quantum Turing machine (QTM) which takes as input a problem instance $N$, a parameter $\varphi$, and an unconstrained state $\ket{\nu}$.
Let $G_N$ be a Hamiltonian for which approximating the ground state is \QMAEXP-complete: the QTM then runs a phase comparison between $\exp(\ii t G_N)$ with eigenvalues $\lambda_j(G_N)$, and $U_\varphi$ which encodes a phase $\varphi$, enabling an approximate comparison $\varphi \lessgtr \lambda_j(G_N)$ to be performed.
If this QTM is translated into a history state Hamiltonian, and the case $\varphi < \lambda_j(G_N)$ is penalised, we have a system that has low energy if and only if $\varphi < \lambda_j(G_N)$.
As $\varphi$ is lowered, this the unconstrained input state $\ket{\nu}$ ultimately self-adjusts so that the comparison becomes $\varphi < \lmin(G_N)$; by penalising states for which $\varphi < \lambda_j(G_N)$, the unconstrained input state will be chosen such that $\varphi > \lambda_j(G_N)$ provided there is such a state.

This dichotomy in the ground state energy is then combined with: i.~a ``bonus'' energy Hamiltonian which gives an unconditional energy bonus, which modifies the dichotomy to a ground state energy either $>0$ (for \NO), or $<0$ (for \YES); ii.~a checkerboard tiling Hamiltonian, such that the history plus bonus Hamiltonian are repeated periodically across a 2D spin lattice, which raises the dichotomy to a ground state energy diverging to either $+\infty$ or $-\infty$. 
Finally, by combining this with a Hamiltonian with a dense spectrum, we can engineer it such that an order one spectral gap that is maintained if the ground state energy spectrum diverges to $+\infty$, while if the energy goes to $-\infty$ the dense Hamiltonian is ``pulled down'' and the overall spectrum is dense.

This mechanism and setup are akin to the one used in \cite{Bausch_2020_Undecidability}; however there is a number of crucial differences and obstacles we needed to address.
First off, to combine a history state Hamiltonian and a bonus Hamiltonian in a way as to place the energy bonus precisely in between the promise gap of a history state Hamiltonian means we needed to gear together tight spectral bounds on either system. For the history state Hamiltonian, we rely on a result derived in \cite{Watson_2019}; for the bonus Hamiltonian, we obtain exponentially tight bounds on the bonus energy inflicted, significantly strengthening the bounds derived in both \cite{Bausch_2020_Undecidability,Bausch_Cubitt_Watson2019}.

The two-parameter case is similar to the single parameter case, yet differs in a few crucial aspects.
In essence, we encode the same QTM in a history state, yet one that performs phase estimation on a \PQMAEXP-complete Hamiltonian $K_N$ as in \cite{Watson_Bausch_Gharibian_2020}.
However, we now choose the history state penalty to be proportional to $\bra{\psi}B\ket{\psi}-\theta$, where $B$ is a local observable, $\ket{\psi}$ is the unconstrained state, and $\theta$ is a parameter in the Hamiltonian we are able to vary.
For the APX-SIM local Hamiltonian construction in \cite{Watson_Bausch_Gharibian_2020}, the low energy eigenstates are promised to have a particular value of $\bra{\psi}B\ket{\psi}$.
Thus the additional output penalty that we add gives an energy offset depending on the value of $\bra{\psi}B\ket{\psi}-\theta$.
From \cite{Watson_Bausch_Gharibian_2020}, determining $\bra{\psi}B\ket{\psi}$ is \PQMAEXP-complete.

Overall, by choosing the value of $\varphi$ to lie within some interval $[\lmin(K_N), \lmin(K_N) + \delta]$ above the ground state of $K_N$, the ground state energy of the Hamiltonian encoding the computation will then be either $<a$ if $\bra{\psi}B\ket{\psi}>\theta$, or $>b$ in case of $\bra{\psi}B\ket{\psi}<\theta$, where there is a constant offset between $a$ and $b$.\footnote{This point is subtle but crucial: our construction does \emph{not} prove a reduction from \PQMAEXP to the translationally-invariant local Hamiltonian problem, and hence does not show the containment $\PQMAEXP \subset \QMAEXP$, since we do not know the ground state $\lmin(K_N)$ in the first place. Obtaining it to within sufficient precision requires $\poly$-many queries to a \QMAEXP oracle.}
Otherwise, if $\varphi<\lmin(K_N)$, the ground state is again $>b$; and if $\varphi > \lmin(K_N) + \delta$, we make no prediction (and it is conceivable that eventually, for larger and larger $\varphi$, the ground state energy again drops $<a$, but it could also lie anywhere in between).

The resulting Hamiltonian is again combined with a bonus Hamiltonian (with negative ground state energy), checkerboard tiling, and dense and trivial Hamiltonian as in the one-parameter case.
As such, for the range of $\varphi \in [\lmin(K_N), \lmin(K_N) + \delta]$, we know that for an $\Omega(1)$ range of $\theta$ the system is either completely in the gapped or gapless phase, and that for any $\varphi$, there exists at most one critical point $\theta^*(\varphi)$ delineating the two phases (here $\theta^*(\varphi)$ denotes the critical point in $\theta$ when $\varphi$ is held constant).
Indeed, we show more, namely that for any $\theta$ there exists precisely one critical point $\varphi^*$, and that the resulting critical line $\varphi^*(\theta)$ is continuous.

A schematic of the phase diagrams for the two-parameter case is shown in \cref{fig:phasediags-intro} (more details in \cref{Fig:2Param_Phase_Diagram-NO,Fig:2Param_Phase_Diagram-YES}), and the one-parameter case in \cref{Fig:1Param_Phase_Diagram}.
For a more detailed overview of the one- and two-parameter constructions and proof ideas see the introductory paragraphs in \cref{sec:1-hard,sec:2-hard}.

\enlargethispage{2cm}
\thispagestyle{empty}
\paragraph{Paper Structure.}
The paper is organised as follows. In \cref{sec:prelims} we set up necessary notation, give basic definitions (gapped/gapless, APX-SIM, local Hamiltonian, etc.) and summarise some results derived elsewhere that our construction relies on.
Furthermore, we formally define the local-global conditions (\cref{Def:Local-Global_Gap,Def:Local-Global_Phase}), as well as the critical parameter problems 1-CRT-PRM and 2-CRT-PRM in \cref{def:1-crt-prm,def:2-crt-prm}.
In \cref{sec:main-results}, we summarise the two main results, leading up to the containment proof in \cref{sec:containment}, and the two explicit hardness constructions in \cref{sec:1-hard,sec:2-hard}.
In \cref{sec:discussion}, we finally give a fairly extensive discussion of further implications and open questions.

\section{Preliminaries}\label{sec:prelims}
\subsection{Notation}\label{sec:notation}

Let $\mathcal{B}(\mathcal{H})$ be the space of bounded linear operators on a complex Hilbert space $\mathcal{H}$.
Following convention,
$\lmin(H)$ will denote the ground state energy (i.e the minimum eigenvalue) of a Hamiltonian $H$  (a Hermitian, positive semi-definite matrix), and larger eigenvalues will be denoted $\lambda_i(H)$, such that $\lmin(H)\le \lambda_1(H)\le \lambda_2(H)\le \ldots$.
The spectral gap $\Delta(H)$ of a Hamiltonian $H$ is the difference between the first excited state and the ground state energy, $\Delta(H)\coloneqq\lambda_1(H) - \lmin(H)$, which we assume to be zero in the case of a degenerate ground space.
$\Lambda(L\times H)\subseteq \mathds{Z}^2$ denotes a rectangular $L\times H$ sublattice of $\mathds{Z}^2$; and with $\Lambda(L)\coloneqq\Lambda(L \times L)$ a square grid of side length $L\times L$.
We set $\Lambda = \field Z^2$ as the infinite square lattice.

We denote the evaluation of logical formula as with square brackets. 
For example, for $x,y\in \mathbb{R}$, $[x>y]$ is equal to 1 if $x>y$ and $0$ if $x\leq y$.
For $z\in \{0,1\}$, then $[[x>y]\land z]$ is then the logically AND of $[x>y]$ and $z$.
We further abbreviate the integer set $[n] \coloneqq \{0,\ldots,n-1\}$ for $n\in\field N$.

For a positive integer $N\in\field N$, we will write $|N| \coloneqq \lceil \log_2 N \rceil$ as the number of bits necessary to specify $N$; in case a base other than $2$ is specified, the definition changes accordingly. Which base is meant is always clear from context, as is the case when $| \cdot |$ is meant as an absolute value.
For sums in large expressions we will often write $\sum_{y<x}$ to represent a sum over all $y$ and $x$ such that $y<x$, instead of $\sum_x \sum_{y<x}$.
The domain and running variables in these cases will be clear from the context.

Finally, following \cite{Cubitt_Perez-Garcia_Wolf2015}, we adopt the following definitions of gapped and gapless:
\begin{definition}[Gapped, from \cite{Cubitt_Perez-Garcia_Wolf2015}]\label{Def:gapped}
  We say that $H^{\Lambda(L)}$ of Hamiltonians is gapped if there is a constant $\gamma>0$ and a system size $L_0\in\mathbb{ N}$ such that for all $L>L_0$, $\lambda_0(H^{\Lambda(L)})$ is non-degenerate and $\Delta(H^{\Lambda(L)})\geq\gamma$. In this case, we say that \emph{the spectral gap is at least $\gamma$}.
\end{definition}
\begin{definition}[Gapless, from \cite{Cubitt_Perez-Garcia_Wolf2015}]\label{Def:gapless}
  We say that $H^{\Lambda(L)}$ is gapless if there is a constant $c>0$ such that for all $\epsilon>0$ there is an $L_0\in\mathbb{N}$ so that for all $L>L_0$ any point in $[\lambda_0(H^{\Lambda(L)}),\lambda_0(H^{\Lambda(L)})+c]$ is within distance $\epsilon$ from $\spec H^{\Lambda(L)}$.
\end{definition} 
\noindent We note that these definitions of gapped and gapless do not characterize all Hamiltonians; there are Hamiltonian which fit into neither definition, such as systems with a closing gap or degenerate ground states.

\subsection{Quantum Phase Transitions}
We formally define one- and two- critical parameter problems rigorously in this section; to this end, we first need to rigorously introduce what we mean by the terms \emph{phase transition} and \emph{critical parameter} for a local Hamiltonian in the thermodynamic limit.
We closely follow the notion by \citeauthor{Sachdev_2011}.

\begin{definition}[Quantum Phase Transition, from \cite{Sachdev_2011}]
\checked{James}
Consider a local Hamiltonian $H(\varphi)=\sum h_i(\varphi)$, where the matrix entries of $h_i(\varphi)$ are analytic in $\varphi$.
In the thermodynamic limit, a \emph{quantum phase transition} occurs where there is a non-analytic change in the ground state energy $\lmin(H(\varphi))$ as a function of $\varphi$.
If the matrix elements of a Hamiltonian are functions of multiple parameters $h_i(\varphi_1, \theta_2,\dots)$, then there is a phase transition at $\varphi=\varphi^*$ if there is a non-analytic change in the ground state energy at $\varphi^*$ when all other parameters are held constant.
\end{definition}

\begin{definition}[Critical Parameter/Critical Point, $\varphi^*$]\label{def:critical-param}
\checked{James}
 Let $H(\varphi, \{\theta_i\}_i)$ be a Hamiltonian defined in the thermodynamic limit which undergoes a phase transition as a function of $\varphi$. 
Then a \emph{critical parameter} is a point at which the phase transition happens when all other parameters are held constant.
We denote this point with $\varphi^*$.
\end{definition}

\subsection{Finite Systems and Relation to the Thermodynamic Limit}

For experimentalists with access to only finite-sized systems and finite precision measurements, it may not be possible to determine where non-analyticities occur in the ground state energy, and thus to determine where the phase transitions take place.
Instead of pinpointing the critical point itself, measurements that indicate which phase one is currently in are often used instead; multiple such observations then allow bounding the exact location where a phase transition is likely to occur.

\paragraph{Order Parameters.}
Commonly an \textit{order parameter} is an observable which characterizes a certain property of a phase and is used to distinguish phases for a system.
One possible definition is given as follows.
\begin{definition}[Efficiently Computable Local Order Parameter]
    Assume a translationally invariant Hamiltonian $H^{\Lambda(L)}(\varphi)=\sum_{\langle i,j \rangle}h_{i,j}(\varphi)$ has two phases $A$ and $B$, and let $h_{i,j}(\varphi)$ be describable in $n$ bits.
    A \emph{local order parameter} $O_{A/B}$ is a projector $O_{A/B}$ that is $O(1)$-local and computable in $\poly(n)$ time, if $\langle O_{A/B} \rangle$  undergoes a non-analytic change between phases $A$ and $B$ at a critical parameter $\varphi^*$.
\end{definition}
\noindent All order parameters we will use will trivially fall into this category, as can be easily verified.

For a finite system, examining how the order parameter of the systems changes could yield insight over the phase that the system is in when approaching the thermodynamic limit, assuming that for a particular set of parameter values $\varphi, \theta, \dots$, and a sufficiently large systems size the system is in the same phase as in the limit.
While this is not generically the case \cite{Cubitt_Perez-Garcia_Wolf2015,Bausch_2020_Undecidability,Bausch_Cubitt_Watson2019},
many Hamiltonians studied in nature \emph{do have} this property: for sufficiently large but finite sizes, the order parameter indicates the same phase as it would in the thermodynamic limit \cite{Palmer_Tracy_1981, Brezin_Zinn-Justin_1985, Hamer_Barber_1988}.

We formalize this notion of being able to extrapolate from a finite-size system to the thermodynamic limit with the following definitions.
\begin{definition}[Local-Global Phase Condition]\label{Def:Local-Global_Phase}
\checked{James}
     A translationally invariant Hamiltonian $H^{\Lambda(L)}(\varphi)=\sum_{\langle i,j \rangle}h_{i,j}(\varphi)$ defined on a square lattice $\Lambda(L)$ is defined to have the \emph{locally-globally phase property} if in the thermodynamic limit it has two phases $A$ and $B$ distinguished by an order parameter $O_{A/B}$. 
     If $h_{i,j}(\varphi)$ is describable in $n$ bits,
     then there exists an integer $L_0= \BigO(\exp(\poly(n)))$, an
     $\omega = \Omega(1/\poly(L))$
     and polynomials $p,q$ with $1/p(L)-1/q(L)=\Omega(1/\poly(L))$ such that for the  states $\ket{\psi}$ satisfying $\bra{\psi}H^{\Lambda(L)}(\varphi)\ket{\psi}\leq \lmin(H^{\Lambda(L)}(\varphi)) + \omega $ the following holds: 
 \begin{itemize}
        \item[Phase A:] if for $L= L_0$ and $|\varphi - \varphi^*|\geq 1/\poly(L)$, such that $\bra{\psi} O_{A/B} \ket{\psi} \leq  1/p(L)$, then $H^{\Lambda(L)}(\varphi)$ is in phase $A$ for all $L\geq L_0$ and in the thermodynamic limit.
        \item[Phase B:] if for all $L= L_0$ and $|\varphi - \varphi^*|\geq 1/\poly(L)$, such that $\bra{\psi} O_{A/B} \ket{\psi} \geq  1/q(L)$, then $H^{\Lambda(L)}(\varphi)$ is in phase $B$ for all $L\geq L_0$ and in the thermodynamic limit.
\end{itemize}
    Furthermore, $L_0$ is independent of $\varphi$ and is computable in time $\poly(n)$.
\end{definition}

We note that this condition is fulfilled by many Hamiltonians.
For example, for the transverse quantum Ising model in 1D, a phase transition occurs in terms of the ratio of magnetic field strength to coupling strength.
In this case, we see that magnetization along the $Z$ direction acts as an order parameter and satisfies the above condition \cite{Sachdev_2011}.

\paragraph{The Spectral Gap.}
Treating the Local-Global phase condition in a more generic sense, i.e.\ as a property that holds locally and lets you deduce thermodynamic properties, unveils other familiar conditions that allow a similar picture.
One such set of conditions are Knabe bounds \cite{Knabe_1988}.
In brief, Knabe bounds treat the case where one phase has is gapped while the other is gapless, and the gapped case can be discriminated by the gap behaviour on finite-sized systems.

More concretely, the condition is such that the spectral gap remains open---i.e.\ lower-bounded by a constant---in the thermodynamic limit if it closes sufficiently slowly at finite lattice sizes.
In other words, if we take a Hamiltonian restricted to a finite but growing region of the interaction graph, and if the spectral gap on this restricted graph closes slowly enough then the system is guaranteed to be gapped in the thermodynamic limit. 
Such a Local-Global gap bound has been shown to exist for the unfrustrated case \cite{Knabe_1988}.

In particular, Knabe proved that if the local gap on $N$ spins is larger than the threshold value
$1/(N - 1)$ for some $N > 2$ the system is gapped in the thermodynamic limit \cite{Knabe_1988}. 
Recently improvements to this bound have been made such that the threshold value is $6/N(N+1)$ which is known to be asymptotically optimal \cite{Gosset_Mozgunov_2016}.
Another well-known method of bounding spectral gaps is given by Nachtergaele \cite{Nachtergaele_1996}, successively improved by \cite{Spitzer_Starr_2003, Kastoryano_Lucia_2018}. Based on relations between ground space projectors, it says that if a model is gapless then the spectral gap cannot decay too slowly with system size.

We define a ``finite-size criterion'' which states that provided the spectral gap decays sufficiently quickly/slowly as the lattice size increases above some computable lattice size $L_0$, then the Hamiltonian is gapless/gapped in the thermodynamic limit.
Given this criterion, we examine the complexity of Hamiltonians which satisfy it:
\begin{definition}[Local-Global Gap Condition]\label{Def:Local-Global_Gap}
\checked{James}
    A translationally invariant Hamiltonian $H^{\Lambda(L)}(\varphi)=\sum_{\langle i,j \rangle}h_{i,j}(\varphi)$ defined on a square lattice $\Lambda(L)$ is defined to be \emph{locally-globally gapped} if there exist polynomials $p,q$ with $1/p(L)-1/q(L)=\Omega(1/\poly(L))$ and $L_0\in\field N$ such that
    \begin{itemize}
        \item[Gapped:] if $\Delta(H^{\Lambda(L)}(\varphi)) \ge 1/p(L)$ for some $L= L_0$, then there exists a constant $c>0$ such that in the thermodynamic limit the spectral gap $\lim_{L\rightarrow\infty}\Delta(H^{\Lambda(L)}(\varphi))\geq c$.
        \item[Gapless:] if $\Delta(H^{\Lambda(L)}(\varphi)) \le 1/q(L)$ for some $L= L_0$, then the Hamiltonian is gapless in the thermodynamic limit.
    \end{itemize}
    Furthermore, $L_0$ is independent of $\varphi$. 
    Let $h_{i,j}(\varphi)$ be describable in $n$ bits, then $L_0$ is computable in time $\poly(n)$ and $L_0=\BigO(\exp(\poly n))$.
\end{definition}

This is in part motivated by Knabe's bound: if we consider the gapped case as providing a lower bound (and ignoring the gapless case with its $1/q(L)$ bound) then we get a version of Knabe's bound where it is explicitly promised that the system size for which the ``gappedness'' can be verified at is poly-time computable.

Knabe's bound states that provided for some system size $N>2$ if the spectral gap is greater than $1/(N-1)$ then the Hamiltonian is gapped in the thermodynamic limit.
However, the point at which this happens can be at \emph{any} $N$.
As such this does not necessarily conflict with the possibility of undecidability results for frustration-free Hamiltonians (similar to those proved in \cite{Cubitt_Perez-Garcia_Wolf2015}).
Instead, the size of the systems for which the Knabe bound holds would be uncomputable (and thus actually implementing the bound is uncomputable).
The Local-Global gap we have assumed to hold in \cref{Def:Local-Global_Gap} is explicitly computable and $L_0$ is assumed to be computable in polynomial time, thus forcing the problem of determining the spectral gap to be decidable for systems satisfying \cref{Def:Local-Global_Gap}.

As such our promise is strictly stronger than the regular Knabe bound.
It is further worth noting that the polynomials that bound the gaps will vary between different classes of Hamiltonians and that by standard padding arguments the exact scaling of these polynomials is not overly relevant: if, for instance, the gapped/gapless property of the many-body system is determined by patches distributed across the system, where the density of the patches goes as $\sim 1 / \sqrt L$ in the system size $L$, the polynomials $p(L)$ and $q(L)$ can effectively be replaced by $p(\sqrt L)$ and $q(\sqrt L)$.
This argument is the same as for the local Hamiltonian problem, where the $1/\poly$ promise gap can be magnified with the same technique---naturally without any interesting implications regarding the problems' complexity, or the resulting phenomenology exhibited by the system.

In this work, we shall consider different classes of Hamiltonians that have distinct polynomial scalings; the exact form of which is irrelevant, but crucially all of them obey the finite size criteria as per \cref{Def:Local-Global_Gap}.

\subsection{Critical Parameter Problem Definitions}
With the notion of phase transitions and Local-Global properties clarified within the last two sections, we can now define the Critical Parameter Problem (1-CRT-PRM) as follows.
\begin{definition}[1-CRT-PRM{$_f$}]\label{def:1-crt-prm}
\begin{problem}
\probleminput{
    $N\in\field N$.
    A constant-size set of $k$-local interaction terms $h^{(l)}(\varphi)\in \mathcal{B}((\field C^d)^{\otimes S_l})$, for $l\in I$, and such that $S_l\subset \Lambda$, and $|S_l|\le k\ \forall l$.
    Two positive numbers $\alpha < \beta$ which satisfy $\beta-\alpha=\Omega(1/f(N))$.
    $\alpha,\beta$ and the matrix entries of each of the $\{ h^{(l)}(\varphi) \}$ are specified to $|N| \coloneqq \lceil \log_2 N \rceil$ bits of precision.
}
\problempromise{
    Let $S+(i,j)\coloneqq\{ (a+i, b+j) : (a,b) \in S \}$.
    Define the translationally-invariant Hamiltonian on $\Lambda$ via
    \begin{equation}\label{eq:H-on-lattice}
    H \coloneqq \sum_{l\in I}\sum_{(i,j)\in\Lambda} h_{S_l+(i,j)}(\varphi).
    \end{equation}
    H has two phases A and B, and satisfies a Local-Global property as per \cref{Def:Local-Global_Phase}  or \cref{Def:Local-Global_Gap}, for some $L_0=\poly N$, independent of $\varphi$.
    There is precisely one critical parameter $\varphi^*\in [0,1] \setminus [\alpha,\beta]$ as per \cref{def:critical-param}.
    If $\varphi < \varphi^*$ the system is in phase A, and for $\varphi > \varphi^*$ it is in phase B.
}
\problemquestion{
    \YES if $\varphi^*\leq \alpha $.\\
	\NO if $\varphi^*\geq \beta $.
}
\end{problem}
\end{definition}

\noindent
This problem characterises the hardness of estimating the point at which there is a phase transition (e.g.\ from gapped to gapless), for a one-parameter translationally-invariant Hamiltonian which is promised to have precisely one such transition; the input is the specification of the Hamiltonian to precision $N$, and the input size (given in, say, binary) is thus linear in $|N|$.

We kept the \emph{precision} to which the critical point has to be estimated as a parameter, denoted by the function $f$ subscript to the problem definition; in brief, 1-CRT-PRM$_{\poly}$ denotes the case where we wish to approximate it to precision $\Omega(1/\poly)$, and we define the ``precise'' version of the problem to be Precise-1-CRT-PRM = 1-CRT-PRM$_{\exp\poly}$.
For ease of notation, we let 1-CRT-PRM = 1-CRT-PRM$_{\Theta(1)}$ be the case where the phase transition has to be estimated to constant precision.

We emphasise that while the subscript determines the precision to which we wish to compute the critical point, the Local-Global property as per \cref{Def:Local-Global_Gap,Def:Local-Global_Phase} is still required with polynomial precision throughout.
Natural extensions of \cref{def:1-crt-prm} where either scalings are given as parameters are of course possible.

A two-parameter variant of 1-CRT-PRM can be defined analogously as follows.
\begin{definition}[2-CRT-PRM$_f$]\label{def:2-crt-prm}
\begin{problem}
\probleminput{
    $N\in\field N$. 
    A finite set of $k$-local interactions $h^{(l)}(\theta,\varphi)\in \mathcal{B}((\field C^d)^{\otimes S_l})$, for $l\in I$, and such that $S_l\subset \Lambda$, and $|S_l|\le k\ \forall l$.
    Four positive numbers $\alpha_1 < \beta_1$ and $\alpha_2 < \beta_2$, such that the rectangle $[\alpha_1, \beta_1] \times [\alpha_2, \beta_2]$ covers an $\Omega(1/f(N))$ area.
    $\alpha_1,\beta_1, \alpha_2,\beta_2$ and the matrix entries of each of the $\{ h^{(l)}(\theta,\varphi) \}$ are specified to $\poly(|N|)$ bits of precision.
}
\problempromise{
    $H$ is defined as in \cref{eq:H-on-lattice}, and satisfies a Local-Global property for two phases A and B, as per \cref{Def:Local-Global_Phase} or \cref{Def:Local-Global_Gap} for $L_0=\poly N$, independent of $\theta$ and $\varphi$.
    The  critical line $\theta^*(\varphi)$ is a function of $\varphi$---i.e.\, for each $\varphi$, there exists exactly one critical parameter $\theta^*$ as per \cref{def:critical-param} such that $H(\theta,\varphi)$ is in phase $A$ if $\theta < \theta^*$, and in phase $B$ if $\theta > \theta^*$.
    The critical line $\theta^*(\varphi)$ is either such that the rectangle $R \coloneqq [\alpha_1, \beta_1 + y] \times [\alpha_2, \beta_2 + y]$ lies completely in phase $A$, or completely in phase $B$, where
    \[
        y \coloneqq \max \{ \varphi^* : H(\theta, \varphi)\ \text{is in phase $B$ } \forall \varphi < \varphi^*, \forall \theta \}.
    \]
    is promised to be well-defined, and that $H(0, \varphi)$ is in phase $A$ for all $\varphi>y$, and in phase $B$ for all $\varphi<y$.
}
\problemquestion{
    \YES if the critical line is such that rectangle $R$ lies in phase $A$.\\
	\NO otherwise.\footnote{This order is switched compared to the 1-CRT-PRM case, matching the bounds for APX-SIM, whereas the bounds for 1-CRT-PRM match those of the Local Hamiltonian problem.}
}
\end{problem}
\end{definition}

We emphasise that this definition is a direct analogue of the 1-CRT-PRM case, where the rectangle's role was taken by the one-dimensional interval $[\alpha, \beta]$; the offset $y$ is necessary to obtain a well-defined problem definition, and is analogous to how APX-SIM (\cref{Def:APX-SIM}) is defined with respect to a ``natural reference point'', i.e.\ the Hamiltonian's ground state energy. The ``natural reference point'' for phase diagrams we choose is simply a point along one of the parameter axes below which the system is completely in one of the two phases, as shown in \cref{fig:phasediags-intro}.

Other equivalently well-motivated definitions can of course be given. 
We give the following variant of 2-CRT-PRM, which reads more akin to the way 1-CRT-PRM is formulated, but which is otherwise identical in meaning to \cref{def:2-crt-prm}.
\begin{definition}[2-CRT-PRM$_f$, alternative formulation]\label{def:2-crt-prm_2}
\begin{problem}
\probleminput{
    $N\in\field N$. 
    A finite set of $k$-local interactions $h^{(l)}(\theta,\varphi)\in \mathcal{B}((\field C^d)^{\otimes S_l})$, for $l\in I$, and such that $S_l\subset \Lambda$, and $|S_l|\le k\ \forall l$.
    Positive numbers $\varphi, \alpha_1,\beta_1$,  such that $\beta_1 -\alpha_1 =\Omega(1)$.
    $\alpha_1,\beta_1, \varphi$ and the matrix entries of each of the $\{ h^{(l)}(\theta,\varphi) \}$ are specified to $\poly(|N|)$ bits of precision.
}
\problempromise{
    $H$ is defined as in \cref{eq:H-on-lattice}, and satisfies a Local-Global property for two phases A and B, as per \cref{Def:Local-Global_Phase} or \cref{Def:Local-Global_Gap} for $L_0=\poly N$, independent of $\theta$ and $\varphi$.
    The  critical line $\theta^*(\varphi)$ is a function of $\varphi$---i.e.\, for each $\varphi$, there exists exactly one critical parameter $\theta^*$ as per \cref{def:critical-param} such that $H(\theta,\varphi)$ is in phase $A$ if $\theta < \theta^*$, and in phase $B$ if $\theta > \theta^*$.
    It is promised that
    \[
        y \coloneqq \max \{ \varphi^* : H(\theta, \varphi)\ \text{is in phase $B$ } \forall \varphi < \varphi^*, \forall \theta \}
    \]
    is well-defined, and that $H(0, \varphi)$ is in phase $A$ for all $\varphi>y$, and in phase $B$ otherwise.
    Furthermore, there is an interval $S_\kappa \coloneqq [y+ \kappa, y+2\kappa]$ for $\kappa = \Omega(1/f( N))$, such that if $\varphi \in S_\kappa$, then for all $\varphi \in S_\kappa$ either $\theta^*(\varphi)>\beta_1$ or $\theta^*(\varphi)<\alpha_1$ .
  
}
\problemquestion{
    \YES $\theta^*(\varphi)>\beta_1$ for all $\varphi\in S_\kappa$.\\
	\NO $\theta^*(\varphi)<\alpha_1$ for all $\varphi\in S_\kappa$.\footnote{This order is switched compared to the 1-CRT-PRM case, matching the bounds for APX-SIM, whereas the bounds for 1-CRT-PRM match those of the Local Hamiltonian problem.}
}
\end{problem}
\end{definition}

While \cref{def:2-crt-prm,def:2-crt-prm_2} sound somewhat contrived, we emphasise that the core idea behind it is analogous to how APX-SIM is defined.
The latter asks: if I take a low-energy state, what's the expectation value with respect to a given observable? It is natural, in this context, to imply the ground state within the problem definition, and not to demand it to be given as input in first place.
In a similar fashion, to approximate a critical line in a phase diagram to some desired precision it is conceivable that one knows a region below which the phase diagram is entirely in one phase; and to draw the critical line from this reference point onward to some desired precision.

Just as the precision to which we wish to approximate the ground state energy dictates how \emph{hard} a problem it will be, the Local-Global properties are in place to ensure we can prove \emph{containment} of the problems, i.e.\ to place them within a complexity class that solely depends on how hard it is to solve the Local-Global property.
Example variants---and the ones we will focus on in this paper---are when phase A and B are gapped vs.\ gapless states; the Local-Global property is then given by \cref{Def:Local-Global_Gap}.

When proving complexity results, we will be interested in the class \QMAEXP, which is to \QMA what NEXP is to NP; its use over \QMA is a technicality based on how the input for a translationally-invariant system is specified (i.e., as a single interaction term repeated over the lattice).
Formally the class is defined as follows:
\begin{definition}[\QMAEXP~\cite{Gottesman_Irani_2009}] \label{Def:QMA_EXP}
\checked{James}
	A promise problem $\Pi=(A_{YES},A_{NO})$ is in $\QMAEXP$ if there exists a $k\in \BigO(1)$ and a Quantum Turing Machine $M$ such that for each input $x\in\{0,1\}^*$ with $|x|=n$, and any proof $\ket{\psi}\in (\C^d)^{\otimes 2^{n^k}}$, on input $(x,\ket{\psi})$, $M$ halts in $\BigO(2^{n^k})$ steps. Furthermore, the following conditions hold.
	\begin{enumerate}
		\item (Completeness) If $x \in A_{YES}$, there exists a state $\ket{\psi}\in (\C^d)^{\otimes 2^{n^k}}$ such that $M$ accepts $(x,\ket{\psi})$ with probability $\ge2/3$.	
		\item (Soundness) If $x \in A_{NO}$, then for all $\ket{\psi}\in (\C^d)^{\otimes 2^{n^k}}$, $M$ accepts $(x,\ket{\psi})$ with probability $\le1/3$.
	\end{enumerate}
\end{definition}

\subsection{APX-SIM}
To prove the hardness result for the two-parameter case, we will prove a reduction to the hardness of simulating measurements on local Hamiltonians.
Generically, the problem has the following setup:

\begin{definition} [{APX-SIM($H,A,a,b,\delta$), \cite{Ambainis2013}}]\label{Def:APX-SIM}
\begin{problem}
\probleminput{
Local term $h$ of a translationally invariant Hamiltonian $H=\sum h $, and a local observable $A$, and real numbers $a$, $b$ such that $b-a\geq N^{-c'}$, for $N$ the number of qubits $H$ acts on and $c,c'>0$ some constants.
  
}
\problempromise{
    Either $H$ has a ground state $\ket{\psi}$ with $\bra{\psi}A\ket{\psi} \leq a$ or for any $\ket{\psi}$ with $\bra{\psi}H\ket{\psi}\leq \lmin(H) +\delta$, we have $\bra{\psi}A\ket{\psi} \geq  b$.
}
\problemquestion{
    \YES$\ket{\psi}$ with $\bra{\psi}A\ket{\psi} \leq a$.\\
	\NO for any $\ket{\psi}$ with $\bra{\psi}H\ket{\psi}\leq \lmin(H) +\delta$, we have $\bra{\psi}A\ket{\psi} \geq  b$.
}
\end{problem}
\end{definition}

As we aim to establish the hardness of  2-CRT-PRM for translationally-invariant Hamiltonians, we will consider the following variant of APX-SIM.
We note that the $\forall$ quantifier is a technically
(the $\forall$ case is trivially easier than the non-$\forall$ case since it has a stronger promise; however, they can be shown to be equally hard).
%

\begin{definition}[$\forall$-TI-APX-SIM($H,A,a,b,\delta$) \cite{Gharibian_Piddock_Yirka2019}]\label{Def:forall_APX-SIM}
\begin{problem}
\probleminput{
Local term $h$ of a translationally invariant Hamiltonian $H=\sum h $, and a local observable $A$, and real numbers $a$, $b$ such that $b-a\geq N^{-c'}$, for $N$ the number of qubits $H$ acts on and $c,c'>0$ some constants.
  
}
\problempromise{
 Let $S_\delta$ be the set of all states $\ket{\psi}$ satisfying $\bra{\psi}H\ket{\psi}\le \lambda(H)+\delta$ for $\delta\geq N^{-c}$.
 For any $\ket{\psi}\in S_\delta$, we have either $\bra{\psi}A\ket{\psi}\geq b$ or $\bra{\psi}A\ket{\psi} \le a$.
}
\problemquestion{
    \YES$\bra{\psi}A\ket{\psi}\geq b$ for all $\ket{\psi}\in S_\delta$.\\
	\NO $\bra{\psi}A\ket{\psi} \le a$ for all $\ket{\psi}\in S_\delta$.
}
\end{problem}
\end{definition}
\noindent Thus the problem promises that all states within energy $\delta$ of the ground state (i.e. low-energy states) have similar expectation values when measured relative to $A$.
(Note we have switched the yes and no instances from those defined in \cite{Gharibian_Piddock_Yirka2019}, however, this choice is arbitrary and does not change any results).

We will make use of the following result 
\begin{theorem}[Implicitly proved by \citeauthor{Watson_Bausch_Gharibian_2020} \cite{Watson_Bausch_Gharibian_2020}] \label{Lemma:APX-SIM_Complexity}
$\forall$-TI-APX-SIM($H,A,a,b,\delta$) is $\PQMAEXP$-complete for 1D, nearest neighbour, translationally invariant Hamiltonians on spin chains of length $N$.
This is true even if $A$ is a $1$-local observable with two eigenvalues $x, y\in[0, 1]$, and the gap in expectation $b-a=|x-y|-1/\poly N$.
\end{theorem}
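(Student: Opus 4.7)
The plan is to read this statement off the $\PQMAEXP$-hardness reduction of \cite{Watson_Bausch_Gharibian_2020} for translationally-invariant, nearest-neighbour APX-SIM on 1D spin chains, and to verify that their construction can be tuned to (i) satisfy the $\forall$-promise of \cref{Def:forall_APX-SIM}, (ii) use a $1$-local observable with two eigenvalues in $[0,1]$, and (iii) achieve an expectation gap as tight as $b-a=|x-y|-1/\poly N$. Containment follows by a standard binary search over energies combined with one further oracle query.

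For hardness, I would first recall how \cite{Watson_Bausch_Gharibian_2020} encode a $\PQMAEXP$-computation into a translationally-invariant, nearest-neighbour history-state Hamiltonian $H$ on a Gottesman--Irani-style 1D spin chain. Any state with energy within $\delta$ of the ground state is forced onto a superposition of valid computation histories of a polynomial-time Turing machine $M$ with $\QMAEXP$-oracle access; the final output bit of $M$ is written into a single designated ``output flag'' register, singled out by the clock pattern. The natural observable distinguishing \YES from \NO is then the $1$-local projector onto $\ket{1}$ of that register on a specific site of the chain. An affine rescaling $A\mapsto (x-y)A + y\1$ produces the claimed eigenvalue pair $\{x,y\}\subset[0,1]$ without affecting $1$-locality.

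The main obstacle is the tight expectation gap. By the geometric lemma sharpened in \cite{Watson_2019}, any $\ket{\psi}$ with $\bra{\psi}H\ket{\psi}\le\lmin(H)+\delta$ agrees with the canonical Feynman--Kitaev history state up to Hilbert-space distance $\BigO(1/\poly N)$. In the \YES case every branch of that history computes the output bit to $1$, giving $\bra{\psi}A\ket{\psi}\ge x - 1/\poly N$; in the \NO case every branch outputs $0$, giving $\bra{\psi}A\ket{\psi}\le y + 1/\poly N$. Because the embedded $\QMAEXP$-oracle queries are answered by their witnesses with completeness--soundness error $\BigO(2^{-\poly N})$, these bounds are uniform across all low-energy states, which simultaneously certifies the $\forall$-promise. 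Setting $a$ and $b$ just inside these thresholds then yields $b-a = |x-y| - 1/\poly N$.

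Containment of $\forall$-TI-APX-SIM in $\PQMAEXP$ is standard: a polynomial-time machine performs binary search on $\lmin(H)$ with $\poly(|N|)$ queries to a $\QMAEXP$ oracle, and then asks one further oracle question of the form ``does there exist $\ket{\psi}$ with $\bra{\psi}H\ket{\psi}\le\lmin(H)+\delta$ and $\bra{\psi}A\ket{\psi}\ge(a+b)/2$?''. Under the $\forall$-promise this single bit decides the instance. The translationally-invariant input, specified by a single local term, is what makes $\QMAEXP$ (rather than $\QMA$) the appropriate oracle, matching the containment arguments in \cite{Ambainis2013,Gharibian_Piddock_Yirka2019,Watson_Bausch_Gharibian_2020}.
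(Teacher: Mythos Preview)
Your outline is broadly on the right track for the $\forall$-promise and for containment, but there is a genuine gap in how you obtain the tight expectation gap $b-a=|x-y|-1/\poly N$ with a $1$-local observable. In the history-state Hamiltonian of \cite{Watson_Bausch_Gharibian_2020}, the output bit is only set at the \emph{final} time step of a computation of length $T=\poly N$; the canonical history state has weight $1/T$ on that step. Consequently, even if every low-energy state is $\BigO(1/\poly N)$-close to the history state (as you argue via \cite{Watson_2019}), the expectation of a $1$-local projector on the output site is $\sim 1/T$ in the \YES case and $\sim 0$ in the \NO case, yielding only $b-a=\Theta(1/\poly N)$, not $|x-y|-1/\poly N$. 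Your phrase ``singled out by the clock pattern'' does not help: conditioning on the final clock value would make the observable non-$1$-local in the Gottesman--Irani encoding, and even if it could be done locally, it would not change the $1/T$ weight.

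The paper closes exactly this gap in \cref{Sec:Prelims_2-CRT-PRM}: it explicitly notes that the raw construction of \cite{Watson_Bausch_Gharibian_2020} only gives $\bra\psi A\ket\psi\in[1/T-\BigO(2^{-\poly N}),\,1/T]$ in one case, and then invokes the \emph{idling} technique of \cite{Caha_Landau_Nagaj_2018} to pad the computation with $P_1(N)\cdot T$ additional trivial steps after the output is written. This raises the history-state weight on time steps with the correct output to $1-\BigO(1/P_1(N))$, so a purely $1$-local measurement on the output site (together with a marker flag on the same site indicating the computation has finished, to keep the observable $1$-local) now gives expectations within $1/\poly N$ of the two eigenvalues. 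Without this amplification step your argument does not yield the claimed gap.
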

We give some exposition on how this can be adapted from \cite{Watson_Bausch_Gharibian_2020} in \cref{Sec:Prelims_2-CRT-PRM}.

\subsection{The Gottesman-Irani Hamiltonian and the Local Hamiltonian Problem}
The Local Hamiltonian problem asks for an approximation to the ground state energy of a local many-body system, up to $1/\poly$ precision in the system size.
In its non-translationally invariant version the 
local Hamiltonian problem was first proven to be \QMA-complete by \citeauthor{Kitaev2002} \cite{Kitaev2002}, and \QMAEXP-complete by \citeauthor{Gottesman_Irani_2009} \cite{Gottesman_Irani_2009}; over time, many variants of the problem have been shown to be hard, under various constraints on the systems of interest.
Formally, the ground state approximation problem can be stated as a promise problem in the following fashion.
\begin{definition}[Translationally-Invariant Local Hamiltonian Problem]
\begin{problem}
\probleminput{$N \in \field N$ and const-local term $h$ of a 
translationally-invariant Hamiltonian $H=\sum_{i=1}^m h_i$ on an $N$-partite Hilbert space of constant local dimension, and $m\le \poly N$ for some $N \in \field N$.
$\| h \| \le 1$.
Two numbers $\alpha,\beta>0$ with spacing $\beta-\alpha > 1/\poly N$.
$h$, $\alpha$ and $\beta$ are all given to $|\poly N|$ many bits of precision.
}
\problempromise{
Either $\lmin(H) \ge \beta$, or $\lmin(H) \le\alpha$.
}
\problemquestion{
\YES if  $\lmin(H) \le \alpha$.\\
\NO otherwise.
}
\end{problem}
\end{definition}

We will make direct use of \citeauthor{Gottesman_Irani_2009}'s constructive hardness proof showing that the translationally-invariant Local Hamiltonian problem is \QMAEXP-complete, in the sense of utilising the ability to encode quantum computation into the ground state of a Hamiltonian, and summarize their result in the following statement.
\begin{theorem} [\citeauthor{Gottesman_Irani_2009} \cite{Gottesman_Irani_2009}] \label{Theorem:Gottesman-Irani}
A Gottesman-Irani Hamiltonian $G_N\coloneqq\sum_{i=1}^N h_i$ is a translationally-invariant nearest neighbour Hamiltonian on a one-dimensional spin chain with finite local dimension $d\in\field N$, and with open boundary conditions; $G_N$ has the following properties:
\begin{enumerate}
    \item For all $N\ge 10$, the ground state of $G_N$ is a history state which encodes a binary counter with output $N-2$ in binary; and then takes this as input to a universal quantum Turing machine $\mathcal M$.
    Part of the input to $\mathcal M$ remains unconstrained.
    \item If $N$ describes a \QMA verifier and a valid problem instance for it---as e.g.\ shown in \cite[Fig.~11]{Bausch2016}---then there exist two polynomials $p,q$ such that $1/p(N)-1/q(N)=\Omega(1/\poly N)$, and
    \begin{align*}
        \lmin(G_N) \begin{cases}
        \geq 1/p(N) & \text{if $\mathcal M$ outputs \NO} \\
        \leq 1/q(N) & \text{if $\mathcal M$ outputs \YES}.
        \end{cases}
    \end{align*}
    Determining which case occurs is \QMAEXP-complete.
\end{enumerate}
\end{theorem}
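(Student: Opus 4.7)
The plan is to invoke the Gottesman--Irani construction directly, and verify that it satisfies both properties stated in the theorem. The construction proceeds in three conceptual layers, which I would assemble in sequence. First, I would exhibit a translationally-invariant ``clock'' Hamiltonian on a 1D chain whose ground states are Feynman--Kitaev history states of a specific quantum Turing machine $\mathcal M_0$. This $\mathcal M_0$ first runs a binary counter that, on a chain of $N$ sites, reads off the chain length by advancing a head from left to right, converting the length (minus boundary corrections) to a binary string $N-2$ written on a dedicated track. The open boundary conditions, together with boundary-detecting tiles that can only appear at the endpoints, enforce that every valid history begins with the unique length-$N$ ``blank'' configuration. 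After the counting phase terminates, $\mathcal M_0$ simulates a fixed universal QTM $\mathcal M$ on input $N-2$, while leaving a dedicated tape region in the initial superposition given by the unconstrained input state $\ket\nu$ (the quantum witness).

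Next, following Kitaev's construction, one introduces local penalty terms $h_i$ of three flavours: \emph{propagation} terms that penalise any mismatch between the configuration at time $t$ and the one obtained by applying $\mathcal M_0$'s transition rule to time $t-1$; \emph{initialisation} terms that penalise non-blank symbols in the ancilla region at $t=0$; and an \emph{output} term that penalises a non-accepting symbol in the designated output cell after $\mathcal M$ halts. The nontrivial point is to carry out this entire scheme with \emph{a single} translationally-invariant nearest-neighbour interaction $h$ such that $G_N=\sum h_{i,i+1}$. I would rely on the signalling/bracketing alphabet of Gottesman--Irani, which allows each local rule to be expressed as a check on adjacent pairs, so that the cumulative cost is a sum of uniform local terms. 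Property~1 then follows: any zero-energy state of $G_N$ must be a uniform superposition (in the clock register) of the full computational history of $\mathcal M$ on input $N-2$ with some witness $\ket\nu$ on the unconstrained register.

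For Property~2, I would use the standard Kitaev analysis applied to history states. Splitting $G_N = H_\mathrm{prop} + H_\mathrm{init} + H_\mathrm{out}$, the propagation term has a zero eigenspace spanned by history states of all possible initial configurations; combined with $H_\mathrm{init}$, the zero eigenspace shrinks to histories with a valid blank ancilla. The output penalty then lifts the ground state energy by an amount proportional to the rejection probability $p_\mathrm{rej}(\nu)$ of $\mathcal M$ on witness $\nu$. If $\mathcal M$ accepts with probability $\ge 2/3$, minimising over $\nu$ gives $\lmin(G_N) \le 1/q(N)$ for some polynomial $q$; if $\mathcal M$ rejects with probability $\ge 2/3$ for every $\nu$, the geometric lemma (Kitaev's projection lemma, as sharpened in \cite{Watson_2019}) yields $\lmin(G_N) \ge 1/p(N)$. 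The promise gap $1/p(N)-1/q(N)=\Omega(1/\poly N)$ comes directly from the inverse-polynomial gap of the simulated $\mathcal M$ and the $1/\poly T$ loss from a time-$T$ history state construction, where $T=\poly N$.

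The main obstacle, and the one I would spend the most care on, is preserving translational invariance while implementing the $\mathcal M_0$-dependent circuit. Concretely, one must (i) choose an alphabet rich enough that the propagation rule is \emph{position-independent}, with all position-dependent behaviour (boundary detection, counter carry propagation, transition from counting phase to $\mathcal M$-simulation phase) emerging dynamically from the signals exchanged between adjacent cells; and (ii) verify that the counter's output length $\lfloor\log_2(N-2)\rfloor$ fits inside the chain and leaves $\poly N$ room for $\mathcal M$'s $\poly\log N$-qubit computation, which is what makes the whole reduction exponentially compressing. \QMAEXP-hardness then follows because the input to the problem is just $N$, specified in $|N|=\lceil\log_2 N\rceil$ bits, while the encoded \QMA instance has size $\poly N$, so any polynomial-time algorithm for Local Hamiltonian on $G_N$ would give an exponential-time (hence \QMAEXP) algorithm for the original verifier. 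Containment in \QMAEXP is immediate from the history state: a prover sends $\ket\nu$, and the verifier runs $\mathcal M$ on $(N-2,\nu)$ in time $\poly N = 2^{\poly|N|}$.
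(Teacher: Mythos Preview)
Your proposal is a reasonable high-level sketch of the Gottesman--Irani construction, but note that the paper does not actually supply a proof of this theorem: it is stated as an imported result from \cite{Gottesman_Irani_2009}, with only the one-line remark that ``the existence and construction of $G_N$ is a by-now standard technique'' and a pointer to the lower-dimension variant in \cite{Bausch2016}. There is therefore nothing in the paper to compare your argument against; the authors simply cite the result and move on.

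Your outline of the underlying construction (clock Hamiltonian, binary counter, universal QTM simulation, Kitaev-style energy analysis) is broadly faithful to how the Gottesman--Irani proof actually proceeds, so as a standalone sketch it is fine. If anything, you have written considerably more than the paper itself does here.
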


The existence and construction of $G_N$ is a by-now standard technique; aside from \citeauthor{Gottesman_Irani_2009}'s original construction there also exists a lower local dimension variant with local dimension $d=42$ \cite{Bausch2016}.
Other nondeterministic computations can be encoded in a similar fashion, and we will make use of this fact in \cref{sec:2-hard} (cf.~\cite{Watson_Bausch_Gharibian_2020}).

\section{Main Results}\label{sec:main-results}
Now that we have introduced the necessary technical background, we can give rigorous statements of our two main results, \cref{th:main-1-intro,th:main-2-intro}.
\begin{theorem}
\checked{James}
$1$-CRT-PRM is \QMAEXP-hard and contained in $\PQMAEXP$ for Hamiltonians satisfying either \cref{Def:Local-Global_Phase} or \cref{Def:Local-Global_Gap}.
\end{theorem}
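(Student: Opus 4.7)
The plan is to handle containment and hardness separately, and in each case exploit the Local--Global property to reduce the thermodynamic-limit question to a finite-size question at lattice size $L_0=\poly N$.

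\textbf{Containment in \PQMAEXP.} Given an instance, the deterministic reduction only needs to determine the phase of $H^{\Lambda(L_0)}(\varphi)$ at a single parameter value $\varphi = \alpha$ (equivalently $\varphi=\beta$): by the promise $\varphi^*\notin[\alpha,\beta]$, together with the ordering ``phase~A for $\varphi<\varphi^*$, phase~B otherwise'', knowing the phase at $\alpha$ determines whether $\varphi^*\le\alpha$ or $\varphi^*\ge\beta$. To decide the phase at $\varphi=\alpha$ on a lattice of side $L_0=\poly N$ (so the ``effective'' system is of size $\exp(\poly|N|)$), we consider the two Local--Global variants. Under \cref{Def:Local-Global_Gap}, determining whether $\Delta(H^{\Lambda(L_0)}(\alpha))\ge 1/p(L_0)$ or $\le 1/q(L_0)$ is a spectral gap decision problem whose inverse-polynomial promise gap can be resolved by $\poly(|N|)$ binary-search queries to a \QMAEXP oracle (each query amounting to a translationally-invariant local Hamiltonian problem with a shifted operator). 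Under \cref{Def:Local-Global_Phase}, we instead need to decide whether a low-energy state $\ket\psi$ of $H^{\Lambda(L_0)}(\alpha)$ satisfies $\bra\psi O_{A/B}\ket\psi\le 1/p(L_0)$ or $\ge 1/q(L_0)$, which is precisely an instance of $\forall$-TI-APX-SIM; this sits in \PQMAEXP by \cref{Lemma:APX-SIM_Complexity}. Either way we obtain a polynomial-time Turing reduction to a \QMAEXP oracle.

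\textbf{\QMAEXP-hardness.} I will reduce the translationally-invariant local Hamiltonian problem (\QMAEXP-complete via \cref{Theorem:Gottesman-Irani}) to 1-CRT-PRM. Let $G_N$ be the Gottesman--Irani Hamiltonian encoding the \QMAEXP instance, so $\lmin(G_N)\le 1/q(N)$ in a YES case and $\ge 1/p(N)$ in a NO case. The construction builds a history-state Hamiltonian $\HTM$ realising a QTM that, given the unconstrained witness $\ket\nu$ as part of its tape, performs phase estimation on $\ee^{\ii t G_N}$ and compares the phase extracted on $\ket\nu$ to the external parameter $\varphi$, contributing an energy penalty precisely when $\varphi<\lambda_j(G_N)$. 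The witness will self-optimise, pushing the comparison to $\varphi<\lmin(G_N)$, so the ground-state energy of $\HTM(\varphi)$ exhibits a sharp jump in $\varphi$ localised at $\varphi^*=\lmin(G_N)$. Adding (i) a ``bonus'' Hamiltonian placing the zero of energy inside the resulting promise gap, (ii) a checkerboard tiling Hamiltonian that replicates $\HTM+\textrm{bonus}$ periodically to amplify the per-block sign into $\pm\infty$ in the thermodynamic limit, and (iii) a gapless ``dense spectrum'' piece running alongside a trivial gapped piece, yields a family $H(\varphi)$ whose phase (gapped vs.\ gapless) on the infinite lattice flips exactly at $\varphi^*=\lmin(G_N)$. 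By choosing $[\alpha,\beta]$ to straddle neither $1/q(N)$ nor $1/p(N)$ but lie between them, a YES/NO instance is mapped to $\varphi^*\le\alpha$ vs.\ $\varphi^*\ge\beta$. All matrix entries are computable in $\poly|N|$ time to the required precision.

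\textbf{Where the effort lies.} Containment is essentially routine once APX-SIM and the spectral-gap decision problem are in hand. The real obstacle is in hardness: one must gear together tight two-sided spectral bounds on $\HTM$ with exponentially tight bounds on the bonus Hamiltonian, so that the energy zero falls cleanly inside the promise gap; the constructions in \cite{Bausch_2020_Undecidability,Bausch_Cubitt_Watson2019} give only weaker bounds, so sharper estimates (using \cite{Watson_2019} for the history-state side) are needed. A second subtlety is verifying that the resulting Hamiltonian genuinely satisfies either \cref{Def:Local-Global_Phase} or \cref{Def:Local-Global_Gap} with $L_0=\poly N$, uniformly in $\varphi\in[0,1]$, and that the single-critical-point promise holds; this requires showing that the ground-state energy is monotone enough in $\varphi$ that no spurious second transition appears. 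These two points are exactly what \cref{sec:1-hard} will have to establish in detail.
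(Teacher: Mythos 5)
Your proposal follows the same overall architecture as the paper's proof: containment by exploiting the Local--Global promise at $L_0=\poly N$ and deciding the finite-size question with a \QMAEXP oracle (spectral-gap estimation as in \cref{Lemma:Approximate_Spectral_Gap}, or $\forall$-TI-APX-SIM as in \cref{Lemma:Order_Parameter_Complexity}), and hardness via a phase-comparator history-state Hamiltonian for the Gottesman--Irani $G_N$, geared against a Marker bonus, tiled across the lattice and combined with dense/trivial/guard terms, exactly the route of \cref{sec:1-hard}. Your containment step is in fact a mild simplification: the paper's \cref{lem:tech-containment-3} binary-searches $\varphi$ to localise $\varphi^*$ to $1/\poly L$ precision, whereas you observe that a single phase query inside $[\alpha,\beta]$ already decides the problem. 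That is legitimate, with one caveat: under \cref{Def:Local-Global_Phase} the finite-size extrapolation is only promised when $|\varphi-\varphi^*|\ge 1/\poly(L)$, and $\varphi^*$ may lie arbitrarily close below $\alpha$; so you should query at a point bounded away from both endpoints (e.g.\ the midpoint, which is at distance at least $(\beta-\alpha)/2$ from $\varphi^*$ in either case), not at $\varphi=\alpha$ itself.

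The one genuine gap is in the hardness part. Placing $[\alpha,\beta]$ inside the promise gap of $G_N$ yields only $\beta-\alpha=\Omega(1/\poly N)$, i.e.\ \QMAEXP-hardness of 1-CRT-PRM$_{\poly}$; but 1-CRT-PRM as the paper defines it (and as \cref{th:main-1-intro} advertises) asks for the critical point to be hard to resolve to \emph{constant} precision, $\beta-\alpha=\Omega(1)$. The paper closes this with the rescaled phase comparison of \cref{rem:scaling-up-varphi,cor:scaling-up-varphi}: by implementing fractional powers of the $\varphi$-phase gate (via the black-box fractional-power technique) the comparison is effectively performed between a scaled-down $\varphi$ and $\lmin(G_N)$, which stretches the $1/\poly N$ uncertainty window in $\varphi$ to an $\Omega(1)$ interval, at the price of $\varphi$ ranging over $[0,\poly N]$. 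Your sketch needs this extra amplification (or an equivalent) to deliver the constant-precision claim; the remaining issues you flag---gearing the tight bounds from \cite{Watson_2019} against the exponentially sharp Marker energy, uniqueness of the critical point via monotonicity of the comparator acceptance probability, and verifying the Local--Global promise with $L_0=\poly N$ as in \cref{Lemma:Promise_Satisfied_Phase}---are indeed exactly where the paper spends its effort, and your identification of them is accurate.
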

In fact, we prove a slight bit more than this; we show that $1$-CRT-PRM is \QMAEXP-hard, even if the parameter is only to be inferred to constant precision (as stated in the informal \cref{th:main-1-intro}).

\begin{theorem}\label{th:main-2}
\checked{James}
$2$-CRT-PRM is $\PQMAEXP$-complete for Hamiltonians satisfying either \cref{Def:Local-Global_Phase} or \cref{Def:Local-Global_Gap}.
\end{theorem}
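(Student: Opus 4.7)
The plan is to prove \cref{th:main-2} in two parts: a \PQMAEXP containment argument that exploits the Local-Global promise to reduce the thermodynamic-limit question to a finite-size one, and a \PQMAEXP-hardness argument that reduces from $\forall$-TI-APX-SIM. Containment is the easier direction: under \cref{Def:Local-Global_Phase} or \cref{Def:Local-Global_Gap}, the phase of $H(\theta,\varphi)$ in the thermodynamic limit is already determined by looking at a lattice of side $L_0 = \poly N$, and deciding which of the two phases the finite system sits in amounts to deciding whether $\lmin(H^{\Lambda(L_0)})$ (respectively a shifted spectrum) crosses a $1/\poly$ threshold. Each such question is a translationally-invariant local Hamiltonian instance and hence lies in \QMAEXP. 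A deterministic polynomial-time machine can then binary-search over the $(\theta,\varphi)$-rectangle $R$ using $\poly N$ queries to a \QMAEXP oracle, locating $\theta^*(\varphi)$ to the required precision; this is exactly \PQMAEXP.

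For hardness I would reduce from $\forall$-TI-APX-SIM$(K_N, B, a, b, \delta)$, which by \cref{Lemma:APX-SIM_Complexity} is \PQMAEXP-complete on 1D translationally invariant chains even with a $1$-local observable $B$ with eigenvalues in $[0,1]$. Mirroring the one-parameter construction in \cref{sec:1-hard}, I would build a QTM $\mathcal{M}$ that, on register inputs encoding $N$, $\varphi$, $\theta$ and an unconstrained witness $\ket{\psi}$, (i)~runs phase estimation on $\exp(\ii t K_N)$ with $\ket{\psi}$ to extract an eigenvalue $\lambda_j(K_N)$, (ii)~compares it to $\varphi$ via a unitary $U_\varphi$ encoding $\varphi$ as a phase, and (iii)~outputs a flag set to $[\varphi < \lambda_j(K_N)] \wedge [\bra{\psi}B\ket{\psi} < \theta]$, with the $\bra{\psi}B\ket{\psi}$ estimate obtained by measuring $B$ on a second copy of the witness register (needing only swap tests / repeated sampling since $B$ is $1$-local). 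Encoding $\mathcal{M}$ as a Gottesman-Irani--style history-state Hamiltonian $H_{\mathrm{QTM}}(\varphi,\theta)$, whose clock register has an exponentially small penalty on each active output flag, produces a Hamiltonian whose ground state energy is $>b'$ whenever either $\varphi<\lmin(K_N)$ or $\bra{\psi_{\min}}B\ket{\psi_{\min}}>\theta$, and $<a'$ otherwise, provided $\varphi$ lies in the APX-SIM promise window $[\lmin(K_N),\lmin(K_N)+\delta]$. The witness register self-adjusts to minimise the penalty exactly as in the one-parameter case, so the role of $\varphi$ is to ``activate'' the comparison and the role of $\theta$ is to convert the APX-SIM expectation bound into an energy dichotomy of width $\Omega(b-a)$.

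From here I would follow the recipe already deployed in \cref{sec:1-hard}: add an exponentially tight bonus Hamiltonian so that the dichotomy becomes a sign dichotomy $\lmin \lessgtr 0$; tile the combined $H_{\mathrm{QTM}}+H_{\mathrm{bonus}}$ periodically by a checkerboard tiling Hamiltonian so that the two-body energy difference per tile extensifies the sign dichotomy to $\pm\infty$ in the thermodynamic limit; and finally take a direct sum with a translationally invariant dense-spectrum Hamiltonian (and a trivial product Hamiltonian) so that the overall model is either gapped with unique product ground state or has a dense spectrum above the ground state, exactly as in \cref{Def:gapped,Def:gapless}. With $\varphi$ restricted to the APX-SIM window, varying $\theta$ moves us from ``gapped'' to ``gapless''; the critical $\theta^*(\varphi)$ thus encodes $\bra{\psi_{\min}}B\ket{\psi_{\min}}$. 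Choosing the rectangle $R$ of \cref{def:2-crt-prm} to straddle the value $\theta$ separating the APX-SIM \YES/\NO cases identifies ``$R$ in phase $A$'' with the APX-SIM \YES-instance, yielding the hardness reduction.

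The main obstacles I expect are analytic rather than architectural. First, the energy accounting has to be tight enough that the $\Theta(\delta)$ APX-SIM promise is not wiped out by the $\poly(N)$ errors in phase estimation, history-state projection, and bonus calibration; this is exactly where the exponentially tight bonus bounds strengthened from \cite{Bausch_2020_Undecidability,Bausch_Cubitt_Watson2019} and the spectral bounds of \cite{Watson_2019} must be slotted in to place the bonus precisely inside the \QMAEXP promise gap of $H_{\mathrm{QTM}}$. Second, I must verify that the resulting two-parameter family genuinely satisfies the \cref{Def:Local-Global_Gap} (or \cref{Def:Local-Global_Phase}) promise with $L_0 = \poly N$ independent of $(\theta,\varphi)$, and that the induced critical line $\varphi^*(\theta)$ is a single continuous curve with the one-sided offset $y$ demanded by \cref{def:2-crt-prm}. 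Continuity will follow from the continuous dependence of $\lmin(H_{\mathrm{QTM}}(\varphi,\theta))$ on its parameters (since the QTM penalty is a smooth function of $\theta,\varphi$), and uniqueness of the crossing from the monotone structure of the sign dichotomy; these need to be checked carefully but not re-engineered. Finally, showing that the same construction works verbatim under the phase-based condition \cref{Def:Local-Global_Phase} reduces to verifying that ``unique product ground state'' versus ``dense spectrum above the ground state'' can be discriminated by a local, $\poly$-time computable order parameter, which is immediate for the trivial/dense pieces used in the direct sum.
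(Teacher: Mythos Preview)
Your containment sketch and the overall hardness architecture (bonus Hamiltonian, checkerboard tiling, dense/trivial direct sum, monotonicity checks for a unique critical line, verification of the Local-Global promises) match the paper. The gap is step~(iii) of your QTM: you have the machine output a flag $[\varphi<\lambda_j(K_N)]\wedge[\bra{\psi} B\ket{\psi}<\theta]$, estimating $\bra{\psi} B\ket{\psi}$ by ``measuring $B$ on a second copy of the witness register (swap tests / repeated sampling)''. In a history-state construction the witness is unconstrained, so nothing forces a minimising $\ket{\nu}$ to consist of identical copies, and a swap test only certifies fidelity between registers---it does not tie the marginal on which you sample $B$ to the marginal fed into the energy comparison. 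More basically, $[\bra{\psi} B\ket{\psi}<\theta]$ is not something a QTM can compute from one copy of $\ket{\psi}$: a single measurement of the $1$-local $B$ returns a sample from $\spec B$, and in the history state the flag is then a superposition weighted by outcome probabilities, so penalising it yields an energy proportional to a probability rather than to the indicator you want. (Note also that, as written, for $\varphi$ in the APX-SIM window the ground state of $K_N$ already has $\lambda_{\min}\le\varphi$, so the first conjunct of your flag is false on the ground state and the flag is identically $0$ there regardless of $\theta$; the claimed energy dichotomy does not follow from this flag.)

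The idea you are missing is that $\theta$ should not enter the QTM at all. In the paper the QTM computes only the phase-comparison flag (\cref{Lemma:QTM_Output_2-CRT-PRM}), and $\theta$ is injected \emph{in the output-penalty operator}: instead of a projector one uses $\Pi=\frac{1}{256L^b}\bigl(\1-\ketbra{11}_f\otimes(B-\theta\1)\otimes\1\bigr)$, so the relevant output weight is $\eta=\Tr\bigl[(\ketbra{11}_f\otimes(B-\theta\1))\ketbra{\chi}\bigr]$, which for $\varphi$ in the low-energy window equals $\lambda_{j^*}(B)-\theta$ up to $1/\poly$ (\cref{Lemma:Rejection_Probabilities_2,Cor:Rejection_Probabilities_2-2}). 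This makes $\theta$ appear directly as a matrix coefficient of the local term (point~2 of \cref{Theorem:QTM_in_local_Hamiltonian_2_PRM}), gives $\eta$ the required strict monotonicity in $\theta$ (\cref{Lemma:eta_2_Decreasing,Corollary:eta_2_monotonicity}), and never requires estimating $\bra{\psi} B\ket{\psi}$ inside the computation. With this change the rest of your outline goes through as written.
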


Containment within some complexity class for the two cases hinges, as aforementioned, on how hard it is to answer the respective local-global properties; the corresponding reductions are proven in \cref{sec:containment}; for us, we will constrain the threshold system size for the local-global promises to be polynomial in the input size, which in both cases will result in a containment within \PQMAEXP.
Whether containment for the 1-CRT-PRM case can be made tighter (e.g.~prove containment within \QMAEXP) is an open question (see the extended discussion in \cref{sec:discussion}).
We prove the two hardness results in \cref{sec:1-hard,sec:2-hard}, respectively, by explicitly constructing translationally-invariant nearest neighbour families of Hamiltonians which encode the answer of a $\QMAEXP$ resp.~$\PQMAEXP$-hard problem within the decision problem of whether the system is in phase A or B.

Each Hamiltonian will be defined on a lattice $\Lambda$, $\{ H_N(\varphi) \}_{N\in\field N}$, where $H_N(\varphi) = \sum_{i\sim j} h^{(i,j)}_N(\varphi) + \sum_{i\in\Lambda} h^{(i)}_N(\varphi)$ for neighbouring sites $i\sim j$, $i,j\in\Lambda$.
All local terms $h^{(i)} \coloneqq h^1_{\{i\}} \otimes \1_{\Lambda\setminus\{i\}}$, and analogously for $h^{(i,j)}$ constructed from a constant two-site matrix $h^2$.
Furthermore, all the matrix entries of $h^1$ and $h^2$ will be specified to $O(|N|)=O( \log_2 N )$ bits of precision; and naturally we allow the local terms to depend on the family parameter $N$ in a trivially-computable fashion (i.e.\ we demand the matrix entries have to be computable classically from $N$ in time $\poly|N|$), where we do however require a constant-bounded interaction strength $\| h^1 \|,\| h^2 \| \le 1$.
As aforementioned, containment in \PQMAEXP is a corollary of the local-global phase/gappedness promise imposed on the systems.

\section{Containment of 1- and 2-CRT-PRM in P\textsuperscript{QMA\textsubscript{EXP}}}\label{sec:containment}

\begin{lemma}\label{lem:tech-containment-1}\label{Lemma:Approximate_Spectral_Gap}
\checked{Johannes, James}
Consider a Hamiltonian $H^{\Lambda(L)}(\varphi)$  such that the local terms are describable in $n$ bits, and that satisfies the global-local \textbf{gap} condition (\cref{Def:Local-Global_Gap}) .
Then for $n=O(\log(L))$ determining whether $\Delta(H^{\Lambda(L)}(\varphi))\geq 1/q(L)$ or $\leq 1/p(L)$ for any point in parameter space  is in $\PQMAEXP$.
\end{lemma}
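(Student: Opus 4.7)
The plan is to binary-search for both $\lmin(H^{\Lambda(L)}(\varphi))$ and the next eigenvalue $\lambda_1(H^{\Lambda(L)}(\varphi))$ via $\poly(n)$ queries to a \QMAEXP oracle, and then compare their difference to the promise thresholds $1/p(L)$ and $1/q(L)$. The hypothesis $n = O(\log L)$ is essential here: it implies the lattice carries $L^2 = 2^{O(n)}$ sites, so any witness over it fits into $2^{\poly(n)}$ qubits, and a precision of $1/\poly(L) = 2^{-O(n)}$ in the energy amounts to only $O(n)$ bits of refinement, both squarely within the \QMAEXP framework.

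First I would approximate $\lmin$ via queries ``is $\lmin(H^{\Lambda(L)}(\varphi)) \leq t$?''. This is a translationally-invariant local Hamiltonian problem directly solvable by a \QMAEXP verifier: read the $L^2$-qubit witness, compute the single local term $h(\varphi)$ from the $n$-bit input in $\poly(n)$ time, sample its $\poly(L)$ translates to estimate $\langle\psi|H|\psi\rangle$ to precision $1/\poly(L)$ in total time $2^{O(n)}$, and accept iff the estimate lies below $t$ plus slack. Next I would approximate $\lambda_1$ by observing that ``$\lambda_1 \leq t$?'' is itself in \QMAEXP via a two-witness protocol: the prover submits candidates $|\phi_0\rangle, |\phi_1\rangle$, and the verifier (i)~estimates the $2\times 2$ block $M_{ij} := \langle\phi_i|H|\phi_j\rangle$ and the Gram matrix $G_{ij} := \langle\phi_i|\phi_j\rangle$ using Hadamard and SWAP tests, (ii)~rejects if $|G_{01}|^2 > 1/2$ so that $G$ is well-conditioned, and (iii)~solves the generalized eigenvalue problem $Mx = \mu Gx$ and accepts iff the larger eigenvalue $\mu_1 \leq t + \varepsilon$. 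Completeness is immediate by submitting the true two lowest eigenvectors; soundness follows from the Courant--Fischer characterisation $\lambda_1 = \min_{\dim V = 2}\max_{v \in V} \langle v|H|v\rangle/\langle v|v\rangle$, which yields $\lambda_1 \leq \mu_1 \leq t + \varepsilon$ whenever the verifier accepts. Two binary searches of $O(n)$ queries each then pin $\lmin$ and $\lambda_1$ down to any desired $1/\poly(L)$ precision, and outputting ``gapped'' whenever $\lambda_1 - \lmin > \tfrac{1}{2}(1/p(L) + 1/q(L))$ places the decision problem in \PQMAEXP.

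The main care is in the precision budget across the two nested binary searches: the cumulative additive slack $O(\varepsilon)$ must stay strictly inside the promise window $1/p(L) - 1/q(L) = \Omega(1/\poly(L))$. This is handled by picking $\varepsilon = 1/\poly(L)$ with a sufficiently large polynomial, which only adds $O(n)$ bits to each query's internal precision and is comfortably within the exponential-time budget of \QMAEXP; standard amplification boosts each query's completeness and soundness to $1 - 2^{-\poly(n)}$, so a union bound over the $\poly(n)$ oracle calls leaves the whole \PQMAEXP procedure correct with overwhelming probability.
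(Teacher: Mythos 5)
Your overall strategy---binary searches for $\lmin$ and $\lambda_1$ via $\poly(n)$ queries to a \QMAEXP oracle, using $n=O(\log L)$ to convert $1/\poly(L)$ energy precision into exponential precision relative to the input size---is the same idea behind the paper's proof, which simply invokes \citeauthor{Ambainis2013}'s containment of the spectral gap problem in \PQMALOG \cite{Ambainis2013} and observes that translational invariance shrinks the input to $O(\log L)$ bits while the precision stays $1/\poly L$, upgrading the class to \PQMAEXP. However, your bespoke sub-protocol certifying ``$\lambda_1\le t$'' has a genuine gap. First, the quantities you ask the verifier to estimate, $M_{01}=\bra{\phi_0}H\ket{\phi_1}$ and $G_{01}=\braket{\phi_0}{\phi_1}$, are not measurable from two states handed over in separate registers: they depend on the unphysical relative phase between $\ket{\phi_0}$ and $\ket{\phi_1}$, and swap/Hadamard-type tests only yield phase-invariant combinations such as $|\braket{\phi_0}{\phi_1}|^2$ and $\bra{\phi_0}H\ket{\phi_1}\braket{\phi_1}{\phi_0}$. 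This is repairable, since the generalized eigenvalues of the pencil $(M,G)$ depend only on invariant combinations, but as written the verifier is not well defined.

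Second, and more seriously, your soundness argument via Courant--Fischer presumes the witness really is a product $\ket{\phi_0}\otimes\ket{\phi_1}$ (and, since $1/\poly(L)$-accurate estimation requires $\poly(L)$ copies, a product of many such pairs). In \QMAEXP soundness must hold against \emph{every} witness, including states entangled across the two registers and across copies; for such states the estimated matrices $\hat M,\hat G$ need not correspond to any two-dimensional subspace, so Courant--Fischer gives nothing, and because your accept/reject rule is a nonlinear function of several estimated expectation values, the convexity argument that makes entanglement across copies harmless in plain local Hamiltonian verification does not apply directly. The standard way to avoid all of this---and essentially what the cited algorithm does---is to phrase the excited-energy query as a ground-state-energy query: $\lmin(H)+\lambda_1(H)$ equals the minimum of $\bra{\Psi}\left(H\otimes\1+\1\otimes H\right)\ket{\Psi}$ over antisymmetric two-copy states, and the antisymmetry can be enforced by a SWAP penalty, so ``$\lmin+\lambda_1\le s$'' becomes a single energy-threshold question that is automatically sound against arbitrary witnesses. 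With that replacement, together with your binary searches and precision bookkeeping, the argument goes through and coincides with the route the paper takes by citation.
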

\begin{proof}
The algorithm showing containment of SPECTRAL GAP (as defined in \cite{Ambainis2013}) in $ \PQMALOG $ can be used.
However, now the Hamiltonian is promised to be translationally invariant, we need exponentially less information to input the Hamiltonian; the only input is now $L$ which only requires $n=\BigO(\log(L))$ bits to express.
As the required precision, on the other hand, is still $1/\poly L$, the relevant complexity class is now  $\PQMAEXP$.
\end{proof}

\begin{lemma}\label{lem:tech-containment-2}\label{Lemma:Order_Parameter_Complexity}
\checked{James,Johannes}
Consider a Hamiltonian $H^{\Lambda(L)}(\varphi)$  such that the local terms are describable in $n$ bits, and that satisfies the global-local \textbf{phase} condition (\cref{Def:Local-Global_Phase}).
Then for $n=O(\log(L))$ and all states $\bra{\psi}H^{\Lambda(L)}(\varphi)\ket{\psi}\leq \lmin(H^{\Lambda(L)}(\varphi)) + \omega$  determining whether $\langle \bra{\psi}O_{A/B}\ket{\psi}\rangle\geq 1/q(L)$ or $\leq 1/p(L)$ for any point in parameter space such that $|\varphi - \varphi^*|\geq 1/\poly(L)$ is in $\PQMAEXP$.
\end{lemma}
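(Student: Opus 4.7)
The plan is to reduce the question to a translationally-invariant variant of APX-SIM, whose \PQMAEXP-completeness is recorded in \cref{Lemma:APX-SIM_Complexity}. Concretely, given a translationally-invariant Hamiltonian $H^{\Lambda(L)}(\varphi)$ satisfying the local-global phase condition and a parameter $\varphi$ with $|\varphi - \varphi^*| \geq 1/\poly(L)$, I set up a $\forall$-TI-APX-SIM instance with Hamiltonian $H = H^{\Lambda(L)}(\varphi)$, observable $A = O_{A/B}$, thresholds $a = 1/p(L)$, $b = 1/q(L)$, and low-energy parameter $\delta = \omega = \Omega(1/\poly(L))$.

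The only genuine thing to verify is that the promise required by $\forall$-TI-APX-SIM is exactly what the local-global phase condition (\cref{Def:Local-Global_Phase}) supplies. For $\varphi$ separated from $\varphi^*$ by at least $1/\poly(L)$, the phase condition asserts that every $\ket{\psi}$ with $\bra{\psi}H^{\Lambda(L)}(\varphi)\ket{\psi}\leq \lmin(H^{\Lambda(L)}(\varphi)) + \omega$ satisfies either $\bra{\psi}O_{A/B}\ket{\psi}\leq 1/p(L)$ (phase A) or $\bra{\psi}O_{A/B}\ket{\psi}\geq 1/q(L)$ (phase B), and the gap $1/q(L) - 1/p(L) = \Omega(1/\poly(L))$ matches the required gap-in-expectation $b-a \geq N^{-c'}$ of \cref{Def:forall_APX-SIM}. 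Answering the original decision is therefore equivalent to deciding the YES/NO case of this APX-SIM instance; the local order parameter $O_{A/B}$ is $O(1)$-local and efficiently computable, so it is a valid local observable for the APX-SIM reduction.

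Finally, as in the proof of \cref{Lemma:Approximate_Spectral_Gap}, the fact that the Hamiltonian is translationally invariant means the input description shrinks from $\poly(L)$ bits to only $n = O(\log L)$, while the precision demanded is still $1/\poly(L)$. The standard APX-SIM algorithm (a polynomial-time procedure with adaptive \QMA-oracle queries, \cite{Ambainis2013}) therefore runs in time polynomial in $L$, i.e.\ exponential in the true input size $n$, so the natural complexity class is promoted from \PQMA to \PQMAEXP. The main (and ultimately minor) obstacle is book-keeping: making sure that the ``for all low-energy states'' quantifier in \cref{Def:Local-Global_Phase} lines up with the $\forall$ quantifier in \cref{Def:forall_APX-SIM}, and that the constant-precision promise gap in the phase condition is faithfully transferred into the APX-SIM thresholds, both of which follow by inspection of the two definitions.
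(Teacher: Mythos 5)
Your reduction is correct and is essentially the paper's own proof: the paper likewise identifies $\delta=\omega$, takes $A=O_{A/B}$, and matches $a,b$ with $1/p(L),1/q(L)$ so that the low-energy promise of \cref{Def:Local-Global_Phase} becomes precisely a $\forall$-TI-APX-SIM instance, whose \PQMAEXP containment is given by \cref{Lemma:APX-SIM_Complexity}. Your extra paragraph on the translationally-invariant input being $O(\log L)$ bits is already implicit in the paper, since \cref{Lemma:APX-SIM_Complexity} is stated directly for translationally invariant Hamiltonians at the \PQMAEXP level.
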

\begin{proof}
Follows directly from \cref{Lemma:APX-SIM_Complexity} in the following way.
Set $\delta$ from the definition of $\forall$-TI-APX-SIM in \cref{Def:forall_APX-SIM} to be equal to the energy parameter $\omega$ from \cref{Def:Local-Global_Phase}.
Let the order parameter $O_{A/B}$ be the local observable to be measured (i.e. $A$ in the definition of $\forall$-TI-APX-SIM), and let the polynomials $p,q$ correspond to the bounds $a,b$.
Then finding $\bra{\psi}O_{A/B}\ket{\psi}$ for a state with energy $\bra{\psi}H^{\Lambda(L)}(\varphi)\ket{\psi}\leq \lmin(H^{\Lambda(L)}) +\omega $ is just an instance of $\forall$-TI-APX-SIM.
\end{proof}

\begin{theorem}\label{lem:tech-containment-3}
\checked{James}
Let $H$ be a translationally invariant Hamiltonian with local terms describable in $n$ bits, and on a lattice of size $L=\exp(\poly n))$. 
Further let $H$ satisfy either the global-local gap property in \cref{Def:Local-Global_Gap} or the global-local phase property in \cref{Def:Local-Global_Phase}, and have a critical point at $\varphi^*$.
Then the 1-CRT-PRM and 2-CRT-PRM are contained in $\PQMAEXP$. 
\end{theorem}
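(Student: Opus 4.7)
The plan is to reduce both problems to polynomially many queries, each of which decides at a fixed point in parameter space which of the two phases the system is in, and then invoke \cref{lem:tech-containment-1,lem:tech-containment-2} to place each such query in \PQMAEXP. Since nested \PQMAEXP calls collapse (a polynomial-time machine making polynomially many \PQMAEXP subqueries can be flattened into a single polynomial-time machine making polynomially many \QMAEXP queries), the whole procedure remains in \PQMAEXP.

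More concretely, first I would set up binary search over the parameter domain. For 1-CRT-PRM, the task is to decide whether $\varphi^*\le\alpha$ or $\varphi^*\ge\beta$; since the promise guarantees that $\varphi^*$ is unique and separates phase A from phase B monotonically, it suffices to query the phase at the two midpoints $\alpha$ and $\beta$ (or, if $f$ is larger than constant, to binary-search the interval $[\alpha,\beta]$ to precision $1/f(N)$, which takes $O(\log f(N))=\poly|N|$ steps). For 2-CRT-PRM, by the promise, one can reduce the question to checking whether, for each of polynomially many sample points $\varphi\in S_\kappa$, the critical value $\theta^*(\varphi)$ lies above $\beta_1$ or below $\alpha_1$; this is again decided by querying the phase of $H^{\Lambda(L_0)}(\theta,\varphi)$ at a constant number of $\theta$-values per sample.

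At each query point $(\theta,\varphi)$, the Local-Global promise (\cref{Def:Local-Global_Gap} or \cref{Def:Local-Global_Phase}) guarantees that the phase of the thermodynamic-limit Hamiltonian coincides with the phase that can be read off from the finite Hamiltonian on $\Lambda(L_0)$, where $L_0=\poly N$ is computable in time $\poly|N|$ and independent of the parameters. Because the input specifies the local term to $|N|$ bits, the finite instance on $L_0^2=\poly N$ sites is a $\poly N$-qudit translationally-invariant local Hamiltonian with exponentially (in $|N|$) many sites relative to its description length; this is exactly the input regime of \PQMAEXP. Under the gap variant of the promise, \cref{Lemma:Approximate_Spectral_Gap} decides whether $\Delta(H^{\Lambda(L_0)}(\cdot))\ge 1/q(L_0)$ or $\le 1/p(L_0)$ in \PQMAEXP, identifying the phase. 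Under the phase-order-parameter variant, \cref{Lemma:Order_Parameter_Complexity} similarly identifies the phase in \PQMAEXP via a $\forall$-TI-APX-SIM query on the low-energy subspace.

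The only subtle point, and the main thing I would double-check, is the composition: each single-point phase decision already uses a \QMAEXP oracle, and the outer binary search wraps $\poly|N|$ such decisions. Formally, one replaces each inner \PQMAEXP subroutine with its underlying polynomial-length list of \QMAEXP queries and concatenates, so the outer machine is a single polynomial-time Turing machine making $\poly|N|$ \QMAEXP queries, i.e.\ a \PQMAEXP computation. The problem promises (uniqueness of the critical boundary, Local-Global property, and the fact that $L_0$ is efficiently computable and independent of the parameters) are exactly what is needed for this reduction to be well-defined; no additional structural argument is required beyond the two containment lemmas already proved.
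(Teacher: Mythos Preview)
Your approach is essentially the paper's: binary search over the parameter space, invoking \cref{lem:tech-containment-1} or \cref{lem:tech-containment-2} at each probe point to read off the phase on the finite lattice $\Lambda(L_0)$, and then flattening the nested oracle calls. The 1-CRT-PRM part is fine as written.

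There is one genuine gap in your 2-CRT-PRM argument. You write ``for each of polynomially many sample points $\varphi\in S_\kappa$'', but $S_\kappa=[y+\kappa,\,y+2\kappa]$ is defined relative to the offset $y$, which is \emph{not} part of the input---it is implicitly determined by the Hamiltonian as the largest $\varphi^*$ below which the system is entirely in phase $B$ for all $\theta$. You cannot sample from $S_\kappa$ without first locating $y$. The paper handles this by observing that the promise guarantees $H(0,\varphi)$ has a unique critical point in $\varphi$, so one can run the 1-CRT-PRM binary search along the slice $\theta=0$ to approximate $y$ to precision $\BigO(1/\exp n)$ using $\poly(n)$ further \QMAEXP queries; only then does one pick a single $\varphi\in S_\kappa$ and query the phase there (a single point suffices, since the promise makes the answer constant on $S_\kappa$). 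Once you insert this preliminary step, your argument and the paper's coincide.
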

\begin{proof}
We start with the more general case.
\paragraph{Local-Global Phase Assumption.}
For 1-CRT-PRM, we must show it is possible to find an approximation $\tilde{\varphi}^*$ such that for $n=\BigO(\log(L))$
\[
        |\tilde{\varphi}^*-\varphi^*|< \BigO(1/\poly L) = \BigO(1/ \exp n ).
\]
with an algorithm in $\PQMAEXP$.

The algorithm is as follows:
\begin{itemize}
    \item Calculate $L_0$.
    By \cref{Def:Local-Global_Phase}, this can be calculated in $\poly(n)$ time.
    \item Take a $L_0\times L_0$ region of the lattice.
    Using the algorithm in \cref{lem:tech-containment-1}, determine $\langle O_{A/B} \rangle = \bra{\psi }H\ket{\psi}$ for states satisfying $\bra{\psi }H\ket{\psi}\leq \lmin(H)+\omega$ to precision $1/r(L_))$, where $r(L)\gg q(L) > p(L)$. 
    As per \cref{Lemma:Order_Parameter_Complexity}, this can be done using $\poly(n)$ many calls to a $\QMAEXP$ oracle.
    \item If $\langle O_{A/B} \rangle < 1/p(L) + 1/r(L) $ then by the global-local phase condition \cref{Def:Local-Global_Phase}, the Hamiltonian must be in phase $A$. 
    If $\langle O_{A/B} \rangle > 1/q(L) - 1/r(L) $, then we know it must globally be in phase $B$. 
    Due to the earlier promise, we are guaranteed that it satisfies one of these conditions.
    \item Perform a binary search through the parameter space of $\varphi$.
    Using $\BigO(\poly(n))=\BigO(\log(L))$ runs of the algorithm we can identify the point at which the order parameter from  $\langle O_{A/B} \rangle < 1/p(L) + 1/r(L) $ to $\langle O_{A/B} \rangle > 1/q(L) - 1/r(L) $ or visa-versa to within $O(1/\poly(L))$ precision.
    The interval in which the expectation $\langle O_{A/B} \rangle$ changes must contain the critical parameter $\varphi^*$.
    \item 
    The result is we get an estimate $\tilde{\varphi}^*$ such that 
    \[
        |\tilde{\varphi}^*-\varphi^*|<  \BigO(1/\poly L)  = \BigO(1/\exp n).
    \]
\end{itemize}
Finally we note that running the algorithm to compute the $\langle O_{A/B}\rangle$ takes $\poly(n)=\BigO(\log(L))$  \QMAEXP queries at each point in parameter space.
To do the binary search procedure, we choose $\poly(n)$ points, each of which runs this algorithm.
Thus, overall we make $\poly(n)=\BigO(\log(L))$ queries to the $\QMAEXP$ oracle and hence 1-CRT-PRM is in $\PQMAEXP$. 

For the 2-CRT-PRM case, the extra ingredient is the offset $y=\varphi^*$ along the $\theta=0$ axis in \cref{def:2-crt-prm}.
As we are promised that there is a unique such critical point, we can use binary search for the special case $H(0, \varphi)$ to approximate $y$ to precision $\BigO(1/\exp n)$ which takes at most $\poly( n )$ oracle queries.
Using \cref{def:2-crt-prm_2} as our definition of 2-CRT-PRM, we choose some $\varphi\in [y + \kappa, y+2\kappa]$ and query the order parameter for that point as above.
This can all be done in $\PQMAEXP$.

\paragraph{Spectral Gap Assumption.}
The proof for containment in the case that the Hamiltonian satisfies the global-local spectral gap condition is almost identical.
This is because the algorithm to determine the spectral gap to precision $\BigO(1/\exp n) = 1/\poly L$ precision is also contained in $\PQMAEXP$, as shown in \cref{Lemma:Approximate_Spectral_Gap}.
\end{proof}
\section{QMA\textsubscript{EXP} Hardness of 1-CRT-PRM}\label{sec:1-hard}

In order to prove \QMAEXP-hardness of 1-CRT-PRM, we explicitly construct a 1-parameter family of Hamiltonians $H_N(\varphi)$ on a qudit lattice $\Lambda$, where $N\in\field N$ and $\varphi\in[0,1]$ are encoded into phases of a local term, and with all matrix entries specified to bit precision of at most $|N|$.

The local terms $h^N(\varphi)$ will be a function of two parameters: $N$ and $\varphi$ (both encoded into the phase of a local term).
Here $N$ encodes the problem instance and should be thought of as changing the form of the Hamiltonian. This point is subtle, but important: in the thermodynamic limit, where the lattice size is infinite, the only term that encodes the instance is the local coupling terms $h^N(\varphi)$.
For every such $h^N$, we now ask: is the critical point $\varphi^*$ above or below some threshold? This is entirely analogous to the local Hamiltonian problem, where say the size of the spin chain $N$ determines the instance, and we ask whether the ground state is above or below some threshold. So what does $N$ encode? It is an integer that encodes the hard problem that we reduce to the phase decision; as such, not all $N$ might be valid (as the hard classes we use in the reduction are promise problems; as such, there might be invalid $N$ too for which we cannot say anything).
When making reductions between promise problems we need only show that if the promise of the initial problem is satisfied, then the promise of the problem we are mapping it to much also be satisfied. We will not be concerned with the case where $N$ corresponds to an invalid instance. 

In contrast to the LH Problem, or the question of whether a system is gapped or not in the thermodynamic limit, we \emph{know} by construction that the system will be in phase A or B for various choices of $\varphi$. This means $\varphi$ is the parameter of interest: we ask whether the critical point $\varphi^*$ is above or below some threshold---just as we could have asked for the spectral gap to be larger or smaller than some threshold, or the ground state energy for that manner---or the colour of the resulting material.
This, in turn, means that \emph{every instance, indexed by $N$, is itself a family of Hamiltonians, parameterised by $\varphi$, and 1-CRT-PRM asks questions about families of Hamiltonians.}
The complexity scaling will then be in terms of $|N|$, i.e.\ the number of bits required to encode $N$, as the input size, or precision to which we describe the matrix elements of the local terms.
This is entirely natural: we would expect the problem to be ever harder the more precise we specify the local terms; and the precision to which we want to resolve a critical point to be related to the precision to which the local Hamiltonian is specified as well.

\paragraph{Proof Outline.}
\checked{James}
$H_N(\varphi)$ is constructed so that its ground state  partitions the lattice into checker board grids of varying side length (motivated by the idea in \cite{Bausch_Cubitt_Watson2019}). 
Within each square of the checker board there is a QTM-to-Hamiltonian mapping, which means that its ground state is a so-called ``history state''; for the particular Turing machine we choose to encode, the ground state represents the following procedure:
\begin{enumerate}
	\item Perform QPE to extract $N$ from local terms.
	\item Perform a phase comparator QPE on the unitary encoding $\varphi$ and $\exp(\ii t G_N)$, where $G_N$ is a translationally-invariant local spin Hamiltonian with \QMAEXP-hard local Hamiltonian problem (on a spin chain of length $N$). This computation is performed with an unconstrained input state.
	Assuming said input state is an eigenstate of $G_N$ with eigenvalue $\lambda$, the phase comparator QPE extracts the \emph{difference} $\lambda-\varphi$ to bit precision $\sim |N|$.
	\item 
	If $\varphi<\lambda$ we set an output flag to $\ket{0}$.
	If $\varphi>\lambda$ we set it to $\ket{1}$.
\end{enumerate}

We can then give an energy penalty to the $\ket{0}$ state of the flag qubit at the output of this computation, which ensures that the hitherto unconstrained input state to the phase comparator QPE assumes $G_N$'s ground state; we can thus assume that $\lambda=\lmin(G_N)$.
Adding an unconditional bonus on the square size that the history computation runs on (using a 2D Marker Hamiltonian), this combination of history state and Marker Hamiltonian results in a ground space energy
\[
\lmin(H_N(\varphi)) = \begin{cases}
< 0 & \text{if $\lmin(G_N) < \varphi$, and} \\
\ge 0 & \text{otherwise.}
\end{cases}
\]
As $G_N$ satisfies a promise gap, there will also be a promise gap on $\lmin(H_N(\varphi))$ with respect to $\varphi$.

We will describe this construction outlined above in rigorous detail in the following sections; we start with the construction of the quantum Turing machine performing the listed operations in \cref{sec:phase-comparator-qtm}, and under the assumption of having access to all necessary gates directly.
In \cref{sec:phase-comparator-qtm-approx}, we do away with this assumption and replace the execution of $\exp(\ii t G_N)$ with a variant based on performing Hamiltonian simulation; as well as using Solovay-Kitaev to replace all remaining gates by a fixed gate set.

We then embed the resulting quantum Turing machine into a history state Hamiltonian in \cref{sec:phase-comparator-ham}, combine this construction with a 2D Marker Tiling in \cref{sec:with-marker}, lift the phase comparison ground state energy to a phase transition in \cref{sec:two-phases}, prove that our construction has a unique phase transition in \cref{sec:unique-critical-point}, and show the reduction $\QMAEXP \longrightarrow$ 1-CRT-QTM in \cref{sec:1-hardness-reduction}.
Finally, in \cref{sec:verify-promise-1}, we prove that this constructed family of Hamiltonians indeed satisfies the local-global gap property from \cref{Def:Local-Global_Gap}.

\subsection{A Phase Comparator Quantum Turing Machine}\label{sec:phase-comparator-qtm}
In this section we consider a multi-tape QTM which will take as input two different numbers $N\in\field N$ and $\varphi\in[0,1]$\footnote{Some of the previous theorems are stated for $\varphi\in[0,\poly(N)]$. We will actually prove a result for $\varphi\in[0,1]$ first and then extend it to a larger interval.} as well as an input state $\ket{\nu}\in (\C^d)^{\ox N}$.
Rather than straightforwardly inputting $N, \varphi$ on the tape, we will give the QTM access to particular gates such that when it performs quantum phase estimation on these gates, the resulting string will be an encoding of $N$ and $\varphi$.

As we require both numbers to be extracted from the complex phase of a unitary, we need to encode them into the fractional part of the phase in some fashion.
To this end, we devise the following encoding map for $N$:
\begin{definition}[Encoding]\label{def:encoding}
	\checked{James}
	Let $(\cdot\ ,\cdot)$ represent the string concatenation operation, let $k$ be a fixed integer, and let $n=|N|$.
	Let a string $w\in [4]^{2n} \equiv ([4]^n, [4]^{n})$ be \emph{valid} if $w\in ([2]^n,2^{\times n})$, and denote the set of all valid strings of length $2n$ as $V_{2n}$, where further $V \coloneqq \bigcup_{n=1}^\infty V_{2n}$.
	For $N\in \field N$, we define
	\vspace{-5mm}
	\[
	\enc:\field N \longrightarrow V
	\quad\text{where}\quad
	\enc(N) = (N_1N_2\cdots N_{n},\overbrace{22\cdots2}^{n\ \text{times}})
	\]
	where $N$ has binary expansion $N=N_1N_2 \cdots N_n$.
	For $z\in\field N$, we set $\enc^{-1}(z)$ to be the number $z$ truncated to the first half (rounded down) of the base-4 digits of $z$.
\end{definition}
We remark that the encoding is in base $4$ and hence the number of \textbf{bits} required for $\enc(N)$ is $|\enc(N)| = 4|N|$.
We note that for the inverse map in \cref{def:encoding}, we have that $\enc^{-1}\enc(N)=N$ for all $N\in\field N$, and such that if $z$ has $m$ base-4 digits (i.e.\ $m\le 4^m$), it always holds that $\enc^{-1}(z) \le 2^m$.

In this section we loosely speak of performing QPE to base four; this is of course simply a shorthand for performing base $2$ QPE with twice the number of bits; it is straightforward to verify that the following derivation is well-defined in this context.
In order to assess how well-suited said encoding is to get a precise handle on the QPE error, we formulate the following technical lemma: 
\begin{lemma}[Encoded QPE Extraction] \label{Lemma:String_Encoding}
	\checked{James}
	Denote with $V$ the set of all valid strings from \cref{def:encoding}.
	Consider the map
	\[
	V \longrightarrow \field R
	\quad\text{where}\quad
	V \ni y \longmapsto 0.y
	\quad\text{(in base 4)}
	\]
	and set $U_y  \coloneqq  \diag(\exp(\ii \pi 0.y), 1)$.
	Denote with $|y|$ the length of $y$ in base 4.
	Let the output of performing quantum phase estimation on the unitary $U_y$ (wrt.\ to the eigenstate $\ket0$) with a perfect gate set (i.e.\ all gates are performed without error) on $t\in 2\field N$, $t\ge2$ qudits be $\sum_{m\in[4]^t} \beta_m\ket{m}$.
	Then if $t \ge |y|$, $\beta_y=1$ and all other $\beta_i=0$;
	if $t<|y|$ the total probability amplitude on valid strings is bounded as
	\[
	\sum_{\substack{m\in V_t}}\left|\beta_m\right|^2<\frac{1}{2^{t/2}}.
	\]
\end{lemma}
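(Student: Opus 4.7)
The plan is to split on whether $t\ge |y|$ or $t<|y|$. In the exact case $t\ge|y|$, the phase encoded by $U_y$ sits exactly on the grid resolvable by $t$ qudits of base-$4$ QPE, and the standard analysis yields $\beta_y=1$ with all other amplitudes vanishing after the inverse QFT. For the inexact case, I would start from the standard QPE amplitude formula
\[
|\beta_m|^2 \;=\; \frac{1}{4^{2t}}\,\frac{\sin^2(\pi\,4^t\delta_m)}{\sin^2(\pi\delta_m)}, \qquad \delta_m=\phi-m/4^t\ (\mathrm{mod}\ 1),
\]
and apply the elementary bound $|\sin(\pi x)|\ge 2|x|$ (for $|x|\le 1/2$) to the denominator to obtain $|\beta_m|^2\le 1/(4(\phi\cdot 4^t - m)^2)$ for every $m$ within half a period of $\phi\cdot 4^t$. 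The whole argument then rests on a structural distance lemma.

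The distance lemma I would prove is as follows: writing $\phi\cdot 4^t=y_{1\ldots t}+f$ with fractional part $f=0.y_{t+1}\cdots y_{|y|}\in[0,1)$, the integer $y_{1\ldots t}$ lies at integer distance $\ge 4^{t/2}/4$ from every element of $V_t$. This uses the structure of $\enc(N)$ decisively: since $t$ is even and $t<|y|=2n$, we have $t\le 2n-2$ and hence $t/2+1\le n$, so the digit $y_{t/2+1}$ is a binary digit in $\{0,1\}$ and cannot equal $2$. Consequently the last $t/2$ digits of $y_{1\ldots t}$ cannot match the ``all twos'' pattern that defines $V_t$. A short case-split on $t\le n$ versus $n<t\le 2n-2$, combined with a direct calculation of $|R-B|$ where $R=(2\cdots 2)_4$ and $B=(y_{t/2+1}\cdots y_t)_4$, and of the shifted distances when the first half of the $V_t$-candidate is changed by $\pm 1$, yields the quantitative gap $(4^{t/2}-4^{\max(0,t-n)})/3 \ge 4^{t/2}/4$.

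Combining the distance lemma with the QPE tail bound, and using that consecutive elements of $V_t$ are spaced by exactly $4^{t/2}$, the sum over $V_t$ satisfies
\[
\sum_{m\in V_t}|\beta_m|^2 \;\le\; 2\sum_{k\ge 0}\frac{1}{4(4^{t/2}/4-1+k\cdot 4^{t/2})^2} \;=\; \BigO(4^{-t}),
\]
comfortably below the loose target $1/2^{t/2}$ for all $t\ge 4$; the marginal case $t=2$ can be verified by direct inspection using that the padded $2$-digits of $y$ force a non-trivial fractional offset of $\phi\cdot 4^t$ from the integer lattice. I expect the main obstacle to be the structural distance lemma, specifically the intermediate regime $n<t\le 2n-2$ where some of the positions in $[t/2+1,t]$ already carry padded $2$-digits coming from the second half of $\enc(N)$; one must carefully bookkeep the $4^{t-n}$ correction term to confirm that the remaining binary positions on their own still produce an $\Omega(4^{t/2})$ gap on both sides of $y_{1\ldots t}$. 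Once the distance lemma is in hand, the QPE tail sum is a routine computation and the claimed bound follows.
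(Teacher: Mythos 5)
Your proof takes a genuinely different route from the paper's, and in the main regime it is the more careful of the two. The paper bounds each valid string's probability by the cumulative Nielsen--Chuang tail estimate $p(|b-m|>e)\le \tfrac{1}{2(e-1)}$ and multiplies by $|V_t|=2^{t/2}$, after asserting that the truncation $b$ of $y$ is at distance at least $4^{t/2}$ from every valid string. You instead use the per-outcome bound $|\beta_m|^2\le 1/\bigl(4(4^t\phi-m)^2\bigr)$ together with the spacing of $V_t$ (which is \emph{at least}, not exactly, $4^{t/2}$ --- this only helps your inequality), yielding $\BigO(4^{-t})$ rather than $\approx 2^{-t/2}$. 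Your structural distance lemma is also the honest version of the paper's separation claim: the paper's assertion that a valid string has digit $2$ at position $t/2$ is an indexing slip (that position lies in the binary half of a valid string), and the true separation is $(4^{t/2}-4^{\max(0,t-n)})/3\ge 4^{t/2}/4$, exactly as you compute, including the regime $n<t\le 2n-2$ which the paper glosses over. For $t\ge 4$ your argument is correct and in fact repairs the constant-factor overstatement in the paper's distance claim, with ample room below the target $2^{-t/2}$.

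The one genuine problem is the marginal case $t=2$, which you defer to ``direct inspection'': inspection actually refutes the stated bound there. Take $N=3$, so $n=2$ and $y=1122$ in base $4$, giving $0.y=90/256=0.3515625$ and $4^t\cdot 0.y=5.625$; the valid string $(12)_4=6\in V_2$ lies at distance $0.375$, and its QPE probability is $\sin^2(0.375\pi)/\bigl(256\,\sin^2(0.0234375\pi)\bigr)\approx 0.62$, already exceeding the claimed bound $1/2^{t/2}=1/2$. So the lemma as stated fails at $t=2$; this is precisely the point where the paper's claimed separation $|b-b'|\ge 4^{t/2}$ collapses to $1$, so the paper's own proof is not sound there either. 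The defect is immaterial downstream, since the construction only ever invokes the lemma with $t=\Theta(\log N)$ and \cref{Theorem:QTM_in_local_Hamiltonian} separately enforces that $t$ is large enough; but in your write-up you should either restrict the claim to $t\ge 4$ or record the counterexample explicitly, rather than promise a verification of $t=2$ that cannot succeed.
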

\begin{proof}
	
	We consider QPE on $t$ qudits for two cases: $t<|y|$, $t\geq|y|$.
	The fact that we work with base 4 numbers bears no significance since we indirectly treat the setup as a base-2 QPE of twice the length, and it is useful to view the following proof through this lens.
	
	\paragraph{Case \paramath{t\geq|y|}.}
	In this case the quantum phase estimation can be done exactly; this means the probability amplitude on states which are not $y$ is precisely zero \cite[Sec.~5.2]{Nielsen_and_Chuang}.
	
	\paragraph{Case \paramath{t<|y|}.}
	In this case the QPE is not performed exactly and the output is some superposition clustered around the best $t$ digit estimates of $y$; the following analysis closely follows the QPE error analysis in \cite[Sec.~5.2]{Nielsen_and_Chuang}.
	As QPE is done in little Endian order, if we denote with $b\in[4]^t$ the string such that $b/4^t$ is the best $t$ digit approximation to $y$ \emph{less} than $y$, we know that $b$ is simply $y$ truncated on the right hand side to $t$ digits.
	We note that $b\not\in V_t$ as the truncation means that $b_{t/2} \neq 2$ (remember that we assumed $t$ even, so $t/2\in\field N$).
	
	Denote with $b'\in V_t$ the closest \emph{valid} string to $y$, in the same sense as $b$ (i.e.\ such that $b'/4^t$ is closest to $y$ amongst all $b'\in V_t$).
	Since $b'_{t/2} = 2$, we clearly have $| b - b' | \ge 4^{t/2}$, as the two strings have to differ by at least $1$ at the $(t/2)$\textsuperscript{th} position.
	
	By \cite[Eq.~5.27]{Nielsen_and_Chuang} we see that the probability of measuring an outcome $m$ further than $e\in\field N$ away from $b$ is bounded by
	\begin{align}\label{eq:state-away}
	p(|b-m|>e)\leq \frac{1}{2(e-1)}.
	\end{align}
	As $b'$ was the closest valid string to $y$, we know that all other valid strings $m\in V_t$ also satisfy $|b-m| \ge 4^{t/2}$.
	This means
	\[
	\sum_{m\in V_t} | b_m|^2 \le \sum_{m\in V_t} \frac{1}{2(4^{t/2}-1)} \le \frac{2^{t/2}}{2(4^{t/2}-1)} \le \frac{1}{2^{t/2}}
	\]
	as $|V_t| = 2^{t/2}$ by \cref{def:encoding}.
	The claim follows.
\end{proof}

QPE extracts the phase $\lambda$ of some unitary $U$ with respect to an eigenstate $\ket u$---i.e.\ such that $U\ket u = \ee^{\ii \lambda}\ket u$.
This means the algorithm assumes that there exists a ``black box'' capable of preparing the register $\ket u$ to be the correct eigenstate.
More often than not we do not have such a state: obtaining an eigenstate for an operator is generally at least as hard as estimating the associated eigenvalue.\footnote{Generally, it is as ``cheap'' to calculate $\bra u U \ket u$ as it is to write out $\ket u$, modulo polynomial overhead.}
On the other hand, this allows us to form a nondeterministic variant of QPE, in the sense that if we leave $\ket u$ unconstrained, we can ask questions like ``does there exist a state $\ket u$ for which $\lambda_u$ is less than a certain quantity?''
This notion of a nondeterministic QPE has been employed in different contexts before, e.g.\ in the context of Hamiltonian simulation \cite{Kohler2020}.

In the following lemma we will thus assume that the eigenstate $\ket u$ is an external quantity to be supplied to the procedure, where we keep in mind that we translate the algorithm to a history state Hamiltonian in due course.
As history state Hamiltonians allow for an unconstrained section (which can later be filtered by a suitable penalty addition), the above decision problem of existence of a state $\ket u$ for which $\lambda_u$ is below or above some threshold then maps naturally to the eigenspectrum of the Hamiltonian.
More concretely, as the particular unitary we are interested in performing nondeterministic QPE on is itself a local Hamiltonian with a hard ground state energy problem, the notion of existence/non-existence of an eigenstate $\ket u$ with a phase $\lambda_u$ below a certain threshold can be analysed precisely as in the case of encoding a (nondeterministic) QMA verifier in hardness proofs of the local Hamiltonian problem \cite{Kitaev2002}.

In the remainder of this section, we will leave the number of bits to which QPE is performed (generally called $t$ in the following) vs.~the length of the chain on which the Hamiltonian-to-be-analysed sits (generally called $N$, or an integer $z\le N$ depending on the context) independent; yet in order to prove hardness, we will later on require the number of bits large enough to resolve the promise gap of the encoded Hamiltonian to high enough precision.

In order to compare two phases extracted via QPE, there is two options: extract each phase individually and perform a binary comparison, or perform QPE on the first unitary and the second unitary's inverse and compare against $0$.
We opt for the latter, as it will be easier to prove that a \emph{single} critical point exists. 
Details in due course.

We now describe the QTM we will utilise for the rest of the 1-CRT-PRM hardness proof.

\begin{figure}
	\centering
	\begin{quantikz}
		\lstick{$\ket a$} & \qw & \gate[wires=2,bundle={2},disable auto height][3cm]{\shortstack{phase gradient\\of $U_a$}} & \qw  & \qw & \qw & \qw & \qw  \rstick{$\ket a$} \\
		\lstick{$\ket+^{\otimes t}$} & \qwbundle[alternate]{} & & \qwbundle[alternate]{} & \gate[wires=2,bundle={1},disable auto height][3cm]{\shortstack{phase gradient\\of $U_b^\dagger$}} & \qwbundle[alternate]{} & \gate[bundle={1}]{\text{QFT}_t^{-1}} & \qwbundle[alternate]{} \rstick{$\ket{\mathrm{out}}$} \\
		\lstick{$\ket b$} & \qw & \qw & \qw & & \qw & \qw & \qw & \rstick{$\ket b$}
	\end{quantikz}
	\caption{Phase comparator circuit. For two unitaries $U_a$ and $U_b$ with eigenstates $\ket a$, $\ket b$ and associated eigenvalues $\lambda_a$, $\lambda_b$, the output register $\ket{\mathrm{out}}$ contains a $t$-bit approximation to the phase $\lambda_a-\lambda_b$, as can be seen by writing out the phase gradient operations as given in \cite[Fig.~5.2]{Nielsen_and_Chuang}.}
	\label{fig:phase-comparator-circuit}
\end{figure}
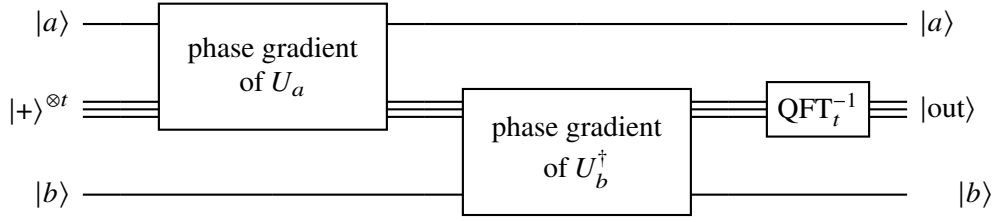

\newcommand\zp{{z'}}
\begin{lemma}[Multi-QPE QTM] \label{Lemma:QTM_Output}
	\checked{James}
	Let $G_z$ be a Gottesman-Irani Hamiltonian on a chain of length $z$, defined in \cref{Theorem:Gottesman-Irani}.
	Let $N\in \field N$.
	Denote by $U_\varphi, U_N\in SU(2)$ and $U_{G_z}$ the unitaries
	\[
	U_{G_z} = \ee^{\ii \pi G_z}
	\quad
	U_\varphi=
	\begin{pmatrix}
	\ee^{\ii\pi\varphi} & 0 \\
	0 & 1
	\end{pmatrix}
	\quad
	U_N=
	\begin{pmatrix}
	\ee^{\ii\pi 0.\enc(N)} & 0 \\
	0 & 1
	\end{pmatrix}
	\quad
	\]
	where $\enc(N)$ is given in \cref{def:encoding}.
	Let $\ket\nu \in (\field C^d)^{\otimes N}$.
	
	Then there exists a quantum Turing Machine, denoted $\mathcal{M}(N,\varphi,t,\ket\nu)$, with access to the unitary gates $U_\varphi$, $U_N$, and $U_{G_z}$ for all $z\in[N]$, all powers $2,4,\ldots,2^t$ of the $U_{G_z}$ gates, as well as $R_k \coloneqq \diag(1, \ee^{2\pi\ii / 2^k})$ for all $k=1,\ldots,t$ in addition to the standard gate set; such that $\mathcal M$ acts on a Hilbert space of $2 + 4t + t$ qubits and $N$ qudits, plus a slack space of size at most $\poly t$ (left implicit in the following); and such that $\mathcal M$ performs the following operations:
	\begin{enumerate}
		\item Initialise the first $2 + 4t+t$ registers to zero, and assume the last $N$ qudit registers are in state $\ket\nu$.
		\item Execute QPE on $U_N$ on $4t$ qubits to get a state
		\[
		\ket{\chi'} = \ket{00}_f \otimes \left( \sum_{z\in[4]^t} \gamma_z\ket{z} \right) \otimes
		\ket{0}^{\ox t}\otimes\ket\nu
		\]
		where the $\gamma_z$ are the amplitudes from quantum phase estimation of $U_N$. 
		\item For any basis state $\ket z$ in $\ket{\chi'}$ that is invalid, $\mathcal M$ places a marker  on the first qubit of the tape (the flag space, labelled with subscript $f$), such that the first qubit is flipped to 1 if $z\in V_t$.
		This gives
		\[
		\ket{\chi''} = \left( \sum_{z\in V_t} \ket{10}_f\gamma_z\ket{z} + \sum_{z\not \in V_t} \ket{00}_f \gamma_z\ket{z} \right) \otimes
		\ket{0}^{\ox t}\otimes\ket\nu
		\]
		
		\item Let $\zp \coloneqq \min\{ N, \enc^{-1}(z) \}$ as per \cref{def:encoding}, such that always $\zp \le 2^t$.
		On $\ket{\nu}$, the QTM performs a phase comparator QPE as shown in \cref{fig:phase-comparator-circuit} with the two unitaries $U_{G_\zp}$ (on the $\ket\nu$ register) and $U_{\varphi}^\dagger$ (on a $\ket0$ ancilla register).
		More concretely, for the Hamiltonian $G_\zp$, let its eigenstates be $\{\ket{g_\zp}\}$ and let $\ket\nu=\sum_g \kappa_{g}(\zp) \ket{g_\zp}\ket{\xi_{\zp,g}}$ be a decomposition with respect to a bipartition into $\zp$ and $N-\zp$ qubits; as we chose the basis of the first subsystem, the $\ket{\xi_{\zp,g}}$ are not necessarily orthogonal, but can be assumed normalised.
		Then the input to this step can be written as
		\[
		\ket{\chi''} = \sum_{z\in [4]^t} \gamma_z \ket{ [z\in V_t]0}_f\ket{z} \otimes\ket{0}^{\ox t}\otimes\sum_g \kappa_{g}(\zp)   \ket{g_\zp} \ket{\xi_{\zp,g}},
		\]
		where $[z\in V_t]$ is equal to 1 iff $z\in V_t$ and is otherwise 0 (cf.~\cref{sec:notation} for notation).
		The output of the total QTM after this stage is then
		\[
		\ket{\chi'''} = \sum_{z\in [4]^t} \gamma_z \ket{ [z\in V_t]0}_f\ket{z} \sum_{x\in[2]^t}\sum_g \alpha_{x}(\zp,g)\kappa_{g}(\zp) \ket x  \ket{g_\zp} \ket*{\xi_{\zp,g}}
		\]
		
		The $\alpha_{x}(\zp,g)$ are the coefficients obtained from the comparator QPE routine for operator $G_z$ and $U_\varphi$ on eigenstate $\ket{g_\zp}$.
		\item The flag qubit is updated via $\ket{b0}_f \longmapsto \ket{b[x \le 0]]}_f$,
		which corresponds to the comparison $\lmin(G_z) \le \varphi$ to $t$ digits of precision.
		The resulting state is
		\begin{align}
		\ket\chi =
		&\ \sum_{ z\in V_t} \sum_{x\le 0}\sum_{g} \alpha_{x}(\zp,g) \gamma_z \kappa_{g}(\zp) \ket{11}_f\ket{z}\ket{x}\ket{g_\zp}\ket{\xi_{\zp,g}} +  \label{eq:chi-with-flag}\\
		&\ \sum_{z\in V_t} \sum_{x > 0}\sum_{g} \alpha_{x}(\zp,g) \gamma_z \kappa_{g}(\zp) \ket{10}_f\ket{z}\ket{x}\ket{g_\zp}\ket{\xi_{\zp,g}} + \nonumber  \\
		&\ \sum_{ z\not\in V_t} \sum_{x \le 0} \sum_{g}  \alpha_{x}(\zp,g)  \gamma_z \kappa_{g}(\zp) \ket{01}_f\ket{z}\ket{x}\ket{g_\zp}\ket{\xi_{\zp,g}} + \nonumber \\
		&\ \sum_{ z\not\in V_t} \sum_{x > 0} \sum_{g}  \alpha_{x}(\zp,g)  \gamma_z \kappa_{g}(\zp) \ket{00}_f\ket{z}\ket{x}\ket{g_\zp}\ket{\xi_{\zp,g}}. \nonumber
		\end{align}
	\end{enumerate}
	The QTM $\mathcal M$ runs for time $T=\BigO(2^{4t})$.
\end{lemma}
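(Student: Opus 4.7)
The plan is to construct $\mathcal{M}$ by composing five well-understood subroutines, verifying that each produces the claimed intermediate state, and then bounding the total runtime by the most expensive step. Step~1 is trivial: the QTM merely lays out its tape into the three registers (flag, index, phase output) and the $N$-qudit input, all initialised to zero except for the $\ket\nu$ block. Step~2 is textbook QPE on the single-qubit unitary $U_N$ using $4t$ qubits: Hadamard the $4t$-register, apply controlled $U_N^{2^k}$ for $k=0,\dots,4t-1$ (each available as a primitive by assumption via successive squaring up to $2^{4t}$, or by direct repetition since $U_N$ itself is a gate), then apply the inverse QFT built out of the $R_k$ rotations. This yields exactly $\sum_{z\in[4]^t}\gamma_z\ket z$ in the index register, where $\gamma_z$ are the standard QPE amplitudes of $U_N$. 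Step~3 is a reversible classical predicate on the index register: check that the right half of the base-$4$ string $z$ consists of all $2$'s, and XOR the outcome into the first flag qubit. This takes $O(t)$ elementary gates.

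Step~4 is the delicate one. First, from the index register $\ket z$ we reversibly compute $\zp = \min\{N,\enc^{-1}(z)\}$ into a fresh ancilla using the truncation rule of \cref{def:encoding} and an in-place min against the constant $N$; both are standard reversible arithmetic. Then we run the phase comparator of \cref{fig:phase-comparator-circuit} with $U_a = U_{G_\zp}$ on the $\ket\nu$ register and $U_b = U_\varphi$ on a fresh $\ket 0$ ancilla, producing a $t$-qubit output. The controlled $U_{G_\zp}^{2^k}$ must be dispatched on the value held in the ancilla $\ket\zp$; I would implement this as a uniform-over-$\zp$ branch that, for each possible value $\zp\in\{1,\dots,N\}$, applies the corresponding pre-given gate $U_{G_\zp}^{2^k}$ controlled jointly on the QPE-control qubit and the $\ket\zp$ ancilla. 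Since $\zp$ is a reversible function of $z$, the ancilla can be uncomputed after the comparator, preserving coherence across the superposition over $z$. Because the combined eigenvalue of $U_{G_\zp}\otimes U_\varphi^\dagger$ on $\ket{g_\zp}\ket 0$ is $\lambda_{g_\zp}-\varphi$, the output register contains precisely the QPE amplitudes $\alpha_x(\zp,g)$ for that phase difference; decomposing $\ket\nu = \sum_g \kappa_g(\zp)\ket{g_\zp}\ket{\xi_{\zp,g}}$ on the $\zp$-vs-$(N-\zp)$ bipartition (with the $\ket{\xi_{\zp,g}}$ not necessarily orthogonal but unit-norm) yields $\ket{\chi'''}$ exactly as stated. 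Step~5 is another reversible classical predicate, flipping the second flag qubit conditioned on $x\le 0$, where negativity is interpreted by the standard convention that the top half of the $t$-bit QPE output range corresponds to negative phase differences modulo $2$; this reproduces the four-branch state $\ket\chi$ in \cref{eq:chi-with-flag}.

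The main obstacle is the coherent conditional dispatch of the $U_{G_\zp}$ family in step~4: because QPE in step~2 produces a superposition over all $z$, and because $\zp$ depends on $z$, the controlled operations must fire the right pre-given gate coherently on each branch while remaining reversible. The key observation enabling this is that the set $\{U_{G_z}\}_{z\in[N]}$ together with their powers $U_{G_z}^{2^k}$ is finite and given as primitives, so the dispatch is a fixed circuit whose control structure can be written out once; cost is $\poly(N,t)$ for the dispatch logic per QPE round, amortised over the $O(2^t)$ controlled uses of the comparator's oracle. A secondary subtlety is that the interpretation ``$x\le 0$'' in step~5 must be fixed as a sign convention on a cyclic $t$-bit register and is only meaningful when the comparator's true phase difference $\lambda_{g_\zp}-\varphi$ lies in a resolvable window; this convention does not affect the structure of $\ket\chi$ but will matter later when the flag penalty is used to drive the ground-space selection.

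For the runtime, the dominant cost is the QPE of $U_N$ in step~2, which uses $\sum_{k=0}^{4t-1}2^k = O(2^{4t})$ applications of $U_N$; the inverse QFT contributes only $O(t^2)$, the validity check and the final flag update contribute $O(t)$ each, and the phase comparator in step~4 contributes $O(N\cdot 2^t)$ from the controlled applications plus polynomial overhead for computing $\zp$ and the final inverse QFT. All of these are dominated by $O(2^{4t})$, and since the QTM tape overhead is $\poly(t)$, $\mathcal M$ halts within the claimed time bound $T = O(2^{4t})$.
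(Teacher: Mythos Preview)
Your proposal is correct and follows essentially the same approach as the paper: build $\mathcal M$ by dovetailing the obvious subroutines (QPE on $U_N$, validity check, phase comparator QPE, flag update) and observe that the dominant cost is the $\BigO(2^{4t})$ repetitions needed to realise the high powers of $U_N$ in step~2. The paper's proof is considerably terser and does not spell out the coherent dispatch of $U_{G_{\zp}}$ across the superposition over $z$, nor the sign convention for ``$x\le 0$''; your treatment of both is reasonable and the resulting runtime bounds agree. One minor over-count: the dispatch only needs to range over $\zp\in\{1,\ldots,\min\{N,2^t\}\}$ rather than all of $\{1,\ldots,N\}$, since $\enc^{-1}(z)\le 2^t$ always, but this does not affect the $\BigO(2^{4t})$ conclusion.
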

\begin{proof}
	This QTM can be implemented by dovetailing a set of QTMs which perform QPE on $4t$ qubits for $U_N$, as well as $t$ qubits for  $U_{G_\zp}$ and $U_\varphi^\dagger$, where the last one implementing the conditional QPE for $G_\zp$ can be trivially implemented by adding additional control lanes.
	QPE without gate approximation to $t$ bits of precision normally takes $\BigO(t^2)$ calls to $U_{G_{\zp}}$ (which we assumed to have access to as a single gate for now) \cite[Sec.~5.2]{Nielsen_and_Chuang}.
	Implementing the up to $2^{4t}$-powers of the three phase gates $U_N$, $U_{G_\zp}$ and $ U_\varphi^\dagger$ takes time $\propto 2^{4t}$.
	All other operations take time $\poly t$, and it is clear that the QTM does not need in excess of $\poly t$ of slack work space. Hence we have an overall runtime of $\BigO(2^{4t})$.
\end{proof}


For the next set of lemmas we define the following quantity: for the output state $\ket\chi$  of $\mathcal{M}(N,\varphi,t)$ from \cref{Lemma:QTM_Output}, we set
\begin{align}
\eta(N,\varphi,t,\ket{\nu})  \coloneqq&\ \Tr\left(  (\ketbra{11}_f\otimes \1) \ketbra{\chi}   \right) \nonumber \\
=& \sum_{z\in V_t} \sum_{x \le 0}\sum_{g}|\alpha_{x}(\zp,g)|^2|\gamma_z|^2 |\kappa_{g}(\zp)|^2
\label{eq:eta}
\end{align}
where as in \cref{Lemma:QTM_Output} we have $\zp=\min\{ N, \enc^{-1}(z) \}$.
This is the total probability that $\ket\chi$ will have an accepted flag---i.e.\ the first qubit will be in the $\ket{11}_f$ state as given in \cref{eq:chi-with-flag}.
It will later be shown that when the above QTM is encoded in a circuit-to-Hamiltonian mapping, then the ground state energy depends on $\eta$.

\newcommand\bbar[1]{\bar{\bar{#1}}}
\begin{figure}[t]
	\centering
	\includegraphics[width=12cm]{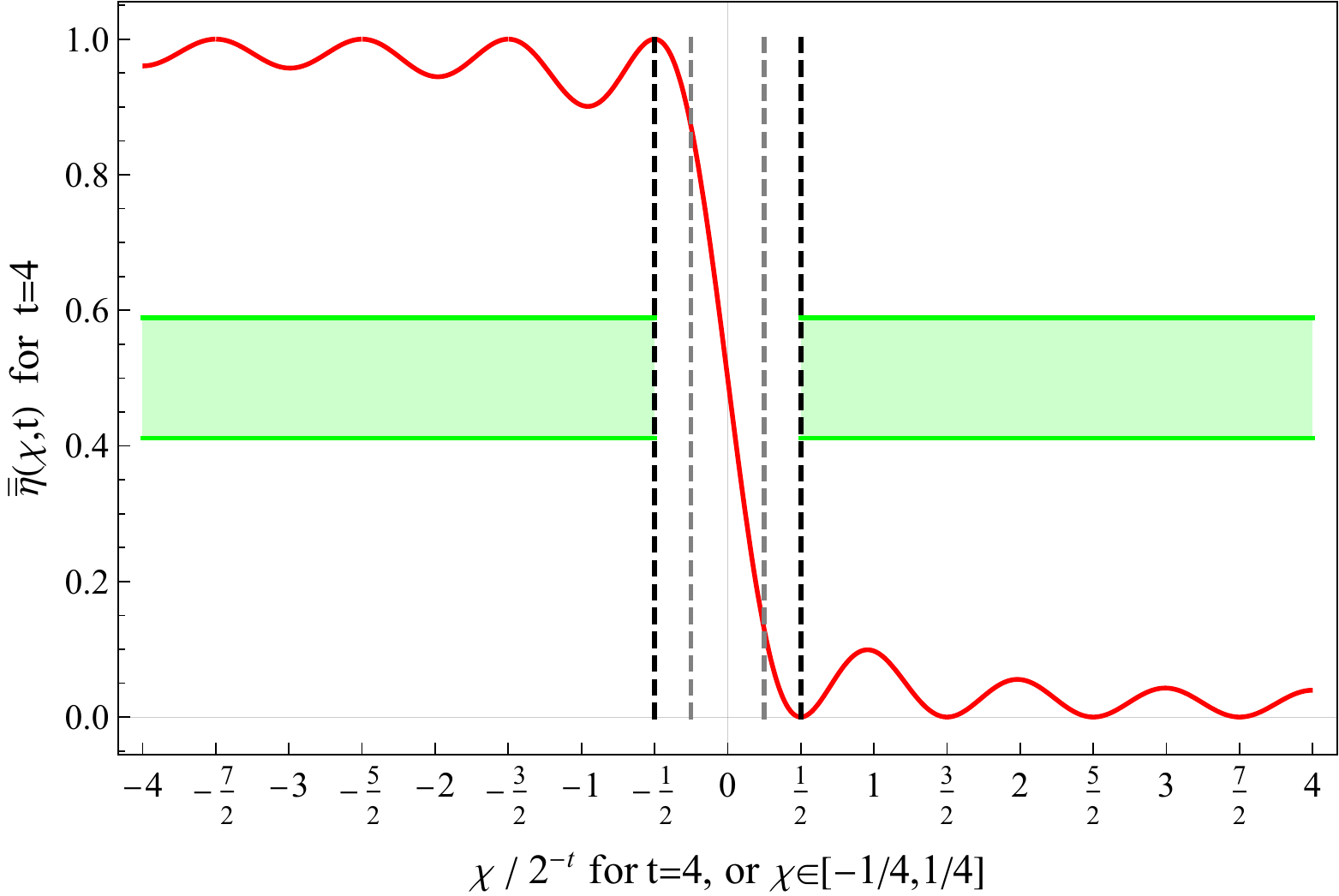}
	\caption{$\bbar\eta(\chi,t)$ (red line) for $t=4$ from \cref{lem:eta-monotonous}, vs.~$\chi\in(-1/4,1/4)$.
		The interval between the black dashed vertical lines denote the region within which we prove $\bbar\eta$ to be strictly monotonously falling, and hence $\eta$ from \cref{eq:eta} to be strictly monotonously increasing; within the grey dashed areas the slope of the red line is $\ge1$, as shown in \cref{lem:eta-monotonous}.
		The shaded green region marks the interval $[1/3, 2/3]$, which \cref{lem:eta-monotonous} proves $\bbar\eta(\chi,t)$ to be bounded away from.}
	\label{fig:eta-monotonous}
\end{figure}
As a first step, we present a monotonicity argument for the value of $\eta$ around the point where $\varphi$ equals the eigenvalue associated to an eigenstate $\ket\nu$ of $G_N$.

\begin{lemma}\label{lem:eta-monotonous}
	\checked{James}
	Let $\ket\nu$ be an eigenvector of $G_N$ with eigenvalue $\lambda$, $t\geq |N|$, and let $\eta(N,\varphi,t, \ket{\nu})$ be defined as in \cref{eq:eta}.
	Then for $t\geq |N|$ and $\varphi\in (\lambda-2^{-t}+2^{-3t/2},\lambda-2^{-3t/2})$
	\[
	\frac{\partial \eta(N,\varphi,t, \ket{\nu})}{\partial \varphi}  \ge 1.
	\]
	Furthermore, for all $\varphi < \lambda-2^{-t}+2^{-3t/2}$, $\eta \leq \pi^2/24$ and $\varphi > \lambda - 2^{-3t/2} $, $\eta \geq 1- \pi^2/24$.
	
\end{lemma}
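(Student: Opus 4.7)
The plan is to use that $\ket{\nu}$ is an exact eigenvector of $G_N$ to collapse the multi-sum defining $\eta$ down to a single standard QPE amplitude sum, and then analyse it as a Fej\'er-kernel function of $\chi\coloneqq\lambda-\varphi$. For $t\ge |N|$, \cref{Lemma:String_Encoding} guarantees that the QPE on $U_N$ is exact, so $\gamma_z = 1$ precisely when $z=\enc(N)\in V_t$ and vanishes otherwise; this pins $\zp=N$ throughout \cref{eq:eta}, and the spectral decomposition $\ket{\nu}=\sum_g\kappa_g(\zp)\ket{g_{\zp}}\ket{\xi_{\zp,g}}$ degenerates to $\kappa_\nu(\zp)=1$ since the bipartition has trivial complement. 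What remains is
\[
\eta(N,\varphi,t,\ket{\nu}) = \sum_{x\le 0} |\alpha_x|^2,
\]
where $\alpha_x$ are the standard $t$-qubit QPE amplitudes for the phase comparator of \cref{fig:phase-comparator-circuit} applied to $\ket{\nu}$.

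I would then substitute the explicit Fej\'er-kernel expression $|\alpha_x|^2 = \sin^2(\pi(2^t\chi-x))/(4^t\sin^2(\pi(\chi-x/2^t)))$ and split $\eta = |\alpha_0|^2 + T(\chi)$, where $T(\chi)$ collects the non-zero-$x$ contributions coming from the signed reading of the flag $[x\le 0]$. Every such $x$ sits at distance $\Omega(2^{t-1})$ from the true peak, so the QPE decay estimate from \cref{eq:state-away} gives $T(\chi)=\BigO(2^{-t})$ uniformly on the window of interest. The remaining task is to analyse $|\alpha_0|^2 = \sin^2(\pi 2^t\chi)/(4^t\sin^2(\pi\chi))$.

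Substituting $u\coloneqq 2^t\chi$ and using $\sin(\pi\chi)=\pi\chi + \BigO(\chi^3)$, one has $|\alpha_0|^2 = F(u) + \BigO(4^{-t})$ with $F(u)\coloneqq \sin^2(\pi u)/(\pi u)^2$. For the endpoint bounds I Taylor-expand $F$: near $u=0$, the alternating expansion $F(u) = 1-(\pi u)^2/3 + 2(\pi u)^4/45 - \cdots$ yields $F(u)\ge 1 - \pi^2 u^2/3$, which at $u=2^{-t/2}$ and $t\ge 3$ gives $F(u)\ge 1-\pi^2/24$; substituting $v=1-u$ near $u=1$ and using $\sin(\pi u)=\sin(\pi v)$ gives $F(u)\le v^2 + \BigO(v^4)$, which at $v=2^{-t/2}$ is far below $\pi^2/24$. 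The $\BigO(2^{-t})$ tail $T(\chi)$ is absorbed into the slack of both bounds.

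For the derivative, I would differentiate directly to get
\[
F'(u) = \frac{2\pi\sin(\pi u)\bigl[\pi u\cos(\pi u) - \sin(\pi u)\bigr]}{(\pi u)^3},
\]
which is strictly negative on $(0,1)$. Lower-bounding $|F'(u)|$ by $2\pi^2 u/3$ near $u=0$ (via Taylor expansion), by $2(1-u)$ near $u=1$, and by the positive constant $16/\pi^2$ in the middle (checked at $u=1/2$), the chain rule $d/d\varphi = -2^t\, d/du$ yields $|\partial\eta/\partial\varphi| \ge 2^t |F'(u)| \ge c\cdot 2^{t/2}$ uniformly for $u\in(2^{-t/2},1-2^{-t/2})$, which is far above $1$ for the relevant $t$. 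The main obstacle I anticipate is careful bookkeeping of the tail $T(\chi)$ and its derivative under the signed $[x\le 0]$ reading: one must confirm that these wrapped-index amplitudes neither disrupt monotonicity in the transition window nor shift the endpoint bounds by more than exponentially small amounts, which follows from reapplying the QPE decay bound pointwise on the derivative.
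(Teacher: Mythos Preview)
Your proposal has a genuine gap. The central claim that $T(\chi)=\sum_{x<0}|\alpha_x|^2=\BigO(2^{-t})$ because ``every such $x$ sits at distance $\Omega(2^{t-1})$ from the true peak'' is false. For $\chi=\lambda-\varphi\in(0,2^{-t})$ the QPE peak lies between $x=0$ and $x=1$; the signed outputs $x=-1,-2,\ldots$ sit at \emph{circular} distance $1,2,\ldots$ from the peak, and it is circular distance that governs the decay in \cref{eq:state-away}. Concretely, at the midpoint $\chi=2^{-t-1}$ one has $\eta=1/2$ exactly, while $|\alpha_0|^2\approx 4/\pi^2\approx 0.405$, so $T(\chi)\approx 0.095=\Theta(1)$.

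This breaks two of your three claims. The endpoint bound $\eta\le\pi^2/24$ cannot come from $|\alpha_0|^2$ alone, since the $\Theta(1)$ tail dominates. The paper instead bounds the \emph{full} sum $\sum_{L\le 0}|\alpha_L|^2$ via $1/\sin^2(\cdot)\le 1/(\cdot)^2$ and $\sum_{L\ge 2}L^{-2}=\pi^2/6-1$; this is where the constant $\pi^2/24$ actually originates (through $\tfrac{4}{\pi^2}(\pi^2/6-1)\le\pi^2/24$), not from a Taylor coefficient of $F$. The lower endpoint $\eta\ge 1-\pi^2/24$ then follows from an exact reflection symmetry of the full sum about $\xi=2^{-t-1}$. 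Your derivative argument---``$T$ is small so its derivative is small''---fails for the same reason. The paper's route is to show that each term $\partial|\alpha_L|^2/\partial\chi$ for $L\le 0$ is \emph{individually} negative on the window, so that the single-term bound $|\partial|\alpha_0|^2/\partial\chi|\ge 1$ already forces $\partial\eta/\partial\varphi\ge 1$. Your Fej\'er-kernel analysis of $|\alpha_0|^2$ would in fact serve for this last step, but only after the tail's sign is established termwise; the QPE decay bound cannot do that job.
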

\begin{proof}
	We guide the reader to \cref{fig:eta-monotonous} to aid in an intuitive understanding of the proof.
	As a first step, we abbreviate $\xi\coloneqq\lambda-\varphi$.
	By \cite[Eq.~5.26]{Nielsen_and_Chuang} and \cref{eq:eta}, and relabelling $L=x$ to follow the notation in \cite{Nielsen_and_Chuang} closely, we can write
	\begin{align}
	\eta(N,\varphi,t,\ket \nu) =&\ \sum_{z\in V_t} \sum_{x \le 0}\sum_{g}|\alpha_{x}(\zp,g)|^2|\gamma_z|^2 |\kappa_{g}(\zp)|^2 
	\overset*= \!\!\!\!\sum_{L=-2^{t-1}+1}^{0}\!\!\!\! |\alpha_L|^2  \nonumber\\
	\text{for}\quad
	\alpha_L \coloneqq&\ 2^{-t} \frac{1-\exp(2\pi\ii (2^t\xi - L))}{1-\exp(2\pi\ii(\xi-2^{-t}L))}
	= 2^{-t} \frac{\sin(2^t\pi\xi)}{\sin(\pi(\xi-2^{-t}L))}
	\label{eq:alpha_L}
	\end{align}
	where in the step marked with $\overset*=$ we have used the fact that for $t\ge|N|$, precisely one of the $\gamma_z$ and $\kappa_g(z')$ equal $1$, and all others are zero; so $\alpha_L \equiv \alpha_x(z',g)$ for those $z$ and $g$ for which $\gamma_z=\kappa_g(z')=1$.
	We set $\bar\eta(\xi,t) \coloneqq \eta(N, \varphi, t, \ket{\nu})$.
	As
	\begin{align}\label{eq:alpha_L-norm}
	\bar\eta(\xi, t) = \sum_{L \le 0} |\alpha_L|^2
	= 1 - \sum_{L > 0} |\alpha_L|^2,
	\end{align}
	we can calculate the midpoint where $\bar\eta(\xi,t)= 1/2$; this happens at $\xi = 2^{-t-1}$, as can be confirmed by explicit calculation.
	
	Set $\chi \coloneqq \xi -2^{-t-1}$ and define $\bbar \eta(\chi,t) \coloneqq \bar\eta(\chi + 2^{-t-1},t)$, such that $\bbar\eta(0,t)=1/2$, and
	\begin{align}
	\bbar\eta(-\chi,t) 
	&= \bar\eta(-\chi + 2^{-t-1},t) 
	= \smashoperator{\sum_{L=-2^{t-1}+1}^0} 2^{-2t} \frac{\sin^2(2^t\pi(-\chi + 2^{-t-1}))}{\sin^2(\pi((-\chi + 2^{-t-1})-2^{-t}L))}  \nonumber\\
	&\overset*= \smashoperator{\sum_{L=-2^{t-1}+1}^0} 2^{-2t} \frac{\sin^2(2^t\pi(-\chi - 2^{-t-1} + 2^{-t}))}{\sin^2(\pi(-\chi - 2^{-t-1} - 2^{-t}(L-1)))}  \nonumber\\
	&= \smashoperator{\sum_{L=-2^{t-1}+1}^0} 2^{-2t} \frac{\sin^2(2^t\pi(\chi + 2^{-t-1}) - \pi)}{\sin^2(\pi(\chi + 2^{-t-1} - 2^{-t}(1 - L)))}  \nonumber\\
	&= \sum_{L=1}^{2^{t-1}} 2^{-2t} \frac{\sin^2(2^t\pi(\chi + 2^{-t-1}))}{\sin^2(\pi(\chi + 2^{-t-1} - 2^{-t}L))}  \nonumber\\
	&\overset{**}{=} 1 - \bbar\eta(\chi,t),
	\label{eq:bbar-eta-symm}
	\end{align}
	where in the line with $\overset*=$ we added $2^{-t-1}-2^{-t-1}=0$ in the enumerator and denominator, and in the last step $\overset{**}{=}$ we made use of \cref{eq:alpha_L-norm}.
	
	\paragraph{Gradient Bound:}
	\checked{James}
	Note that we can now write:
	\[
	|\alpha_L|^2 = 2^{-2t} \frac{\cos^2(2^t\pi \chi)}{\sin^2(\pi(\chi-2^{-t-1}(2L-1)))}.
	\]
	From the above we see that:
	\begin{align*} 
	\bbar\eta(\chi,t) &= \smashoperator{\sum_{L= -2^{t-1}+1}^{0}} 2^{-2t} \frac{\cos^2(2^t\pi \chi)}{\sin^2(\pi(\chi-2^{-t-1}(2L-1)))}  \\
	&= \smashoperator{\sum_{L= 0}^{2^{t-1}-1}} 2^{-2t} \frac{\cos^2(2^t\pi \chi)}{\sin^2(\pi(\chi+2^{-t-1}(2L+1)))} = \smashoperator{\sum_{L= 0}^{2^{t-1}-1}} |\alpha_{-L}|^2,
	\end{align*}
	where we have just relabelled $L\rightarrow -L$.
	Differentiating this expression wrt.~$\chi$ gives
	\begin{align*}
	2^{2t}\frac{\partial \bbar\eta(\chi,t)}{\partial \chi} = \smashoperator{\sum_{L= 0}^{2^{t-1}-1}}\ \  \bigg[ &-\frac{2^{t+1}\pi \sin(2^t\pi \chi)\cos(2^t\pi \chi)}{\sin^2(\pi(\chi + 2^{-t-1}(2L+1)))} \\
	&-2\pi \cos(\pi(\chi + 2^{-t-1}(2L+1)))\frac{\cos^2(2^t\pi \chi)}{\sin^3(\pi(\chi + 2^{-t-1}(2L+1)))}\bigg].
	\end{align*}
	Now note that for $t\ge1$, $\chi\in (0,2^{-t-1})$ and $0\leq L\leq 2^{t-1}-1$ we have that $2^t\pi \chi \leq \pi/2$ and $0 \leq \pi(\chi + 2^{-t-1}(2L+1))\leq \pi/2$.
	Thus all of the sine and cosine terms in the above expression are individually positive, and hence both the terms in the summand are individually negative, for all $0 \leq L\leq 2^{t-1}-1$.
	
	We now focus on the $0$\textsuperscript{th} coefficient, i.e.\ $\alpha_0$.
	For $0\leq x\leq \pi/2$ the following holds:  $1/\sin(x) \geq 1/x$,  thus giving:
	\begin{equation}
	\begin{split}
	2^{2t} \left|\frac{\partial|\alpha_{0}|^2}{\partial \chi} \right| \geq&   \frac{2^{t+1}\pi \sin(2^t\pi \chi)\cos(2^t\pi \chi)}{(\pi(\chi + 2^{-t-1}))^2} \\
	+& \frac{2\pi\cos^2(2^t\pi \chi)}{|\pi(\chi + 2^{-t-1})|^3}\cos(\pi(\chi + 2^{-t-1}))  \label{Eq:partialEta_Derivative} 
	\end{split}
	\end{equation}
	Now we consider the interval $\chi \in (0, 2^{-t-1}-2^{-3t/2})$.
	Within this interval, and for $t\ge 1$, $\cos(2^t\pi\chi)$ has its minimum value at the rightmost limit, $\cos^2(\pi/2 - \pi2^{-t/2})=\sin^2(\pi2^{-t/2}) \ge 2^{-t}$.
	Similarly, we have that
	\[
	2^{-3t/2}\pi \leq \pi\left|\chi + 2^{-t-1}\right|\leq 2^{-t-1}\pi,
	\]
	and hence
	$\cos(\pi(\chi + 2^{-t-1}) \ge \cos(\pi / 4) \ge 1/\sqrt 2$.
	Together with \cref{Eq:partialEta_Derivative} and dropping the term with $\sin(2^t\pi\chi)$ in the denominator (which vanishes for $\chi\rightarrow0$),
	the $0$\textsuperscript{th} coefficient $\alpha_0$ thus satisfies
	\begin{align}
	2^{2t}\left|\frac{\partial|\alpha_0|^2}{\partial \chi} \right| 
	\geq& \frac{2\pi 2^{-t}}{(\pi 2^{-t-1})^3} \times \frac{1}{\sqrt 2} = \frac{8\sqrt2}{\pi^2} \times 2^{2t} \ge 2^{2t}, \label{Eq:partialEta_Scaling_bounds}
	\end{align}
	Hence $\partial|\alpha_0|^2/\partial \chi \le - 1$.
	Since $\partial|\alpha_L|^2/\partial \chi <0$ for all $0 \leq L\leq 2^{t-1}-1$, then $\partial|\alpha_0|^2/\partial \chi > \partial\bbar \eta/\partial \chi$ for $\chi \in (0,2^{-t-1} - 2^{-3t/2})$.
	Thus
	\[
	\frac{\partial \bbar\eta(\chi,t)}{\partial \chi} \le - 1.
	\]
	Using the antisymmetry of the function $\bbar\eta(\chi,t)-1/2$ around the point $\chi=0$, we see the same bounds on the derivative hold for the whole interval $\chi\in (-2^{-t-1}+2^{-3t/2},2^{t-1}-2^{-3t/2})$.
	Finally, noting that $\partial \chi /\partial \varphi=-1$
	\[
	\frac{\partial\eta(N,t,\varphi,\ket{\nu})}{\partial \varphi} \ge 1,
	\]
	for the interval $\varphi \in (\lambda-2^{-t}+2^{-3t/2},\lambda-2^{-3t/2})$.
	
	\paragraph{Bounds Outside Interval.}
	\checked{James}
	To address the bounds when $\chi$ is outside of the monotonicity interval $\chi \in [-1/4, 1/4] \setminus (-2^{-t}/2+2^{-3t/2},2^{-t}/2-2^{-3t/2})$, we again by \cref{eq:bbar-eta-symm} we only have to consider the right half of the interval. There we have
	\begin{align}
	\bbar\eta(\chi,t) &= \sum_{L\leq 0} |\alpha_L|^2
	\overset*= \smashoperator{\sum_{L= -2^{t-1}+1}^{0}} 2^{-2t} \frac{\cos^2(2^t\pi \chi)}{\sin^2(\pi(\chi+2^{-t-1} - 2^{-t}(L+1)))} \nonumber\\
	&\le \sum_{L=0}^{2^{t-1}-1} \frac{2^{-2t}}{\sin^2(\pi(\chi + 2^{-t-1} + 2^{-t}(L-1)))} \nonumber\\
	&= \sum_{L=1}^{2^{t-1}} \frac{2^{-2t}}{\sin^2(\pi(\chi + 2^{-t-1} + 2^{-t}L))}.
	\label{eq:sinineq1}
	\end{align}
	where in the step $\overset*=$ we again added and subtracted $2^{-t-1}$ in the denominator.
	For $\chi \in [2^{-t-1}-2^{-3t/2}, 1/4]$ and $L=1, \ldots, 2^{t-1}$, we can bound
	\[
	\pi (\chi + 2^{-t-1}+2^{-t}L) \ge \pi2^{-t}(L+1)-\pi2^{-3t/2}
	\]
	and thus
	\begin{equation}\label{eq:sinineq2}
	\sin(\pi(\chi + 2^{-t-1} + 2^{-t}L)) \ge \sin(\pi(2^{-t}(L+1)-2^{-3t/2})) \ge \frac12 \times \pi (2^{-t}(L+1) -2^{-3t/2}).
	\end{equation}
	Combining \cref{eq:sinineq1,eq:sinineq2}, we get
	\begin{align}
	\bbar\eta(\chi,t)
	\le& \frac{4}{\pi^2} \sum_{L=1}^{2^{t-1}} \frac{1}{ (L+1 - 2^{-t/2})^2} \nonumber \\
	\leq&  \frac{4}{\pi^2} \sum_{L=2}^\infty \frac{1}{L^2}\left(1 - \frac{2^{-t/2}}{L}\right)^{-2} \nonumber \\ 
	=&  \frac{4}{\pi^2} \sum_{L=2}^\infty \frac{1}{L^2}\left(1 + \frac{2\times 2^{-t/2}}{L}+ \BigO\left( \frac{2^{-t}}{L^2}\right)\right) \label{Eq:bbar_eta_1}\\
	=& \frac{4}{\pi^2} \sum_{L=2}^\infty \frac{1}{L^2} + \frac{8}{\pi^2}\sum_{L=2}^\infty\frac{ 2^{-t/2}}{L^3}+ \BigO\left( 2^{-t}\right) \label{Eq:bbar_eta_2}\\
	=&\frac{4}{\pi^2} \times \left(\frac{\pi^2}{6} -1\right) + \BigO(2^{-t/2}) \label{Eq:bbar_eta_3} \\ 
	\le& \frac{\pi^2}{24}. \nonumber
	\end{align}
	Here for \cref{Eq:bbar_eta_1} we have used a binomial expansion and for \cref{Eq:bbar_eta_3} we have used the well known identity $\sum_{n=1}^\infty n^{-2}=\pi^2/6$.
	The bound for negative $\chi$ follows by \cref{eq:bbar-eta-symm}.
\end{proof}

\Cref{lem:eta-monotonous} puts bounds on $\eta$ for a specific input state $\ket{\nu}$.
Here we consider the maximum value $\eta$ can take: this corresponds to the maximum acceptance probability that $\mathcal{M}$ can have when the input state $\ket{\nu}$ is unconstrained. 
\begin{corollary}\label{cor:good-QPE-probs}
	\checked{James}
	Let $\eta$ be as defined in \cref{eq:eta}, and $|N|$ be the number of base-2 digits\footnote{$N$ is expressed in binary, but $\enc(N)$ in quaternary, hence $|\enc(N)|=4|N|$; note that we expand to $4t$ bits in \cref{Lemma:QTM_Output}, hence the statement ``$t\ge |N|$'' implies we expanded enough digits to see all of $\enc(N)$ exactly.} of $N$.
	\begin{itemize}
		\item If $t \ge |N|$ and if $\varphi\leq \lmin(G_N) - 2^{-t}+2^{-3t/2}$, we have
		\[
		\max_{\ket{\nu}}\eta(N,\varphi,t,\ket{\nu})\leq \frac{\pi^2}{24}.
		\]
		\item If $t \ge |N|$ and $\varphi\geq \lmin(G_N) - 2^{-3t/2} $, we get
		\[
		\max_{\ket{\nu}}\eta(N,\varphi,t,\ket{\nu}) \ge 1-\frac{\pi^2}{24}.
		\]
		\item If $t<|N|$, then irrespective of the value of $\varphi$,
		\[
		\max_{\ket{\nu}}\eta(N,\varphi,t,\ket{\nu})=\BigO\left(\frac{1}{2^{t/2}}\right).
		\]
	\end{itemize}
\end{corollary}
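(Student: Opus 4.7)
The plan is to handle the three cases separately, each by a direct application of Lemma~\ref{lem:eta-monotonous} (for the first two cases) or Lemma~\ref{Lemma:String_Encoding} (for the third). The key structural point to exploit is that when $t\ge|N|$, Lemma~\ref{Lemma:String_Encoding} guarantees that the QPE on $U_N$ is exact, so in the sum defining $\eta$ in Eq.~(\ref{eq:eta}) only the term $z=\enc(N)$ survives, with $|\gamma_z|^2=1$ and $z'=\enc^{-1}(\enc(N))=N$. This collapses the bipartition used in Lemma~\ref{Lemma:QTM_Output}, allowing any candidate $\ket\nu$ to be decomposed directly in the eigenbasis of $G_N$ as $\ket\nu=\sum_g\kappa_g\ket{g_N}$ with $\sum_g|\kappa_g|^2=1$.

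From here I would substitute into (\ref{eq:eta}) to obtain $\eta(N,\varphi,t,\ket\nu)=\sum_g|\kappa_g|^2\,\eta_g$, where each $\eta_g$ is precisely the single-eigenstate quantity controlled by Lemma~\ref{lem:eta-monotonous}. For case one, the chain $\varphi\le\lmin(G_N)-2^{-t}+2^{-3t/2}\le\lambda_g-2^{-t}+2^{-3t/2}$ holds for every eigenvalue $\lambda_g$ of $G_N$, so the ``outside interval'' bound of Lemma~\ref{lem:eta-monotonous} forces each $\eta_g\le\pi^2/24$ and the convex combination inherits this bound uniformly in $\ket\nu$. For case two I would simply take $\ket\nu$ to be any ground-state eigenvector of $G_N$, so that $\lambda_g=\lmin(G_N)$ and the hypothesis $\varphi\ge\lmin(G_N)-2^{-3t/2}=\lambda_g-2^{-3t/2}$ places us in the complementary regime of Lemma~\ref{lem:eta-monotonous}, immediately yielding $\eta\ge 1-\pi^2/24$.

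For case three ($t<|N|$), the plan is even shorter: the sum in (\ref{eq:eta}) is restricted to valid strings $z\in V_t$, each inner factor satisfies $\sum_{x\le 0}|\alpha_x(z',g)|^2\le 1$ (sub-probability from the comparator QPE) and $\sum_g|\kappa_g(z')|^2=1$ (normalisation of $\ket\nu$ for the bipartition associated to that particular $z'$), so pulling these bounds out gives $\max_{\ket\nu}\eta\le\sum_{z\in V_t}|\gamma_z|^2$, which Lemma~\ref{Lemma:String_Encoding} bounds by $1/2^{t/2}$.

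The main obstacle I expect is bookkeeping rather than analysis: one has to verify that the bipartition-dependent normalisation $\sum_g|\kappa_g(z')|^2=1$ holds for every $z'$ (which follows from orthonormality of the $\{\ket{g_{z'}}\}$ together with the per-lemma assumption that each $\ket{\xi_{z',g}}$ is normalised, so that the identity is essentially a partial-trace statement), and that in cases one and two the bipartition factor truly collapses when $z'=N$ so that Lemma~\ref{lem:eta-monotonous} applies term-by-term. Once these conventions are verified, each of the three bounds reduces to a one-line application of the relevant cited lemma.
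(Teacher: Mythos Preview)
Your proposal is correct and matches the paper's proof essentially line for line: the paper also splits into the same three cases, uses \cref{Lemma:String_Encoding} to collapse the $\gamma_z$ sum when $t\ge|N|$, invokes \cref{lem:eta-monotonous} per eigenstate together with the convexity argument for the upper bound, picks the ground state for the lower bound, and bounds the valid-string mass directly for $t<|N|$. Your bookkeeping concern about the bipartition collapsing when $z'=N$ is exactly right and is handled implicitly in the paper.
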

\begin{proof}
	We address each case individually.
	
	\paragraph{Case \paramath{t<|N|}.}
	We do not expand enough bits to expand $\enc(N)$ in full: by \cref{Lemma:String_Encoding},
	the probability mass on valid strings (none of which are $\enc(N)$) is $\leq 1/2^{t/2}$.
	
	\paragraph{Case \paramath{t\ge |N|}.} 
	Again by \cref{Lemma:String_Encoding} we know that $\gamma_{\enc(N)}=1$ and all other $\gamma_z=0$.
	If $\varphi\geq \lmin(G_N)-2^{-3t/2}$, then choose input state $\ket{\psi_0}=\ket{g_\mathrm{min}}\ket{\xi_0}$ where $\ket{g_\mathrm{min}}$ is the ground state of $G_N$ (and $\ket{\xi_0}$ is just the state resulting from the bipartition of $\ket{\psi_0}$'s state space in the proof of \cref{Lemma:QTM_Output}).
	By applying \cref{lem:eta-monotonous}, we see that 
	\[
	\eta(N,\varphi,t,\ket{\psi_0}) \geq 1- \frac{\pi^2}{24}.
	\]
	The result for $\varphi\geq \lmin(G_N)-2^{-3t/2}$ follows.
	
	If $\varphi\leq \lmin(G_N) - 2^{-t}+2^{-3t/2}$, then consider any eigenstate $\ket{\psi_i}$ of $G_N$. 
	We see that if $\varphi\leq \lmin(G_N) - 2^{-t}+2^{-3t/2}$, then for any $\ket{\psi_g}=\ket{g}\ket{\xi_g}$, where $\ket{g}$ is an eigenstate of $G_N$ with corresponding eigenvalue $\lambda_g$, $\varphi\leq \lambda_g - 2^{-t}+2^{-3t/2}$. 
	As a result, for any energy eigenstate $\ket{\psi_g}$, by \cref{lem:eta-monotonous},
	\[
	\eta(N,\varphi,t,\ket{\psi_i}) \leq \frac{\pi^2}{24}.
	\]
	Any state $\ket{\nu}\in (\C^d)^{\ox N}$ can be written as  
	$\ket\nu=\sum_g \kappa_{g}(N) \ket{g_N}\ket{\xi_{N,g}}$; hence by convexity of \cref{eq:eta} in the coefficients of $\ket\nu$ we have
	\[
	\max_{\ket{\nu}}\eta(N,\varphi,t,\ket{\nu}) \leq  \frac{\pi^2}{24}.
	\qedhere
	\]
\end{proof}


\subsection{An Approximate Phase Comparator QTM}\label{sec:phase-comparator-qtm-approx}

So far we have assumed that the QTMs can implement the algorithms without error, by providing them with all the necessary gates required.
In this section we relax these assumptions and show that the error in the output is bounded sufficiently small for our purposes, even if the QTM only has access to a fixed universal gate set.\footnote{The motivation for this is that when the QTM is encoded in a Hamiltonian, in order to have a fixed local Hilbert space dimension for all $N, \varphi$, and $t$, the QTM must have a gate set which does not depend on $N, \varphi$ or $t$. }

If we wish our QTM to have a fixed predetermined number of gates available for an arbitrary track length $t$ and arbitrary length inputs $N, \varphi$, the gate powers of $U_{G_\zp}$, as well as the controlled rotations $R_k$ necessary for the Fourier transform subroutine for QPE in \cref{Lemma:QTM_Output} cannot be given explicitly; we need to approximate them.
The $R_k = \diag(1, \ee^{-\ii\pi2^k}) $ gates can be approximated via the Solovay-Kitaev algorithm to the necessary precision.
For $U_{G_\zp}$, however, such a simple compilation argument does not work: we need to perform Hamiltonian simulation in order to implement $U_{G_\zp}$ itself, for any given spin chain length $\zp\in[N]$; to this end, we include the following result.

\begin{lemma}[Hamiltonian Simulation QTM]\label{Lemma:HamSim-QTM}
	\checked{James}
	Let $S  \coloneqq  \{ h^{(l)} \}_{l\in I}$ be a constant and finite set of local interactions of a translationally-invariant Hamiltonian $H=\sum_{i=1}^z h_i$, $h_i\in S$, defined on a spin chain of length $z\in\field N$, and let $\epsilon>0$.
	Then there exists a QTM which, on input $S$ and $z$, simulates the time evolution $U(T) \coloneqq \exp(\ii H T)$ as a circuit $\tilde U(T)$ to precision $\| \tilde U(T) - U(T) \| \le \epsilon$ in spectral norm, in time $\Tilde\BigO(T^2 z^2/\epsilon)$.\footnote{As per convention, $\tilde\BigO$ hides polylogarithmic factors in the argument.}
\end{lemma}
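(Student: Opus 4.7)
The plan is to implement $U(T)=\exp(\ii T H)$ via a first-order Lie--Trotter product formula, compiling the resulting two-site building blocks into the QTM's fixed universal gate set via Solovay--Kitaev. Because $H$ is nearest-neighbour and translationally invariant, I split $H=H_{\mathrm{odd}}+H_{\mathrm{even}}$ with each half a sum of pairwise-commuting two-site terms drawn from $S$; hence $\exp(\ii \Delta t H_{\mathrm{odd/even}})$ factorises into a product of commuting two-site unitaries $\exp(\ii \Delta t h_i)$, where every $h_i\in S$ has $\|h_i\|=\BigO(1)$ and there are only finitely many distinct such blocks to compile.

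I fix the Trotter splitting $V(\Delta t)\coloneqq \exp(\ii \Delta t H_{\mathrm{odd}})\exp(\ii \Delta t H_{\mathrm{even}})$ with $\Delta t=T/r$ and invoke the standard first-order error bound $\|U(T)-V(\Delta t)^r\|=\BigO(T^2\,\|[H_{\mathrm{odd}},H_{\mathrm{even}}]\|/r)$. Since an odd term $h_{2k-1}$ fails to commute only with its two even neighbours, the commutator decomposes into $\BigO(z)$ operators of $\BigO(1)$ norm, so $\|[H_{\mathrm{odd}},H_{\mathrm{even}}]\|=\BigO(z)$; choosing $r=\Theta(T^2 z/\epsilon)$ therefore suppresses the Trotter error below $\epsilon/2$.

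Each two-site factor $\exp(\ii \Delta t h_i)$ is then compiled to the QTM's fixed universal gate set to per-gate precision $\epsilon'$ using Solovay--Kitaev, at a cost of $\mathrm{polylog}(1/\epsilon')$ elementary gates. The product $V(\Delta t)^r$ contains $\BigO(rz)=\BigO(T^2 z^2/\epsilon)$ such factors, and operator-norm errors are subadditive across compositions of unitaries, so choosing $\epsilon'=\Theta(\epsilon/(T^2 z^2))$ keeps the cumulative compilation error below $\epsilon/2$. Combining the two error contributions yields $\|\tilde U(T)-U(T)\|\le\epsilon$ with $\Tilde\BigO(T^2 z^2/\epsilon)$ elementary gates overall.

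To turn this circuit into a QTM running time, I would use a sweep schedule: the head moves left-to-right applying each two-site gate of the even layer, then right-to-left applying the odd layer, so that the $z$ sites are traversed exactly once per Trotter step; this amortises head movement to $\BigO(1)$ per gate and contributes a total movement cost of $\BigO(rz)=\BigO(T^2 z^2/\epsilon)$, already absorbed by the stated runtime. Solovay--Kitaev itself is a classical search that can be executed reversibly on a $\mathrm{polylog}(1/\epsilon)$-sized work tape. The part I expect to be most delicate is purely bookkeeping: arranging the Solovay--Kitaev garbage and the Trotter-step counter to be uncomputed between blocks so that the QTM implements a bona fide unitary on the simulated register rather than an isometry into an enlarged slack space---this is a standard, if tedious, reversibility argument (cf.\ Gottesman--Irani-style encodings) and does not affect the runtime or error bounds above.
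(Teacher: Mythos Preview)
Your proposal is correct and takes essentially the same route as the paper: Trotterise, then Solovay--Kitaev the local pieces, yielding $\tilde\BigO(T^2z^2/\epsilon)$ gates. The paper nominally invokes a second-order (symmetric) product formula but then only uses the first-order $\BigO(z\delta^2)$ per-step bound, so your first-order odd/even splitting is equivalent in effect; one small slip is that with $\BigO(T^2z^2/\epsilon)$ elementary factors the per-gate Solovay--Kitaev precision must be $\epsilon'=\Theta(\epsilon^2/(T^2z^2))$ rather than $\Theta(\epsilon/(T^2z^2))$, but this only changes the hidden polylog and does not affect the $\tilde\BigO$ claim.
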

\begin{proof}
	This is a straightforward application of a second order Trotter formula (see e.g.\ \cite{Childs2019}).
	We first assume that we can implement the local time evolution operators $U_i(T) = \exp(\ii T h_i)$ for $H=\sum_i h_i$ exactly.
	For time $\delta>0$, a second order Trotter formula (e.g. \cite{Berry_2006}) breaks up :
	\[
	\tilde U(\delta) = \prod_{i=1}^z U_i(\delta/2) \prod_{i=z}^1 U_i(\delta/2),
	\]
	which requires $\BigO(z)$ gates, and has an error bound
	\[
	\| \tilde U(\delta) - U(\delta) \| = \BigO\left( z \delta^2 \right).
	\]
	As we require a simulation to time $T$, the overall error will be
	\[
	\| \tilde U(T) - U(T) \| \le \frac T\delta \| \tilde U(\delta) - U(\delta) \| = \BigO\left( T \delta z \right),
	\]
	requiring $\BigO(z T/\delta)$ gates.
	Demanding $T \delta z \le \epsilon$ means $\delta = \Theta(\epsilon/Tz)$; the Trotter simulation thus requires $\Theta(z^2 T^2 / \epsilon)$ gates overall.
	
	In case we cannot implement the local Trotter operators $U_i(\delta)$ exactly, we have to approximate them with a sequence of elementary gates $\tilde U_i(\delta)$, e.g.\ using Solovay-Kitaev \cite{Dawson_Nielsen_2005}: as errors in a circuit accumulate at most linearly, the approximation has to be precise to
	\[
	\| \tilde U_i(\delta) - U_i(\delta) \| = \epsilon / \Theta(z^2 T^2 / \epsilon) = \Theta(\epsilon^2 / z^2 T^2) =: \epsilon'
	\]
	where $\epsilon'$ is now the precision we must approximate each $\tilde U_i(\delta)$ .
	Now we know that Solovay-Kitaev can approximate any unitary operation to within precision $\epsilon'$ within $\BigO(\log^4 1/\epsilon')$ many steps.
	The claim follows.
\end{proof}

Since we will be encoding our QTMs in a Hamiltonian where part of the ground state is chosen in a nondeterministic manner, approximating gates leads not only to slight errors in the gates, but also since the gates no longer have the same eigenvalues, it may lead to differences between which eigenvalue is nondeterministically chosen when the unconstrained input state $\ket{\nu}$ is nondeterministically chosen. 
In the following two lemmas we characterise this error.

\begin{lemma}\label{Lemma:Close_Evolution_Hamiltonian}
	\checked{James}
	Let $V(s)=\ee^{\ii H s}$ for a Hamiltonian $H$ such $\| H \|_{\infty}\leq \pi/4s$, and let $\tilde{V}(s)$ satisfy
	$\| V(s)-\tilde{V}(s) \|_{\infty}\leq \epsilon$.
	Then there exist an effective Hamiltonian $H'$ such that $\tilde{V}(s)=\ee^{\ii H's}$, and a constant $\kappa=\BigO(1)$ such that
	\[
	\| H'-H \|_{\infty}\leq \frac{\kappa \epsilon}{s}.
	\]
\end{lemma}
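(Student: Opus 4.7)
The plan is to define $H'$ via the principal matrix logarithm, $H' := -\ii\log(\tilde V(s))/s$, and then to bound $\|\log\tilde V(s) - \log V(s)\|_\infty$ in operator norm using the Dunford--Taylor (holomorphic) functional calculus. Throughout, I rely on the fact (implicit in the setting of the lemma, since $\tilde V(s)$ arises as a product of unitary gates approximating the unitary $V(s)$) that $\tilde V(s)$ is itself unitary, so that the principal branch of the logarithm indeed produces a Hermitian $H'$.

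First I would observe that the assumption $\|H\|_\infty \le \pi/(4s)$ forces the spectrum of $V(s)=\ee^{\ii Hs}$ onto the arc $A := \{\ee^{\ii\theta} : |\theta| \le \pi/4\}$ of the unit circle, which stays at a constant positive distance from the branch cut $(-\infty, 0]$ of the principal logarithm. Applying Bauer--Fike to the normal matrix $V(s)$ shows that every eigenvalue of $\tilde V(s)$ lies within $\epsilon$ of $A$, so for $\epsilon$ smaller than a fixed constant (say $\epsilon \le 1/4$) both spectra are contained in a compact set $\Omega$ bounded away from the branch cut. For larger $\epsilon$ the stated bound $\kappa\epsilon/s$ is trivial by adjusting $\kappa$, since $\|H'\|$ can be taken at most $\pi/s$ and $\|H\|\le \pi/(4s)$.

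Next I would invoke the Dunford--Taylor representation
\[
  \log X \;=\; \frac{1}{2\pi\ii}\oint_\Gamma \log(z)\,(zI - X)^{-1}\,\mathrm{d}z
  \qquad\text{for } X \in \{V(s),\tilde V(s)\},
\]
where $\Gamma$ is a fixed smooth closed contour surrounding $\Omega$ at a constant distance $r=\Theta(1)$ from both spectra and lying in the domain of analyticity of $\log$. The resolvent identity
\[
  (zI - \tilde V)^{-1} - (zI - V)^{-1} \;=\; (zI - \tilde V)^{-1}(\tilde V - V)(zI - V)^{-1}
\]
then yields
\[
  \log\tilde V - \log V \;=\; \frac{1}{2\pi\ii}\oint_\Gamma \log(z)\,(zI - \tilde V)^{-1}(\tilde V - V)(zI - V)^{-1}\,\mathrm{d}z.
\]
Bounding in operator norm using $\|(zI - X)^{-1}\|_\infty \le 1/r$, $\|\tilde V - V\|_\infty \le \epsilon$, and the uniform bounds on $|\log z|$ and $|\Gamma|$ gives $\|\log\tilde V - \log V\|_\infty \le \kappa\,\epsilon$ for an absolute constant $\kappa = \BigO(1)$. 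Dividing by $s$ and recalling $H' - H = (\log\tilde V - \log V)/(\ii s)$ produces the claimed bound $\|H' - H\|_\infty \le \kappa\epsilon/s$.

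The main technical obstacle is ensuring that the spectrum of $\tilde V(s)$ is uniformly bounded away from the branch cut so that a single contour $\Gamma$ works for both $V(s)$ and $\tilde V(s)$; Bauer--Fike is immediate for the normal matrix $V(s)$, but one must justify that $\tilde V(s)$ is (at least approximately) unitary, which in our application follows because it is assembled from unitary gates. A secondary care-point is that the overall constant $\kappa$ should not depend on $\epsilon$ or $s$: the $s$-independence is immediate (the contour $\Gamma$ lives on the unit circle scale and both $\|\log z\|$ and the resolvent bounds are $s$-independent), while the $\epsilon$-independence is enforced by restricting to the regime $\epsilon \le 1/4$ and absorbing the trivial tail into $\kappa$.
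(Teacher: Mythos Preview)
Your argument is correct and essentially complete. The paper, however, does not actually prove this lemma: its entire ``proof'' is a reference to the supplementary information of Poulin and Wocjan. So there is nothing to compare at the level of ideas in this paper; you have supplied a genuine self-contained proof where the paper only cites one.

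Your route via the principal logarithm and the Dunford--Taylor contour integral is the standard operator-theoretic approach to Lipschitz bounds for the matrix logarithm on a domain bounded away from the branch cut, and all the steps (spectral localisation via $\|H\|_\infty \le \pi/(4s)$, Bauer--Fike to control $\spec(\tilde V)$, the resolvent identity, and the uniform contour bounds) are correct as stated. Your explicit flag that $\tilde V(s)$ must be unitary (or at least have spectrum confined near the unit circle) is appropriate: the lemma as written does not state this hypothesis, but in every use in the paper $\tilde V(s)$ is a product of unitary gates, and without some such assumption the conclusion that $H'$ is Hermitian would fail. Your handling of the large-$\epsilon$ tail by absorbing it into $\kappa$ is also fine.
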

\begin{proof}
	See supplementary information of \cite{Poulin_Wocjan_2009}.
\end{proof}

The following lemma shows that even when the Hamiltonian simulation regime is used (to sufficient accuracy) the new value of $\eta$ maximised over all input states, has similar bounds to the value obtained without Hamiltonian simulation.
\begin{lemma}[Hamiltonian Simulation Error]\label{Lemma:hamSim_Error}
	\checked{James}
	Let $\mathcal M(N,\varphi,t,\ket{\nu})$ be the QTM described in \cref{Lemma:QTM_Output} with all gates done without error.
	Then there exists a QTM ${\mathcal M}'(N,\varphi,t,\ket{\nu})$ performing the same algorithm, except where the phase estimation for $U_{G_\zp}$ is instead performed by a Hamiltonian simulation algorithm in \cref{Lemma:HamSim-QTM}, and such that $\mathcal M'$ satisfies the following:
	\begin{enumerate}
		\item Let $\eta'(N,\varphi,t,\ket{\nu})$ be defined in the same way as $\eta(N,\varphi,t, \ket{\nu})$ from \cref{eq:eta}, but corresponding to the output of ${\mathcal M}'$.
		The following bounds are satisfied:
		\begin{itemize}
			\item If $t \ge |N|$ and if $\varphi\leq \lmin(G_N) - 2^{-t}+\BigO(2^{-3t/2})$, we have
			\[
			\max_{\ket{\nu}}\eta'(N,\varphi,t,\ket{\nu})\leq \frac{\pi^2}{24}.
			\]
			\item If $t \ge |N|$ and $\varphi\geq \lmin(G_N)-\BigO(2^{-3t/2}) $, we get
			\[
			\max_{\ket{\nu}}\eta'(N,\varphi,t,\ket{\nu})\geq 1-\frac{\pi^2}{24}.
			\]
			\item If $t<|N|$, then irrespective of the value of $\varphi$,
			\[
			\max_{\ket{\nu}}\eta'(N,\varphi,t,\ket{\nu})=\BigO\left(\frac{1}{2^{t/2}}\right).
			\]
		\end{itemize}
		\item The runtime overhead relative to $\mathcal M$ is at most $\poly(N, 2^{t})$.
	\end{enumerate}
\end{lemma}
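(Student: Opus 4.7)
The plan is to build $\mathcal M'$ from $\mathcal M$ of \cref{Lemma:QTM_Output} by replacing each of the $t$ controlled applications of $U_{G_{z'}}^{2^k}$ inside the phase-comparator circuit (\cref{fig:phase-comparator-circuit}) by a Hamiltonian simulation of $\exp(\ii\pi G_{z'}\cdot 2^k)$ produced by \cref{Lemma:HamSim-QTM} to spectral-norm precision $\epsilon_k$, and by compiling the $\BigO(t^2)$ controlled rotations $R_j$ of the inverse QFT from a fixed universal gate set to precision $\epsilon_{\mathrm{SK}}$ via Solovay--Kitaev. Everything else in $\mathcal M$ already lies in a fixed gate set and requires no modification.

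\textbf{Uniform output-state error.} By the standard linear composition bound for unitary perturbations, the output state will then satisfy $\|\ket{\chi'}-\ket\chi\|\le\sum_{k=0}^{t-1}\epsilon_k+\BigO(t^2)\epsilon_{\mathrm{SK}}$ uniformly in $\ket\nu$; choosing $\epsilon_k=2^{-3t/2}/t$ and $\epsilon_{\mathrm{SK}}=2^{-3t/2}/t^2$ brings this down to $\BigO(2^{-3t/2})$. Since both $\eta$ and $\eta'$ are expectation values of the same flag projector $\ketbra{11}_f\ox\1$, this yields $|\eta'-\eta|\le 2\|\ket{\chi'}-\ket\chi\|=\BigO(2^{-3t/2})$ uniformly in $\varphi$ and $\ket\nu$, and hence $|\max_{\ket\nu}\eta'-\max_{\ket\nu}\eta|=\BigO(2^{-3t/2})$.

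\textbf{Transferring the threshold bounds.} The three sub-bounds will then be inherited from \cref{cor:good-QPE-probs}. For $t<|N|$, the $\BigO(2^{-t/2})$ upper bound on $\max\eta$ trivially swallows the $\BigO(2^{-3t/2})=o(2^{-t/2})$ error. For $t\ge|N|$, the function $\varphi\mapsto\max_{\ket\nu}\eta$ is realised at the ground state $\ket{g_{\min}}$ of $G_N$ (since smaller eigenvalues of $G_N$ always yield larger $\eta$ at fixed $\varphi$), so it inherits from \cref{lem:eta-monotonous} both the slope bound $\partial/\partial\varphi\ge 1$ on the transition window $(\lmin(G_N)-2^{-t}+2^{-3t/2},\lmin(G_N)-2^{-3t/2})$ and the loose constant bounds $\le\pi^2/24$ and $\ge 1-\pi^2/24$ outside it. Shrinking the two $\varphi$-boundaries inward by $\BigO(2^{-3t/2})$ then absorbs the approximation error: one either exploits the explicit slack between the left-edge value computed in \cref{Eq:bbar_eta_3} and the ``round'' threshold $\pi^2/24$, or equivalently contracts the $\varphi$-interval by an additional $\BigO(2^{-3t/2})$ and uses the slope-$\ge 1$ estimate to recover the corresponding amount in $\eta$. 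Either route gives the claimed $\BigO(2^{-3t/2})$-shifted bounds on $\max_{\ket\nu}\eta'$.

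\textbf{Runtime and main obstacle.} By \cref{Lemma:HamSim-QTM}, each simulation of $\exp(\ii\pi G_{z'}\cdot 2^k)$ to precision $\epsilon_k$ costs $\tilde{\BigO}(2^{2k}(z')^2 t\,2^{3t/2})$ gates; summing over $k=0,\ldots,t-1$ and using $z'\le N$ yields $\poly(N,2^t)$ overall. The Solovay--Kitaev overhead is only polynomial in $t$ per gate and is negligible. Added to the $\BigO(2^{4t})$ runtime of $\mathcal M$, the total overhead of $\mathcal M'$ relative to $\mathcal M$ is $\poly(N,2^t)$. The only genuine subtlety is the transfer step: a uniform $\BigO(2^{-3t/2})$ error on $\eta'$ does not, by itself, give $\eta'\le\pi^2/24$ below the transition window, since the slope lower bound of \cref{lem:eta-monotonous} is only valid inside that window. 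Resolving this relies either on the explicit slack between the actual left-edge value of $\bar\eta$ and the loose bound $\pi^2/24$, or on a slight tightening of the $\varphi$-window by $\BigO(2^{-3t/2})$; the rest is standard bookkeeping of the precision budget against the $\BigO(2^{-3t/2})$ target.
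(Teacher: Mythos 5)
Your proof is correct, but it takes a genuinely different route from the paper. The paper does not bound the deviation of the output \emph{state} at all: it invokes \cref{Lemma:Close_Evolution_Hamiltonian} (Poulin--Wocjan) to reinterpret the Trotterised circuit as the \emph{exact} time evolution of an effective Hamiltonian $H''$ with $\|H''-G_N\|_\infty\le\kappa\epsilon/\pi T$, so that the entire exact-QPE analysis of \cref{lem:eta-monotonous,cor:good-QPE-probs} applies verbatim with $\lmin(G_N)$ replaced by $\lmin(H'')$; choosing total simulation error $\epsilon=\BigO(2^{-2t})$ the gate error then manifests purely as an $\BigO(2^{-2t})$ shift of the $\varphi$-thresholds, which is swallowed by the $\BigO(2^{-3t/2})$ terms, and the constants $\pi^2/24$ and $1-\pi^2/24$ are untouched. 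You instead bound $\|\ket{\chi'}-\ket{\chi}\|$ uniformly in $\ket\nu$ by the standard composition bound, transfer to $\eta$ via Lipschitz continuity of the maximum, and absorb the resulting additive $\BigO(2^{-3t/2})$ error in $\eta$-space using the constant slack between the value $\tfrac{4}{\pi^2}(\tfrac{\pi^2}{6}-1)+\BigO(2^{-t/2})\approx 0.26$ actually computed at \cref{Eq:bbar_eta_3} and the stated threshold $\pi^2/24\approx 0.41$. That works, and is arguably more elementary, but note two things. First, of your two proposed fixes for the transfer step only the slack argument stands on its own: ``contracting the $\varphi$-window and recovering via the slope bound'' is not available to the left of the monotonicity window, where no derivative bound is proven; the paper's effective-Hamiltonian device is exactly what turns gate error into a legitimate $\varphi$-shift, which is what that second option is implicitly reaching for. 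Second, your precision budget $\epsilon_k=2^{-3t/2}/t$ suffices for this lemma but not for the downstream monotonicity statement (\cref{lem:eta-solovay} requires total error $\BigO(2^{-2t})$), which is one reason the paper fixes the tighter budget already here; also, strictly speaking, compiling the QFT rotations via Solovay--Kitaev belongs to the next lemma rather than to $\mathcal M'$, though including it does no harm to the stated bounds.
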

\begin{proof}
	We first note that the largest power of $\ee^{\ii \pi G_N}$ to be performed in the QPE in \cref{Lemma:QTM_Output} is $T=2^{t}$. 
	Writing $G'_N \coloneqq 4G_N / \pi N$, we have
	\[
	\ee^{\ii \pi G_N T} = \ee^{\ii G'_N \pi^2NT/4} = \left( \ee^{\ii G'_N} \right)^{\pi^2 NT/4}.
	\]
	We set $V(s) \coloneqq \ee^{\ii G'_N s}$, and note that $V(s)$ satisfies the conditions of \cref{Lemma:Close_Evolution_Hamiltonian} for $s\le 1$.
	Let $H'$ be the effective Hamiltonian generated in the Hamiltonian simulation scheme given in \cref{Lemma:HamSim-QTM} for the short pulse $V(1)$, such that $\tilde V(s) = \ee^{\ii H' s}$ and $\| V(s) - \tilde V(s) \|_\infty \le \epsilon'$ is the precision to which we want to perform Hamiltonian simulation of the small time step $V(1)$; we leave $\epsilon'$ implicit for now, and determine its scaling in due course.
	By \cref{Lemma:Close_Evolution_Hamiltonian} it holds that $\| H' - G'_N \|_\infty \le \kappa\epsilon'$ for some constant $\kappa$.
	Set $H'' \coloneqq \frac{\pi N}{4} H'$, then
	\[
	\left\|  H'' - G_N \right\|_\infty
	=\left\| \frac{\pi N}{4} H' - \frac{\pi N}{4} G'_N \right\|
	\le \frac{\pi N}{4}  \kappa\epsilon'
	\le \frac{\kappa\epsilon}{\pi T} 
	\]
	where we have chosen
	\[
	\epsilon' \le \frac{4\epsilon}{\pi^2 NT},
	\]
	and consequently
	\begin{align*}
	|\lmin(\pi G_N)-\lmin(\pi H'')| &\le \kappa\epsilon/ T \\
	|\lmin(2\pi G_N)-\lmin(2\pi H'')| &\le 2\kappa\epsilon/ T \\
	&\ \ \vdots\\
	|\lmin(\pi T G_N)-\lmin(\pi T H'')| &\le \kappa\epsilon.
	\end{align*}
	This immediately implies that the deviation for QPE even in the highest Endian bit is upper-bounded by $\kappa\epsilon$.
	
	Let $\tilde U(T) \coloneqq \tilde V(1)^{\pi^2 NT/4}$.
	Then by an iterative expansion we have
	\begin{align*}
	\| U(T) - \tilde U(T) \|_\infty &= \| V(1)^{\pi^2 NT/4} - \tilde V(1)^{\pi^2NT/N} \|_\infty \\
	&\le \frac{\pi^2NT}{4} \| V(1) - \tilde V(1) \|_\infty \\
	&\le \frac{\pi^2NT}{4} \epsilon' = \epsilon.
	\end{align*}
	Now we would like this deviation of QPE to be \emph{less} than the smallest digit of precision of the exact QPE, which is satisfied for $\epsilon = \operatorname{o}(2^{-2t})$.
	We thus know that if $\varphi>\lmin(G_\zp)-2^{-3t/2}$ or $\varphi<\lmin(G_\zp)-2^{-t}+2^{-3t/2}$, then $\varphi>\lmin(H'')-\kappa\epsilon/\pi$ or $\varphi<\lmin(G_\zp)-2^{-t} + 2^{-3t/2} + \kappa\epsilon/\pi$, respectively.
	Thus, by \cref{lem:eta-monotonous} and \cref{cor:good-QPE-probs}, the same bounds as in  \cref{cor:good-QPE-probs} hold if we choose $\epsilon=\BigO(2^{-2t})$.
	
	The runtime overhead is then determined by an outer loop of applying $V(s)$ $\pi^2 NT / 4 = \poly 2^t$ times, and by the cost of approximating $V(s)$ to precision $\epsilon'$ in spectral norm, which by \cref{Lemma:HamSim-QTM} takes $\tilde\BigO(N^2/\epsilon') = \BigO(N^3 T / \epsilon) = \BigO(N^3 2^{3t} t)$, which is $\BigO(\poly(N, 2^t)$.
	The claim follows.
\end{proof}

We now fully characterise the output of the QTM when non-determinism and approximate gate sets are taken into account.
That is, the QTM only has access to a standard universal gate set and the gates $U_\varphi$, $U_{N}$.
\begin{lemma}[Gate Approximation Error]\label{Lemma:SK_HamSim_Approximation}
	\checked{James}
	Let $\mathcal M'(N,\varphi,t, \ket{\nu})$ be the QTM described in \cref{Lemma:hamSim_Error} with the Hamiltonian simulation subroutine performed, but all other gates still done exactly.
	Then there exists a QTM $\mathcal M''(N,\varphi,t,\ket{\nu})$ that satisfies the following.
	\begin{enumerate}
		\item $\mathcal M''$ only has access to a fixed universal gate set and the gates $U_\varphi$, $U_{N}$.
		\item Let $\eta''(N,\varphi,t,\ket{\nu})$ be defined in the same way as $\eta(N,\varphi,t, \ket{\nu})$ from \cref{eq:eta}, but corresponding to the output of $\mathcal M''$.
		Then $\max_{\ket{\nu}}\eta''(N,\varphi,t,\ket{\nu})$ satisfies the same bounds as $\max_{\ket{\nu}}\eta'(N,\varphi,t, \ket{\nu})$ in \cref{Lemma:hamSim_Error}.
		\item The additional runtime overhead relative to $\mathcal M'$ is at most a factor $\poly \log(N, 2^{t})$.
	\end{enumerate}
\end{lemma}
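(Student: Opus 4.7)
The plan is to extend the approximation strategy already used in \cref{Lemma:hamSim_Error} to cover all remaining non-standard gates of $\mathcal M'$. After Hamiltonian simulation has been inserted for the $U_{G_\zp}$ powers, the gates in $\mathcal M'$ that are not in a fixed universal set, and not among the allowed inputs $U_\varphi, U_N$, consist of: (i) the controlled rotations $R_k=\diag(1,e^{2\pi \ii/2^k})$ for $k=1,\dots,t$ needed by the inverse QFT at the end of each of the three QPE subroutines; (ii) controlled versions of $U_\varphi$, $U_N$, and of the local Trotter blocks arising inside Hamiltonian simulation. Items in (ii) that involve $U_\varphi$ or $U_N$ can be implemented from the allowed gate $U_\varphi$ or $U_N$ together with a constant number of universal gates, since these gates are diagonal. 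The remaining non-standard gates will be approximated by Solovay--Kitaev to a uniformly chosen precision $\epsilon''$ using $O(\log^c(1/\epsilon''))$ elementary gates each, $c$ a fixed constant.

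First I would count the total number $G$ of non-standard gate slots to be approximated. Each of the three inverse QFTs contributes $O(t^2)$ controlled rotations, and by \cref{Lemma:hamSim_Error} the Hamiltonian simulation block introduces at most $\poly(N,2^t)$ further slots, so $G=\poly(N,2^t)$. Second, by sub-additivity of errors in quantum circuits (each gate replacement shifts the global state by at most $\epsilon''$ in operator norm), the output of $\mathcal M''$ differs in trace distance from that of $\mathcal M'$ by at most $G\cdot \epsilon''$. Setting $\epsilon''=\Theta(2^{-2t}/G)$ yields an aggregate deviation of $O(2^{-2t})$, which is the same scale of slack that was already absorbed inside the proof of \cref{Lemma:hamSim_Error}, so $|\eta''(N,\varphi,t,\ket\nu)-\eta'(N,\varphi,t,\ket\nu)|=O(2^{-2t})$ uniformly in $\ket\nu$. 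Since the thresholds $\pi^2/24$ and $1-\pi^2/24$ in \cref{Lemma:hamSim_Error} are constants bounded away from the midpoint $1/2$, this deviation is negligible and all three bullet bounds transfer verbatim to $\eta''$.

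Finally, I would read off the runtime overhead. Solovay--Kitaev yields a compiled sequence of length $O(\log^c(1/\epsilon''))$ per approximated gate, and with $\epsilon''=\Theta(2^{-2t}/G)=\Theta(\text{poly}(N,2^t)^{-1})$ this is $O(\mathrm{polylog}(N,2^t))$. The total additional cost on top of $\mathcal M'$ is therefore multiplicative: at most a factor $\poly\log(N,2^t)$, as claimed.

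The main obstacle is simply the bookkeeping of error propagation across several layers of compilation: approximate QFT rotations nested inside the outer QPE loops, which themselves wrap the already-approximate Hamiltonian-simulation subroutine from \cref{Lemma:hamSim_Error}. Because Solovay--Kitaev scales polylogarithmically in $1/\epsilon''$, however, one can afford to set $\epsilon''$ inverse-polynomially small without blowing up the runtime, and the composition of the two approximation layers still only multiplies the final trace-distance error by a constant. Once this is tracked carefully, monotonicity and the constant-gap thresholds from \cref{lem:eta-monotonous,cor:good-QPE-probs} do the rest of the work.
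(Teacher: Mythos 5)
Your proposal is correct and follows essentially the same route as the paper: count the non-standard gate slots as $\poly(N,2^t)$, compile each with Solovay--Kitaev to precision chosen so the aggregate deviation is $O(2^{-2t})$ (which is then absorbed into the existing slack in the bounds on $\eta'$), and note the per-gate polylogarithmic overhead gives the $\poly\log(N,2^t)$ factor. The only cosmetic difference is that the paper treats the Hamiltonian-simulation subroutine as already compiled to the fixed gate set via \cref{Lemma:HamSim-QTM}, whereas you re-count its Trotter blocks among the gates to approximate, which is harmless.
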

\begin{proof}
	By \cref{Lemma:HamSim-QTM}, we already know that the Hamiltonian simulation subroutine utilises a fixed gate set.
	We approximate all other gates (apart from $U_N$ and $U_\varphi$, which are given explicitly)---of which there are at most $\#g \coloneqq \poly(N,2^t)$ many by combining the runtime of $\mathcal M$ from \cref{Lemma:QTM_Output} and runtime overhead of $\mathcal M'$ from \cref{Lemma:hamSim_Error}.
	We know that in order to approximate a const-local gate $U$ to precision $\epsilon$ using Soloay-Kitaev, an overhead $\BigO(\log^4 1/\epsilon)$ is introduced.
	Choosing $\epsilon = 1/\poly(N, 2^{2t})$ small enough such that $\#g\epsilon = 2^{-2t}/t$ suffices to satisfy both claims.
\end{proof}

We need one final property of $\eta''$: we will need to show that it is monotonically increasing within a certain region.
This doesn't follow straightforwardly from the monotonicity of $\eta$ proved in \cref{lem:eta-monotonous} as the approximation using Solovay-Kitaev and the Hamiltonian simulation routine may change the gradient $\partial \eta''/\partial \varphi$ to negative at some point. We show this is not the case for a sufficiently large precision.

\begin{lemma}[Approximate $\eta''$ Monotonicity] \label{lem:eta-solovay}
	\checked{James}
	Let $\ket\nu$ be an eigenvector of $G_N$ with eigenvalue $\lambda$, and let $\eta''(N,\varphi,t, \ket{\nu})$ be defined as in \cref{Lemma:SK_HamSim_Approximation}.
	Then, provided the gates are approximated to precision $\epsilon\le 2^{-2t}$,
	$\eta''(N,\varphi,t, \ket{\nu})$ is monotonically increasing for $\varphi \in \left( \lambda - 2^{-t}+\BigO(2^{-3t/2}), \lambda - \BigO(2^{-3t/2}) \right)$.
\end{lemma}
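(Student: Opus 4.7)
The plan is to show that the derivative $\partial \eta''/\partial \varphi$ differs from the ideal derivative $\partial \eta /\partial \varphi$ by at most $\BigO(2^{-t})$, and then invoke \cref{lem:eta-monotonous} which gives $\partial \eta/\partial \varphi \ge 1$ on the slightly larger interval $(\lambda - 2^{-t}+2^{-3t/2}, \lambda - 2^{-3t/2})$. The approximation errors in $\tilde V$ are small enough that the perturbed derivative is bounded away from zero on the stated (marginally contracted) interval, which establishes strict monotonicity.

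Concretely, write the ideal circuit as $V(\varphi)=W_m(\varphi) \cdots W_1(\varphi)$, and the approximated one as $\tilde V(\varphi)$, where every non-$U_\varphi$ gate has been replaced by its Hamiltonian-simulation/Solovay-Kitaev surrogate; the $U_\varphi$ gates themselves are available exactly and are left untouched. By \cref{Lemma:SK_HamSim_Approximation}, with the precision $\epsilon \le 2^{-2t}$ assumed in the statement, we have $\|V(\varphi)-\tilde V(\varphi)\|_\infty \le \BigO(2^{-2t})$ uniformly in $\varphi$. Setting $\ket{\chi(\varphi)}=V(\varphi)\ket{\chi_0}$ and $\ket{\tilde\chi(\varphi)}=\tilde V(\varphi)\ket{\chi_0}$ with $P=\ketbra{11}_f\otimes \1$, we have $\eta=\bra{\chi}P\ket{\chi}$ and $\eta''=\bra{\tilde\chi}P\ket{\tilde\chi}$, so
\[
\left| \frac{\partial \eta''}{\partial \varphi} - \frac{\partial \eta}{\partial \varphi} \right|
\le 2\|\partial_\varphi \chi\|\,\|\chi-\tilde\chi\| + 2\|\partial_\varphi \chi - \partial_\varphi \tilde\chi\|.
\]

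The first term is controlled by $\|\partial_\varphi V\|_\infty \le \pi\, 2^{t}$ (there are at most $\BigO(2^{t})$ uses of $U_\varphi$ in the phase-comparator QPE on $t$ qubits, and $\|\partial_\varphi U_\varphi\|_\infty=\pi$) multiplied by $\|\chi-\tilde\chi\|\le \BigO(2^{-2t})$, giving $\BigO(2^{-t})$. For the second term, differentiate term by term: $\partial_\varphi V$ is the sum, over each occurrence of $U_\varphi$ in $V$, of the product of gates with that single $U_\varphi$ replaced by $\partial_\varphi U_\varphi$; $\partial_\varphi \tilde V$ is the analogous sum for $\tilde V$. A telescoping argument on each summand bounds each difference by $\pi\,\|V-\tilde V\|_\infty\le \pi\, \BigO(2^{-2t})$, and there are at most $\BigO(2^{t})$ such summands, so $\|\partial_\varphi V - \partial_\varphi \tilde V\|_\infty \le \BigO(2^{-t})$. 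Combining the two contributions yields $|\partial_\varphi \eta'' - \partial_\varphi \eta| = \BigO(2^{-t})$.

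By \cref{lem:eta-monotonous}, on the interval $\varphi\in(\lambda - 2^{-t}+2^{-3t/2},\lambda-2^{-3t/2})$ we have $\partial_\varphi \eta \ge 1$, so $\partial_\varphi \eta'' \ge 1 - \BigO(2^{-t}) > 0$ for $t$ sufficiently large. Finally, the Hamiltonian-simulation step of \cref{Lemma:hamSim_Error} replaces $G_\zp$ by an effective Hamiltonian whose ground-state energy differs from $\lmin(G_\zp)$ by at most $\kappa\epsilon/(\pi T)=\BigO(2^{-2t})\ll 2^{-3t/2}$; this shift is absorbed into the $\BigO(2^{-3t/2})$ buffer in the interval of the statement, so the strict-monotonicity interval $(\lambda-2^{-t}+\BigO(2^{-3t/2}),\lambda-\BigO(2^{-3t/2}))$ is contained in the slightly larger exact interval and monotonicity of $\eta''$ follows.

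The main obstacle is the derivative-difference bound $\|\partial_\varphi V - \partial_\varphi \tilde V\|_\infty$; naively one might worry that the $\BigO(2^{t})$ occurrences of $U_\varphi$ blow up the derivative error, but because the $U_\varphi$ gates themselves are not approximated and the per-gate surrogate error is tightened to $\epsilon\le 2^{-2t}$ rather than merely $\poly(1/N)$, the overall derivative perturbation is driven down to $\BigO(2^{-t})$, safely below the $\partial_\varphi\eta\ge 1$ slope guaranteed by \cref{lem:eta-monotonous}.
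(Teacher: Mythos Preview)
Your proof is correct and follows essentially the same strategy as the paper's: bound $|\partial_\varphi\eta'' - \partial_\varphi\eta|$ by pairing the $\BigO(2^{t})$ derivative norm contributed by the $U_\varphi$ gates with the $\BigO(2^{-2t})$ total circuit-approximation error, and then invoke \cref{lem:eta-monotonous}. The paper organises the same estimate slightly differently: it factors the circuit as $U_1\,U_{\mathrm{PG}}(\varphi)\,U_2$ with all $\varphi$-dependence isolated in the (unapproximated) phase-gradient block $U_{\mathrm{PG}}$, and then bounds the derivative discrepancy in one line by $\|\partial_\varphi U_{\mathrm{PG}}\|_\infty\cdot\|\tilde U_1\tilde U_2 - U_1 U_2\|_\infty \le 2\pi\,t\,2^{t}\cdot 2\epsilon$. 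Your sum-over-occurrences telescoping on $\partial_\varphi V$ reaches the same $\BigO(2^{-t})$ bound without relying on that block factorisation, and your explicit remark that the Hamiltonian-simulation eigenvalue shift from \cref{Lemma:hamSim_Error} is absorbed into the $\BigO(2^{-3t/2})$ endpoint buffer is a point the paper leaves implicit.
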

\begin{proof}
	\newcommand\Upg{U_\mathrm{PG}}
	We can express $\eta = \bra\psi U(\varphi) \ket\phi$ for some initial states $\ket\phi$, and some appropriate state $\ket\psi$, respectively, and $U(\varphi) = U_1 \Upg(\varphi) U_2$, where $\Upg(\varphi)$ denotes the phase gradient part dependent on $\varphi$, and $U_1$ and $U_2$ collect all the unitary operations before and after (and are independent of $\varphi$).
	We then have
	\[
	\frac{\partial\eta}{\partial\varphi} = \bra\psi U_1 \left(\frac{\partial\Upg(\varphi)}{\partial\varphi}\right) U_2 \ket\phi.
	\]
	This means the deviation is entirely dependent on the gradient of the phase gradient matrix (as expected).
	
	Denote with $\tilde U_1, \tilde U_2$ the circuits $U_1$ and $U_2$ circuits with the  Solovay-Kitaev approximation used for any gates which cannot be performed exactly, which we assume to be implemented to accuracy $\epsilon$, i.e.~such that $\| U_i - \tilde U_i \|_\infty \le \epsilon$ for $i=1,2$. We emphasise our assumption that no gate in $\Upg$ has to be approximated.
	Analogously to before, we then have
	\[
	\frac{\partial\tilde\eta}{\partial\varphi} = \bra\psi \tilde U_1 \left(\frac{\partial\Upg(\varphi)}{\partial\varphi}\right) \tilde U_2 \ket\phi.
	\]
	Consequently,
	\begin{align}
	\left| \frac{\partial\tilde\eta}{\partial\varphi} - \frac{\partial\eta}{\partial\varphi} \right|
	&\le \left| \Tr\left(  \frac{\partial\Upg(\varphi)}{\partial\varphi} \left(\tilde U_1 \ketbra{\phi}{\psi}\tilde U_2 - U_1 \ketbra{\phi}{\psi} U_2 \right) \right) \right|  \nonumber\\
	&\le \left\| \frac{\partial\Upg(\varphi)}{\partial\varphi} \right\|_\infty \left\| \tilde U_1 \tilde U_2 - U_1 U_2 \right\|_\infty.
	\label{eq:eta-derivatives-bound}
	\end{align}
	The dependence of $\Upg(\varphi)$ only comes from the controlled $U_\varphi$ operations, and $\Upg(\varphi)=cU_\varphi U_\varphi^2\dots cU_\varphi^{2^t-1}$ ($\Upg(\varphi)$ actually has a set of $t$ Hadamards, however, we can absorb these into $U_1$ for convenience).
	All controlled gate powers within the phase gradient circuit are of the form
	\[
	\diag\left(1, 1, 1, \exp\left( 2\pi \ii \varphi 2^k \right) \right)
	\quad\text{with derivative}\quad
	2\pi\ii \times 2^k \diag\left(0, 0, 0, \exp\left( 2\pi\ii \varphi 2^k \right) \right).
	\]
	Thus using the product formula we can write
	\begin{align}
	\left\| \frac{\partial\Upg(\varphi)}{\partial\varphi} \right\|_\infty \le\ &\left\| \frac{\partial cU_\varphi}{\partial\varphi} cU_\varphi^2\dots cU_\varphi^{2^t-1} \right\|_\infty \nonumber \\
	+&\left\|   cU_\varphi \frac{\partial cU_\varphi^2}{\partial\varphi} cU_\varphi^{2^2} \dots cU_\varphi^{2^t-1} \right\|_\infty  \nonumber\\
	&\quad \quad \quad \quad  \vdots \nonumber \\
	+&\left\|   cU_\varphi cU_\varphi^2 \dots \frac{\partial cU_\varphi^{2^t-1}}{\partial\varphi}   \right\|_\infty \nonumber\\
	\leq t &\left\| \frac{\partial cU_\varphi^{2^t-1}}{\partial\varphi} 
	\right\|_\infty \nonumber
	\end{align}

	
	From this standard product formula for the derivative of a sequence of gates and using \cref{eq:eta-derivatives-bound} this means
	\[
	\left| \frac{\partial\tilde\eta}{\partial\varphi} - \frac{\partial\eta}{\partial\varphi} \right|
	\le 2\pi \times t 2^t \times 2\epsilon.
	\]
	By \cref{lem:eta-monotonous}, $\partial\eta / \partial\varphi$ within the interval $\varphi\in (\lambda-2^t+ \BigO(2^{3t/2}), \lambda- \BigO(2^{3t/2}) )$ is $\ge 1$; it thus suffices to demand $2\pi \times t2^t \times 2\epsilon \le 2^{-t}$; a choice of $\epsilon = 2^{-2t}$ proves the claim.
\end{proof}

The results within this section then culminate in a result proving that important properties which hold for $\eta$ hold for the approximated version $\eta''$; in particular montonicity in a particular interval and bounds on the energy outside of this interval.
\begin{theorem}[Phase Comparator QTM]\label{th:phase-comparator-qtm}
	\checked{James}
	Let $N\in\field N$, $\varphi\in[0,1]$.
	For any Gottesman-Irani Hamiltonian $G_N$ there exists a quantum Turing machine $\tilde{\mathcal M}(N,\varphi,t, \ket{\nu})$ with access to special gates $U_N$ and $U_\varphi$ as in \cref{Lemma:QTM_Output} which, on input $t\in\field N$, $\ket{\nu}\in (\C^d)^{\ox N}$, and in time $\poly(2^t, N)$ produces an output state as $\mathcal M''$ in \cref{Lemma:SK_HamSim_Approximation}.
	Abbreviating
	\begin{align}\label{eq:eta-tilde}
	\tilde\eta(N,\varphi,t) \coloneqq  \max_{\ket{\nu}}\eta''(N,\varphi,t, \ket{\nu}),
	\end{align}
	where $\eta''(N,\varphi,t,\ket{\nu})$ is defined in in \cref{Lemma:SK_HamSim_Approximation}, we have that
	\[
	\tilde\eta(N,\varphi,t)  \begin{cases}
	\geq 1-\frac{\pi^2}{24} & t\ge |N| \ \text{and} \ \varphi \ge \lmin(G_N) -\BigO(2^{-3t/2}) \\
	\leq \frac{\pi^2}{24} & t\ge |N| \ \text{and} \ \varphi \le \lmin(G_N) - 2^{-t}+\BigO(2^{-3t/2}) \\
	=\BigO(2^{-t/2}) & t < |N|.
	\end{cases}
	\]
	Furthermore, $\tilde\eta(N,\varphi,t)$ is monotonically increasing in the interval
	\[
	\varphi\in \big[ \lmin(G_N) -\BigO(2^{-3t/2})\,, \lmin(G_N) - 2^{-t}+\BigO(2^{-3t/2}) \big].
	\]
\end{theorem}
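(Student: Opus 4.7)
The plan is to assemble the theorem by identifying $\tilde{\mathcal M}$ with the QTM $\mathcal M''$ constructed in \cref{Lemma:SK_HamSim_Approximation}, and then reading off each claimed property from the lemmas already established. Since $\tilde\eta(N,\varphi,t) = \max_{\ket\nu}\eta''(N,\varphi,t,\ket\nu)$ by definition, nothing new needs to be built; the task reduces to verifying that the cited bounds transfer intact and that the monotonicity statement applies with the precision we actually use.

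First, I would invoke \cref{Lemma:SK_HamSim_Approximation} directly: it gives us a QTM $\mathcal M''$ that uses only a fixed universal gate set together with the two ``special'' gates $U_N$ and $U_\varphi$, whose quantity $\max_{\ket\nu}\eta''$ satisfies exactly the three case bounds claimed here (the high-probability case when $\varphi \ge \lmin(G_N)-\BigO(2^{-3t/2})$, the low-probability case when $\varphi \le \lmin(G_N) - 2^{-t} + \BigO(2^{-3t/2})$, and the small-$t$ case $t < |N|$). I would explicitly point out that the $\BigO(2^{-3t/2})$ slack absorbs both the Hamiltonian-simulation error from \cref{Lemma:hamSim_Error} (for which we already chose $\epsilon = \BigO(2^{-2t})$) and the Solovay--Kitaev gate-approximation error from \cref{Lemma:SK_HamSim_Approximation} (for which the same choice of precision suffices). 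This is the only place one might be tempted to re-examine the arithmetic, but the two preceding lemmas were stated precisely so that the error budget closes.

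Next, the monotonicity of $\tilde\eta$ on the interval $\bigl[\lmin(G_N)-2^{-t}+\BigO(2^{-3t/2}),\lmin(G_N)-\BigO(2^{-3t/2})\bigr]$ does not follow from pointwise monotonicity of $\eta''(N,\varphi,t,\ket\nu)$ alone; rather, it follows because the optimal $\ket\nu$ on this interval is (up to the slack handled above) the ground state of $G_N$, so that the max is achieved by a single smoothly varying family. Concretely, I would apply \cref{lem:eta-solovay} with $\ket\nu = \ket{g_{\min}}\ket{\xi_0}$ to conclude that this particular choice of $\ket\nu$ gives a monotonically increasing curve with derivative $\ge 1 - o(1)$; then observe that at the right endpoint this curve already attains $\ge 1 - \pi^2/24$, so by the bound on the left endpoint $\tilde\eta \le \pi^2/24$ the maximum over $\ket\nu$ cannot be achieved by any other eigenstate without contradicting \cref{lem:eta-monotonous}. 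Hence $\tilde\eta$ coincides with the curve from \cref{lem:eta-solovay} on the stated interval, and monotonicity transfers.

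Finally, the runtime $\poly(2^t,N)$ follows by chaining the runtimes: the ideal $\mathcal M$ runs in time $\BigO(2^{4t})$ by \cref{Lemma:QTM_Output}; \cref{Lemma:hamSim_Error} introduces at most a $\poly(N,2^t)$ multiplicative overhead for Hamiltonian simulation; and \cref{Lemma:SK_HamSim_Approximation} adds at most a $\poly\log(N,2^t)$ Solovay--Kitaev overhead. The product is $\poly(2^t,N)$, as claimed. The main obstacle, if any, is the monotonicity step, since one must take care that the optimisation over $\ket\nu$ does not introduce a non-smooth ``crossover'' between eigenstates inside the interval; the gap between the $\pi^2/24$ and $1-\pi^2/24$ bounds on non-ground eigenstates is what rules this out.
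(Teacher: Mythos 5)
Your first two steps (the three bounds on $\tilde\eta$ and the $\poly(2^t,N)$ runtime) follow exactly the paper's route: take $\mathcal M''$ from \cref{Lemma:SK_HamSim_Approximation}, leave $\ket\nu$ unconstrained, and read the bounds off \cref{cor:good-QPE-probs}, \cref{Lemma:hamSim_Error} and \cref{Lemma:SK_HamSim_Approximation}, chaining the runtimes. That part is fine.

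The gap is in your monotonicity argument. You correctly flag that \cref{lem:eta-solovay} is a statement about a \emph{fixed} eigenvector $\ket\nu$, so something must be said about the maximisation; but the patch you offer does not work. You claim that because the ground-state curve reaches $\ge 1-\pi^2/24$ at the right endpoint while $\tilde\eta\le\pi^2/24$ at the left endpoint, the maximum ``cannot be achieved by any other eigenstate without contradicting \cref{lem:eta-monotonous}'', hence $\tilde\eta$ coincides with the ground-state branch on the whole interval. This is a non sequitur: the endpoint bounds say nothing about which eigenstate attains the maximum in the \emph{interior}, and in the left portion of the interval the ground-state curve is itself far below $\pi^2/24$ (at the left edge of the window the exact QPE tail is only $O(2^{-t})$), while an excited eigenstate lying a distance of order $2^{-t}$ above $\lmin(G_N)$ has a tail probability that oscillates (through the $\sin^2(2^t\pi\xi)$ factor in \cref{eq:alpha_L}) with an envelope of constant size below $\pi^2/24$. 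Such a state can dominate the maximum there without contradicting \cref{lem:eta-monotonous}, whose monotonicity statement only covers each eigenstate's own window; indeed the maximum can even decrease locally as $\varphi$ increases, since the excited-state tails are not monotone outside their windows. So the asserted coincidence of $\tilde\eta$ with the ground-state curve fails, and monotonicity does not transfer by your argument. For comparison, the paper's proof is terse here: it invokes \cref{lem:eta-solovay} directly (implicitly for the eigenstate attaining the maximum, via the convexity argument underlying \cref{cor:good-QPE-probs}), rather than attempting the coincidence argument you give; if one wants to be more careful than that, one has to either restrict attention to the regime where the ground-state branch is the maximiser (e.g.\ where $\tilde\eta$ exceeds the $\pi^2/24$ ceiling on all excited-state tails, which is the regime actually relevant for the crossing in \cref{Lemma:Single_Square_Energy}) or give a genuinely different argument for monotonicity of the maximum -- your proposed contradiction does neither.
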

\begin{proof}
	We take $\mathcal M''$ from \cref{Lemma:SK_HamSim_Approximation}, and leave the input for the $\ket\nu$ section unconstrained.
	The rest follows by \cref{cor:good-QPE-probs,Lemma:SK_HamSim_Approximation}.
	The fact $\tilde\eta$ is monotonically increasing in the given region is proven in   \cref{lem:eta-solovay}.
\end{proof}

One fact that we have glossed over is that we can assume that the QTM in \cref{th:phase-comparator-qtm} is well-formed as defined in \cite[Def.~3.3]{Bernstein1997}---as its evolution is trivially unitary---and well-behaved as in \cite[Def.~3.12]{Bernstein1997}; the latter condition simply means that the QTM halts in a final state such that the halting head state is in the same cell.

Moreover, we can assume further ``good'' properties we wish: unidirectionality (meaning each state is only ever entered from one direction) which \citeauthor{Bernstein1997} show can be simulated if not originally present (\cite[Lem.~5.5]{Bernstein1997}).

\subsection{A Phase Comparator History State Hamiltonian}\label{sec:phase-comparator-ham}
In this section we translate the QTM designed in \cref{sec:phase-comparator-qtm} into a history state Hamiltonian.
This technique is by now standard \cite{Kitaev2002,Gottesman_Irani_2009} and used ubiquitously throughout literature (see e.g.\ \cite[Sec.~4.1]{Bausch2018c}, \cite[Sec.~1]{Bausch_Crosson2016}, or \cite[Sec.~3.1]{Bausch_Cubitt_Watson2019} for an overview).
We start with the following refinement regarding standard form Hamiltonians (those with an initial and final penalty at the start and end of the computation).

\begin{theorem}[Adaptation of Theorem 3.4 from \cite{Watson_2019}]\label{Theorem:Precise_Energies}
	\checked{James}
	Let $H(\mu)$ be standard-from Hamiltonian with minimum output penalty $\mu$ on the final time step, such that the encoded QTM has runtime $T$. 
	Then if $\mu = \frac{k}{256T}$ for $0 \leq k\leq 1$, the following bound holds:
	\[
	\frac{0.99k}{256T^2} \leq \lmin\left(H\left(\frac{k}{256T}\right)\right)\leq \frac{1.05k}{256T^2}
	\]
\end{theorem}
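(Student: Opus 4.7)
The plan is to adapt the proof of Theorem 3.4 from \cite{Watson_2019} to the present setting. Watson's result establishes tight bounds on the ground-state energy of standard-form history state Hamiltonians with a small output penalty, and the adaptation here consists mainly of verifying that the hypotheses of that theorem apply to our $H(\mu)$ in the specific parameter regime $\mu = k/(256T)$.

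For the upper bound, I would construct the standard Feynman--Kitaev history state $\ket{\eta} = \frac{1}{\sqrt{T}}\sum_{t=0}^{T-1}\ket{t}\ket{\psi_t}$, where $\ket{\psi_t}$ is the configuration of the encoded QTM after $t$ steps starting from an optimally chosen input. By construction this state lies in the kernel of the propagation Hamiltonian and, if the input register is prepared to satisfy the input penalties, also in the kernel of the input-penalty term; the only nonzero contribution then comes from the output penalty $\mu\,\Pi_{\mathrm{out}}$, which is supported on the single final clock slice. Since that slice carries weight $1/T$ in $\ket{\eta}$, we obtain $\bra{\eta}H(\mu)\ket{\eta} \le \mu/T = k/(256T^2)$. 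The additional slack factor $1.05$ in the theorem is a safety margin not saturated by this clean variational state, so the upper bound follows immediately.

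For the lower bound, the strategy is to apply Kitaev's geometric lemma (or equivalently the Nullspace Projection Lemma) to the decomposition $H(\mu) = H_{\mathrm{prop}} + H_{\mathrm{in}} + \mu H_{\mathrm{out}}$. The kernel of $H_{\mathrm{prop}} + H_{\mathrm{in}}$ consists precisely of valid history states, and the spectral gap above this kernel is $\Omega(1/T^2)$ by the standard clock-propagation estimates of \cite{Gottesman_Irani_2009,Bausch2016}. Restricted to the history-state subspace, $\mu H_{\mathrm{out}}$ acts as $\mu/T$ times the accept/reject projector averaged over the final step, whose minimum eigenvalue is $\mu/T$ up to a factor close to one; first-order perturbation theory then transfers this into a lower bound on $\lmin(H(\mu))$. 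Because $\mu = k/(256T)$ is much smaller than the propagation gap when $k \le 1$, the second-order perturbative correction is suppressed by an additional factor of $\mu T$; explicitly bounding this correction is what yields the constant $0.99$ rather than a loose $\Omega(1)$.

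The main obstacle is precisely this tight perturbative control: while the leading-order $\mu/T$ scaling is classical and appears already in \cite{Kitaev2002}, achieving constants as sharp as $0.99$ and $1.05$ requires bounding every higher-order term explicitly rather than through crude $\BigO$-estimates. Watson's original argument handles this by exploiting that the uniform history state is an exact zero-eigenvector of $H_{\mathrm{prop}}$, that $H_{\mathrm{out}}$ is diagonal on the clock register, and that the effective operator in the propagation ground-space therefore has an explicitly diagonalisable structure whose lowest eigenvalue can be computed in closed form. The transfer to our setting is routine because our $H(\mu)$ is built from the same Gottesman--Irani standard-form template as in \cite{Watson_2019}; the only change is the particular QTM being encoded, which affects neither the propagation spectrum nor the restriction of $H_{\mathrm{out}}$ to the kernel, and hence the numerical constants $0.99$ and $1.05$ carry over unchanged.
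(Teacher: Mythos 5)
First, be aware that the paper does not prove this statement at all: it is imported (as an ``adaptation'') of Theorem~3.4 of \cite{Watson_2019}, with no in-paper argument. So your decision to defer the sharp-constant analysis to that reference is in line with what the authors themselves do, and your upper bound is fine: the uniform history state built on an optimal witness is annihilated by the propagation and input terms and picks up only the final-slice penalty, giving $\lmin \le \mu/(T+1) \le 1.05\,\mu/T$, with the $1.05$ pure slack.

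The genuine gap is in your lower-bound sketch. You justify the smallness of the higher-order corrections by asserting that $\mu = k/(256T)$ is ``much smaller than the propagation gap''; but the gap of $H_{\mathrm{prop}}+H_{\mathrm{in}}$ above its kernel for a runtime-$T$ standard-form Hamiltonian is $\Theta(1/T^2)$ (or worse once the input penalty is folded in via the geometric lemma), which for constant $k$ is much \emph{smaller} than $\mu$, not larger. Consequently the tools you name cannot deliver the claim: Kitaev's geometric lemma only yields bounds of order $\mu/T^3$, nowhere near a $0.99$ prefactor, and naive perturbation theory with the second-order error bounded by $\|\mu H_{\mathrm{out}}\|^2/\Delta \sim \mu^2 T^2 \sim k^2$ produces a constant-sized error that swamps the first-order term $\sim k/T^2$. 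The correction really is suppressed by the factor $\mu T = k/256$ that you quote, but for a structural reason you do not supply: the coupling of the history-state space into the excited space through the output penalty carries the $1/T$ weight of the final clock slice, and the relevant resolvent is the \emph{boundary} Green's function of the clock, which is $\Theta(T)$ rather than the worst-case $1/\Delta = \Theta(T^2)$; this is where the factor $256$ and the constants $0.99$, $1.05$ actually come from. Making that rigorous requires the explicit clock-space (tridiagonal) diagonalisation carried out in \cite{Watson_2019}, or an equivalent Schur-complement/resolvent computation with the sharp boundary Green's function --- not first- plus second-order perturbation theory in a regime where the perturbation exceeds the gap. As written, your justification would fail if made precise, even though the final estimate you state is of the correct order.
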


Following from this, we take the arguably shortest rigorous route, and directly formulate the following theorem.

\begin{theorem}[Phase Comparator Hamiltonian]\label{Theorem:QTM_in_local_Hamiltonian}
	\checked{James}
	Let $N\in\field N$, and $\varphi\in[0,1]$.
	For any Gottesman-Irani Hamiltonian $G_N$
	there exists a constant $d>0$, and Hermitian operators $h^{(1)}\in\mathcal B(\field C^d)$, $h^{(2)}\in\mathcal B(\field C^d\times\field C^d)$, such that
	\begin{enumerate}
		\item $h^{(1)},h^{(2)}\ge 0$, with matrix entries in $\field{Z}$.
		\item $h^{(2)}=A+\ee^{\ii\pi\varphi} B + \ee^{-\ii\pi\varphi} B^\dagger +  \ee^{\ii\pi 0.\enc(N) }C + \ee^{-\ii\pi 0.\enc(N)}C^\dagger$, where
		\begin{itemize}
			\item $B,C \in \mathcal{B}(\C^d)$ with coefficients in $\field{Z}$, and
			\item $ A\in \mathcal{B}(\C^d)$ is Hermitian and with coefficients in $\field{Z}+\field{Z}/\sqrt 2+\ee^{\ii\pi/4}\field Z$.
		\end{itemize}
	\end{enumerate}
	Define a translationally-invariant nearest-neighbour Hamiltonian on a spin chain of length $L$ via
	\[
	\HTM(L)  \coloneqq  \sum_{i=1}^L h^{(1)}_i + \sum_{i=1}^{L-1} h^{(2)}_{i,i+1}.
	\]
	Denote with $\ket*{\blacksquare}$ and $\ket*{\midend}$ two special basis states of $\field C^d$, and for $m\in\field N$, denote the \emph{bracketed} subspace
	\begin{equation}\label{eq:bracketed}
	\Sbr(m)  \coloneqq  \ket*{\midend} \ox (\field C^d)^{\ox m} \ox \ket*{\blacksquare} \ox (\field C^d)^{\ox (L-m)} \ket*{\midend}.
	\end{equation}
	Then $\HTM(L)$ has the following properties.
	\begin{enumerate}\addtocounter{enumi}{2}
		\item $\HTM(L) = \bigoplus_{m=1}^{L-1} H(L,m) \oplus R$, where $H(L,m)  \coloneqq  \HTM(L)|_{\Sbr(m)}$; i.e.\ $\HTM(L)$ is block-diagonal with respect to the subspaces spanned by $\Sbr(m)$, and $R$ captures the remaining block.
		\item $R \ge 1$.
		\item $\lmin(H(L,m)) \ge 1$ if $m=0,1$.
		\item
		There exist $L_N = \poly N$ and $m_N = \poly\log_2 N$ and an integer constant $b$, such that the ground state energy $\lmin(H(L,m))$ of the other blocks satisfies
		{\small
			\[
			\hspace{-1cm}\lmin(H(L,m)) \begin{cases}
			\leq \frac{1.05}{256L^b}\frac{\pi^2}{24} & (m,L)=(m_N,L_N) \land \varphi \ge \lmin(G_N)-\BigO(N^{-6C}) \\
			\geq \frac{0.99}{256L^b}\left(1 - \frac{\pi^2}{24}\right) & (m,L)=(m_N,L_N) \land \varphi \le \lmin(G_N) - N^{-4C}+ \BigO(N^{-6C}) \\
			\geq \frac{0.99}{256L^b}\left(1 - \frac{\pi^2}{24}\right) & m < |N| \lor (m,L) \neq (m_N, L_N),
			\end{cases}
			\]}
		where $T(L)=L^{b/2}$ is the runtime of the encoded computation.
		\item If $(m,L)=(m_N,L_N)$, then $\lmin(H(L,m))$ is monotonically decreasing with $\varphi$ for \label{Point:Monotonicity} 
		\[
		\varphi\in \big[\lmin(G_N) - N^{-4C}+ \BigO(N^{-6C})\,, \lmin(G_N)-\BigO(N^{-6C})\big].
		\]
		
	\end{enumerate}
\end{theorem}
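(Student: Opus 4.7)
The plan is to obtain $\HTM(L)$ by applying the standard Feynman--Kitaev circuit-to-Hamiltonian translation to the QTM $\tilde{\mathcal M}$ constructed in \cref{th:phase-comparator-qtm}, adapted to the translationally-invariant, nearest-neighbour setting in the style of \cite{Gottesman_Irani_2009,Bausch2016}. First I would choose a multi-track alphabet of constant size $d$ which simultaneously carries the QTM tape symbol, the head state, an internal ``phase'' track for the special gates $U_N,U_\varphi$, and the two special symbols $\ket*{\midend}$ (chain boundary) and $\ket*{\blacksquare}$ (interior partition). Every local transition rule is arranged to preserve both markers, so translational invariance yields the block decomposition $\HTM(L)=\bigoplus_m H(L,m)\oplus R$, with $R$ absorbing all configurations whose marker pattern is not of the form $\midend\cdots\blacksquare\cdots\midend$; the usual ``illegal configuration'' penalties force $R\ge 1$, and taking $m\in\{0,1\}$ leaves no room for head initialisation and hence gives $\lmin(H(L,m))\ge 1$ from the input-penalty term.

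Next I would arrange the local rules so that exactly one pair $(m_N,L_N)$ with $m_N=\poly\log_2 N$ and $L_N=\poly N$ configures $\tilde{\mathcal M}$ correctly: the length-$m_N$ bracketed region hosts the computational registers of $\tilde{\mathcal M}$ (the $2+4t+t$ qubits for flag, QPE digits, and phase comparator, together with the Solovay--Kitaev slack space), while the length-$(L_N-m_N-O(1))$ region to the right of $\ket*{\blacksquare}$ holds the unconstrained $N$-qudit witness register $\ket\nu$. The controlled-$U_\varphi$ and controlled-$U_N$ operations inside the QPE subroutines are implemented as head-move transitions whose amplitudes are $\ee^{\pm\ii\pi\varphi}$ and $\ee^{\pm\ii\pi 0.\enc(N)}$; these contribute exactly the summands $\ee^{\ii\pi\varphi}B+\ee^{-\ii\pi\varphi}B^\dagger$ and $\ee^{\ii\pi 0.\enc(N)}C+\ee^{-\ii\pi 0.\enc(N)}C^\dagger$ of $h^{(2)}$, with $B,C$ integer-valued. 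All remaining transition amplitudes come from the fixed universal gate set promised by \cref{Lemma:SK_HamSim_Approximation} (CNOT, Hadamard, $T=\diag(1,\ee^{\ii\pi/4})$) and from integer-valued propagators/illegal-configuration penalties, which accounts for the $\field Z+\field Z/\sqrt 2+\ee^{\ii\pi/4}\field Z$ entries in $A$. The output penalty is $\1-\ketbra{11}_f$ applied conditionally on the halting state.

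For items 6--7 I would then invoke \cref{Theorem:Precise_Energies} directly. Picking $t=4C\log_2 N$ so that $2^{-t}=N^{-4C}$ and $2^{-3t/2}=N^{-6C}$, the runtime of the encoded computation is $T(L)=L^{b/2}$ for an appropriate constant $b$ absorbing the $\poly(N,2^t)$ overhead of \cref{Lemma:SK_HamSim_Approximation}. The minimum output-penalty expectation value on the accept branch of the history state is $(1-\tilde\eta(N,\varphi,t))/(256T)$; feeding the three bounds on $\tilde\eta$ from \cref{th:phase-comparator-qtm} into \cref{Theorem:Precise_Energies} with $k=1-\tilde\eta$ yields the three claimed bounds with $T^2=L^b$ in the denominator. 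Monotonicity of $\lmin(H(L_N,m_N))$ in $\varphi$ (item 7) then follows because both envelopes $0.99k/(256T^2)$ and $1.05k/(256T^2)$ are monotone in $k$, and $k=1-\tilde\eta$ is monotonically decreasing in $\varphi$ over the stated interval by the last assertion of \cref{th:phase-comparator-qtm}.

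The main obstacle will be establishing the third case of item 6, namely the uniform lower bound $\lmin(H(L,m))\ge 0.99(1-\pi^2/24)/(256L^b)$ across \emph{every} off-target pair $(m,L)\ne(m_N,L_N)$ with $m\ge 2$, which the statement requires even when $m\ge|N|$. My plan here is to bake a size self-check into $\tilde{\mathcal M}$: the QTM first reads $N$ by QPE on $U_N$, then runs the Gottesman--Irani binary counter \cite{Gottesman_Irani_2009} on the length-$(L-m-O(1))$ witness region to certify that it holds exactly $N$ qudits; any mismatch (either $m<m_N$, so the QPE does not resolve $\enc(N)$ and $\tilde\eta=\BigO(2^{-t/2})$ by \cref{th:phase-comparator-qtm}, or $L\ne L_N$, so the counter rejects) causes the QTM to halt without ever setting the accept flag, forcing $k=1$ in \cref{Theorem:Precise_Energies}; and any attempt by the head to run off the bracket is penalised by the illegal-configuration term. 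Verifying that this self-check cannot be circumvented by any witness $\ket\nu$ is a routine but careful argument using the unidirectional, well-behaved normal form of $\tilde{\mathcal M}$ guaranteed by \cite{Bernstein1997}.
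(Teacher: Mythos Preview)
Your overall approach tracks the paper's proof closely: the quantum-Thue / history-state encoding of $\tilde{\mathcal M}$, the explanation of where the coefficients $\ee^{\ii\pi\varphi}$, $\ee^{\ii\pi 0.\enc(N)}$ and $\field Z+\field Z/\sqrt2+\ee^{\ii\pi/4}\field Z$ come from, the size self-check for $(m,L)\neq(m_N,L_N)$, and the application of \cref{Theorem:Precise_Energies} with $k=1-\tilde\eta$ are all essentially what the paper does. Two small deviations are harmless: the paper uses $m$ purely as an encoding of the precision parameter $t$ (with the entire chain of length $L_N=2+4t+t+N$ holding both registers and witness), rather than physically partitioning registers/witness across the marker as you describe; and the paper uses an ancilla rotated by $\delta=1/(256L^b)$ so that the $L$-dependent penalty strength is produced by the computation itself rather than hard-coded in the local term, which you omit but would need for translational invariance.

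There is, however, a genuine gap in your argument for item~7. You write that monotonicity of $\lmin(H(L_N,m_N))$ in $\varphi$ ``follows because both envelopes $0.99k/(256T^2)$ and $1.05k/(256T^2)$ are monotone in $k$, and $k=1-\tilde\eta$ is monotonically decreasing in $\varphi$''. This does not follow: monotone upper and lower bounds on a function do not force the function itself to be monotone. For $k_1<k_2$ close together, the intervals $[0.99k_1,1.05k_1]$ and $[0.99k_2,1.05k_2]$ overlap, so nothing prevents $\lmin$ from oscillating inside the envelope. The paper's argument is operator-theoretic rather than via \cref{Theorem:Precise_Energies}: after passing to the standard-form picture, the only $\varphi$-dependence enters through the effective output penalty, which for $\varphi_1<\varphi_2$ in the stated interval differs by a \emph{positive semi-definite} term (a projector scaled by $\big(1-\tilde\eta(\varphi_1)\big)-\big(1-\tilde\eta(\varphi_2)\big)\ge 0$). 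Adding a PSD operator can never decrease the minimum eigenvalue, so $\lmin(H(\varphi_1))\ge\lmin(H(\varphi_2))$. You should replace the envelope argument with this operator-monotonicity step.
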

\begin{proof}
	A QTM  can be translated into a $2$-local quantum Thue system \cite{Bausch2016}; its associated Hamiltonian is then a $2$-local nearest-neighbour translationally-invariant Hamiltonian.
	Classical QTS transition rules yield integer matrix entries; the only nontrivial matrix entries stem from transition rules involving quantum letters (i.e.\ those that label sites where the quantum state is encoded, and transitions between those are unitary)---which in turn are simply the quantum gates available to the computation we will encode; as the phase gradient gates $U_N$, $U_\varphi$, and a universal gate set comprising CNOT, Hadamard and T will suffice for our purposes, the first two claims follow.
	
	It is furthermore clear that one can statically penalise all but the bracketed configurations, such that $R\ge 1$, and all bracketed states in $\Sbr(0)$ and $\Sbr(1)$ (see \cite[Sec.~5]{Gottesman_Irani_2009} for how this can be done).
	Moreover, the QTM we will construct will treat $\ket\blacksquare$ as a passive state, i.e.\ it will never move the block; as such, on the bracketed states themselves, the Hamiltonian is block-diagonal with respect to the position of the $\ket{\blacksquare}$ marker state, which is assumed to be at distance $m$ along the spin chain.
	The next three claims follow.
	
	Let $N^{-C}$ be the promise gap of the local Hamiltonian problem $\lmin(G_N)$ for some constant $C\in\field N$.\footnote{We can always shrink the promise gap to obtain this scaling for some integer $C$.}
	In order to resolve this promise gap with the given precision of $t$ bits, we would require
	\begin{equation}\label{eq:m_N}
	2^{-t} \le N^{-C}
	\quad\Longleftarrow\quad t \ge 2\times 4C \lceil \log_2 N \rceil \eqqcolon m_N
	\end{equation}
	where the extra factor of $4$ was added such that at least four times as many bits than necessary are resolved.
	The computation we encode then performs the following steps.
	\begin{enumerate}
		\item Translate the segment length $L$ into binary onto a track, and do the same with $m$.
		\item Perform $\tilde{\mathcal M}$ from \cref{th:phase-comparator-qtm}, using $t=m$ as precision input.
		\item 
		Verify that
		\begin{enumerate}
			\item $L=L_N\coloneqq 2 + 4t + t  + N$, and
			\item $t=m=m_N$ as defined in \cref{eq:m_N}, and
			\item $t$ is large enough such that $\tilde\eta(N,\varphi,t) \le \pi^2/24$ even in case $t<|N|$ (i.e.~$\BigO(2^{-t/2}) \le \pi^2/24$ in \cref{th:phase-comparator-qtm}).
		\end{enumerate}
		If any of the above conditions do not hold, set a penalty flag.
	\end{enumerate}
	It is a standard exercise to ensure that in all computational branches the size of the history state (i.e.\ the length of the computation) has the same length; this is usually done by introducing a global clock and idling steps (we point the reader to \cite[Sec.~4]{Cubitt_Perez-Garcia_Wolf2015}). 
	We can assume that the runtime of all of the above computation is precisely $T(L)=L^{b/2}$ for some even integer constant $b>0$.
	
	We assume our history state Hamiltonian features a single in- and output penalty, as in \cite[Fig.~2\&Sec.~5]{Bausch_Cubitt_Watson2019}; it is straightforward to show that $H(L,m)$ is also standard form as in \cite[Sec.~5]{Watson_2019}.
	We choose the output penalty to penalise the complement of the accepting subspace defined by the projector $\Pi = (\1-\ketbra{11}_f)$ on the final time step of the computation, as well as the case where $L\neq L_N$ (this can both be done locally for standard form Hamiltonians); the penalty will have strength $1/256T^2=1/256L^b$.
	Since we do not want our Hamiltonian to have another explicit dependence on $L$, we remark that this can be done by rotating an ancilla
	\begin{align}\label{Eq:Qubit_Rotation}
	\ket 0_a \longmapsto \delta \ket0_a + \sqrt{1-\delta^2} \ket 1_a
	\quad\text{for}\quad
	\delta = \frac{1}{256L^b}
	\end{align}
	as $L$ and $b$ are both known; then the penalty can be conditioned onto $\ketbra 0_a\otimes\Pi \eqqcolon \Pi'$.
	Then for a valid history state $\ket\chi$ and by using \cref{Theorem:Precise_Energies}, we have that
	\[
	\Tr[\ketbra{\chi} \Pi' ]=\frac{1}{256T^2}(1-E(L,N,\varphi)),
	\]
	where $E(L,N,\varphi)$ is the weight on the accepting subspace of the computation, which is the product of $\tilde\eta$ (i.e.\ $\tilde{\mathcal M}$'s output) and the test that $L=L_N$ and $m=|N|$.\footnote{We note that the last test can fail to produce the right result if $t=m$ was too small to begin with to expand enough bits of $N$; but in this case, the output of the QTM $\tilde{\mathcal M}$ already asserts a small acceptance probability, by \cref{th:phase-comparator-qtm}.}
	
	The case where $m=t$ is too short to expand $N$ in full is captured by the output of the QTM $\tilde{\mathcal M}$, i.e.\ by \cref{th:phase-comparator-qtm} we have that $\tilde\eta(N,\varphi,t) = \BigO(2^{-t/2})$ in this case, and hence also $E(L,N,\varphi)=\BigO(2^{-t/2})$.
	
	Let us thus focus on the case when $t$ is large enough (i.e.\ $t\ge |N|$).
	If $L\neq L_N$ or $m \neq m_N$, $E(L,N,\varphi)=0$ by construction, so we only need to analyse the remaining case of $L=L_N$ and $m=m_N$.
	By \cref{th:phase-comparator-qtm} the accepting state overlap is then
	\[
	\tilde\eta(N,\varphi,t)  \begin{cases}
	\ge 1-\frac{\pi^2}{24}  &  \varphi \ge \lmin(G_N) - \BigO(N^{-6C})  \\
	\le \frac{\pi^2}{24}  &  \varphi \le \lmin(G_N) - N^{-4C}+\BigO(N^{-6C}).
	\end{cases}
	\]
	where we made use of the fact that $2^{-t} = 2^{-m_N} = N^{-4C}$ and $2^{-3t/2}=2^{-3m_N/2}=N^{-6C}$,
	and thus overall
	{\small
		\begin{equation}\label{eq:bds-22}
		E(L,N,\varphi) \begin{cases}
		\ge 1-\frac{\pi^2}{24}  & (L,m)=(L_N,m_N) \land \varphi \ge \lmin(G_N) - \BigO(N^{-6C})  \\
		\le \frac{\pi^2}{24}  & (L,m)=(L_N,m_N) \land  \varphi \le \lmin(G_N) - N^{-4C}+\BigO(N^{-6C}) \\
		\le \frac{\pi^2}{24} & t=m<|N| \lor (L,m) \neq (L_N,m_N).
		\end{cases}
		\end{equation}}
	The bounds in the statement then follow from combining \cref{eq:bds-22} with \cref{Theorem:Precise_Energies}.
	
	Finally, to prove $\lmin(H(L,m))$ is monotonically decreasing with $\varphi$, we note that $1-\tilde\eta(N,\varphi,t)$ is monotonically decreasing ( since \cref{th:phase-comparator-qtm} shows $\tilde\eta(N,\varphi,t)$ is monotonically increasing).  
	Since the Hamiltonian is standard form, this acts as a penalty of the form $\frac{1}{256T}(1-\tilde\eta(N,\varphi,t))$, which is monotonically decreasing, and which can be shown by standard techniques to be equivalent to adding on a positive semi-definite projector \cite{Watson_2019}.
	As adding a positive semi-definite matrix to another matrix can never lead to a \emph{decrease} in the combined eigenvalues, the claim follows.
\end{proof}

\subsection{Combining the Comparator Hamiltonian with a 2D Marker Tiling}\label{sec:with-marker}
We import the 2D Marker Hamiltonian from \cite[Sec.~7]{Bausch_Cubitt_Watson2019}, which describes a checkerboard pattern for which each checkerboard square of size $L\times L$, with a special marker offset at position $m<L$ on one of the edges, has a net negative energy contribution $\propto 1/4^{f(L,m)}$.

To do this, we introduce a special marker state $\ket{\star}$\footnote{Not to be confused with the bracketing state $\ket{\blacksquare}$ in \cref{eq:bracketed}; and note we also re-use the letter $m$ here to indicate the offset of $\ket\star$. This is not necessarily the same offset as the offset for $\ket\blacksquare$.} which interacts with the Marker Hamiltonian.
The function $f$ is then defined by the placement of $\ket{\star}$, and so we can use the placement of  $\ket{\star}$ (controlled by a classical tiling pattern within each of the squares) to define $f$ to have the appropriate properties for our proof.

We paraphrase the following result, tightening the bounds on the Marker Hamiltonian's ground state energy as we go.
\begin{theorem}[{\cite[Th.~7.6]{Bausch_Cubitt_Watson2019}}]\label{Theorem:Marker_Energy}
	\checked{Johannes}
	Let $\HS=(\field C^d)^{\ox \Lambda}$ be a square spin lattice $\Lambda$ with spins of dimension $d$, and let $c>0$.
	Further let $f(L)$ be a function such that $f(L) \le L$ is an integer and computable in time and space $\leq kL$, for some constant $k\in \mathbb{N}$.
	Then there exists a translationally-invariant nearest-neighbour Hamiltonian $\HM=\sum_{\langle i,j\rangle} h_{i,j}^{(\boxplus,f)}$ with the following properties:
	\begin{enumerate}
		\item $\HM = \bigoplus_{L>0} \HM(L) \oplus R'$.
		\item $R'\ge 0$.
		\item $\HM(L)$ has a unique ground state corresponding to a checkerboard tiling.
		Let $\HM(L)|_S$ be the restriction of the Hamiltonian to a single checkerboard square, then for all $L\ge 2$,
		\begin{align}\label{eq:mhambounds}
		-\frac{9/4}{4^{f(L)}} \le \lmin(\HM(L)|_S) \le -\frac{9/4-9/4^{f(L)}}{4^{f(L)}}.
		\end{align}
	\end{enumerate}
\end{theorem}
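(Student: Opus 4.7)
The plan is to decompose $\HM = H_{\mathrm{tile}} + H_{\mathrm{comp}} + H_{\mathrm{bonus}}$, where each piece is translationally-invariant and nearest-neighbour. The first term $H_{\mathrm{tile}}$ is a classical Wang-tiling Hamiltonian which partitions the infinite lattice into aligned $L \times L$ checkerboard squares via diagonal penalties on invalid configurations; the second $H_{\mathrm{comp}}$ runs a classical computation along a designated track inside each square, evaluating $f(L)$ from the square's side length (using the $kL$-time/space computability hypothesis) and depositing the marker state $\ket\star$ at the position encoding $f(L)$. Both $H_{\mathrm{tile}}$ and $H_{\mathrm{comp}}$ are diagonal in the computational basis and conserve the square side length $L$, so the claimed block decomposition $\HM = \bigoplus_L \HM(L) \oplus R'$ is immediate with $R' \ge 0$ absorbing all configurations that violate these constraints; uniqueness of the ground state corresponding to a checkerboard tiling follows because the classical constraints pin down both the tiling pattern and the marker position within each sector.

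The nontrivial ingredient is $H_{\mathrm{bonus}}$, which couples to the marker $\ket\star$ and must produce a negative energy of order $1/4^{f(L)}$. The construction I would use is a short ``amplitude-cascade'' gadget attached to the marker: a history-state register of $f(L)$ steps, each implementing a fixed two-level rotation chosen so that after $f(L)$ steps the cascade's ground state has overlap exactly $2^{-f(L)}$ with a designated target basis state of an ancilla. Adding a local bonus term $-\tfrac{9}{4}\, P_{\mathrm{target}}$, where $P_{\mathrm{target}}$ is the projector onto the target, then contributes $-\tfrac{9}{4}\cdot 4^{-f(L)}$ to the ground state energy of that sector, reproducing the leading term of \cref{eq:mhambounds}. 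The constant $9/4$ is fixed by matching to the stated bounds.

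The main obstacle will be obtaining the upper and lower bounds in \cref{eq:mhambounds} with the claimed \emph{exponentially small} gap $9/4^{2f(L)}$, since naive perturbation theory around the gadget's ideal ground state would only give a relative precision of $\BigO(1)$ on this scale. To circumvent this I would design the cascade so that the effective Hamiltonian on its low-energy subspace reduces to a two-dimensional block whose eigenvalues admit a closed form. The lower bound $-\tfrac{9}{4}\cdot 4^{-f(L)}$ is then the largest-magnitude eigenvalue of this exact block, while the upper bound is witnessed by an explicit trial state (the gadget's ideal target superposition) whose expectation value under $\HM(L)|_S$ can be evaluated directly, yielding the $+9/4^{2f(L)}$ correction. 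The classical terms $H_{\mathrm{tile}}$ and $H_{\mathrm{comp}}$ being diagonal in the computational basis ensures they cannot mix the cascade's low-energy block with higher levels, so the spectral bounds on the gadget transfer to $\HM(L)|_S$ unchanged. The hypothesis $f(L) \le L$ enters precisely here: it guarantees that the depth-$f(L)$ cascade, together with the computation of $f(L)$ itself, fits within an $L \times L$ square.
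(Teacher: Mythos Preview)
Your high-level decomposition into tiling + computation + bonus matches the paper's architecture, and you correctly identify the hard part as the exponentially tight window in \cref{eq:mhambounds}. But the bonus mechanism you propose is genuinely different from the paper's, and your route to the specific constants has a gap.

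The paper does not build a new gadget here. It imports the 2D Marker Hamiltonian from \cite{Bausch_Cubitt_Watson2019,Bausch_2020_Undecidability}: the checkerboard tiling places $\ket\star$ at offset $f(L)$ via a classical tiling computing $f$, and the ``bonus'' is a one-dimensional \emph{tridiagonal} Marker segment $\Delta'_w$ of length $w=f(L)$ whose ground state energy sits at $-1/2 - \Theta(4^{-w})$. The actual new content of this proof is purely a sharpening of constants: the authors revisit the characteristic-polynomial recursion for $\Delta'_w$ (the $(a-1)/(a+1)$ ratios of \cite[Lem.~E.1]{Bausch_Cubitt_Watson2019} and \cite[Lem.~8]{Bausch_2020_Undecidability}), improve the earlier $\mathrm{lwr}(w)=3/4^w$ to $9/4\cdot 4^{-w}$, and check that $p_w(-1/2-(9/4-\delta)4^{-w})$ still satisfies the required inequality once $\delta\ge 9/4^w$; the $-1/2$ offset is then removed as in \cite[Th.~10]{Bausch_2020_Undecidability}. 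So both the leading $9/4$ and the $9/4^{2f(L)}$ correction are \emph{outputs} of this polynomial analysis, not inputs to be matched.

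Your ``amplitude cascade'' is a different mechanism, and it is underspecified at the crucial point. A Feynman--Kitaev history state over $f(L)$ steps dilutes any final-step overlap by an extra $1/(f(L)+1)$ factor, so the bare $4^{-f(L)}$ scaling you claim does not obviously survive; and declaring that $9/4$ is ``fixed by matching to the stated bounds'' is circular---you must \emph{derive} both endpoints of \cref{eq:mhambounds} from your gadget. The assertion that the low-energy sector reduces to a closed-form $2\times 2$ block is exactly the thing that would need to be shown, and you have not exhibited the block. The paper's route---an explicit tridiagonal matrix whose characteristic polynomial $p_w$ can be evaluated at trial energies $-1/2-c\,4^{-w}$---is precisely what makes relative precision $\BigO(4^{-f(L)})$ accessible.
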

\begin{proof}
	We construct an augmented checkerboard tiling as in \cite[Sec.~6]{Bausch_Cubitt_Watson2019} which places the a special marker $\ket{\star}$ offset at $f(L)$, using a classical tiling within the checkerboard square; as $f(L)$ was computable within time and space $\le kL$ for some constant $k\in\field N$ the existence of such a tiling follows by \cite[Lem.~6.8]{Bausch_Cubitt_Watson2019}.
	
	Following \cite[Th.~7.2\&Th.~7.6]{Bausch_Cubitt_Watson2019} and the notation therein, the bounds for a Marker Hamiltonian of length $L$ to be found are denoted
	\[
	-\frac12-\mathrm{lwr}(w) \le \lmin(\Delta'_w) \le -\frac12-\mathrm{upr}(w)
	\]
	where $\Delta'_w$ denotes precisely one segment of the Marker Hamiltonian, encoding a computation of length $w$---which is $L$ here, but to follow the notation of the lemmas we will amend we stick to $w$: this computation runtime can then be augmented to $f(w)$.
	
	\paragraph{Lower Bound.} Note that in the proof of \cite[Lem.~E.1]{Bausch_Cubitt_Watson2019}, the lower bound was obtained by realising
	\[
	\frac{a-1}{a+1} \le 1\ \forall w
	\quad\Longleftarrow\quad \mathrm{lwr}(w) = \frac{3}{4^w}.
	\]
	Analysing \cite[Eq.~32]{Bausch_Cubitt_Watson2019} more carefully, we note the same bound also holds for $\mathrm{lwr}(w) = 9/4 \times 4^{-w}$.
	
	\paragraph{Upper Bound.} In \cite[Lem.~8]{Bausch_2020_Undecidability}, it is easy to check that the inequality
	\[
	\frac{a-1}{a+1} \le 4^{-w}
	\]
	also holds when starting with $p_w(-1/2-(9/4-\delta)\times 4^{-w})$, for any
	\[
	\delta \ge \frac{9(5\times 4^2-2)}{4(2+2^{1+4w}-4\times 4^w)} \ge \frac{9}{4^w}.
	\]
	Thus $\mathrm{upr}(w)=(9/4 + 9/4^w) \times 4^{-w}$ suffices.
	
	Finally, the $-1/2$ offset are removed as in \cite[Th.~10]{Bausch_2020_Undecidability}.
\end{proof}

\subsection{From Phase Comparison to Phase Transition}\label{sec:two-phases}
Following \cite[Sec.~8]{Bausch_Cubitt_Watson2019}, we will now combine the QTM Hamiltonian $\HTM$ with the 2D Marker Hamiltonian $\HM$, to translate the outcome of the comparison $\varphi \lessgtr \lmin(G_N)$, where $G_N$ is the Gottesman-Irani Hamiltonian simulated by $\HTM$, into the question of existence of a negative eigenstate within one square of the 2D Marker Hamiltonian.
We will assume $\HM$ is such that all checkerboard squares have square sizes $L\in4\field N$; this can always be achieved by adding a fixed-dimensional tiling, which we leave implicit in the following.

The aim is to produce an overall Hamiltonian which has \emph{negative energy density} when the encoded computation is accepting, but a \emph{positive energy density} when the encoded computation rejects.
The checkerboard structure allows us to create a repeated structure across the lattice; as each square will contribute a finite amount of either positive or negative energy, the density will follow suit.

\begin{lemma} \label{Lemma:Single_Square_Energy}
	\checked{James}
	Let $H \coloneqq \HTM\otimes\1 + \1\otimes\HM$ on a spin lattice. Then its ground state is a product state $\ket{\psi}\ox \ket{T}_c$, where $\ket T_c$ is the checkerboard tiling from $\HM$, and $\ket\psi$ the ground state of $\HTM$.
	Consider an $L\times L$ square denoted $S(L)$ within the tiling and let $H|_{S(L)}$ be the Hamiltonian restricted to such a square.
	Then, adopting the notation from \cref{eq:bds-22},
	\[
	\lmin(H|_{S(L)}) \begin{cases}
	< 0  &  \text{if\ } (L, m) = (L_N, m_N) \land \varphi \ge \lmin(G_N)-\BigO(N^{-6C}) \\
	\ge 0 & \text{if\ } (L, m) = (L_N, m_N) \land \varphi \le \lmin(G_N)- N^{-4C}+\BigO(N^{-6C}) \\
	\ge 0 & \text{if\ } (L, m) \neq (L_N, m_N).
	\end{cases}
	\]
	Furthermore, if $L=L_N$ and  $m = m_N$, then there is exactly one point in $\varphi$ where $\lmin(H|_{S(L)})$ changes from $<0$ to $=0$ which occurs in the interval
	\[
	\varphi\in \big[\lmin(G_N)- N^{-4C}+\BigO(N^{-6C})\,, \lmin(G_N)-\BigO(N^{-6C})\big].
	\]
\end{lemma}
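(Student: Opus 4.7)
The plan is to combine the bounds on $\HTM$ from \cref{Theorem:QTM_in_local_Hamiltonian} with those on $\HM$ from \cref{Theorem:Marker_Energy}, calibrate the marker offset function $f$ so the marker bonus sits strictly in the gap between the accepting and rejecting energies of $\HTM$, and then finish via strict monotonicity. First, since $\HTM\otimes\1$ and $\1\otimes\HM$ act on disjoint tensor factors the ground state of $H$ is a product state, and restricted to a single checkerboard square $S(L)$ the Hamiltonian decomposes block-diagonally over the marker offset $m$ from \cref{eq:bracketed}, giving $\lmin(H|_{S(L)}) = \min_m \lmin(H(L,m)) + \lmin(\HM(L)|_S)$ (the leftover block $R$ is $\ge 1$ and never selected). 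In particular $\lmin(\HM(L)|_S)$ is $\varphi$-independent and the entire $\varphi$-dependence of $\lmin(H|_{S(L)})$ enters through the first summand.

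Second, I would calibrate $f(L)$ so that the marker bonus $|\lmin(\HM(L)|_S)| \sim 9/(4\cdot 4^{f(L)})$ lies strictly inside the window $(A/L^b,\, B/L^b)$, where $A \coloneqq 1.05\pi^2/(24\cdot 256)$ and $B \coloneqq 0.99(1-\pi^2/24)/256$ are read off from \cref{Theorem:QTM_in_local_Hamiltonian}(6). The main obstacle is that this window has width ratio $B/A \approx 1.35 < 4$, whereas successive achievable bonuses $9/(4\cdot 4^{f(L)})$ differ by the factor $4$, so a naive integer choice of $f(L)$ need not land inside. I resolve this by amplifying the acceptance/rejection gap of the phase comparator---for instance by expanding more QPE bits so that the Fej\'er-kernel tail estimated in \cref{lem:eta-monotonous} shrinks below any desired constant, or by running the comparator in parallel branches combined via a majority vote---so that the $\pi^2/24$ constant in \cref{th:phase-comparator-qtm} is replaced by an arbitrarily small $\epsilon$, pushing $B/A$ above $4$. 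Once this ratio exceeds $4$, the choice $f(L) = \lceil b\log_4 L + c_0\rceil$ for a suitable absolute constant $c_0$ lands the bonus strictly inside the window while respecting $f(L)\le L$ and the time/space computability requirement of \cref{Theorem:Marker_Energy}.

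Third, plugging the three regimes of \cref{Theorem:QTM_in_local_Hamiltonian}(6) into $\lmin(H|_{S(L)})$ gives the three cases of the lemma. At $(L,m) = (L_N,m_N)$ with $\varphi \ge \lmin(G_N) - \BigO(N^{-6C})$, the sum is $\le A/L_N^b - (A/L_N^b + \delta) < 0$; at $(L_N,m_N)$ with $\varphi \le \lmin(G_N) - N^{-4C} + \BigO(N^{-6C})$, it is $\ge B/L_N^b - B/L_N^b \ge 0$; and for every $(L,m) \ne (L_N,m_N)$, the uniform lower bound $B/L^b$ on $\min_m \lmin(H(L,m))$ from \cref{Theorem:QTM_in_local_Hamiltonian}(6) together with the calibration yields $\ge 0$. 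Finally, for uniqueness on $S(L_N)$: by \cref{Theorem:QTM_in_local_Hamiltonian}(2) the matrix entries of $H(L_N,m_N)$ depend analytically on $\ee^{\ii\pi\varphi}$, so $\lmin(H(L_N,m_N))$ is continuous in $\varphi$; by \cref{Theorem:QTM_in_local_Hamiltonian}(7) together with the gradient bound $\partial \eta/\partial\varphi \ge 1$ from \cref{lem:eta-monotonous} (inherited through \cref{lem:eta-solovay}), it is strictly monotonically decreasing on the critical interval; and $\lmin(\HM(L_N)|_S)$ is $\varphi$-independent. Continuity, strict monotonicity, and the sign change across the interval then yield a unique $\varphi^*$ in the interval at which $\lmin(H|_{S(L_N)}) = 0$, which is the critical point claimed.
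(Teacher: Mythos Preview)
Your overall strategy matches the paper's: combine the bounds on $\HTM$ from \cref{Theorem:QTM_in_local_Hamiltonian} with those on $\HM$ from \cref{Theorem:Marker_Energy}, calibrate $f$ so the bonus lands strictly between the accepting and rejecting energies, and use the monotonicity of $\lmin(\HTM)$ in $\varphi$ for the uniqueness clause. The product/block-decomposition step and the three-case analysis are fine.

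The substantive difference is in how you resolve the calibration. You correctly spot that the window $[A,B]$ has ratio $B/A\approx 1.35<4$, so a generic integer $f(L)$ of the form $\lceil b\log_4 L + c_0\rceil$ need not land the bonus inside it. Your fix is to modify the comparator to amplify the acceptance/rejection gap until $B/A>4$. The paper instead keeps the constants as they are and sets $f(L)=5+\log_4(L^b)$ exactly; to make this an integer it adds (in a footnote) a tiling penalty forcing only $L=4^x$ squares to be valid, so no ceiling is needed. One then simply checks that $\tfrac{9}{4}4^{-f(L)}=\tfrac{9}{4096L^b}$ satisfies
\[
\frac{1.05}{256L^b}\cdot\frac{\pi^2}{24}\ \le\ 4^{-f(L)}\!\left(\frac{9}{4}-\frac{90}{4\cdot 2^L}\right)
\quad\text{and}\quad
\frac{9}{4}4^{-f(L)}\ \le\ \frac{0.99}{256L^b}\!\left(1-\frac{\pi^2}{24}\right),
\]
both of which hold for sufficiently large $L$. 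The paper's route is shorter because it avoids touching the comparator at all; your route works too but is heavier.

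One correction to your amplification paragraph: expanding more QPE bits does \emph{not} shrink the constant $\pi^2/24$ in \cref{lem:eta-monotonous}. That constant comes from the Fej\'er-kernel tail $\tfrac{4}{\pi^2}\sum_{L\ge 2}L^{-2}$ and is independent of $t$; extra bits only narrow the transition region. Your second suggestion (parallel repetition with majority vote) is the one that actually drives the soundness/completeness constants toward $0$ and $1$ and would make your version go through.
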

\begin{proof}
	We choose the Marker falloff $f(L)$ such that
	\[
	\frac{9}{4}4^{-f(L)} = \frac{9}{16}\frac{1}{256L^b}
	\quad\implies\quad
	f(L) = 5 + \log_4(L^b).
	\]
	We now compare the energy (given by \cref{Theorem:Marker_Energy}) with the energy of the Hamiltonian encoding the QTM (given in \cref{Theorem:QTM_in_local_Hamiltonian}) and see the following bounds hold for sufficiently large $L$:  
	\[
	\frac{0.99}{256L^b}\left(1 - \frac{\pi^2}{24}\right) \geq \frac{9}{4}4^{-f(L)}\quad\text{and}\quad
	4^{-f(L)}\left(\frac{9}{4}- \frac{90}{4\times 2^L}\right)     \geq \frac{1.05}{256L^b}\frac{\pi^2}{24}
	\]
	where $b$ is the runtime exponent of $T=T(L)=L^{b/2}$, as given in \cref{Theorem:QTM_in_local_Hamiltonian}.
	Note $f(L)$ is trivially computable in time and space $kL$ for some constant $k$.\footnote{Indeed: define a tiling pattern that counts in base $4$, and does so $b$ times; then counts another 5 steps. Penalise tile configurations indicating that the base-$4$ expansion of $L$ is not of the form \texttt{100\ldots}, corresponding to a number $L=4^x$ for some integer $x$. This can all be done with $k=1$.}
	This yields a Marker Hamiltonian with a ground state energy as in \cref{Theorem:Marker_Energy}, such that the ground state energy is ``sandwiched'' with ample margins between the upper and lower bounds of the ground state energy of the Hamiltonian encoding the computation.
	
	As the spectrum is product by construction, the joint spectrum is then $\spec(H) = \spec(\HTM) + \spec(\HM)$, and the rest follows from \cite[Lem.~F.1]{Bausch_Cubitt_Watson2019}.
	
	Finally the fact there is exactly one point where $\lmin(H|_{S(L)})$ changes from $<0 $ to $ =0$ is due to the fact that (as per point \ref{Point:Monotonicity} of \cref{Theorem:QTM_in_local_Hamiltonian}) $\lmin(\HTM(L))$ is strictly decreasing for $\varphi\in [\lmin(G_N)- N^{-4C}+\BigO(N^{-6C})), \lmin(G_N)-\BigO(N^{-6C})]$. 
	As per the above analysis, the point at which $|\lmin(\HTM(L))| = |\lmin(\HM(L)|_S)|$ occurs for energy values corresponding to $\varphi$ in this interval, hence this point at which $|\lmin(\HTM(L))| < |\lmin(\HM(L)|_S)|$ changes to $|\lmin(\HTM(L))| = |\lmin(\HM(L)|_S)|$  can only happen at a single point.
\end{proof}

Since we want the trivial ground state in the gapped phase to have eigenvalue zero, we want to shift the Hamiltonian $H \coloneqq \HTM\otimes\1 + \1\otimes\HM$ constructed above by 1; this is a standard trick, summarised in the following lemma.
\begin{lemma}\label{Lemma:Finite_Size_Energy}
	\checked{James}
	There exists a Hamiltonian $H'$, with the same properties as $H' \coloneqq H+\sum_i P_i$, where $P_i$ is a projector, such that on  a lattice $\Lambda(L)$
	\[
	\lmin(H'(\varphi)) \begin{cases}
	= 1 + \big\lfloor \frac{L}{L_{N}}\big\rfloor^2\lmin(H(\varphi)|_{S(L_N)})  & \text{if\ } \varphi \ge \lmin(G_N) - \BigO(N^{-6C})  \\
	\geq 1  &\text{if\ } \varphi  \le \lmin(G_N) - N^{-4C} + \BigO(N^{-6C}),
	\end{cases}
	\]
	where $H \coloneqq \HTM\otimes\1 + \1\otimes\HM$.
\end{lemma}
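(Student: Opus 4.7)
The plan is to exploit the product structure of $H = \HTM\otimes\1 + \1\otimes\HM$: since the two tracks act on disjoint Hilbert spaces, every ground state of $H$ factorises into a QTM part and a Marker part. The Marker part forces the lattice $\Lambda(L)$ into the unique checkerboard tiling of \cref{Theorem:Marker_Energy}, which partitions $\Lambda(L)$ into $\lfloor L/L_N\rfloor^2$ maximal $L_N\times L_N$ squares plus a narrow boundary strip. On each maximal square the QTM contribution is $\lmin(H|_{S(L_N)})$ by \cref{Lemma:Single_Square_Energy}, which is strictly negative in the $\YES$ regime and non-negative in the $\NO$ regime. The boundary strip can be absorbed, by appropriate choice of trivial filler tiles inside $\HM$, into sub-blocks whose minimum energy is exactly zero.

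Next, I would introduce the $+1$ shift through a set of local projectors $\{P_i\}$ added to $H$, following the augmented-checkerboard construction described in \cref{sec:with-marker}. The Marker tiling contains a distinguished ``seed'' tile---the unique cell carrying the $\ket\star$ offset on the top-left of each square, further constrained by a single origin marker that is forced to appear exactly once in the finite checkerboard ground state on $\Lambda(L)$. I would define $P_i$ to be a local projector that vanishes on every ordinary tile type and evaluates to $1$ only on that global seed. Then by translation invariance and the uniqueness of the seed in any ground configuration of $\HM$, $\sum_i P_i$ evaluates to exactly $1$ on the ground state of $\HM$ and is non-negative on all other configurations; no other property of $\sum_i P_i$ is used in the argument.

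With these two ingredients assembled, the two cases of \cref{Lemma:Single_Square_Energy} transfer directly. In the accepting regime $\varphi \ge \lmin(G_N) - \BigO(N^{-6C})$, each of the $\lfloor L/L_N\rfloor^2$ squares contributes $\lmin(H|_{S(L_N)}) < 0$ and the boundary strip plus the projector sum contribute $+1$, giving the claimed equality $\lmin(H'(\varphi)) = 1 + \lfloor L/L_N\rfloor^2\, \lmin(H|_{S(L_N)})$. In the rejecting regime $\varphi \le \lmin(G_N) - N^{-4C}+\BigO(N^{-6C})$, every square contributes $\ge 0$, so $\lmin(H) \ge 0$ on the ground-state tiling, and adding the $+1$ from $\sum_i P_i$ yields $\lmin(H'(\varphi)) \ge 1$.

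The main obstacle, in my view, is controlling configurations that lie outside the checkerboard ground space of $\HM$: one must verify that no tiling defect can lower $\lmin(H')$ below $1$ in the $\NO$ regime by trading a positive Marker penalty for a negative QTM contribution on some ill-sized ``square''. This is handled by combining $R' \ge 0$ from \cref{Theorem:Marker_Energy} with the bound $\lmin(H(L,m)) \ge 1$ whenever $(L,m) \neq (L_N,m_N)$ given by \cref{Theorem:QTM_in_local_Hamiltonian}: any deviation from the ground-state tiling kills the negative Marker bonus, and the QTM layer cannot compensate because its own ground state on any other block is already $\ge 1$. Together with $\sum_i P_i \ge 0$ everywhere, this forces $\lmin(H'(\varphi)) \ge 1$ in the $\NO$ case uniformly across the full Hilbert space.
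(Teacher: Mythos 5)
Your overall skeleton is the paper's: use \cref{Lemma:Single_Square_Energy} for the single-square energy, note that in the negative case the ground state becomes a checkerboard of $\lfloor L/L_N\rfloor^2$ squares with incomplete squares contributing zero, and then shift everything up by $1$. The genuine gap is in how you implement the $+1$ shift. The paper invokes the energy-shift trick of \cite[Lem.~23]{Bausch2018c} (cf.~\cite[Lem.~F.5]{Bausch_Cubitt_Watson2019}), whose whole point is that the added translationally invariant local terms sum to exactly $+\1$ on the finite lattice (a counting argument over sites, edges and plaquettes), so that every eigenvalue of $H$ is shifted by exactly $1$, unconditionally. Your ``global seed'' projector is not of this form: its expectation is state dependent, and a low-energy state can simply avoid the seed. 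In the \NO regime you then only obtain $\lmin(H')\ge\lmin(H)\ge 0$, not $\ge 1$, unless you separately show that any configuration avoiding the seed pays an energy cost of at least $1$ --- and the results you cite do not give this. The same issue threatens the \YES-case \emph{equality}: if dropping the seed costs less than $1$, the true ground state need not be the intended configuration.

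The step you use to close this hole, namely ``$\lmin(H(L,m))\ge 1$ whenever $(L,m)\neq(L_N,m_N)$'', misreads \cref{Theorem:QTM_in_local_Hamiltonian}: the $\ge 1$ bound there holds only for the non-bracketed block $R$ and for $m=0,1$; for all other blocks with $(m,L)\neq(m_N,L_N)$ the guarantee is only
\[
\lmin(H(L,m))\ \ge\ \frac{0.99}{256L^{b}}\left(1-\frac{\pi^2}{24}\right),
\]
i.e.\ a $1/\poly(L)$ quantity, far below $1$. Likewise \cref{Theorem:Marker_Energy} gives only $R'\ge 0$ and no $\Omega(1)$ penalty for non-checkerboard configurations, and the checkerboard ground state carries one $\ket\star$ marker \emph{per square}, so there is no ``origin marker forced to appear exactly once on $\Lambda(L)$'' in the construction (nor is such a unique global seed easy to enforce with translationally invariant couplings). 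The repair is exactly the paper's route: replace the seed projector by the identity-summing shift, after which the \NO bound $\lmin(H')\ge 1$ follows immediately from $\lmin(H)\ge 0$, and the \YES equality follows from the checkerboard/incomplete-square accounting you already have.
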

\begin{proof}
	From \cref{Lemma:Single_Square_Energy} we know the energy of a single square $S(L_N)$.
	If the ground state $\lmin(H(\varphi)|_{S(L_N)}) <0$, then the overall ground state of the lattice becomes a checkerboard of these squares.
	It can be shown that incomplete squares contribute zero energy, giving a total energy of $\big\lfloor \frac{L}{L_{N}}\big\rfloor^2\lmin(H(\varphi)|_{S(L_N)})$ (see \cite[Corollary~F.3]{Bausch_Cubitt_Watson2019}).
	If $\lmin(H(\varphi)|_{S(L_N)}) \geq0$, then the lattice has $\geq 0$.
	
	Finally, by using the energy shift trick of \cite[Lem.~23]{Bausch2018c} (cf.~\cite[Lem.~F.5]{Bausch_Cubitt_Watson2019}), we can add on an energy shift of 1 to the Hamiltonian, giving the bounds stated in the lemma.
\end{proof}

The final step is then to combine $H'$ from \cref{Lemma:Finite_Size_Energy} with a trivial, a dense, and a guard Hamiltonian, to lift the ground state energy to a ground state energy density statement. 
This modifies the Hamiltonian so that phase transitions can occur between the ground state of the checkerboard Hamiltonian and the ground state of a trivial zero energy state.

\newcommand\Hdense{H_\mathrm{dense}}
\newcommand\Htrivial{H_\mathrm{trivial}}
\newcommand\Hguard{H_\mathrm{guard}}
\begin{theorem}[Existence of Two Phases] \label{Lemma:Two_Phases}
	\checked{James}
	Let $G_N$ be a Gottesman-Irani Hamiltonian with promise gap $\sim N^{-C}$ for some constant $C$.
	Then there exists a Hamiltonian $H^\Lambda(N,\varphi)=\sum_{\langle i,j \rangle}h^N_{i,j}(\varphi) + \sum_{i\in \Lambda}h^{N}_i$,
	and an order parameter $O_{A/B}$ acting on a subset $F\subset \Lambda$ of lattice sites, $|F|$ constant, 
	such that, as $\Lambda\rightarrow \infty$ the following holds.
	\begin{itemize}
		\item if $\varphi \le \lmin(G_N) - N^{-4C}+\BigO(N^{-6C})$, then
		\begin{enumerate}[i]
			\item $H^\Lambda$ is gapped with spectral gap $\geq 1/2$.
			\item product ground state.
			\item has order parameter expectation value $\langle O_{A/B}\rangle = 1$.
		\end{enumerate}
		\item if $\varphi \ge \lmin(G_N) -\BigO(N^{-6C})$, then
		\begin{enumerate}[i]
			\item $H^\Lambda$ is gapless.
			\item has a ground state with algebraically decaying correlations.
			\item has order parameter expectation value $\langle O_{A/B}\rangle = 0$.
		\end{enumerate}
	\end{itemize}
\end{theorem}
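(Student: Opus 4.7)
The plan is to glue $H'$ from \cref{Lemma:Finite_Size_Energy} to two ``phase-selector'' Hamiltonians through a guard term, exactly in the spirit of the combine-and-switch constructions of \cite{Cubitt_Perez-Garcia_Wolf2015,Bausch_Cubitt_Watson2019,Bausch_2020_Undecidability}. Concretely, I would adjoin an ancilla qubit to each lattice site, pick a trivial Hamiltonian $\Htrivial$ with a unique zero-energy product ground state supported on the $\ket 0_a$ sector, and a critical translationally-invariant Hamiltonian $\Hdense$ (e.g.\ a 2D XY- or Heisenberg-type model) with a bounded-below ground energy and spectrum dense in an interval above it, supported on the $\ket 1_a$ sector. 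The full Hamiltonian has the schematic form
\[
H^\Lambda(N,\varphi) \coloneqq \Htrivial \otimes \ketbra{0}_a + \bigl( H'(N,\varphi) + \Hdense \bigr) \otimes \ketbra{1}_a + \Hguard,
\]
where $\Hguard$ gives an $\Omega(1)$ penalty to any neighbouring ancilla mismatch so that every low-energy state lies entirely in either the all-$\ket 0_a$ or all-$\ket 1_a$ sector. The order parameter is then simply $O_{A/B} \coloneqq \ketbra{0}_a$ on a single ancilla site: it distinguishes the two sectors by construction and is trivially $O(1)$-local.

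Next I would carry out a case analysis using the dichotomy from \cref{Lemma:Finite_Size_Energy}. In the small-$\varphi$ regime we have $\lmin(H') \ge 1$, and $\Hdense$ is bounded below by a constant; hence the ground state energy of the $\ket 1_a$ sector is strictly positive and grows with the lattice, while the $\ket 0_a$ sector has ground energy $0$ with a constant gap inherited from $\Htrivial$. The overall ground state is the trivial product, the order parameter evaluates to $1$, and by matching the constants in $\Htrivial$, $\Hdense$, and $\Hguard$ appropriately the spectral gap can be made $\ge 1/2$. In the large-$\varphi$ regime, $\lmin(H')$ diverges to $-\infty$ linearly in the lattice area, pulling the dense spectrum of $\Hdense$ strictly below zero; the ground state now lives in the $\ket 1_a$ sector, so $\langle O_{A/B}\rangle = 0$, and it inherits the algebraic correlations of $\Hdense$'s critical ground state. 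Density of the spectrum near the shifted ground state follows because $H'$ contributes a \emph{constant} per-square energy on each complete $L_N\times L_N$ checkerboard square, acting block-diagonally as a scalar on the ancilla-$1$ sector and therefore preserving the dense spectrum of $\Hdense$.

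The main obstacle, as in the related constructions, is calibrating the relative sizes of $\Htrivial$'s gap, $\Hdense$'s ground energy, and the $+1$ shift from \cref{Lemma:Finite_Size_Energy} so that (i) the $\ge 1/2$ gap in the gapped phase is genuinely thermodynamic, surviving in the $L\to\infty$ limit together with the guard and ancilla overhead, and (ii) the dense spectrum in the gapless phase is not destroyed by boundary terms or by incomplete checkerboard squares near $\partial\Lambda$. I would handle both by lifting \cite[Sec.~8 and App.~F]{Bausch_Cubitt_Watson2019} essentially verbatim, using \cref{Lemma:Single_Square_Energy} as the drop-in replacement for their corresponding per-square energy estimate. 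Monotonicity of the transition in $\varphi$ is not needed for this lemma itself---it is inherited from point~\ref{Point:Monotonicity} of \cref{Theorem:QTM_in_local_Hamiltonian} and will be used in the following subsection to upgrade existence of two phases to uniqueness of the critical parameter $\varphi^*$.
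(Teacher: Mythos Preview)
Your proposal is correct and follows essentially the same approach as the paper: combine $H'$ with a dense/critical Hamiltonian and a trivial gapped Hamiltonian via a guard term, invoke the dichotomy of \cref{Lemma:Finite_Size_Energy}, and read off the order parameter as the projector onto the trivial sector. The only cosmetic difference is that the paper structures the local Hilbert space as a direct sum $\mathcal H_1\otimes\mathcal H_2\oplus\mathcal H_3$ (with $\Hdense$ the 1D critical XY model on $\mathcal H_2$ and $\Htrivial$ on $\mathcal H_3$) rather than via an ancilla qubit, but your ancilla-plus-guard construction realises the same block structure and the spectral analysis goes through identically.
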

\begin{proof}
	Take $\Hdense$ to be a Hamiltonian that has an asymptotically dense spectrum in $[0,\infty)$ on $\mathcal H_2$.
	For convenience we choose $\Hdense$ to be the 1D critical XY-model \cite{Lieb_Schultz_Mattis_1961}.
	$\Htrivial$ to be diagonal in the computational basis, with a single product ground state $\ket0^{\otimes\Lambda}$, minimum eigenvalue $0$ and spectral gap $1$ acting on $\mathcal H_3$, and $\Hguard$ acting on $\mathcal H = \mathcal H_1 \otimes \mathcal H_2 \oplus \mathcal H_3$ via
	\[
	\Hguard \coloneqq \sum_{i\sim j}\left( \1_{1,2}^{(i)}\otimes \1_3^{(j)} + \1_3^{(i)} \otimes \1_{1,2}^{(j)} \right).
	\]
	Take $H'$ from \cref{Lemma:Finite_Size_Energy}, and set
	\[
	H^\Lambda(N,\varphi) \coloneqq H' \otimes \1_2 \oplus 0_3 + \1_1 \otimes \Hdense \oplus 0_3 + 0_{1,2} \oplus \Htrivial + \Hguard.
	\]
	Then
	\[
	\spec(H^\Lambda) = \{ 0 \} \cup \left( \spec(H') + \spec(\Hdense) \right) \cup G
	\]
	for some $G \subset [1, \infty)$, as in the proof of \cite[Th.~F.6]{Bausch_Cubitt_Watson2019}.
	From \cref{Lemma:Finite_Size_Energy}
	(i.e. using the energy shift trick of \cite[Lem.~23]{Bausch2018c} , cf.~\cite[Lem.~F.5]{Bausch_Cubitt_Watson2019}), we can assume that for lattice sizes going to infinity,
	\[
	\lmin(H') \begin{cases}
	\ge 1 & \varphi \le \lmin(G_N) - N^{-4C}+\BigO(N^{-6C}) \\
	\longrightarrow -\infty & \varphi \ge \lmin(G_N)-\BigO(N^{-6C}),
	\end{cases}
	\]
	as eventually there will exist a checkerboard square size such that $L=L_N$ and $t=m_N \ge |N|$ is satisfiable; the only differentiating condition left in \cref{Lemma:Single_Square_Energy} is then $\varphi \le \lmin(G_N) - N^{-4C}+\BigO(N^{-6C})$.
	Then if $\lmin(H) \ge 0$, we have that $\spec(H) + \spec(\Hdense) \subseteq [1,\infty)$. 
	The ground state of $H^\Lambda$ is the trivial ground state with spectral gap 1. 
	Otherwise, if $\lmin(H)\longrightarrow-\infty$, $H^\Lambda$ becomes asymptotically gapless and dense via $\Hdense$.
	
	The order parameter is then defined as
	\[
	O_{A/B}=\frac{1}{|F|}\sum_{i\in F} \left( 0_{1,2} \oplus \ketbra{0}_3 \right)^{(i)}
	\]
	which makes it clear that in case the ground state is determined by $\Htrivial$, the expectation value $\langle O_{A/B} \rangle = 1$; otherwise zero.
	
	The claim of the algebraically decaying correlation functions follows from the fact that the critical XY-model has algebraically decaying correlations functions \cite{Lieb_Schultz_Mattis_1961}.
\end{proof}

It is clear that for our construction we could choose $F$ to only contain a single spin, but we leave the statement in its generic form.

\subsection{Existence of Exactly One Critical Point}\label{sec:unique-critical-point}
The following lemma shows that there is exactly one critical point between the two phases of the Hamiltonian.
We will give the statement of the lemma here, but defer its proof to the two-parameter case (which is the more generic setting).
\begin{lemma}[Existence of Exactly One Critical Point] \label{Theorem:Crit_GS_Difference}
	\checked{James}
	Consider the Hamiltonian $H^\Lambda(N,\varphi)$ from \cref{Lemma:Two_Phases}. 
	This has exactly one critical point in the interval
	\[
	\varphi^*\in \big[\lmin(G_N)-N^{-4C}+\BigO(N^{-6C})\,, \lmin(G_N)-\BigO(N^{-6C})\big].
	\]
\end{lemma}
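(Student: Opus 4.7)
The plan is to reduce the uniqueness claim to a single strictly monotonic zero-crossing of the per-square ground state energy on the specified interval, and then promote that crossing to a unique phase transition of $H^\Lambda$. The starting point is item 7 of \cref{Theorem:QTM_in_local_Hamiltonian}, which guarantees that $\varphi \mapsto \lmin(H(L_N,m_N)(\varphi))$ is strictly decreasing on $[\lmin(G_N)-N^{-4C}+\BigO(N^{-6C}),\, \lmin(G_N)-\BigO(N^{-6C})]$. Because the 2D Marker Hamiltonian $\HM$ is independent of $\varphi$ and $H=\HTM\otimes\1+\1\otimes\HM$ has a product ground state on each checkerboard square, $\varphi \mapsto \lmin(H|_{S(L_N)}(\varphi))$ inherits this strict monotonicity. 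Sandwiching this with \cref{Lemma:Single_Square_Energy}, which gives $\lmin(H|_{S(L_N)}) \ge 0$ at the left endpoint and $<0$ at the right endpoint, the intermediate value theorem then yields a \emph{unique} crossing point $\varphi^\star$ inside the interval.

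Next, I would argue that this unique per-square zero-crossing is the unique critical point of the full Hamiltonian $H^\Lambda$ in the thermodynamic limit. By the construction in the proof of \cref{Lemma:Finite_Size_Energy}, the ground-state energy of $H'$ on $\Lambda(L)$ equals $1 + \lfloor L/L_N\rfloor^2 \lmin(H|_{S(L_N)})$ whenever the latter is negative and is $\ge 1$ otherwise. Feeding this into the guard--trivial--dense decomposition from \cref{Lemma:Two_Phases}, the spectrum of $H^\Lambda$ sits in $[1,\infty)$ with ground state $\ket0^{\otimes \Lambda}$ whenever $\varphi<\varphi^\star$, but is pulled down by $\Hdense$ and becomes asymptotically gapless as soon as $\varphi>\varphi^\star$. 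Strict monotonicity then forbids the per-square energy from ever returning to non-negative values, ruling out a second transition.

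To close the argument I would verify that no other block $(L,m)\ne(L_N,m_N)$ can contribute a competing critical point. Item 6 of \cref{Theorem:QTM_in_local_Hamiltonian} uniformly lower-bounds $\lmin(H(L,m)(\varphi)) \ge (0.99/256L^b)(1-\pi^2/24)>0$ on all such blocks, and by the choice of Marker falloff $f(L)=5+\log_4(L^b)$ inside \cref{Lemma:Single_Square_Energy} the Marker bonus $-\tfrac{9}{4}\cdot 4^{-f(L)}$ is strictly dominated by this computational penalty for every $(L,m)\ne(L_N,m_N)$; hence these blocks never produce a negative single-square energy for any $\varphi$, so they can only add positive contributions and cannot generate additional critical points.

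The main obstacle I anticipate is the uniformity in the third step: the bounds in \cref{Theorem:QTM_in_local_Hamiltonian} and \cref{Theorem:Marker_Energy} must be tight enough, \emph{simultaneously for every} $L\in 4\field N$, to guarantee that $(L_N,m_N)$ is the only block at which the Marker bonus can overcome the history-state computational penalty. This is the same tension that was navigated in \cref{Lemma:Single_Square_Energy}, and I would rely on the $\BigO(N^{-6C})$ slack engineered into the precision bounds---together with the exponentially decaying Marker bonus---to absorb any discrepancy and guarantee strict block separation for all $L\ne L_N$.
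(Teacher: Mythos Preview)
Your proposal is correct and follows essentially the same approach as the paper. The paper's proof is a two-line reference: it simply cites \cref{Lemma:Single_Square_Energy}, which already establishes the unique sign change of $\lmin(H|_{S(L_N)})$ on the stated interval (your steps 1--3), and then invokes the proof of \cref{Lemma:Two_Phases} to identify this with the unique gapped/gapless transition (your step about promoting to $H^\Lambda$). Your concern about uniformity across blocks $(L,m)\ne(L_N,m_N)$ is already fully handled by the third case of \cref{Lemma:Single_Square_Energy}, so no additional argument is needed there.
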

\begin{proof}
	As per \cref{Lemma:Single_Square_Energy} there is exactly one $\varphi$ within the given interval where $\lmin(H|_{S(L)})$ goes from $<0$ to $\geq 0$, which (as per the proof of \cref{Lemma:Two_Phases}) corresponds to the phase transition from gapped to gapless.
\end{proof}

\subsection{Reduction of Translationally Invariant Local Hamiltonian to 1-CRT-PRM}\label{sec:1-hardness-reduction}

\begin{figure}[!tb]
	\centering
	\includegraphics[height=6cm]{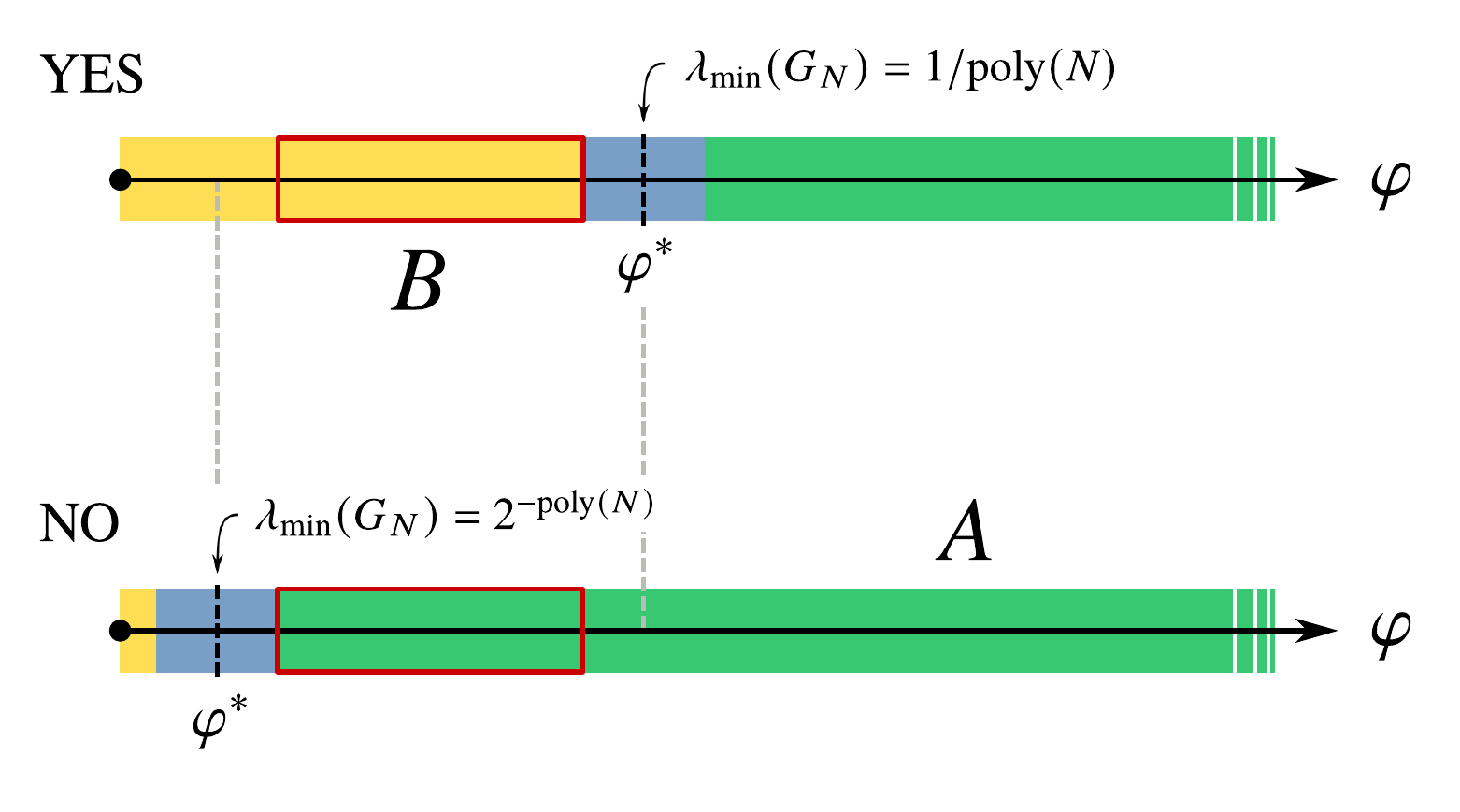}
	\caption{The two YES and NO cases (left resp.~right) of phase diagrams, for the Hamiltonian in \cref{Lemma:Two_Phases}.
		By \cref{cor:scaling-up-varphi}, the difference between the two dashed vertical lines can be scaled up to $\Omega(1)$.
		The light blue area indicates a $1/\poly(N)$-sized interval of uncertainty; yet in either case, by \cref{Theorem:Crit_GS_Difference}, there exists precisely \emph{one} critical point $\varphi^*$ therein.
		The red region indicates an interval (of up to size $\Omega(1)$, by \cref{cor:scaling-up-varphi}) which is either entirely in phase $A$ or $B$; determining which of the two cases holds is \QMAEXP-hard.}
	\label{Fig:1Param_Phase_Diagram}
\end{figure}

We first remark that the parameter range for $\varphi$ where the critical point can possibly be found is shrinking polynomially, due to the shrinking promise gap of the Gottesman-Irani Hamiltonian that we encode, and the associated comparison of its ground state energy $\lambda \lessgtr \varphi$.
There is now two approaches to scaling up the $\varphi$ parameter, so that we can get an $\BigO(1)$-area within the phase diagram where we cannot locate the critical point.
This is summarised in the following remark.
\begin{remark}\label{rem:scaling-up-varphi}
	Let $0 \le x < t$. There exists a modification to the phase comparator QTM in \cref{sec:phase-comparator-qtm} that allows one to perform the rescaled phase comparison $a \lessgtr 2^{-x} b$ for the two unitaries $U_a$ and $U_b$, where we assumed $1/10 \le b \le 1$, with an error (in the amplitudes of the resulting QPE output) upper-bounded by $2^{-2t}$, and with overhead $\poly(2^t)$.
\end{remark}
\begin{proof}
	By \cite{Sheridan2008}, we know that for an unknown ``black-box'' unitary $U$, we can implement any power $U^y$ for $y>0$ of it to precision $\epsilon$ (in trace norm) in time $\BigO(\lfloor y \rfloor + \log(1/\epsilon)/\epsilon)$ (i.e.\ with that many calls to $U$), as long as the phase $\varphi$ we want to estimate in $U$ is not too close to $0$, and no other phase lies in $(0, \varphi)$ (a gappedness constraint).
	
	For us, we want to implement $U_b^y$ for $y=2^{-x} \times 2^m$ for $m=0,\ldots,t$.
	By assumption, $b$ satisfies the gappedness condition (as $0$ and $\varphi$ are the only eigenphases of $U$, and $\epsilon \le \varphi$).
	Since $U_b$ acts on a single qubit only and $0\le x < t$, we know that a precision of $\epsilon = 2^{-4t}$  suffices to implement $U_b^y$ such that each controlled rotation gate is off (in operator norm) by at most $\epsilon \times 2^m \le 2^{-3t}\ \forall m$.
	As we have $t$ controlled rotation gates, the overall deviation is at most $t \times 2^{-3t} \le 2^{-2t}$.
	All amplitudes within the rescaled phase comparator thus at most deviate by $2^{-2t}$, and the overhead is $\poly 2^t$, as claimed.
\end{proof}

\begin{lemma}[Existence of Two Phases with $\BigO(1)$ YES/NO Threshold]\label{cor:scaling-up-varphi}
	\checked{James}
	Let $p,q$ be the polynomials defined in \cref{Theorem:Gottesman-Irani} such that $1/p(N)-1/q(N) = \Omega(N^{-C})$.
	There exists a variant of the Hamiltonian $H^\Lambda(N,\varphi)$ such that the two cases for $\varphi$ in \cref{Lemma:Two_Phases} read
	\begin{enumerate}
		\item if $\varphi^* \le A(N)=N^C(1/q(N)-\BigO(N^{-6C}))$, and
		\item if $\varphi^* \ge B(N) = N^C(1/p(N)-N^{4C} + \BigO(N^{-6C}) )$.
	\end{enumerate}
	The two bounds satisfy $B(N) - A(N) = \Omega(1)$.
\end{lemma}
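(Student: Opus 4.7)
The plan is to invoke \cref{rem:scaling-up-varphi} to replace the phase comparator QTM of \cref{sec:phase-comparator-qtm} with a rescaled variant that compares $\lmin(G_\zp)\lessgtr 2^{-x}\varphi$ for $x = \lceil C\log_2 N\rceil$, and then propagate this single replacement through \cref{Theorem:QTM_in_local_Hamiltonian,Lemma:Single_Square_Energy,Lemma:Finite_Size_Energy,Lemma:Two_Phases,Theorem:Crit_GS_Difference} essentially verbatim. The effect is that all statements involving $\lmin(G_N)$ get multiplied by $N^C$: the $\Omega(N^{-C})$ Gottesman-Irani promise gap becomes an $\Omega(1)$ gap in $\varphi$-space, while the $N^{-4C}$-wide slab of uncertainty (within which the critical point must lie) only widens to $N^{-3C}$ after rescaling, giving the claimed $\Omega(1)$ separation between $A(N)$ and $B(N)$.

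First I would renormalise $G_N$ by a fixed constant so that the Gottesman-Irani promise reads $\lmin(G_N)\in[N^{-C}/10,N^{-C}]$; this preserves \QMAEXP-hardness and guarantees the precondition $1/10 \le b \le 1$ of \cref{rem:scaling-up-varphi} is met when one takes $b=\varphi\in[1/10,1]$ as the ``rescaled'' side of the comparator. The modified QTM then performs $\lmin(G_\zp)\lessgtr \varphi/N^C$, at the cost of an additional $\poly(2^t)$ runtime factor and an amplitude error $2^{-2t}$, both promised by \cref{rem:scaling-up-varphi}.

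Next I would rerun the downstream construction. The extra amplitude error $2^{-2t}$ is of the same order as (and therefore absorbable into) the $\BigO(2^{-3t/2})$ slack that already appears throughout the acceptance and monotonicity bounds of \cref{cor:good-QPE-probs,th:phase-comparator-qtm,lem:eta-solovay}; consequently the statements of \cref{Theorem:QTM_in_local_Hamiltonian,Lemma:Single_Square_Energy,Lemma:Finite_Size_Energy,Lemma:Two_Phases} carry through with $\lmin(G_N)$ replaced by $N^C\lmin(G_N)$ throughout. In particular, the unique critical point of the resulting Hamiltonian (uniqueness via \cref{Theorem:Crit_GS_Difference}) lies in
\[
\varphi^* \in \big[N^C\lmin(G_N) - N^{-3C} + \BigO(N^{-5C}),\ N^C\lmin(G_N) - \BigO(N^{-5C})\big].
\]
Plugging in the Gottesman-Irani bounds $\lmin(G_N)\le 1/q(N)$ (\YES) and $\lmin(G_N)\ge 1/p(N)$ (\NO) yields exactly the claimed $A(N)$ and $B(N)$, and
\[
B(N)-A(N) = N^C\bigl(1/p(N)-1/q(N)\bigr) - N^{-3C} + \BigO(N^{-5C}) = \Omega(1),
\]
using the $\Omega(N^{-C})$ promise gap.

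The step I expect to be the main obstacle is the verification that the rescaled QPE of \cref{rem:scaling-up-varphi} still exhibits both the three-way dichotomy of \cref{th:phase-comparator-qtm} and the strict monotonicity of \cref{lem:eta-solovay} after the perturbation. The key check is that the $2^{-2t}$ perturbation in amplitudes is dominated by the $|\partial\bbar\eta/\partial\chi|\ge 1$ gradient established in \cref{lem:eta-monotonous}---this is the same style of argument as in \cref{lem:eta-solovay}, where one bounds $|\partial\tilde\eta/\partial\varphi - \partial\eta/\partial\varphi|$ by the product-rule expansion of the phase-gradient circuit times the approximation error, and \cref{rem:scaling-up-varphi} is designed precisely to supply the $\epsilon \le 2^{-2t}$ precision that suffices.
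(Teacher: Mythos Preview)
Your proposal is correct and follows essentially the same route as the paper: the paper's proof consists of a single sentence invoking \cref{rem:scaling-up-varphi} to scale $\varphi$ down by a factor proportional to $G_N$'s promise gap, and your write-up simply makes explicit the downstream checks (absorbing the $2^{-2t}$ perturbation into the existing $\BigO(2^{-3t/2})$ slack, preserving monotonicity as in \cref{lem:eta-solovay}) that the paper leaves implicit in ``follows immediately''.
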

\begin{proof}
	Follows immediately from \cref{rem:scaling-up-varphi}, by scaling $\varphi$ down to be within $\Theta(1)$ of $G_N$'s promise gap (which is a factor $1/\poly N \ge 2^{-t}$ for a polynomial we can compute efficiently, proportional to $p(N)$).
\end{proof}

\begin{corollary} \label{Lemma:1-CRT-PRM_QMA-hardness}
	\checked{James}
	1-CRT-PRM is \QMAEXP-hard for an $\Omega(1)$ gap. 
\end{corollary}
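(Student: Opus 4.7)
The plan is to assemble the reduction directly from the ingredients built up in \cref{sec:1-hard}. Given a \QMAEXP-hard problem instance encoded as an integer $N$ (using the \QMA verifier encoding of \cite{Bausch2016}), we invoke \cref{Theorem:Gottesman-Irani} to obtain a Gottesman-Irani Hamiltonian $G_N$ whose ground state energy satisfies $\lmin(G_N)\le 1/q(N)$ if the instance is \YES, and $\lmin(G_N)\ge 1/p(N)$ if the instance is \NO, with $1/p(N)-1/q(N)=\Omega(N^{-C})$. We then form the family of translationally-invariant nearest-neighbour Hamiltonians $H^\Lambda(N,\varphi)$ from \cref{Lemma:Two_Phases}, with the $\varphi$-rescaling of \cref{cor:scaling-up-varphi} applied so that the threshold interval has $\Omega(1)$ width.

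Next, I would read off the reduction from the structure of $H^\Lambda(N,\varphi)$: by \cref{Theorem:Crit_GS_Difference}, this Hamiltonian has exactly one critical point $\varphi^*$, located inside the polynomial-width window around $\lmin(G_N)$. After rescaling via \cref{cor:scaling-up-varphi}, this window becomes the interval $[A(N),B(N)]$ with $B(N)-A(N)=\Omega(1)$. By construction, a \YES instance of the \QMAEXP problem satisfies $\lmin(G_N)\le 1/q(N)$, which pushes the critical point to $\varphi^*\le A(N)$, whereas a \NO instance forces $\varphi^*\ge B(N)$. Setting $\alpha\coloneqq A(N)$ and $\beta\coloneqq B(N)$ gives a valid 1-CRT-PRM instance with constant promise gap $\beta-\alpha=\Omega(1)$, and the \YES/\NO outputs of the two problems align.

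To verify that this is actually an instance of 1-CRT-PRM as defined in \cref{def:1-crt-prm}, I would check each promise in turn: the matrix entries of the local terms $h^{(1)}$ and $h^{(2)}$ of \cref{Theorem:QTM_in_local_Hamiltonian} are specified to $\BigO(|N|)$ bits (the only nontrivial dependence on $N,\varphi$ is via the two phases $\ee^{\ii\pi\varphi}$ and $\ee^{\ii\pi 0.\enc(N)}$, which the problem definition allows); the Hamiltonian is translationally invariant and nearest-neighbour on a 2D lattice; by \cref{Lemma:Two_Phases} it has exactly two phases (gapped product versus gapless with a ground state having algebraically decaying correlations), distinguished by the order parameter $O_{A/B}$; and by \cref{Theorem:Crit_GS_Difference} there is precisely one critical parameter in the relevant interval. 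Finally, the local-global gap and phase conditions of \cref{Def:Local-Global_Gap,Def:Local-Global_Phase} are guaranteed by the finite-size analysis of \cref{sec:verify-promise-1} (which I assume, as it is referenced earlier). Since the input size is $\BigO(|N|)$ and the reduction is polynomial-time computable, this yields a polynomial-time many-one reduction from a \QMAEXP-hard problem to 1-CRT-PRM with constant promise gap.

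The main obstacle is not really a single hard step but rather bookkeeping: ensuring the rescaling of \cref{cor:scaling-up-varphi} simultaneously preserves the uniqueness of the critical point, keeps the promise gap $\Omega(1)$, maintains the local-global property uniformly in $\varphi$, and does not require more than $\BigO(|N|)$ bits of precision in the local terms. All of these are already addressed by the preceding lemmas, so the corollary follows by simply stringing them together and declaring the reduction; no new analytical work is needed beyond confirming that the input format matches \cref{def:1-crt-prm}.
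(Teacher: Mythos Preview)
Your proposal is correct and takes essentially the same approach as the paper, which simply states that the corollary is ``Immediate from \cref{cor:scaling-up-varphi}.'' You have merely unpacked the implicit bookkeeping---verifying that the constructed Hamiltonian meets each promise of \cref{def:1-crt-prm} and that the \YES/\NO cases align---that the paper leaves to the reader after having established all the ingredients in the preceding lemmas.
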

\begin{proof}
	Immediate from \cref{cor:scaling-up-varphi}.
\end{proof}

The fact that we can rescale the range of $\varphi$ to lie within an $\Omega(1)$ region is, in a sense, unsurprising: the same could be said to hold for the local Hamiltonian problem, where one can scale the overall Hamiltonian by a factor to have a $\Omega(1)$ promise gap as well.
Note, however, that this is a meaningless transformation: the \emph{precision} to which one wants to obtain the ground state energy is relative to the norm of the Hamiltonian (cf.~``relative promise gap'' or ``relative UNSAT penalty'', \cite{Bausch_Crosson2016}).
There are thus two scale choices for the local Hamiltonian problem: i.~the norm of the local terms, or ii.~the norm of the overall (finite-sized) Hamiltonian. Naturally, in the first case, one could obtain a stronger local interaction without increasing the individual coupling's norm by increasing the interaction degree (see e.g.~\cite{Cao2015}).
The safer definition is thus the second one---or by limiting the interaction degree of the Hamiltonian to some constant.

In our case, there is no natural ``finite size'' Hamiltonian relative to which one can define a meaningful precision; the arguably right scale with respect to which one thus has to define $\varphi$'s order of magnitude is either the local coupling strength (which is constant in our case), or relate it to the parameter $N$ itself.
In either case, and after the scaling has been applied, it makes sense to speak of $\varphi$ to be hard to approximate to $\Omega(1)$ precision, even if that means that $\varphi$ is now indeterminate in a range $[0, \poly N]$, as stated in \cref{th:main-1-intro}. It is also clear that if we know the polynomial $p(N)$ in \cref{cor:scaling-up-varphi}, and we know that it is tight for NO instances of the embedded Hamiltonian $G_N$, then it would suffice to scan $\varphi$ within a constant region.\footnote{We remark that this does not work for 2-CRT-PRM in \cref{sec:2-hard}, as there the ground state energy is not determined by a single \QMAEXP query.}

\subsection{Verifying the Local-Global Promise}\label{sec:verify-promise-1}

Finally, we need to check that the Hamiltonian used to prove hardness---defined above---satisfies the local-global promises as per \cref{Def:Local-Global_Phase} and \cref{Def:Local-Global_Gap}.

\begin{lemma} \label{Lemma:Promise_Satisfied_Phase}
	\checked{James}
	Consider an instance of the Hamiltonian $H^{\Lambda(L)}(N,\varphi)=\sum_{\langle i,j \rangle}h^N_{i,j}(\varphi) + \sum_{i\in \Lambda}h^{N}_i$, as defined in \cref{Lemma:Two_Phases}, with local terms describable in $|N|$ bits. 
	Then the Hamiltonian satisfies the global-local phase assumption \cref{Def:Local-Global_Phase} for the order parameter
	\[
	O_{A/B}=\frac{1}{|F|}\sum_{i\in F} \left( 0_{1,2} \oplus \ketbra{0}_3 \right)^{(i)}
	\]
	defined in \cref{Lemma:Two_Phases}, and for $L_0=N^{2+a+b}$.
	It also satisfies the global-local gap promise in \cref{Def:Local-Global_Gap} for the same $L_0$.
\end{lemma}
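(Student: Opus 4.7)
The plan is to verify both the phase and the gap condition by instantiating them at the threshold $L_0=N^{2+a+b}$, which is chosen large enough so that at least one complete checkerboard square of the critical side length $L_N=\poly N$ (from \cref{Theorem:QTM_in_local_Hamiltonian}) fits inside $\Lambda(L_0)$, with additional margin accounting for the Marker runtime exponent $b$ and the history tape length~$a$. Since $L_0=\poly N=\BigO(\exp\poly|N|)$ and is clearly $\poly(|N|)$-time computable, it complies with the size/computability requirements of \cref{Def:Local-Global_Phase,Def:Local-Global_Gap}; furthermore $L_0$ is independent of $\varphi$.

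For the phase condition I would take $p,q$ to be constants (e.g.\ $1/p=1/3$, $1/q=2/3$), yielding the required separation trivially. The key input is that by \cref{Lemma:Two_Phases} the order parameter $O_{A/B}$ takes its extremal value $\{0,1\}$ exactly on the ground state of each phase: in the trivial phase because the ground state is $\ket0^{\otimes\Lambda}$ lying in $\mathcal H_3$, and in the tiled phase because the checkerboard ground state lies in $\mathcal H_{1,2}$, where $O_{A/B}$ vanishes identically. To extend this to all low-energy states within $\omega=\Omega(1/\poly L)$ of the ground state, I would use the guard Hamiltonian $\Hguard$, which enforces a large energy penalty for any mixing between the $\mathcal H_{1,2}$ and $\mathcal H_3$ sectors. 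Choosing $\omega$ much smaller than the guard penalty confines low-energy states to a single sector, and the bounds of \cref{Lemma:Finite_Size_Energy} at $L=L_0$ certify which sector this is: in the trivial case the tiled sector has energy $\ge1$ while the trivial sector has energy~$0$; in the tiled case the $\lfloor L_0/L_N\rfloor^2\ge 1$ squares contribute a strictly negative energy that dominates the $+1$ offset. Thus all low-energy states exhibit the extremal $\langle O_{A/B}\rangle$, comfortably satisfying the $1/p(L_0)$ and $1/q(L_0)$ thresholds.

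For the gap condition, a similar bookkeeping argument works with $1/p(L)=1/2$ (constant) and $q(L)$ polynomial. In the trivial phase, the combined Hamiltonian inherits the gap of $\Htrivial\oplus\Hguard$, namely at least $1/2$ at every $L\ge L_N$, including $L=L_0$. In the tiled phase, the ground state lies in $\mathcal H_{1,2}$ and the low-lying spectrum is dominated by $\Hdense=$ the critical $1$D XY model, whose spectral gap scales as $\Theta(1/L_0)$ by \cite{Lieb_Schultz_Mattis_1961}; hence at $L=L_0$ the gap is already $\le 1/q(L_0)$ for a suitable choice of $q$. In both cases the finite-size gap at $L_0$ correctly predicts the thermodynamic-limit behaviour asserted by \cref{Lemma:Two_Phases}. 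The main obstacle is the robustness of the low-energy analysis in the tiled phase: one must ensure $\omega$ is large enough to be meaningful ($\Omega(1/\poly L)$) yet small enough to remain below the guard penalty and below the square-energy bonus, so that no spurious mixing with the trivial sector or with excitations above the gap corrupts $\langle O_{A/B}\rangle$. This is handled by matching the Marker falloff $f(L)$ established in the proof of \cref{Lemma:Single_Square_Energy} to the runtime exponent $b$, so that all polynomial scales ($\omega$, $1/p(L)$, $1/q(L)$, the guard gap, the square bonus) line up in the right order at $L=L_0=N^{2+a+b}$.
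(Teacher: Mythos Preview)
Your approach matches the paper's: determine which sector ($\mathcal H_3$ or $\mathcal H_{1,2}$) contains the ground state at $L=L_0$ so that $\langle O_{A/B}\rangle\in\{0,1\}$, and then read off the gap from $\Htrivial$ resp.\ $\Hdense$.

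The one step you gloss over is the quantitative justification for the specific value of $L_0$. It is not enough that \emph{one} square of side $L_N$ fits in $\Lambda(L_0)$: the per-square energy bonus is only $\lmin(H|_{S(L_N)})=-\Theta(L_N^{-b})$ (this is the $T^{-2}$ scaling of the combined history-state/Marker energy, cf.\ \cref{Theorem:Precise_Energies} and \cref{Lemma:Single_Square_Energy}), so to drive $1+\lfloor L_0/L_N\rfloor^2\,\lmin(H|_{S(L_N)})<0$ in \cref{Lemma:Finite_Size_Energy} one needs $\lfloor L_0/L_N\rfloor^2 = \Omega(L_N^{b})$, i.e.\ $L_0 = \Omega(L_N^{1+b/2})$. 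With $L_N=\Theta(N)$ this is precisely what produces the exponent in $L_0=N^{2+a+b}$. Your sentence ``at least one complete checkerboard square fits, with additional margin accounting for $b$'' followed by ``$\lfloor L_0/L_N\rfloor^2\ge 1$ squares contribute a strictly negative energy that dominates the $+1$ offset'' does not actually establish this domination; without it you cannot conclude that the tiled sector beats the trivial sector at $L=L_0$, and hence cannot verify either Local--Global condition there. On the other hand, your use of $\Hguard$ to extend the order-parameter bound from the ground state to all $\omega$-low-energy states is more careful than the paper's own proof, which only argues about the ground state explicitly.
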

\begin{proof}
	Consider an $L>L_N=2+4t+t+N$.
	Then, by \cref{Lemma:Finite_Size_Energy}, we see that
	\begin{align}
	\lmin(H'(\varphi)) \begin{cases}
	= 1 + \lfloor \frac{L}{L_{N}}\rfloor^2\lmin(H(\varphi)|_{S(L_N)})  &\quad \varphi \le \lmin(G_N) - N^{-4C}+\BigO(N^{-6C})  \\
	\geq 1  &\quad \varphi \ge \lmin(G_N) - \BigO(N^{-6C}).
	\end{cases}
	\nonumber
	\end{align}
	Thus, when $\lmin(H(\varphi))\geq0$ the ground state of $H'(\varphi)$ that of $\Htrivial$, i.e.\ $\ket{0}^{\Lambda(L)}$, and $\langle O_{A/B}\rangle =1$.
	On the other hand, when $\lmin(H(\varphi)|_{S(L_N)})<0$, then eventually the ground state is a highly complex quantum plus classical state with $\langle O_{A/B}\rangle =0$.
	
	Thus, when $\lmin(H(\varphi)|_{S(L_N)})<0$, for the highly quantum ground state to appear the lattice size $L$ must meet the following condition:
	\begin{align}\label{Eq:Neg_Conditions}
	\bigg\lfloor \frac{L}{L_{N}}\bigg\rfloor^2\lmin(H(\varphi)|_{S(L_N)}) &< 1.
	\end{align}
	Otherwise the ground state is the zero energy state $\ket{0}^{\Lambda(L)}$.
	
	From \cref{Theorem:Marker_Energy}, when $\lmin(H(\varphi)|_{S(L_N)})<0$ and $\varphi \le \lmin(G_N) - N^{-4C}+\BigO(N^{-6C})$, then the ground state energy is
	\[
	\bigg\lfloor \frac{L}{L_{N}} \bigg\rfloor^2\left( \lmin(\HTM(L_N)) + \lmin(\HM(L_N)|_S \right) \leq \bigg\lfloor \frac{L}{L_{N}} \bigg\rfloor^2\left(  - \frac{-c_1}{L_N^{b}} \right),
	\]
	where we have used that $\lmin(\HTM(L_N)) + \lmin(\HM(L_N)|_S = -\Omega(T^{-2})=-c_1 L^{-b}$ for some constant $c_1$ (this can be seen by combining \cref{Theorem:Precise_Energies} and \cref{Lemma:Single_Square_Energy}).
	Thus by for $L>L_0$, where
	\[
	L_0\geq c_1^{1/2}L_N^{1+b/2},  
	\]
	\cref{Eq:Neg_Conditions} will be satisfied.
	Since $L_N=\BigO(N)$, we choose $L_0= \BigO(N^{2+b})$.
	
	For all $L\geq L_0$ the expectation value of $O_{A/B}$ is then constant, regardless of whether $\varphi \ge \lmin(G_N) - \BigO(N^{-6C})$ or $\varphi \le \lmin(G_N) - N^{-4C}+\BigO(N^{-6C})$.
	Thus the Hamiltonian satisfies the global-local phase promise in \cref{Def:Local-Global_Phase}.
	
	\paragraph{Global-Local Gap Promise:}
	The proof for the global-local gap promise is almost the same.
	For the $L_0$ above, we see that if $\varphi \ge \lmin(G_N)  - \BigO(N^{-6C})$, then the system has a very negative energy and a spectral gap $\Delta(L)=O(1/L^2)$.
	If $\varphi \le \lmin(G_N) - N^{-4C}+  \BigO(N^{-6C})$, then the ground state is $\ket{0}^{\Lambda(L)}$ with zero energy and has the same gap as $\Htrivial$: $\Delta \geq 1$.
\end{proof}

\section{P\textsuperscript{QMA\textsubscript{EXP}} Hardness of 2-CRT-PRM}\label{sec:2-hard}

In this section we modify the construction to prove hardness for a 2-parameter Hamiltonian.
To prove this we will make a reduction from $\forall$-TI-APX-SIM, as defined \cref{Def:forall_APX-SIM}, which was proved to be \PQMAEXP-complete by \cite{Watson_Bausch_Gharibian_2020}. 
This a variant of the APX-SIM problem, which itself was introduced by Ambainis and shown to be $\PQMALOG$-complete for for $\log(n)$-local Hamiltonians \cite{Ambainis2013}, a result that was then extended to $\BigO(1)$-locality by \cite{Gharibian_Piddock_Yirka2019}.

\paragraph{Proof Outline.}
The proof method here will be similar to the 1-parameter case, but instead of a reduction to the Local Hamiltonian problem, we perform a reduction to $\forall$-TI-APX-SIM, which is the question of approximating the expectation value of all low-energy states of a Hamiltonian with respect to a local observable.
This means we construct---just as described in \cref{sec:1-hard}---a Hamiltonian $H_N(\varphi)$ which, in its ground state, encodes the following computation.
\begin{enumerate}
    \item Perform QPE to extract $N$ from local terms.
    \item Perform a phase comparison QPE on the unitary encoding $\varphi$ and $\exp(\ii t K_N)$, where $K_N$ is a translationally-invariant local spin Hamiltonian with a \PQMAEXP-complete $\forall$-TI-APX-SIM problem (on a spin chain of length $N$).
    The joint witness stems from an unconstrained input state.
    If this input state was an eigenstate of $K_N$ with eigenvalue $\lambda$, the phase comparator QPE extracts the difference $\lambda-\varphi$ to bit precision $\sim|N|$.
    \item If $\varphi<\lambda$, an output flag is set to $\ket0$; otherwise it is set to $\ket1$.
    \item Another flag qubit captures the output bit of the \PQMAEXP computation performed within the history state of $H_N(\varphi)$.
\end{enumerate}
\noindent
An energy penalty is then given to the joint energy eigenvalue comparison \emph{and} output bit of the \PQMAEXP computation, in the sense that
\begin{enumerate}
    \item All eigenstates of $K_N$ that are not considered ``low energy'' are penalised.
    \item Those eigenstates of $K_N$ that fall below the ``low energy'' cutoff are not inflicted with a penalty; but they are subject to a penalty from the \emph{observable operator} $(B-\theta\1)$ for some scalar offset $\theta>0$.
    Here $B$ is the operator for which determining the expectation of on low energy states of $K_N$ is \PQMAEXP-complete.
\end{enumerate}
The result is that the low-energy eigenspace of $H_N(\varphi)$ plus penalties is greater or smaller than some polynomial falloff we can calculate to high precision, and which will depend on the scalar expectation value offset $\theta$ that serves as the second parameter.

As in the one-parameter case, we can combine this energy penalty with a bonus of a Marker Hamiltonian to obtain a joint 1D spin Hamiltonian with the property that it has a
\emph{negative} ground state energy if we have a \YES-instance---i.e.\ expectation values of low-energy states lie below some threshold---and the scalar offset is below a cutoff;
and a \emph{positive} ground state energy for a \NO instance, \emph{or} for a too-small scalar offset $\theta$.
With standard techniques this dichotomy is then amplified to a gapless resp.~gapped phase in the thermodynamic limit.

In order to understand why this two-parameter family of Hamiltonians has a \PQMAEXP-hard-to-compute phase diagram, note that to figure out the relevant $\varphi$ region within which a phase transition can occur takes multiple queries to a \QMA oracle, as we need to identify $\lmin(K_N)$ to sufficient precision; and then we don't yet know whether the output is a \YES or \NO case, so there is \emph{two} possible $\theta$-regions around which to explore the phase diagram.

\subsection{Additional Preliminaries} \label{Sec:Prelims_2-CRT-PRM}
In \cite{Watson_Bausch_Gharibian_2020}, a Hamiltonian $K_N\in \mathcal{B}(\C^d)^{\ox N}$ was used to prove hardness of $\forall$-TI-APX-SIM.
Importantly, this Hamiltonian has the following properties:
\begin{lemma}[From \cite{Watson_Bausch_Gharibian_2020}]\label{Lemma:TI-APX-SIM_Completeness}
There exists a fixed one-local observable $A$ and interaction terms $k_{i,i+1}\in  \mathcal{B}(\C^d\otimes \C^d)$ acting between pairs of nearest neighbour qudits, which define a Hamiltonian on a 1D chain of length $N$, $K_N=\sum_{i=1}^{N-1}k_{i,i+1}$ such that for all states $\ket\psi$ that satisfy $\bra{\psi}K_N\ket{\psi}\leq \lambda_0(K_N) + \delta$ for $\delta=\Omega(1/\poly(N))$ either of the following holds:
\begin{itemize}
    \item[\YES:] $1-1/\poly(N)\leq \bra{\psi}A\ket{\psi}\leq 1$, or
    \item[\NO:] $0\leq \bra{\psi}A\ket{\psi}\leq  1/\poly(N)$.
\end{itemize}
Determining which case is true is $\PQMAEXP$-complete.
\end{lemma}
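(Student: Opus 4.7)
The plan is to invoke the construction from \cite{Watson_Bausch_Gharibian_2020} essentially off the shelf and then verify that the three quantitative features claimed here (one-locality of $A$, the YES/NO separation $\geq 1-2/\poly(N)$, and the $\Omega(1/\poly(N))$ low-energy slab $\delta$) are all simultaneously achievable in that construction, possibly after a standard amplification step. The heart of \cite{Watson_Bausch_Gharibian_2020} is a circuit-to-Hamiltonian encoding of a $\PQMAEXP$ computation into a translationally invariant, nearest-neighbour 1D Hamiltonian $K_N$ on a chain of constant local dimension $d$, such that the low-energy subspace of $K_N$ encodes (in history-state form) the output of the underlying $\QMAEXP$ oracle query. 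A designated output qubit of the history carries this classical answer, so projecting onto $\ket{1}\!\bra{1}$ on that qubit gives a $1$-local observable $A$ whose expectation value on any valid history state is exactly the acceptance probability of the oracle. This is the $A$ we import, and to cast it in our form $A$ we simply tensor the qubit with the ambient lattice Hilbert space.

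First I would formally extract the relevant statement from \cite{Watson_Bausch_Gharibian_2020} that $\forall$-TI-APX-SIM$(K_N, A, a, b, \delta)$ is $\PQMAEXP$-complete with (i) $b-a \geq 1/\poly(N)$, (ii) $\delta \geq 1/\poly(N)$, (iii) $A$ one-local. Then I would use standard gap amplification inside the embedded $\QMA$ verifier (Marriott--Watrous amplification inside the oracle calls, together with the usual history-state soundness analysis) to push the acceptance probabilities exponentially close to $0$ and $1$ respectively, so that on any state $\ket\psi$ with $\bra\psi K_N\ket\psi \leq \lmin(K_N)+\delta$ the expectation $\bra\psi A\ket\psi$ falls in $[0,1/\poly(N)]$ in the \NO~case and in $[1-1/\poly(N),1]$ in the \YES~case. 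Amplification here only modifies $A$ through the number of output qubits; by measuring a single output qubit (the final accept flag) rather than an OR over many copies we retain exact one-locality.

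The only slightly delicate step is checking that amplification does not destroy property (ii): increasing the number of verifier repetitions blows up the history length and hence shrinks the promise gap $\delta$ on $K_N$ polynomially, but remains $\Omega(1/\poly(N))$ because repetition is at most $\poly(N)$ rounds. The containment direction is immediate from \cite{Ambainis2013, Gharibian_Piddock_Yirka2019} together with the fact that translational invariance inflates the witness/verifier time from $\poly(n)$ to $\poly(N)=\poly(2^{|N|})$, shifting the complexity class up to $\PQMAEXP$.

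The main obstacle I anticipate is purely bookkeeping: verifying that after amplification the observable can still be chosen $1$-local with two eigenvalues $x,y\in[0,1]$ such that $b-a \geq |x-y|-1/\poly(N)$, as stated. This requires that the accept flag in the history-state computation is (a) located on a single lattice site in the translationally invariant construction, and (b) carries amplitudes concentrated on $\{0,1\}$ to leading order in the expansion of the ground space around $\lmin(K_N)$. Both follow from the explicit form of the Gottesman--Irani--style encoding used in \cite{Watson_Bausch_Gharibian_2020}, but I would write out the accepting-subspace projector carefully to confirm the $|x-y|-1/\poly(N)$ tightness, since the $1/\poly(N)$ slack absorbs precisely the non-ideality of the history-state projection onto computational-basis flag values.
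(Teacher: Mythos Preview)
Your proposal has a genuine gap in the amplification step. You write that projecting onto $\ket{1}\!\bra{1}$ on the output qubit ``gives a $1$-local observable $A$ whose expectation value on any valid history state is exactly the acceptance probability of the oracle,'' and then invoke Marriott--Watrous amplification to push that acceptance probability toward $0$ or $1$. But in a history-state Hamiltonian the ground state is (to leading order) a \emph{uniform} superposition over all $T$ time steps, and the output flag is only set at the final step. Hence a $1$-local projector on the output site has expectation value $\approx p_{\mathrm{acc}}/T$, not $p_{\mathrm{acc}}$. Amplifying $p_{\mathrm{acc}}$ exponentially close to $1$ still leaves you with $\bra{\psi}A\ket{\psi}\approx 1/T_{K_N}$ in the \YES\ case, which is exactly the form the paper notes is what \cite{Watson_Bausch_Gharibian_2020} actually proves---and which is \emph{not} the $1-1/\poly(N)$ lower bound claimed in this lemma.

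The fix the paper uses is different from Marriott--Watrous: it applies the \emph{idling} technique of \cite{Caha_Landau_Nagaj_2018}, padding the computation with $P_1(N)\,T_{K_N}$ trivial steps \emph{after} the output is written, so that the fraction of history-state weight on time steps where the flag is already set tends to $1$. A marker placed adjacent to the output qubit signals that the pre-idling phase is complete, which is what keeps $A$ genuinely $1$-local. Your bookkeeping concern about one-locality is thus addressed by a different mechanism than you anticipated, and the quantitative separation comes from reweighting the clock register, not from amplifying the verifier's completeness/soundness.
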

We note that \cref{Lemma:TI-APX-SIM_Completeness} is not quite what was proven in \cite{Watson_Bausch_Gharibian_2020}, where e.g.\ the overlap with the observable in the first case was $1/T_{K_N}-\BigO(2^{-\poly(N)})\leq \bra{\psi}A\ket{\psi}\leq 1/T_{K_N}$, for $T_{K_N}$ the length of the encoded computation.
However, we can adjust the construction using an idling technique from \cite{Caha_Landau_Nagaj_2018} which increases the weight on the output bit of the computation (such that the encoded computation has its runtime increased to $P_1(N)T_{K_N}$, for some polynomial $P_1(N)$ we are free to choose).
Furthermore, we ask that there be some marker flag, placed next at or next to the output qubit, which indicates when the first part of the computation---before the idling---has finished (this allows us to keep $A$ as a 1-local operator).
These techniques are by now standard, and we will not go into details.

We denote the set of eigenstates of $K_N$ for which the energy expectation value is below the cutoff as
\begin{equation}\label{eq:S_delta}
    S_\delta \coloneqq \big\{ \psi: H\ket{\psi} = \lambda\ket\psi \ \text{where\ } \lambda \leq \lmin(K_N) + \delta \big\},
\end{equation}
which means $\bra{\psi}H\ket{\psi}\leq \lmin(K_N) + \delta$ for all $\ket{\psi} \in \Span(S_\delta)$.
We also define the shifted observable
\begin{align}\label{Eq:B_Definition}
    B\coloneqq  A+  \1, 
\end{align}
which if $A$ is a projector has eigenvalues in the set $\{1,2\}$, which we label as $\lambda_0(B)=1$ and $\lambda_1(B)=2$.
This offset merely simplifies some of the maths in due course.
Together with \cref{Lemma:APX-SIM_Complexity}, this choice of $B$ in \cref{Eq:B_Definition} immediately yields the following corollary.
\begin{corollary}\label{cor:all-states-expvalue}
We use the notation of \cref{Lemma:APX-SIM_Complexity}, for an observable $A$ that is a one-local projector, and $B$ as defined in \cref{Eq:B_Definition}.
Any state $\ket\psi \in S_\delta$ for the Hamiltonian $K_N$ then has expectation value either $\le \lambda_0(B) + \BigO(1/P_1(N))$, or $\ge \lambda_1(B) - \BigO(1/P_1(N))$ for a polynomial $P_1(N)$ we are free to choose.
\end{corollary}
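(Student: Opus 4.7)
The plan is to unfold the definitions and directly invoke the dichotomy established in Lemma~\ref{Lemma:TI-APX-SIM_Completeness}. Since $A$ is a one-local projector, its spectrum lies in $\{0,1\}$, so that $B=A+\1$ has spectrum $\{1,2\}$ with $\lambda_0(B)=1$ and $\lambda_1(B)=2$, and for every state $\ket\psi$ we have the identity $\bra\psi B\ket\psi = \bra\psi A\ket\psi + 1$.

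Next I would take an arbitrary $\ket\psi\in S_\delta$, as defined in \cref{eq:S_delta}. By definition $\bra\psi K_N\ket\psi \le \lmin(K_N)+\delta$, so Lemma~\ref{Lemma:TI-APX-SIM_Completeness} applies to $\ket\psi$. In the \YES case the lemma yields $1-1/\poly(N)\le \bra\psi A\ket\psi\le 1$, which by the identity above translates to $\bra\psi B\ket\psi \ge 2-1/\poly(N) = \lambda_1(B) - \BigO(1/\poly(N))$. In the \NO case it yields $0\le \bra\psi A\ket\psi\le 1/\poly(N)$, hence $\bra\psi B\ket\psi \le 1+1/\poly(N) = \lambda_0(B) + \BigO(1/\poly(N))$. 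Since the two promises of Lemma~\ref{Lemma:TI-APX-SIM_Completeness} are exhaustive on $S_\delta$, this gives the claimed dichotomy with some fixed polynomial.

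To upgrade the fixed polynomial to an arbitrarily chosen $P_1(N)$, the final step is to invoke the standard idling/padding amplification sketched in the remark following Lemma~\ref{Lemma:TI-APX-SIM_Completeness}: concatenating the encoded circuit with $P_1(N)$ idling rounds \`a la \cite{Caha_Landau_Nagaj_2018} increases the amplitude of the output qubit (marked by a dedicated flag indicating the end of the pre-idling computation) to $1-\BigO(1/P_1(N))$, so that the slack in the inequalities above shrinks to $\BigO(1/P_1(N))$. Because the observable $A$ remains one-local (it reads off the output qubit, with the flag ensuring correctness of the read-out), the hardness of $\forall$-TI-APX-SIM is preserved.

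I do not expect any serious obstacle here: the statement is essentially a bookkeeping corollary of Lemma~\ref{Lemma:TI-APX-SIM_Completeness} together with the trivial spectral shift $B=A+\1$. The only point that requires some care is verifying that the idling-based amplification of $P_1(N)$ does not disturb the one-locality of $A$ nor the promise structure of the encoded $\forall$-TI-APX-SIM instance; this is handled by the marker-flag trick already used in \cite{Watson_Bausch_Gharibian_2020}, which I would cite rather than re-derive.
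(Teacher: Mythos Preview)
Your proposal is correct and matches the paper's approach: the paper simply states that the corollary ``immediately'' follows from \cref{Lemma:APX-SIM_Complexity} together with the definition of $B$ in \cref{Eq:B_Definition}, without giving any further proof. Your unfolding of the spectral shift $B=A+\1$ and the invocation of the idling amplification for the free choice of $P_1(N)$ is exactly the argument the paper has in mind.
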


\subsection{A Modified Phase Comparator QTM}\label{sec:qtm-2}
The following lemma follows the same setup as \cref{Lemma:QTM_Output}.
\begin{lemma}[Multi-QPE QTM] \label{Lemma:QTM_Output_2-CRT-PRM}
\checked{James}
Let $K_z$ be the translationally invariant Hamiltonian on chain of length $z$ described in \cref{Lemma:TI-APX-SIM_Completeness}.
Take the same setup as in \cref{Lemma:QTM_Output}, but where the Hamiltonian $G_z$ is replaced by $K_z$.
The output of this QTM $\mathcal{M}(N,\varphi, t,\ket\nu)$ will then be
{\small
\begin{align}
    \ket\chi =
     &\ \sum_{ z\in V_t} \sum_{x\le 0}\sum_{g} \alpha_{x}(\zp,g) \gamma_z \kappa_{g}(\zp) \ket{11}_f\ket{z}\ket{x}\sum_j \sigma_j(g,\zp)\ket j \ket{g_{\zp,j}}\ket{\xi_{\zp,g}} +  \label{eq:chi-with-flag-2}\\
    &\ \sum_{z\in V_t} \sum_{x > 0}\sum_{g} \alpha_{x}(\zp,g) \gamma_z \kappa_{g}(\zp) \ket{10}_f\ket{z}\ket{x}\sum_j \sigma_j(g,\zp)\ket j \ket{g_{\zp,j}}\ket{\xi_{\zp,g}} + \nonumber  \\
    &\ \sum_{ z\not\in V_t} \sum_{x \le 0} \sum_{g}  \alpha_{x}(\zp,g)  \gamma_z \kappa_{g}(\zp) \ket{01}_f\ket{z}\ket{x}\sum_j \sigma_j(g,\zp)\ket j \ket{g_{\zp,j}}\ket{\xi_{\zp,g}} + \nonumber \\
    &\ \sum_{ z\not\in V_t} \sum_{x > 0} \sum_{g}  \alpha_{x}(\zp,g)  \gamma_z \kappa_{g}(\zp) \ket{00}_f\ket{z}\ket{x}\sum_j \sigma_j(g,\zp)\ket j \ket{g_{\zp,j}}\ket{\xi_{\zp,g}}, \nonumber
\end{align}}
\noindent
where we have expanded $\ket{g_\zp} = \sum_j \sigma_j(g,\zp)\ket j \ket{g_{\zp,j}}$ such that the $\ket j$ denote the eigenvectors of $B$ defined in \cref{Eq:B_Definition}.
\end{lemma}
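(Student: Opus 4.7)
The plan is to treat this as a direct re-run of Lemma~\ref{Lemma:QTM_Output}, substituting $K_z$ for $G_z$ throughout. Nothing in the construction of that earlier QTM depended on specific features of $G_z$ beyond it being a const-local, nearest-neighbour, translationally invariant Hamiltonian on a 1D chain of constant local dimension whose time evolution can be Trotterised (cf.~\cref{Lemma:HamSim-QTM}). Since $K_z$ from \cref{Lemma:TI-APX-SIM_Completeness} meets these requirements, I would simply invoke the same dovetailed routines: encode $N$ from the local unitaries via QPE on $4t$ qubits, flag validity of the resulting string, and then execute the phase comparator circuit of \cref{fig:phase-comparator-circuit} with $U_{K_\zp}$ and $U_\varphi^\dagger$ acting on the unconstrained input register $\ket\nu$. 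The resulting four-branch flag structure is identical to that in \cref{eq:chi-with-flag}, with $\ket{g_\zp}$ now denoting eigenstates of $K_\zp$ rather than of $G_\zp$, and the amplitudes $\alpha_x(\zp,g), \gamma_z, \kappa_g(\zp)$ defined exactly as before.

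The only new ingredient, and the reason to restate the lemma here, is the explicit refinement $\ket{g_\zp} = \sum_j \sigma_j(g,\zp)\ket{j}\ket{g_{\zp,j}}$. I would introduce this purely as a bookkeeping change of basis, anticipating that the next step of the 2-CRT-PRM construction will attach an energy penalty tied to the expectation value of the observable $B$ from \cref{Eq:B_Definition}. Since $B$ is $1$-local and acts on a single designated qudit of the $\zp$-qudit register, I fix that qudit as the first tensor factor of a bipartition of the computational register, and expand each eigenstate $\ket{g_\zp}$ of $K_\zp$ in the product basis $\{\ket j\}\otimes\{\ket\cdot\}$ where $\ket j \in \{\ket{\lambda_0(B)}, \ket{\lambda_1(B)}\}$. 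The $\sigma_j(g,\zp)$ are then just the amplitudes of this expansion, and the $\ket{g_{\zp,j}}$ can be taken as normalised but not necessarily mutually orthogonal across different $(g,j)$, exactly as was done with the $\ket{\xi_{\zp,g}}$ in \cref{Lemma:QTM_Output}.

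A mild consistency check is required: the already-present bipartition of $\ket\nu$ into a $\zp$-qudit prefix (carrying the $\ket{g_\zp}$) and an $(N-\zp)$-qudit suffix (carrying $\ket{\xi_{\zp,g}}$) must be compatible with this further split inside the $\zp$-qudit block. This is automatic, since the new expansion acts only within the prefix. Because the expansion is a passive relabelling of the output state, no additional error analysis, gate approximation, or Hamiltonian simulation bound is needed beyond those already established in \cref{Lemma:hamSim_Error,Lemma:SK_HamSim_Approximation} applied to $K_z$; in particular the runtime remains $\BigO(2^{4t}) \cdot \poly(N,2^t)$, and all monotonicity/boundedness properties of $\tilde\eta$ from \cref{th:phase-comparator-qtm} are inherited verbatim. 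There is no real obstacle here: the work was already done in \cref{sec:phase-comparator-qtm,sec:phase-comparator-qtm-approx}, and this lemma is essentially a statement that the same QTM, run on $K_z$ and its output re-expressed in a convenient basis, produces the claimed four-branch state.
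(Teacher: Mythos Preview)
Your proposal is correct and matches the paper's approach exactly: the paper's proof is a single sentence noting that the output follows from \cref{Lemma:QTM_Output} and the form of the unconstrained input state. Your additional remarks on error analysis, runtime, and monotonicity go beyond what this particular lemma asserts (those are handled in the subsequent lemmas for $\mathcal M'$ and $\mathcal M''$), but the core argument---substitute $K_z$ for $G_z$ and re-expand $\ket{g_\zp}$ in the $B$-eigenbasis as a passive bookkeeping step---is precisely what the paper does.
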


\begin{proof}
Follows from the output state given in \cref{Lemma:QTM_Output} and the form of the unconstrained state taken as ``input''.
\end{proof}

We now need an equivalent expression to \cref{eq:eta} which captures the expected output penalty that we wish to inflict later on.
Here, we will modify the flag projector slightly; instead of using $\ketbra{11}_f$ that just singles out those eigenstates of $K_z$ that have low energy, we also add in the observable $B$ as defined in \cref{Eq:B_Definition}, which acts on the \emph{output} of the computation.\footnote{This penalty can be made 1-local using standard methods. We omit this here.}
As $K_z$ encodes a $\PQMAEXP$-hard computation, this output bit is a single qubit; and can be assumed to satisfy the bounds given in \cref{cor:all-states-expvalue}.

Taking the output state $\ket\chi$  of $\mathcal{M}(N,\varphi, t,\ket\nu)$ from \cref{Lemma:QTM_Output_2-CRT-PRM}, and letting $B$ be the local observable from \cref{Eq:B_Definition} \& \cref{cor:all-states-expvalue}, we set
{\small
\begin{align}
    \eta(N,\varphi, \theta, t,\ket{\nu}) \coloneqq & \Tr\left( \left[ \ketbra{11}_f \otimes \left( B - \theta\1 \right) \otimes \1 \right] \ketbra\chi \right) \nonumber \\
    =& \sum_{z\in V_t}|\gamma_z|^2\sum_{x\leq 0}\sum_{g} |\alpha_x(\zp,g)|^2|\kappa_{g}(\zp)|^2\left( \sum_j\lambda_j(B)|\sigma_{j}(\zp,g)|^2 - \theta   \right)  . \label{Eq:eta_Multiple_Variables}
\end{align}}
Here, as before, $\gamma_z$ represents the amplitudes of QPE on $U_N$, while $\alpha_x(\zp,g)$ represent the amplitudes of QPE over $\varphi$ and $K_z$ on eigenstate $\ket{g}$, and $\sigma_j(\zp,g)$ are the coefficients of the eigenstates of $B$.
$\kappa_g(\zp)$ are coefficients of basis expansions of $\ket\nu$ in the energy eigenbasis of $K_z$.

We further define
\begin{equation}\label{eq:eta_max_multi}
\eta_{\max}(N,\varphi, \theta, t) \coloneqq \max_{\ket{\nu}}\eta(N,\varphi, \theta, t,\ket{\nu}).
\end{equation}
This is the maximum acceptance probability that the computation can output for a given $N,\varphi, \theta,t$, maximised over all states $\ket\nu$.
\begin{remark}\label{rem:etamax-eigenstate}
\checked{James}
    For $t \ge |N|$,
    $\eta(N,\varphi, \theta, t,\ket\nu)$ assumes its maximum for an eigenstate $\ket\nu=\ket g$ of $K_N$.
\end{remark}
\begin{proof}
    For $t \ge |N|$, $\gamma_N=1$ and all other $\gamma_z=0$ for $z\neq N$.
    Hence
    \begin{align*}
        &\argmax_{\ket\nu}\eta(N,\varphi,\theta,t,\ket\nu) \\
        = &\argmax_{\ket\nu} \sum_g |\kappa_{g}(N)|^2 \left( \sum_{x\le 0}|\alpha_x(N,g)|^2\left( \sum_j\lambda_j(B)|\sigma_{j}(N,g)|^2 - \theta   \right)\right) \\
        = &\argmax_{\ket\nu} \sum_g |\kappa_{g}(N)|^2 \left( \sum_{x\le 0}|\alpha_x(N,g)|^2\right) \Gamma(g, \theta).
    \end{align*}
    where $\ket\nu = \sum_g \kappa_g(N) \ket g$ and $\Gamma(g, \theta) \coloneqq \left( \sum_j\lambda_j(B)|\sigma_{j}(N,g)|^2 - \theta   \right)$.
    Thus $\eta(N,\varphi,\theta,t,\ket\nu)$ is a convex combination of the $\left( \sum_{x\le 0}|\alpha_x(N,g)|^2\right)\Gamma(g, \theta)$, and its maximum is assumed at an extremal point. 
    The claim follows.
\end{proof}

As a first step, we prove the following lemma.
\begin{lemma}[Technical Lemma]\label{Lemma:eta-tech}
\checked{James}
Let $f(t)$ be any function such that $f(t)/2^t =\BigO(2^{-ct})$ for some constant $c>0$.
Further let  $z\in [N]$.
Let the phase estimation be done for the operator $U_{K_z}U_{\varphi}^\dagger$ on an eigenstate $\ket{g} = \ket{g_z}$ with eigenvalue $\lambda$.
Using the same notation as in \cref{Eq:eta_Multiple_Variables}, if $\varphi\leq \lambda - 1/f(t)$,
\begin{align*}
    \sum_{x \le 0}|\alpha_{x}(z,g)|^2 &= \BigO\left(\frac{f(t)}{2^t}\right). \\
\intertext{In contrast, if $\varphi\geq \lambda + 1/f(t)$}
    \sum_{x \le 0}|\alpha_{x}(z,g)|^2&=1-\BigO\left(\frac{f(t)}{2^t}\right).
\end{align*}
\end{lemma}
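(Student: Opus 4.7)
The plan is a direct application of the standard quantum phase estimation error bound from \cite[Eq.~5.27]{Nielsen_and_Chuang} to the effective phase $\xi = \lambda - \varphi$ associated to the unitary $U_{K_z}U_{\varphi}^\dagger$ acting on the eigenstate $\ket g$. Since the comparator QPE encodes values $x$ signed (ranging over $\{-2^{t-1}+1,\ldots,2^{t-1}\}$, cf.\ the range of $L$ in \cref{eq:alpha_L}), the sign of $\xi$ determines whether probability mass concentrates on positive or non-positive $x$. The two cases of the lemma are symmetric, so I will only spell out the first; the second follows from $\sum_x |\alpha_x|^2=1$ applied to the mirrored bound.

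Assume $\varphi \le \lambda - 1/f(t)$, so $\xi \ge 1/f(t)>0$. Let $b\in\{0,\ldots,2^{t-1}\}$ be the integer whose ratio $b/2^t$ is the best $t$-bit approximation to $\xi$ from below, in the sense of \cite[Sec.~5.2]{Nielsen_and_Chuang}. Because $\xi\geq 1/f(t)$ and by the assumption $f(t)/2^t=\BigO(2^{-ct})$, we have $b \ge 2^t/f(t)-1$, which is $\Omega(2^{ct})$ and in particular much larger than $1$ for $t$ large. For every $x\le 0$, the integer distance satisfies $|b-x|\ge b$, hence applying \cite[Eq.~5.27]{Nielsen_and_Chuang} with $e=b-1$ yields
\[
    \sum_{x\le 0}|\alpha_x(z,g)|^2 \;\le\; \frac{1}{2(b-2)} \;\le\; \frac{1}{2\bigl(2^t/f(t)-3\bigr)} \;=\; \BigO\!\left(\frac{f(t)}{2^t}\right),
\]
where the final asymptotic bound uses that $f(t)/2^t=\BigO(2^{-ct})\to 0$ so the $-3$ in the denominator is subdominant.

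For the second case $\varphi \ge \lambda + 1/f(t)$, the effective phase $\xi \le -1/f(t)$ is negative, and the same argument applied to the closest approximation from above (now a non-positive integer) shows $\sum_{x>0}|\alpha_x(z,g)|^2 = \BigO(f(t)/2^t)$; normalisation $\sum_x |\alpha_x(z,g)|^2 = 1$ (which holds since the $\alpha_x$ are QPE amplitudes on a pure eigenstate) then gives the claimed $1-\BigO(f(t)/2^t)$.

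The only genuine technicality is verifying that the standard Nielsen-Chuang bound, stated for a single-sided phase in $[0,1)$, transports correctly to the signed output convention used here for the comparator circuit of \cref{fig:phase-comparator-circuit}; this amounts to the observation already implicit in \cref{eq:alpha_L}, that relabelling indices modulo $2^t$ does not affect the bound on the probability of deviating from $b$ by more than $e$.
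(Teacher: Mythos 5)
Your proposal is correct and takes essentially the same route as the paper: identify the best $t$-bit underestimate $b$ of the signed difference $\lambda-\varphi$, note $b \ge 2^t/f(t)-1$, bound the total weight on wrong-sign outcomes via the Nielsen--Chuang QPE tail, and obtain the second case by normalisation. The only (cosmetic) difference is that you invoke the pre-summed cumulative bound \cite[Eq.~5.27]{Nielsen_and_Chuang} with $e=b-1$, whereas the paper sums the per-outcome bound \cite[Eq.~5.29]{Nielsen_and_Chuang} explicitly via a $\sum_k 1/k^2$ tail estimate; both yield $\BigO(f(t)/2^t)$, and both rely on the same implicit assumption (which you flag) that $\lambda-\varphi$ stays away from the modular wrap-around point so that the relabelling of signed outcomes does not spoil the distance-from-$b$ argument.
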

\begin{proof}
We closely follow the notation in \cite[Sec.~5.2]{Nielsen_and_Chuang}.
Let $X=\lambda- \varphi$ and $b_X$ be the best $t$ bit estimate of $X$ such that $0.b_X$ is smaller than $X$; this in turn means that the difference $\delta_X  \coloneqq  X - b_X/2^t$ satisfies $\delta_X \in [0, 2^{-t})$.
We further denote with $\alphal$ the amplitude of $\ket x = \ket{(b_X+\ell)\bmod 2^t}$.

\paragraph{Case \paramath{\varphi \le \lambda - 1/f(t)}.}
In this case we have
\[
    2^t\varphi \le 2^t \lambda - \frac{2^t}{f(t)}
    \quad\Longrightarrow\quad
     \frac{2^t}{f(t)} - 1 \le b_X 
\]
and thus
\begin{align*}
x\le0&&\Longleftrightarrow&& b_X + \ell &\le 0 &&\bmod 2^t \\
&&\Longrightarrow&& \ell &\le - b_x \leq 1 - \frac{2^t}{f(t)} && \bmod 2^t.
\end{align*}
Then 
\begin{align}
    \sum_{x<0} |\alpha_{x}(z,g)|^2 &= \sum_{\ell <- 2^t/f(t)+1 } |\alpha_{\ell}|^2 \nonumber \\
    &\leq  \sum_{\ell=-2^{t-1}}^{-2^{t-1}/f(t)} \frac{1}{(2^t\delta_X -\ell )^2} && \hspace{-12mm}\text{by \cite[eq.~5.29]{Nielsen_and_Chuang}} \nonumber\\
    &\leq \sum_{k=2^{t-1}/f(t)}^{2^{t-1}} \frac{1}{k^2} \label{Eq:l^-2_sum} \\
    &= \psi^{(1)}\left(2^{t-1}/f(t)\right) - \psi^{(1)}\left(1+2^{t-1}\right) \nonumber \\
    &= \BigO\left(\frac{f(t)}{2^t}\right), \nonumber
\end{align}
where $\psi^{(n)}(z)$ denotes the $n$\textsuperscript{th} derivative of the digamma function $\psi(z)  \coloneqq  \Gamma'(z)/\Gamma(z)$.

\paragraph{Case \paramath{\varphi\ge\lambda+1/f(t)}.} 
In this case we see that
\[
    2^t\varphi \ge 2^t \lambda + \frac{2^t}{f(t)}
    \quad\Longrightarrow\quad
     1 - \frac{2^t}{f(t)} \ge b_X .
\]
As the above inequality goes the other way to before, we will find it useful to consider
\begin{align*}
x>0&&\Longleftrightarrow&& b_X + \ell &> 0 &&\bmod 2^t \\
&&\Longrightarrow&& \ell &> - b_x \ge   \frac{2^t}{f(t)}-1 && \bmod 2^t.
\end{align*}
Using
\begin{align}
    \sum_{x<0} |\alpha_{x}(z,g)|^2 &= \sum_{\ell \le - b_X } |\alpha_{\ell}|^2 \nonumber \\
    &= 1 - \sum_{\ell > - b_X } |\alpha_{\ell}|^2 \nonumber \\
    &= 1 - \sum_{\ell > -1+2^t/f(t) } |\alpha_{\ell}|^2  \nonumber\\ 
    &\geq 1 - \sum_{\ell=2^{t-1}/f(t)}^{2^{t-1}} \frac{1}{(2^t\delta_X -\ell )^2} && \hspace{-12mm}\text{by \cite[eq.~5.29]{Nielsen_and_Chuang}} \nonumber\\
    &\geq 1- \sum_{\ell =2^{t-1}/f(t)}^{2^{t-1}} \frac{1}{\ell^2} \label{Eq:alpha_1} \\
    &= 1- \BigO\left(\frac{f(t)}{2^t}\right). \label{Eq:alpha_2}
\end{align}
\Cref{Eq:alpha_2} follows from \cref{Eq:alpha_1} using the same analysis as for \cref{Eq:l^-2_sum}.
The claim follows.
\end{proof}

As per the 1-CRT-PRM case, $\eta$ will play a fundamental role in our construction; and as such we will derive bounds on its magnitude and derivative in the following.

\begin{lemma}[Bounds on $\eta_{\max}$]\label{Lemma:Rejection_Probabilities_2}
\checked{James}
Let $K_N$, as defined in \cref{Lemma:TI-APX-SIM_Completeness}, and choose $D>0$ such that the cutoff $\delta$ from \cref{Lemma:TI-APX-SIM_Completeness} satisfies $\delta = \Theta( N^{-D})$, and as in \cref{eq:S_delta} denote with $S_\delta$ the set of eigenstates of $K_N$ with eigenvalue $\le \lmin(K_N) + \delta$.
Let $\varphi \in [\lmin(K_N) + \delta/3,\lmin(K_N) + 2\delta/3]$
Then there exists an integer $D'\ge 1$ and a $\lambda_{j*}(B)\in\{\lambda_0(B), \lambda_1(B)\}$ such that the following bounds hold.
\begin{itemize}
   \item  If $|\bra{\psi}B\ket{\psi}-\lambda_{j*}(B)\,| \le 1/P_1(N)$ for all $\ket{\psi}\in S_\delta$ and $t\geq D'|N|$, then
\[
    \eta_{\max}(N,\varphi, \theta, t) = \lambda_{j^*}(B) - \theta + \frac{1}{P_2(N)}.
\]
for a polynomial $P_2(N)$ such that $\BigO( 1/P_1(N)) + \BigO(f(t)/2^t) \leq  1/P_2(N)$.
\item If $t<|N|$, then
\begin{align}\label{Eq:Short_Tape}
     \eta_{\max}(N,\varphi, \theta, t)&=\BigO\left(\frac{1}{2^{t/2}}\right).
\end{align}
\end{itemize}
\begin{proof}
We consider each case individually.

\paragraph{For $t<|N|$.}
In this case and irrespective of $\varphi$ or $\theta$ we have that $\sum_{z\in V_t}|\gamma_z|=\BigO(2^{-t/2})$ for any state $\ket{\psi}$ as input, thus using \cref{Eq:eta_Multiple_Variables} we get \cref{Eq:Short_Tape}.

\paragraph{For $t\geq D'|N|$.}
As we have $D'\ge 1$ to be determined, $N$ is expanded in full; hence $\gamma_{\enc(N)}=1$, and all other $\gamma_z=0$.
By \cref{rem:etamax-eigenstate}, as $t\ge |N|$, the maximum of $\eta(N,\varphi,\theta,t,\ket\nu)$ is assumed for an eigenstate $\ket{g}$ of $K_N$, thus we can write:
\begin{align} 
    \eta(N,\varphi, \theta, t,\ket{g})&= \sum_{x\leq 0} |\alpha_x(N,g)|^2\left( \sum_j\lambda_j(B)|\sigma_{j}(N,g)|^2 - \theta   \right)  \nonumber\\
    &\eqqcolon  \left(\sum_{x\leq 0}|\alpha_x(N,g)|^2\right)\Gamma(g,\theta), \label{Eq:Gamma_Definition}
\end{align}
and note that $| \Gamma(g,\theta) | \le 1$.\footnote{We reuse $\Gamma$ here; it is not the same $\Gamma$ as in \cref{rem:etamax-eigenstate}.}

Now assuming $\varphi \in [ \lmin(K_N) + \delta/3, \lmin(K_N) + 2\delta/3]$, we have that any eigenstate of $K_N$ above the energy cutoff $\delta$---i.e.\ any $\ket{g} \not \in S_\delta$---has eigenvalue $\lambda$ at least $\delta/3$ above $\varphi$.
Therefore, by \cref{Lemma:eta-tech}, we can find a function $f(t)$ such that
\[
    \varphi \le \lambda - \frac{\delta}{3} \eqqcolon \lambda - \frac{1}{f(t)}
    \quad\Longleftarrow\quad
    f(t) = \frac{3}{\delta} = \Theta( N^D).
\]

To employ \cref{Lemma:eta-tech} productively, we also need that $\frac{f(t)}{2^t} = 2^{-ct}$ for some constant $0<c$.
These constraints determine the constant $D'$: as $t\ge D'|N|$, we have $N  \le 2^{t/D'}$, and thus
\[
    2^{(1-c)t} = f(t) = N^D \le 2^{t D/D'}
\]
and thus $D/D' \ge 1-c$ or $D' \le D/(1-c)$.
This means we need to choose $0<c<1$, in which case one can choose $1 \le D' \le D/(1-c)$.
For this choice of $D'$, by \cref{Lemma:eta-tech}, we have
\[
    \sum_{x\leq 0}|\alpha_x(N,g)|^2 = \BigO(2^{-ct})
\]
and hence with $|\Gamma(g,\theta)|\le 1$,
\begin{align}
    \sum_{\ket g\neq S_\delta} \left(\sum_{x\leq 0}|\alpha_x(N,g)|^2\right)|\kappa_{g}(N)|^2 \Gamma(g,\theta) &= \BigO(2^{-ct}) \sum_g |\kappa_g(N)|^2  \nonumber\\
    &= \BigO(2^{-ct}) = 1/\poly N.  \label{eq:high-states-small-contrib}
\end{align}

We have just proven that none of the high-energy eigenstates $\ket g \neq S_\delta$ make any significant contribution to $\eta_{\max}$; as such it suffices to limit our further analysis to the case where $\ket g \in S_\delta$.

We now make use of the assumption that $| \bra\psi B \ket\psi - \lambda_{j^*}(B) | \le 1/P_1(N)$ for all $\ket\psi \in S_\delta$.
As the $\sigma_j$ in \cref{eq:chi-with-flag-2} label the coefficients of the basis expansion with respect to the eigenbasis of $B$, and $B$ only has two eigenvalues, this statement means that
\begin{align}
    |\sigma_{j^*}(N,g)|^2 =1-1/P_1(N)
    \quad\text{and}\quad
    |\sigma_{1-j^*}(N,g)|^2 =1/P_1(N). \label{Eq:Sum_kappa_1}
\end{align}

Employing \cref{Lemma:eta-tech} again to lower-bound the amplitudes on the low-energy coefficients $\alpha_x$ for $x\le 0$, 
\begin{align*}
    \eta(N,\varphi, \theta, t,\ket{g})&= \left(1 - \BigO(2^{-ct})\right)\left( \lambda_0(B)|\sigma_0|^2 + \lambda_1(B)|\sigma_1|^2 -\theta   \right) \\
    &= \left(1 - \BigO(2^{-ct})\right)\left( \lambda_{j^*}(B)
    + \BigO(1/P_1(N)) -\theta   \right).
\end{align*}
As $\theta$ is assumed not to scale with $N$, we can write
\[
    \eta(N, \varphi, \theta, t, \ket g) + \theta = \lambda_{j^*}(B) + \BigO(1/P_1(N)) + \BigO(2^{-ct}).
\]
We now combine (keeping the hidden constants)
\[
    P_2(N) \coloneqq \BigO(1/P_1(N)) + \BigO(2^{-ct}).
\]
The claim follows.
\end{proof}
\end{lemma}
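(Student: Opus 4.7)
The plan is to split on whether the phase-estimation tape has length $t<|N|$ or $t\ge D'|N|$ for a suitably chosen integer $D'$. In the short-tape case the argument is essentially a one-liner: by \cref{Lemma:String_Encoding}, the QPE on $U_N$ puts total amplitude at most $\BigO(2^{-t/2})$ on the set $V_t$ of valid strings, and only valid strings trigger the $\ket{11}_f$ flag projector used in \cref{Eq:eta_Multiple_Variables}. Since $B-\theta\1$ has bounded norm, this immediately gives $\eta_{\max}=\BigO(2^{-t/2})$.

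For the long-tape case I would proceed in three steps. First, by \cref{Lemma:String_Encoding}, $t\ge D'|N|\ge |N|$ ensures $\gamma_{\enc(N)}=1$ and all other $\gamma_z=0$, so the outer sum in \cref{Eq:eta_Multiple_Variables} collapses to the single term $z=\enc(N)$. Second, by \cref{rem:etamax-eigenstate} the maximum over the free input state $\ket\nu$ is attained at a genuine eigenstate $\ket g$ of $K_N$, so $\eta(N,\varphi,\theta,t,\ket g)$ factors as $\left(\sum_{x\le 0}|\alpha_x(N,g)|^2\right)\Gamma(g,\theta)$, where $\Gamma(g,\theta)\coloneqq\sum_j\lambda_j(B)|\sigma_j(N,g)|^2-\theta$. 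The analysis then reduces to a case split on whether $\ket g$ lies in the low-energy band $S_\delta$ or above it.

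The third step is where the case analysis sits, and is the main obstacle. For $\ket g\notin S_\delta$, its eigenvalue satisfies $\lambda_g\ge\lmin(K_N)+\delta$, while the hypothesis $\varphi\le\lmin(K_N)+2\delta/3$ gives $\varphi\le\lambda_g-\delta/3$. Applying \cref{Lemma:eta-tech} with $1/f(t)\coloneqq\delta/3=\Theta(N^{-D})$ bounds the acceptance amplitude by $\BigO(f(t)/2^t)=\BigO(N^D/2^t)$; to drive this below $2^{-ct}$ for some $c\in(0,1)$ I need $t\ge D'|N|$ with $D'\ge D/(1-c)$, which pins down the integer $D'$ claimed in the statement. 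For $\ket g\in S_\delta$, the promise $|\bra\psi B\ket\psi-\lambda_{j^*}(B)|\le 1/P_1(N)$ gives $\Gamma(g,\theta)=\lambda_{j^*}(B)-\theta+\BigO(1/P_1(N))$, uniformly in $g$. Assuming $\lambda_{j^*}(B)-\theta>0$, the maximizer of $\eta$ will pick a low-lying $\ket g\in S_\delta$ whose eigenvalue is far enough below $\varphi$ to trigger the ``high-acceptance'' branch of \cref{Lemma:eta-tech}, yielding an acceptance amplitude of $1-\BigO(f(t)/2^t)$; exhibiting such a state (e.g. the ground state of $K_N$, which by construction lies at the bottom of $S_\delta$ and satisfies $\lmin(K_N)\le\varphi-\delta/3$) is the one bit of delicate bookkeeping.

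Combining the pieces, the maximum reads $\eta_{\max}=(1-\BigO(2^{-ct}))(\lambda_{j^*}(B)-\theta+\BigO(1/P_1(N)))$, which I then repackage as $\lambda_{j^*}(B)-\theta+1/P_2(N)$ with $1/P_2(N)\gtrsim\BigO(1/P_1(N))+\BigO(f(t)/2^t)$. The technical friction I expect to concentrate in Step 3 is twofold: verifying that a single eigenstate below the ``$\varphi-1/f(t)$'' threshold genuinely exists in $S_\delta$ (so that the max is not instead pinned by the $\Gamma<0$ branch, which would give a trivially small $\eta_{\max}$), and tracking constants carefully enough to absorb both the $1/P_1(N)$ error from the $B$-promise and the $f(t)/2^t$ error from the phase-comparator tail into one combined $1/P_2(N)$.
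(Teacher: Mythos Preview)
Your proposal is correct and follows essentially the same approach as the paper: both split on $t<|N|$ versus $t\ge D'|N|$, collapse the $\gamma_z$ sum via \cref{Lemma:String_Encoding}, reduce to an eigenstate via \cref{rem:etamax-eigenstate}, use \cref{Lemma:eta-tech} with $1/f(t)=\delta/3$ to kill the $\ket g\notin S_\delta$ branch, and then combine the $B$-promise with the high-acceptance case of \cref{Lemma:eta-tech} on $S_\delta$. Your handling of the $D'$ constraint is arguably cleaner (you correctly write $D'\ge D/(1-c)$, matching the direction needed for $N^D\le 2^{(1-c)t}$), and you explicitly flag two points the paper leaves implicit---that the ground state furnishes the required witness below $\varphi-\delta/3$, and that the sign of $\lambda_{j^*}(B)-\theta$ matters for which branch the maximiser selects---but these are refinements rather than a different route.
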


\begin{corollary}\label{Cor:Rejection_Probabilities_2-2}
\checked{James}
Let the notation be as in \cref{Lemma:Rejection_Probabilities_2}. 
Then, for $t\ge D'|N|$ and $|\bra{\psi}B\ket{\psi}-\lambda_{j*}(B)\,| \le 1/P_1(N)$ for all $\ket{\psi}\in S_\delta$ and $\varphi \in [\lmin(K_N) + \delta/3,\lmin(K_N) + 2\delta/3]$, we have that
\[
\eta_{\max}(N,\varphi, \theta, t)
\begin{cases}
    \geq \frac{1}{2} & \theta \leq \lambda_{j^*}(B) - \frac{1}{2} - 1/P_2(N). \\
     \leq  \frac{2}{5} & \theta \geq \lambda_{j^*}(B) - \frac{2}{5}  + 1/P_2(N) 
    \end{cases}
\]
\end{corollary}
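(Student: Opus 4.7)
The plan is to reduce this directly to \cref{Lemma:Rejection_Probabilities_2}. The preceding lemma already asserts, under the hypotheses $t \ge D'|N|$, $\varphi \in [\lmin(K_N)+\delta/3,\lmin(K_N)+2\delta/3]$, and $|\bra{\psi}B\ket{\psi}-\lambda_{j^*}(B)|\le 1/P_1(N)$ for all $\ket{\psi}\in S_\delta$, that
\[
    \eta_{\max}(N,\varphi,\theta,t) \;=\; \lambda_{j^*}(B) - \theta \;\pm\; \frac{1}{P_2(N)},
\]
where the $\pm 1/P_2(N)$ should be read as a two-sided error on the expression (since the lemma's $+1/P_2(N)$ absorbs the $\BigO(1/P_1(N))$ and $\BigO(f(t)/2^t)$ terms arising from both the high-energy-subspace spillover and the observable's residual overlap on its non-dominant eigenstate).

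Given this, the corollary is purely a substitution exercise. For the first case, assume $\theta \le \lambda_{j^*}(B)-\tfrac{1}{2}-1/P_2(N)$; then $\lambda_{j^*}(B)-\theta \ge \tfrac{1}{2} + 1/P_2(N)$, so using the lower bound in the lemma,
\[
    \eta_{\max}(N,\varphi,\theta,t) \;\ge\; \lambda_{j^*}(B) - \theta - \frac{1}{P_2(N)} \;\ge\; \frac{1}{2}.
\]
For the second case, assume $\theta \ge \lambda_{j^*}(B)-\tfrac{2}{5}+1/P_2(N)$; then $\lambda_{j^*}(B)-\theta \le \tfrac{2}{5}-1/P_2(N)$, and the upper bound gives
\[
    \eta_{\max}(N,\varphi,\theta,t) \;\le\; \lambda_{j^*}(B)-\theta + \frac{1}{P_2(N)} \;\le\; \frac{2}{5}.
\]

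There is no real obstacle here; the only thing to be careful about is that the same polynomial $P_2(N)$ provided by the lemma is used in both directions and in the corollary's threshold, so that the $1/P_2(N)$ slack in the hypothesis on $\theta$ exactly cancels the $1/P_2(N)$ error in $\eta_{\max}$. The constants $1/2$ and $2/5$ are otherwise arbitrary: any two values separated by a gap $\Omega(1)$ (comfortably larger than $2/P_2(N)$ for sufficiently large $N$) would work, and it is this bounded-away-from-each-other pair of thresholds that will later be exploited to translate the promise gap from $\forall$-TI-APX-SIM into a gap in the ground state energy of the history-state Hamiltonian built from this QTM.
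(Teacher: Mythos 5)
Your proposal is correct and matches the paper's treatment: the paper states this corollary without a separate proof, treating it as an immediate substitution into \cref{Lemma:Rejection_Probabilities_2}, whose conclusion (as its proof makes clear) is indeed a two-sided $1/P_2(N)$ error around $\lambda_{j^*}(B)-\theta$, exactly as you read it. Your observation that the specific thresholds only need to be $\Omega(1)$-separated is consistent with the paper's remark that $1/2$ and $2/5$ are chosen to bracket $7/16$ for the later marker-Hamiltonian offset.
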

We chose the two special points $2/5$ and $1/2$ as they bracket $2/5 < 7/16 < 1/2$, which will be important later on to offset the bonus from a marker Hamiltonian (cf.~\cref{Lemma:Single_Square_Energy_2}).

We note that \cref{Lemma:Rejection_Probabilities_2} leaves a ``gap'' between the two conditions on $t$, i.e.\ $t\ge D'|N|$ on the one hand, and $t<|N|$ on the other.
To make sense of this, we note that as in \cref{Theorem:QTM_in_local_Hamiltonian}, we will be able to set $t$ independently by an extra marker $\ket\blacksquare$, whose location will be treated as input for $t$.
This means that if we are in the branch $t\ge|N|$, and since $D'$ is a constant, we will be able to compare---with a Turing machine injected in between steps (2.) and (3.) in \cref{Lemma:QTM_Output_2-CRT-PRM}---$t$ to $D'|N|$, and set another failure flag.
Leaving the details to the reader, we summarise this argument in the following corollary.
\begin{corollary}\label{cor:Rejection_Probabilities_2}
\checked{James}
    A modified Multi-QPE QTM as in \cref{Lemma:QTM_Output_2-CRT-PRM} satisfies the bounds in \cref{Lemma:Rejection_Probabilities_2,Cor:Rejection_Probabilities_2-2}, but such that if $t<D'|N|$, $\eta_{\max}(N,\varphi,\theta,t) = \BigO(2^{-t/2})$.
    The overhead in runtime is at most $\poly(N)$.
\end{corollary}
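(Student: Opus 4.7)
The plan is to augment the QTM of \cref{Lemma:QTM_Output_2-CRT-PRM} with an additional classical subroutine, inserted between step (2) and step (3), that compares the tape length parameter $t$ against $D'|N|$ and forces rejection (by unconditionally clearing the final accept flag) whenever $t<D'|N|$. Since $D'$ is a fixed constant determined in \cref{Lemma:Rejection_Probabilities_2}, and since $N$ has by this stage been written onto a work tape by the QPE on $U_N$, the comparison $t \lessgtr D'|N|$ is a purely classical arithmetic operation requiring only $\poly(|N|,t)$ head movements; it can be implemented with a fixed universal gate set and introduces at most a $\poly(N)$ runtime overhead, consistent with the bound we claim.

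Next I would verify the three cases of the bounds. If $t<|N|$, then the QPE on $U_N$ does not resolve $\enc(N)$ and \cref{Lemma:String_Encoding} gives $\sum_{z\in V_t}|\gamma_z|^2 = \BigO(2^{-t/2})$; this survives unchanged through the augmented circuit, and the definition \cref{Eq:eta_Multiple_Variables} together with the trivial bound $|B-\theta\1|\le \BigO(1)$ yields $\eta_{\max}=\BigO(2^{-t/2})$ exactly as in \cref{Eq:Short_Tape} of \cref{Lemma:Rejection_Probabilities_2}. If $|N|\le t<D'|N|$, the inserted comparator fires and zeroes the $\ket{11}_f$ contribution, so $\eta=0$ and a fortiori $\eta_{\max}=0=\BigO(2^{-t/2})$ since $t$ is at most a constant factor $D'$ times $|N|$, so $2^{-t/2}$ is still $1/\poly N$ and the stated asymptotic bound trivially holds. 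Finally, when $t\ge D'|N|$ the comparator passes through without modification and the bounds of \cref{Lemma:Rejection_Probabilities_2,Cor:Rejection_Probabilities_2-2} apply verbatim to the augmented machine, because the only extra operations are computational-basis reads and conditional clears on classical registers, which commute with the remainder of the phase comparator and with the convexity argument of \cref{rem:etamax-eigenstate}.

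The only mild subtlety is ensuring that the added comparator does not disturb the history-state energy accounting that underlies the later reduction: since the inserted steps are unitary and deterministic on classical ancillae, they add only $\poly(N)$ extra time steps, which in the eventual Hamiltonian encoding simply lengthens the clock register without affecting the output-penalty structure. No step of the proof requires extended calculation; everything follows by inspection once the comparator is in place.
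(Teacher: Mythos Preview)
Your proposal is correct and follows essentially the same approach as the paper: the paper's text immediately preceding the corollary states that one injects a Turing machine ``in between steps (2.) and (3.)'' to compare $t$ against $D'|N|$ and set a failure flag, then leaves the details to the reader. You have supplied precisely those details, including the case split $t<|N|$, $|N|\le t<D'|N|$, and $t\ge D'|N|$, which is exactly what the paper intends.
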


\subsection{An Approximate Phase Comparator QTM, Season 2}
As in the 1-CRT-PRM case, we will need to remove the QTM's direct access to $U_{K_z}$ and instead have the QTM approximate the unitary using Hamiltonian simulation.
In the following, we will prove the analogy of \cref{Lemma:hamSim_Error} for the two-parameter setting.

\begin{lemma}[Hamiltonian Simulation Error]
\label{Lemma:2-PRM-Ham_Sim_Error}
\checked{James}
Let $\mathcal{M}(N,\varphi, \theta, t, \ket{\nu})$ be the QTM described in \cref{Lemma:QTM_Output_2-CRT-PRM} with all gates implemented without error.
Then there exists a QTM $\mathcal{M}'(N,\varphi, \theta, t, \ket{\nu})$ performing the same algorithm, except where the gate $U_{K_z}$ is instead performed by a Hamiltonian simulation algorithm in \cref{Lemma:HamSim-QTM} and such that $\mathcal{M}'(N,\varphi, \theta, t, \ket{\nu})$ satisfies the following:
\begin{enumerate}
    \item Let $\eta'(N,\varphi, \theta, t, \ket{\nu})$ be defined in the same way as $\eta(N,\varphi, \theta, t, \ket{\nu})$ from \cref{Eq:eta_Multiple_Variables}, but corresponding to the output of $\mathcal{M}'$.
    Then $\max_{\ket{\nu}}\eta'(N,\varphi, \theta, t, \ket{\nu})$ satisfies the same bounds as $\max_{\ket{\nu}}\eta(N,\varphi, \theta, t, \ket{\nu})$ from \cref{Cor:Rejection_Probabilities_2-2} for $\varphi\in [\lmin(K_N)+\delta/3,\lmin(K_N)+2\delta/3]$.
    \item The runtime overhead is at most $\poly(N, 2^t)$.
\end{enumerate}

\end{lemma}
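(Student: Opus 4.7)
The plan is to port the strategy of \cref{Lemma:hamSim_Error} to the two-parameter setting, with the additional task of verifying that the more intricate bounds of \cref{Cor:Rejection_Probabilities_2-2}---which now involve the observable $B$ in addition to a phase comparison against $\lmin(K_N)$---are preserved by the Trotter approximation. The QTM $\mathcal{M}'$ is obtained from the $\mathcal{M}$ of \cref{Lemma:QTM_Output_2-CRT-PRM} by replacing every invocation of $U_{K_z} = \exp(\ii \pi K_z)$ by an approximate implementation via the Trotterised simulation subroutine of \cref{Lemma:HamSim-QTM}; every other gate is kept exact for now.

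First I would recycle the spectral error estimate of \cref{Lemma:hamSim_Error} almost verbatim. Rescaling $K'_z \coloneqq 4 K_z / (\pi z)$ to satisfy $\| K'_z \|_\infty \le 1$, setting $V(s) \coloneqq \exp(\ii K'_z s)$, and requesting from \cref{Lemma:HamSim-QTM} an approximation $\tilde V(1)$ with $\| V(1) - \tilde V(1) \|_\infty \le \epsilon'$, \cref{Lemma:Close_Evolution_Hamiltonian} yields an effective Hamiltonian $H'$ with $\| H' - K'_z \|_\infty \le \kappa \epsilon'$. Iterating to power $\pi^2 z T/4$ and rescaling back produces an effective Hamiltonian $H''$ with $\| H'' - K_z \|_\infty \le \kappa \epsilon /(\pi T)$ provided $\epsilon' = 4\epsilon/(\pi^2 z T)$, together with an overall error $\| U(T) - \tilde U(T) \|_\infty \le \epsilon$ for $U(T) = \exp(\ii\pi K_z T)$. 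By Weyl's inequality the spectrum of $H''$ then lies within $\kappa\epsilon/\pi$ of the spectrum of $K_z$, and the eigenstates of $\tilde U(T)$ are exactly the eigenstates of $H''$.

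Next I would show that the bounds of \cref{Cor:Rejection_Probabilities_2-2} survive the replacement $K_z \mapsto H''$ for suitably small $\epsilon$. Since the approximate QPE is effectively performed on $H''$, an argument identical to \cref{rem:etamax-eigenstate}---but with respect to the eigenbasis of $H''$ rather than $K_z$---lets me assume the maximiser of $\eta'$ is an eigenstate $\ket{\psi'}$ of $H''$ with eigenvalue $\mu$. If $\mu \ge \lmin(K_z) + \delta/2$, the admissible range $\varphi \in [\lmin(K_N) + \delta/3, \lmin(K_N) + 2\delta/3]$ still sits at least $\delta/6 - \BigO(\epsilon)$ below $\mu$, so the high-energy tail argument of \cref{Lemma:Rejection_Probabilities_2} via \cref{Lemma:eta-tech} still yields $\sum_{x \le 0} |\alpha_x|^2 = \BigO(f(t)/2^t)$. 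Otherwise $\mu < \lmin(K_z) + \delta/2$, in which case $\bra{\psi'} K_z \ket{\psi'} \le \mu + \kappa\epsilon/\pi < \lmin(K_z) + \delta$ for $\epsilon$ polynomially small, so $\ket{\psi'}$ is a valid $\delta$-low-energy state of $K_z$ and the APX-SIM promise of \cref{Lemma:TI-APX-SIM_Completeness} gives $|\bra{\psi'} B \ket{\psi'} - \lambda_{j^*}(B)| \le 1/P_1(N)$. Plugging these two cases back into \cref{Eq:eta_Multiple_Variables} reproduces the proof of \cref{Lemma:Rejection_Probabilities_2} up to additive $\BigO(\epsilon)$ corrections, which can be absorbed into the $1/P_2(N)$ slack.

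Choosing $\epsilon = 2^{-2t}$ then suffices, in direct analogy to \cref{Lemma:hamSim_Error}. The runtime overhead is dominated by performing $\pi^2 z T/4$ approximate short-time evolutions, each requiring $\tilde\BigO(z^2/\epsilon') = \tilde\BigO(z^3 T/\epsilon)$ gates by \cref{Lemma:HamSim-QTM}; totalled across all $\poly(N, 2^t)$ invocations of $U_{K_z}$ in $\mathcal M$, the overall overhead remains $\poly(N, 2^t)$, as claimed. The main obstacle is the careful bookkeeping of how the $\BigO(\epsilon)$ eigenvalue shifts of $H''$ propagate through \emph{both} the phase-comparator analysis \emph{and} the expectation-value bound on $B$; but the $\Theta(\delta)$ energy window around $\varphi$ and the $\Theta(1/P_1(N))$ slack in the APX-SIM promise together ensure that any polynomially small $\epsilon$ renders these perturbations harmless.
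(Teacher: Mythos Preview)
Your approach is essentially the paper's: both port \cref{Lemma:hamSim_Error} verbatim to obtain an effective Hamiltonian $H''$ with $\|H''-K_N\|_\infty = \BigO(\epsilon)$, then argue that low-energy eigenstates of $H''$ are automatically $\delta$-low-energy states of $K_N$ and therefore inherit the APX-SIM promise on $\bra{\psi}B\ket{\psi}$, with $\epsilon=2^{-2t}$ and the same runtime accounting. Your write-up is in fact more explicit than the paper's about the high-/low-energy case split and about redoing \cref{rem:etamax-eigenstate} in the $H''$ eigenbasis.

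One small arithmetic slip: your threshold $\mu \ge \lmin(K_N)+\delta/2$ is too low. With $\varphi$ allowed up to $\lmin(K_N)+2\delta/3$, an $H''$-eigenvalue $\mu=\lmin(K_N)+\delta/2$ can sit \emph{below} $\varphi$, so the tail bound from \cref{Lemma:eta-tech} does not apply. You need a cutoff strictly between $2\delta/3$ and $\delta$---e.g.\ $\mu \ge \lmin(K_N)+5\delta/6$ gives the claimed $\delta/6$ margin for the high-energy branch, while still leaving $\bra{\psi'}K_N\ket{\psi'}\le \lmin(K_N)+5\delta/6+\BigO(\epsilon)<\lmin(K_N)+\delta$ in the low-energy branch. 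With that adjustment the argument goes through as written.
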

\begin{proof}
The proof is almost identical to \cref{Lemma:hamSim_Error}: we use the same Trotterised simulation as \cref{Lemma:hamSim_Error}; the result is a unitary $\tilde{{U}}=\ee^{\ii\pi H''}$ such that the spectrum of the generating Hamiltonian $|\lmin(H'')-\lmin(K_N)|_\infty\leq \frac{\kappa \epsilon}{2^t}$.
By choosing $\epsilon=2^{-2t}$, there is an error of $\BigO(2^{-2t})$.

Now consider any eigenstate of $H'$, denoted $\ket{\nu}$, such that $\bra{\nu}H'\ket{\nu}\leq \lambda_0(H)+2\delta/3$.
Then by \cref{Lemma:Close_Evolution_Hamiltonian} $\bra{\nu}K_N\ket{\nu}\leq \lmin(K_N)+2\delta/3+2^{-2t}$.

Since we are promised either all states $\bra{\nu}H\ket{\nu}\leq \lmin(K_N)+\delta$ have $|\bra{\nu}B\ket{\nu}-\lambda_0(B)|=  1/P_1(N)$ or $|\bra{\nu}B\ket{\nu}-\lambda_1(B)|=   1/P_1(N)$, and $\delta = N^{-D}$, we can in addition ensure that simulation error satisfies $\epsilon = \BigO(\delta^2)=\BigO(N^{-2D})=O(2^{-2t})$; we then see all states satisfying $\bra{\nu}H'\ket{\nu}\leq \lmin(H')+2\delta/3$ also satisfy these same bounds on $\bra{\nu}B\ket{\nu}$.
Thus $\max_{\ket{\nu}}\eta'(N,\varphi, \theta, t, \ket{\nu})$ satisfies the same bounds as $\max_{\ket{\nu}}\eta(N,\varphi, \theta, t, \ket{\nu})$ in \cref{Cor:Rejection_Probabilities_2-2}, and the $\poly(N,2^t)$ runtime overhead follows by construction.
\end{proof}

\begin{lemma}[Gate Approximation Error]\label{Lemma:SK_HamSim_Approximation_2}
\checked{James}
Let $\mathcal{M}'(N,\varphi, \theta, t, \ket{\nu})$ be the QTM described in \cref{Lemma:2-PRM-Ham_Sim_Error} with the Hamiltonian simulation subroutine performed, but all other gates still done exactly.
Then there exists a QTM $\mathcal{M}''(N,\varphi, \theta t,\epsilon, \ket{\nu})$ that satisfies the following:
\begin{enumerate}
    \item $\mathcal{M}''$ only has access universal gate set and the gates $U_N$ and $U_\varphi$.
    \item Let $\eta''(N,\varphi, \theta,t,\epsilon,\ket{\nu})$ be define in the same way as $\eta(N,\varphi, \theta,t, \ket{\nu})$ from \cref{Eq:eta_Multiple_Variables}, but corresponding to the output of $\mathcal{M}''$.
    Then $\max_{\ket{\nu}}\eta''(N,\varphi,\theta ,t,\epsilon,\ket{\nu})$ satisfies the same bounds as $\max_{\ket{\nu}}\eta(N,\varphi,\theta,t, \ket{\nu})$ in \cref{Cor:Rejection_Probabilities_2-2}.
    \item
    The runtime overhead due to the Hamiltonian simulation is at most a factor of $\poly\log(N,2^t)$.
\end{enumerate}
\end{lemma}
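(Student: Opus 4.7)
The plan mirrors the one-parameter proof of \cref{Lemma:SK_HamSim_Approximation}. By \cref{Lemma:HamSim-QTM}, the Hamiltonian simulation subroutine already used in $\mathcal{M}'$ compiles down to a fixed universal gate set, and the gates $U_N$, $U_\varphi$ are supplied explicitly. The only gates left to approximate are therefore the controlled rotations $R_k=\diag(1,\ee^{2\pi\ii/2^k})$ appearing in the QPE/IQFT stages and any other non-fixed gates used in the flag-updating and comparator subroutines. These will be replaced by Solovay--Kitaev compiled circuits over the chosen universal gate set.

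First I would count the total number of such gates. Combining the raw QPE runtime $\BigO(2^{4t})$ from \cref{Lemma:QTM_Output_2-CRT-PRM} with the Hamiltonian-simulation overhead $\poly(N,2^t)$ from \cref{Lemma:2-PRM-Ham_Sim_Error} gives a total of $\#g=\poly(N,2^t)$ elementary gates to approximate. Applying Solovay--Kitaev to each at target precision $\epsilon$ costs $\BigO(\log^4(1/\epsilon))$ elementary gates per approximation, adding no dependence on $U_N$ or $U_\varphi$.

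Second, I would pin down $\epsilon$. The quantity $\eta''$ is the expectation value of the bounded operator $\ketbra{11}_f\otimes(B-\theta\1)\otimes\1$ on the final state of the circuit. Errors along a circuit accumulate at most linearly in trace distance, so the final state produced by $\mathcal{M}''$ differs from the final state of $\mathcal{M}'$ by at most $\#g\cdot\epsilon$; hence $|\eta''-\eta'|\le 2\|B-\theta\1\|_\infty\cdot\#g\cdot\epsilon$. Since $\|B-\theta\1\|_\infty=\BigO(1)$, picking $\epsilon=1/(\#g\cdot P_2(N))=1/\poly(N,2^t)$ ensures that this deviation is absorbed into the $1/P_2(N)$ slack already present in \cref{Cor:Rejection_Probabilities_2-2}. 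The bounds on $\max_{\ket\nu}\eta''$ therefore coincide (up to inflating the implicit constants in $P_2$) with those on $\max_{\ket\nu}\eta'$, as required.

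The main obstacle to verify is that this choice of $\epsilon$ does not blow up the Solovay--Kitaev overhead. Because $P_2$ is polynomial in $N$ and $\#g=\poly(N,2^t)$, we have $\log(1/\epsilon)=\BigO(\log N+t)$, so the per-gate SK cost is $\BigO((\log N+t)^4)=\poly\log(N,2^t)$, and the total runtime overhead relative to $\mathcal{M}'$ is at most a factor $\poly\log(N,2^t)$. The short case $t<D'|N|$ inherits the $\BigO(2^{-t/2})$ bound from \cref{cor:Rejection_Probabilities_2} since the amplitude on valid strings is untouched to leading order by a $1/\poly(N,2^t)$-accurate compilation, completing the three claims.
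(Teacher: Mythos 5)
Your proposal is correct and follows essentially the same route as the paper: note that the Hamiltonian-simulation subroutine is already over a fixed gate set, count the $\#g=\poly(N,2^t)$ remaining gates, compile each via Solovay--Kitaev, and observe the $\BigO(\log^4(1/\epsilon))$ per-gate cost gives only a $\poly\log(N,2^t)$ overhead. The only (harmless) difference is your choice of per-gate precision $\epsilon=1/(\#g\,P_2(N))$, absorbed into the $1/P_2(N)$ slack, whereas the paper takes the total error down to $2^{-2t}/t$ as in \cref{Lemma:SK_HamSim_Approximation} — note your choice still meets the $\epsilon\le 2^{-2t}$ accuracy needed later for the monotonicity argument since $\#g\ge 2^{4t}$.
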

\begin{proof}
As in \cref{Lemma:SK_HamSim_Approximation} we simply choose the Solovay-Kitaev algorithm to approximate the necessary gates to high enough precision. The details are identical.
\end{proof}

The analysis of the rest of the Hamiltonian is exactly the same as the 1-CRT-PRM case, where we encode the action of $\mathcal{M}''$ in the ground state of a Hamiltonian.

\subsection{Existence of a Unique Critical Line}\label{sec:unique-critical-point-2}
For any fixed $\theta$, it is clear that the derivative bounds on $\partial \eta/\partial \varphi$ follow immediately from \cref{lem:eta-monotonous,lem:eta-solovay}.
In the one-parameter case, this monotonicity of $\eta$ with respect to $\varphi$ ensured that there can only be a single phase transition point $\varphi^*$.
Since in the two-parameter case there is \emph{two} parameters to tune---$\varphi$ and $\theta$---critical points occur as a family $\theta^*(\varphi)$.
What we want to show here is that for each $\varphi$, there exists only a single such critical point $\theta^*(\varphi)$.

\begin{lemma}[$\eta$ Monotonous]\label{Lemma:eta_2_Decreasing}
\checked{James}
    Let $t\ge |N|$.
    $\max_{\ket{\nu}}\eta(N,\varphi, \theta, t, \ket{\nu})$ is strictly monotonically decreasing with respect to $\theta$. More precisely, we have that
\[
    \frac{\partial}{\partial\theta}\eta_{\max}(N,\varphi, \theta, t) \le -1/2.
\]
\end{lemma}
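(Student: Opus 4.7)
The proof is a straightforward envelope-theorem argument. The starting observation is that for each fixed $\ket\nu$, the output state $\ket\chi$ of the QTM from \cref{Lemma:QTM_Output_2-CRT-PRM} is independent of $\theta$, and $\theta$ enters \cref{Eq:eta_Multiple_Variables} solely through the shifted observable $(B-\theta\1)$. Hence
\[
\eta(N,\varphi,\theta,t,\ket\nu) = A(\ket\nu) - \theta\, q(\ket\nu),
\]
where $q(\ket\nu) := \sum_{z\in V_t}|\gamma_z|^2\sum_{x\le 0}\sum_g|\alpha_x(\zp,g)|^2|\kappa_g(\zp)|^2 \in [0,1]$ is the ``accepting flag'' probability $\Tr(\ketbra{11}_f\otimes\1\cdot\ketbra\chi)$, and $A(\ket\nu) := \eta(N,\varphi,0,t,\ket\nu)$. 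Thus each map $\theta\mapsto\eta(\ket\nu,\theta)$ is affine with slope $-q(\ket\nu)\in[-1,0]$.

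Taking the supremum over $\ket\nu$ makes $\eta_{\max}(\theta)$ a pointwise supremum of affine functions, hence convex in $\theta$, and the envelope theorem (or, equivalently, direct inspection of the subdifferential of a supremum of affine functions) gives at any point of differentiability
\[
\frac{\partial\eta_{\max}(\theta)}{\partial\theta} = -q(\ket{\nu^*(\theta)}),
\]
where $\ket{\nu^*(\theta)}$ denotes a maximizer. By \cref{rem:etamax-eigenstate} (we are in the regime $t\ge |N|$) this maximizer may be chosen to be an eigenstate $\ket{g^*}$ of $K_N$. The remaining task is therefore to lower-bound $q(\ket{g^*}) \ge 1/2$ throughout the relevant $\theta$-range, from which both strict monotonicity and the quantitative bound follow.

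For this I use the ground state $\ket{g_0}$ of $K_N$ as a competitor. The standing assumption $\varphi \in [\lmin(K_N)+\delta/3,\lmin(K_N)+2\delta/3]$ together with \cref{Lemma:eta-tech} applied to $\ket{g_0}$ (whose eigenvalue is $\lmin(K_N) \le \varphi - \delta/3$) yields $q(\ket{g_0}) \ge 1 - \BigO(2^{-ct})$, and the $\forall$-TI-APX-SIM promise of \cref{Lemma:TI-APX-SIM_Completeness,cor:all-states-expvalue} forces $\Gamma_g \approx \lambda_{j^*}(B)$ for every $\ket g\in S_\delta$ up to $\BigO(1/P_1(N))$. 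Consequently the affine line realised by $\ket{g_0}$ reads $\theta\mapsto \lambda_{j^*}(B)-\theta$ up to $\BigO(1/P_2(N))$ corrections, and any competing eigenstate maximizer $\ket{g^*}\in S_\delta$ must dominate it; a direct comparison of the two affine functions of $\theta$ then forces $q(\ket{g^*}) \ge q(\ket{g_0}) - \BigO(1/P_2(N)) \ge 1/2$ throughout the regime of \cref{Cor:Rejection_Probabilities_2-2}. Eigenstates $\ket g\notin S_\delta$ satisfy $q_g = \BigO(2^{-ct})$ by \cref{Lemma:Rejection_Probabilities_2} and hence contribute $\eta = \BigO(2^{-ct})$, so they cannot realise the maximum once $\eta_{\max}$ is bounded away from zero. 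The only real subtlety is this affine-line comparison within $S_\delta$; everything else is bookkeeping built on the two envelope-theorem identities above.
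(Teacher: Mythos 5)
Your proposal is correct in substance, and its first half---writing $\eta(N,\varphi,\theta,t,\ket\nu)$ as an affine function of $\theta$ with slope $-q(\ket\nu)$, passing to an eigenstate maximiser via \cref{rem:etamax-eigenstate}, and differentiating through the maximiser---is exactly the paper's skeleton. Where you genuinely diverge is the key quantitative step. The paper simply asserts that for the maximising eigenstate $\ket g$ one has $\sum_{x\le 0}|\alpha_x(N,g)|^2 \ge 1-2^{-t/2}$ (citing \cref{Lemma:String_Encoding}), which gives the stronger slope bound $\le -1+2^{-t/2}$ in two lines; you instead obtain the required $-1/2$ via a competitor argument: the ground state of $K_N$ realises the line $\theta\mapsto \lambda_{j^*}(B)-\theta$ up to $1/\poly N$ corrections (using \cref{Lemma:eta-tech} and the promise of \cref{Lemma:TI-APX-SIM_Completeness}), eigenstates outside $S_\delta$ are negligible, and any maximiser in $S_\delta$ dominating that line must itself have accept weight $q\ge 1/2$. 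The trade-off is this: the paper's route is shorter and yields a tighter constant, but it leaves implicit why the maximiser accepts with near-unit probability (its citation of \cref{Lemma:String_Encoding} only controls the $\gamma_z$, not the $\alpha_x$); your route makes that step explicit, but only by importing standing assumptions that are absent from the lemma statement---$\varphi\in[\lmin(K_N)+\delta/3,\lmin(K_N)+2\delta/3]$, $\theta$ in the window of \cref{Cor:Rejection_Probabilities_2-2}, and effectively $t\ge D'|N|$ rather than merely $t\ge|N|$ (otherwise the $\BigO(2^{-ct})$ bounds of \cref{Lemma:eta-tech,Lemma:Rejection_Probabilities_2} are unavailable and your bound $q(\ket{g_0})\ge 1-\BigO(2^{-ct})$ does not follow). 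These restrictions are precisely the regime in which the lemma is invoked (cf.\ \cref{Theorem:QTM_in_local_Hamiltonian_2_PRM}), and without some such restriction the stated bound is hard to salvage at all (e.g.\ for $\varphi$ below $\lmin(K_N)$ every slope $-q(\ket g)$ is negligibly small), so your conditional version is arguably the honest reading---just state explicitly that you prove the lemma in the regime where it is used rather than as literally stated, and tidy the final comparison to $q(\ket{g^*})\ge q(\ket{g_0})\,(\Gamma_{g_0}-\theta)/(\Gamma_{g^*}-\theta)$, which is where the constant lower bound $\Gamma_{g^*}-\theta\ge 2/5-o(1)$ actually enters.
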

\begin{proof}
    By \cref{rem:etamax-eigenstate}, we can assume that the maximum of $\eta$ is assumed for an eigenstate $\ket g$ of $K_N$.
    This means $\eta_{\max}(N,\varphi,\theta,t) = \eta(N,\varphi, \theta, t, \ket g)$ and thus
\begin{align*}
    \frac{\partial\eta_{\max}}{\partial\theta} = - \sum_{x \le 0}\sum_{g'} |\alpha_x(N,g')|^2 \overbrace{|\kappa_{g'}(N)|^2}^{=\delta_{gg'}} = -\sum_{x \le 0} |\alpha_x(N,g)|^2 \le -1 + \frac{1}{2^{t/2}}.
\end{align*}
In the last step we have applied \cref{Lemma:String_Encoding}.
\end{proof}

The above lemma gives a bound on the derivative of $\eta$ with respect to $\theta$.
However, we are actually interested in the case where the QTM is limited to a universal gate set and $U_\varphi$ and $U_N$.
Thus we need to prove the value of $\eta$ is still strictly decreasing even when the Solovay-Kitaev algorithm is used to approximate the relevant gates.
\begin{corollary}[$\eta''$ Monotonous]\label{Corollary:eta_2_monotonicity}
\checked{James}
For $t\ge|N|$, the quantity $\max_{\ket{\nu}}\eta''(N,\varphi, \theta, t, \ket{\nu})$ is strictly monotonically decreasing.
\end{corollary}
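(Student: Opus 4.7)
The plan is to carry the strict monotonicity in $t$ through the Solovay--Kitaev and Hamiltonian-simulation approximation layers, following the same template used in \cref{lem:eta-solovay} for the 1-CRT-PRM case. In particular, I would first isolate the exact quantity $\eta_{\max}(N,\varphi,\theta,t)$ and show that, in the regime $t\ge |N|$, each refinement from $t$ to $t+1$ bits of QPE precision strictly reduces $\eta_{\max}$; then bound the perturbation caused by the gate approximation and Hamiltonian simulation so tightly that it cannot invert the sign of this discrete difference.

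For the first step, I would revisit the proof of \cref{Lemma:Rejection_Probabilities_2} to track not just the size but also the \emph{sign} of the error terms in the expansion
\[
\eta_{\max}(N,\varphi,\theta,t) = \lambda_{j^*}(B) - \theta + R(t),
\]
where $R(t)$ collects the contributions of $\ket g\notin S_\delta$, the out-of-band QPE amplitudes, and the sub-leading Dirichlet-kernel tails. By inspecting \cref{Lemma:eta-tech} one can see each of these contributions is non-negative and scales as $O(f(t)/2^t)$, with leading coefficient strictly positive; so $R(t+1) < R(t)$ uniformly for $t \ge |N|$, giving a strict decrement $\eta_{\max}(t)-\eta_{\max}(t+1) = \Omega(2^{-ct})$ for a suitable constant $c<2$.

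For the second step, I would use \cref{Lemma:SK_HamSim_Approximation_2} to write $\eta''_{\max}(t) = \eta_{\max}(t)+E(t)$, where the Solovay--Kitaev precision is taken as $\epsilon=2^{-2t}/\poly(N)$ per gate; combined with the $\poly(N,2^t)$ gate count and the Hamiltonian-simulation error bound inherited from \cref{Lemma:2-PRM-Ham_Sim_Error}, this yields $|E(t)|=O(2^{-2t})$. Subtracting,
\[
\eta''_{\max}(t) - \eta''_{\max}(t+1) = \big(\eta_{\max}(t)-\eta_{\max}(t+1)\big) + \big(E(t)-E(t+1)\big) \ge \Omega(2^{-ct}) - O(2^{-2t}),
\]
which is strictly positive for all $t\ge |N|$, establishing the claim.

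The main obstacle will be pinning down the sign in the first step: \cref{Lemma:Rejection_Probabilities_2} only produces absolute value bounds on $R(t)$, and a priori the individual contributions from different eigenstates $\ket g$ and from the $\Gamma(g,\theta)$ factors in \cref{Eq:Gamma_Definition} could partially cancel, leaving the monotonicity direction ambiguous. To overcome this, I would restrict $\ket\nu$ to the optimizing eigenstate $\ket g$ given by \cref{rem:etamax-eigenstate}, so that $\Gamma(g,\theta)$ becomes a single fixed sign, and then re-derive the $O(f(t)/2^t)$ bound in \cref{Lemma:eta-tech} as a signed estimate by keeping the $|\alpha_L|^2$ terms explicit rather than passing to absolute values. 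Once this signed refinement is in place, the transfer to $\eta''_{\max}$ is routine.
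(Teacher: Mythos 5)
There is a genuine gap: you have proved (or attempted to prove) monotonicity in the wrong variable. In context, the corollary asserts that $\max_{\ket\nu}\eta''(N,\varphi,\theta,t,\ket\nu)$ is strictly decreasing \emph{as a function of the threshold parameter $\theta$} (for fixed $N,\varphi,t$ with $t\ge|N|$); this is the statement that feeds into \cref{Theorem:QTM_in_local_Hamiltonian_2_PRM} (monotonicity of $\lmin(H(L,m))$ in $\theta$) and \cref{Lemma:Single_Square_Energy_2}, and is what guarantees a \emph{unique} critical point $\theta^*(\varphi)$ for each fixed $\varphi$. It is the approximate-gate analogue of \cref{Lemma:eta_2_Decreasing}, which explicitly bounds $\partial\eta_{\max}/\partial\theta\le-1/2$. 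The paper's proof is correspondingly short: one repeats the perturbation argument of \cref{lem:eta-solovay}, noting that $\theta$ enters $\eta''$ only linearly through the measured operator $B-\theta\1$ in \cref{Eq:eta_Multiple_Variables} (no approximated gate depends on $\theta$), so the $\theta$-derivative of $\eta''$ equals minus the accepting-flag weight of the approximate circuit's output; since the exact derivative is $\le -1/2$ and the Solovay--Kitaev/Hamiltonian-simulation errors shift this weight by at most $O(\poly(t)2^t\epsilon)=o(1)$ for $\epsilon=2^{-2t}$, the derivative stays bounded away from zero and negative, and a maximum over $\ket\nu$ of uniformly strictly decreasing affine functions of $\theta$ is strictly decreasing.

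Your argument instead tries to show that $\eta''_{\max}$ decreases as the precision parameter $t$ increases. Besides not being the claim that is needed (monotonicity in $t$ plays no role in establishing the unique critical line), it is also not supportable from the cited lemmas: \cref{Lemma:Rejection_Probabilities_2} and \cref{Lemma:eta-tech} give only one-sided or absolute-value bounds of size $O(f(t)/2^t)$ on the residual $R(t)$, with no lower bound on the decrement $R(t)-R(t+1)$, so the asserted gap $\Omega(2^{-ct})$ against the $O(2^{-2t})$ approximation error has no basis; moreover $\eta_{\max}$ is manifestly \emph{not} monotone in $t$ globally (it is $O(2^{-t/2})$ for $t<|N|$ and jumps up to $\approx\lambda_{j^*}(B)-\theta$ once $t\ge D'|N|$), so even your signed refinement of \cref{Lemma:eta-tech} via \cref{rem:etamax-eigenstate} would at best control a regime-restricted statement that the paper never uses. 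The fix is simply to redo the argument with $\theta$ as the variable: differentiate \cref{Eq:eta_Multiple_Variables} in $\theta$, invoke \cref{Lemma:eta_2_Decreasing} for the exact circuit, and bound the change in the accepting-flag weight caused by gate approximation as in \cref{lem:eta-solovay}.
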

\begin{proof}
The proof is identical to \cref{lem:eta-solovay}, except with a bound on the gradient satisfying $\partial\eta/\partial\theta \le -1/2$ (i.e.\ a constant) from \cref{Lemma:eta_2_Decreasing}.
\end{proof}

We now have bounds on the value of $\eta''$ and its gradient with respect to $\theta$, as per \cref{Lemma:SK_HamSim_Approximation_2} and \cref{Corollary:eta_2_monotonicity}.
We can now encode the QTM and its associated computation in a standard-form Hamiltonian in the same way as the 1-CRT-PRM case.
In particular, we notice that all the lemmas/theorems after \cref{Theorem:QTM_in_local_Hamiltonian} in the 1-CRT-PRM proof depend only on the value of $\eta''$ and bounds on its derivatives.

\begin{theorem}[Phase Comparator Hamiltonian]\label{Theorem:QTM_in_local_Hamiltonian_2_PRM}
\checked{James}

    Let $N\in\field N$, and $\varphi\in[0,1]$.
    For any Hamiltonian $K_N$
    there exists a constant $d>0$, and Hermitian operators $h^{(1)}\in\mathcal B(\field C^d)$, $h^{(2)}\in\mathcal B(\field C^d\times\field C^d)$, such that
    \begin{enumerate}
        \item $h^{(1)},h^{(2)}\ge 0$, with matrix entries in $\field{Z}$.
        \item $h^{(2)}=A+\ee^{\ii\pi\varphi} B + \ee^{-\ii\pi\varphi} B^\dagger +  \ee^{\ii\pi 0.\enc(N) }C + \ee^{-\ii\pi 0.\enc(N)}C^\dagger + \theta(D+D^\dagger)$, where
        	\begin{itemize}
				\item $B,C,D \in \mathcal{B}(\C^d)$ with coefficients in $\field{Z}$, and
				\item $ A\in \mathcal{B}(\C^d)$ is Hermitian and with coefficients in $\field{Z}+\field{Z}/\sqrt 2+\ee^{\ii\pi/4}\field Z$.
			\end{itemize}
    \end{enumerate}
   Define a translationally-invariant nearest-neighbour Hamiltonian on a spin chain of length $L$ via
    \[
        \HTM(L)  \coloneqq  \sum_{i=1}^L h^{(1)}_i + \sum_{i=1}^{L-1} h^{(2)}_{i,i+1}.
    \]
    Denote with $\ket*{\blacksquare}$ and $\ket*{\midend}$ two special basis states of $\field C^d$, and for $m\in\field N$, denote the \emph{bracketed} subspace
    \[
    \Sbr(m)  \coloneqq  \ket*{\midend} \ox (\field C^d)^{\ox m} \ox \ket*{\blacksquare} \ox (\field C^d)^{\ox (L-m)} \ket*{\midend}.
    \]
    Then $\HTM(L)$ has the following properties.
    \begin{enumerate}\addtocounter{enumi}{2}
        \item $\HTM(L) = \bigoplus_{m=1}^{L-1} H(L,m) \oplus R$, where $\HTM(L,m)  \coloneqq  \HTM(L)|_{\Sbr(m)}$; i.e.\ $H(L)$ is block-diagonal with respect to the subspaces spanned by $\Sbr(m)$, and $R$ captures the remaining block.
        \item $R \ge 1$.
        \item $\lmin(H(L,m)) \ge 1$ if $m=0,1$.
        \item
        There exist $L_N = \poly N$ and $m_N = \poly\log_2 N$ and an integer constant $b$, such if $\varphi\in [\lmin(K_N)+\delta/3,\lmin(K_N)+2\delta/3]$, the ground state energy of the other blocks satisfies
        {\small
        \[
            \lmin(H(L,m))  \begin{cases}
                \leq \frac{1.05}{256L^b}\frac{1}{2} & (m,L)=(m_N,L_N) \land \theta \le \lambda_{j*}(B)-\frac{1}{2}-1/P_2(N) \\
                \geq \frac{0.99}{256L^b}\frac{3}{5} & (m,L)=(m_N,L_N) \land \theta \ge \lambda_{j*}(B)-\frac{2}{5}+1/P_2(N) \\
                \geq \frac{0.99}{256L^b}\frac{3}{5} & m < |N| \lor (m,L) \neq (m_N, L_N),
            \end{cases}
        \]}
        where $T(L)=L^{b/2}$ is the runtime of the encoded computation. \label{Point:Energy_Bounds_2}
        \item  $\lmin(H(L,m))$ is strictly monotonically increasing with $\theta$ for \label{Point:Monotonicity_2} 
        \[
        \theta\in \big[\lambda_{j^*}(B) - 1/2+ 1/P_2(N)\,, \lambda_{j^*}(B)-2/5-1/P_2(N)\big].
        \]
        
    \end{enumerate}
\end{theorem}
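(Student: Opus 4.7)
The plan is to mirror the construction of \cref{Theorem:QTM_in_local_Hamiltonian} almost verbatim, feeding in the two-parameter QTM $\mathcal M''(N,\varphi,\theta,t,\ket\nu)$ from \cref{Lemma:SK_HamSim_Approximation_2} in place of $\tilde{\mathcal M}$, and using the bounds on $\eta_{\max}''(N,\varphi,\theta,t)$ from \cref{cor:Rejection_Probabilities_2} in place of those on $\tilde\eta$. As before, I first translate $\mathcal M''$ into a $2$-local translationally-invariant quantum Thue system via \cite{Bausch2016}; all classical rules yield matrix entries in $\field Z$, and the only nontrivial ones come from the available quantum gates. The gate set now consists of a fixed universal set (CNOT, Hadamard, T) together with the two phase gates $U_\varphi,U_N$, giving the claimed form of $h^{(2)}$---except for the $\theta(D+D^\dagger)$ contribution. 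For the latter, I introduce a single fixed off-diagonal two-qudit rule with unit coefficient and multiply it by the externally supplied scalar $\theta$; this realises precisely the linear $\theta$-dependence in the local term and leaves all other matrix entries integer or in $\field Z+\field Z/\sqrt2+\ee^{\ii\pi/4}\field Z$ exactly as in \cref{Theorem:QTM_in_local_Hamiltonian}.

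Bracketing, block decomposition, and the $R\ge 1$ and $\lmin(H(L,m))\ge 1$ for $m=0,1$ bounds are imported unchanged: the marker $\ket\blacksquare$ is a passive state under the encoded QTM's transitions, so $\HTM(L)$ is block-diagonal with respect to $\Sbr(m)$, and non-bracketed configurations and short blocks are statically penalised as in \cite[Sec.~5]{Gottesman_Irani_2009}. The computation encoded inside each bracketed block is: (i) read $L,m$ from the tape into binary; (ii) run $\mathcal M''$ with precision input $t=m$ and $N$ extracted via QPE on $U_N$; (iii) verify that $L=L_N\coloneqq 2+4t+t+N$, that $t=m=m_N$ for the threshold $m_N=\Theta(|N|)$ dictated by $D'$ from \cref{cor:Rejection_Probabilities_2}, and that $t\ge D'|N|$; if any check fails a flag is set. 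Pad all branches to a common runtime $T(L)=L^{b/2}$ with a global clock and idling, as in \cite[Sec.~4]{Cubitt_Perez-Garcia_Wolf2015}.

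The one genuine departure from the one-parameter proof is the \emph{output penalty}. Rather than penalising the complement of the accepting flag subspace, I apply a penalty on the final time step proportional to $\ketbra 0_a \otimes \big(\ketbra{11}_f \otimes (B-\theta\1)\otimes\1\big)$, where the ancilla is prepared by the rotation of \cref{Eq:Qubit_Rotation} with $\delta=1/(256L^b)$ to implant the $1/256T^2$ prefactor without adding a dependence of $h^{(1)},h^{(2)}$ on $L$. For a valid history state $\ket\chi$, the resulting penalty evaluates to $\frac1{256T^2}\eta_{\max}''(N,\varphi,\theta,t)$ on the energy-minimising choice of $\ket\nu$; combining \cref{Theorem:Precise_Energies} with \cref{cor:Rejection_Probabilities_2} then directly yields the three-case bound in point \ref{Point:Energy_Bounds_2}, where the failure branches $t<|N|$, $t<D'|N|$, $(L,m)\neq(L_N,m_N)$ all get absorbed into the ``$\geq\frac{0.99}{256L^b}\frac35$'' case thanks to the $\BigO(2^{-t/2})$ bound in \cref{cor:Rejection_Probabilities_2} and the $\pi^2/24 < 2/5$-style slack. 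Monotonicity in point \ref{Point:Monotonicity_2} follows from \cref{Corollary:eta_2_monotonicity}: $\eta_{\max}''$ is strictly monotonically decreasing in $\theta$, so $1-\eta_{\max}''/\mathrm{const}$ is strictly increasing, and since the standard-form construction lets this penalty be represented as addition of a positive semidefinite projector-valued operator (cf.\ \cite{Watson_2019}), adding it can only increase eigenvalues, giving strict monotonicity of $\lmin(H(L,m))$.

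The main obstacle is bookkeeping: ensuring that the $\theta$-dependent output penalty $B-\theta\1$ factors through a \emph{single} two-qudit local term linear in $\theta$, while the rest of the Hamiltonian retains integer or $\field Z[\ee^{\ii\pi/4}]$ matrix entries independent of $\theta$. This is handled by placing the observable $B$ adjacent to the output qubit (using the marker flag from \cref{Lemma:TI-APX-SIM_Completeness} so that $B$ stays $1$-local) and implementing the $-\theta\1$ offset by a single constant-coefficient diagonal rule scaled by $\theta$; together these only contribute one new off-diagonal $\theta$-dependent term $\theta(D+D^\dagger)$ to $h^{(2)}$. All remaining steps---the choices of $L_N$, $m_N$, the idling padding, and the final invocation of \cref{Theorem:Precise_Energies}---are identical to the one-parameter case, whence the stated bounds.
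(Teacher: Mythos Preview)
Your overall strategy is exactly the paper's: reduce to \cref{Theorem:QTM_in_local_Hamiltonian} and swap in the two-parameter QTM together with the bounds from \cref{cor:Rejection_Probabilities_2} and \cref{Corollary:eta_2_monotonicity}. The block decomposition, padding, ancilla rotation, and invocation of \cref{Theorem:Precise_Energies} all match.

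There is, however, one concrete error in the output penalty that makes the argument internally inconsistent. The paper's penalty is
\[
\Pi \;=\; \frac{1}{256L^b}\Bigl(\1 \;-\; \ketbra{11}_f\otimes(B-\theta\1)\otimes\1\Bigr),
\]
so that $\Tr(\ketbra\chi\Pi)=\frac{1}{256L^b}\bigl(1-\eta(N,\varphi,\theta,t,\ket\nu)\bigr)$ and the energy-minimising $\ket\nu$ gives $\frac{1}{256L^b}(1-\eta_{\max})$. You penalise $\ketbra{11}_f\otimes(B-\theta\1)$ directly (without the leading $\1-\cdot$), which would make the penalty equal to $\frac{1}{256T^2}\eta$; minimising that over $\ket\nu$ yields $\eta_{\min}$, not $\eta_{\max}$, and the bounds in \cref{Cor:Rejection_Probabilities_2-2} only control $\eta_{\max}$. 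The specific constants in point~\ref{Point:Energy_Bounds_2} are precisely $1-\tfrac12=\tfrac12$ and $1-\tfrac25=\tfrac35$, which only emerge from the $(1-\eta_{\max})$ form. Your own monotonicity paragraph already tacitly switches to $1-\eta_{\max}''$, which is the tell: insert the missing $\1-\cdot$ in the penalty operator and correct the sentence about what the penalty evaluates to, and the rest of your sketch goes through as written.
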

\begin{proof}
The proof is almost identical to \cref{Theorem:QTM_in_local_Hamiltonian}, except for points \ref{Point:Energy_Bounds_2} and \ref{Point:Monotonicity_2}.
Point \ref{Point:Energy_Bounds_2} follows by noting that the new penalty is 
\begin{align*}
    \Pi &= \frac{1}{256L^b}(\1 -\ketbra{11}_f\ox (B - \theta \1)\ox \1), \\
    \implies Tr(\ketbra{\chi}\Pi) &= \frac{1}{256L^b}(1 - \eta(N,\varphi, \theta, t,\ket{\nu})) \\
    \implies \min_{\ket{\nu}}Tr(\ketbra{\chi}\Pi) &= \frac{1}{256L^b}(1 - \max_{\ket{\nu}}\eta(N,\varphi, \theta, t,\ket{\nu})).
\end{align*}
Combining \cref{cor:Rejection_Probabilities_2} with the bounds in \cref{Theorem:Precise_Energies} gives the energy bounds in the theorem statement.

Point \ref{Point:Monotonicity_2} follows directly from \cref{Corollary:eta_2_monotonicity}, or from the fact that $\Pi = \frac{1}{256L^b}(\1 -\ketbra{11}_f\ox (B - \theta \1)\ox \1)$, and hence the eigenvalues of the projector are non decreasing as $\theta$ increases.
Since for $\theta$ in the range given, the projector is always positive semi-definite, the fact that the ground state energy is strictly monotonically increasing follows (cf.~\cref{Corollary:eta_2_monotonicity}).

As in the proof of \cref{Theorem:QTM_in_local_Hamiltonian}, we can add another ancilla to remove the explicit $L$-dependence of $\Pi$ (as per \cref{Eq:Qubit_Rotation}, and the method of equalising the bounds on the second and third line for $\lmin(H(L,m))$ is identical as well.
\end{proof}

We now prove a lemma analogous to \cref{Lemma:Single_Square_Energy} showing that we can combine the QTM encoding the computation with a negative energy ``marker Hamiltonian'' such that the total energy is positive if $\theta$ is sufficiently larger than $\lmin_{j^*}(B)$, and negative if it is sufficiently smaller.
\begin{lemma} \label{Lemma:Single_Square_Energy_2}
\checked{James}
Let $H \coloneqq \HTM\otimes\1 + \1\otimes\HM$ on a spin lattice. Then its ground state is a product state $\ket{\psi}\ox \ket{T}_c$, where $\ket T_c$ is the checkerboard tiling from $\HM$, and $\ket\psi$ the ground state of $\HTM$.
Consider an $L\times L$ square denoted $S(L)$ within the tiling and let $H|_{S(L)}$ be the Hamiltonian restricted to such a square.
Then, adopting the notation from \cref{eq:bds-22}, and assuming $|\bra{\psi}B\ket{\psi} - \lambda_{j^*}(B)|=\BigO(1/P_1(N))$ for all states $\ket{\psi}\in S_\delta$, and letting $\varphi\in [\lmin(K_N) + \delta/3,\lmin(K_N) + 2\delta/3]$, we have that
\[
    \lmin(H|_{S(L)}) \begin{cases}
        < 0  &  \text{if\ } (L,m) = (L_N, m_N) \land \theta \le \lambda_{j*}(B)-\frac{1}{2}-1/P_2(N)  \\
        \ge 0 & \text{if\ }(L,m) = (L_N, m_N) \land \theta \geq \lambda_{j*}(B)-\frac{2}{5}+1/P_2(N) \\
        \ge 0 & \text{if\ } m < |N| \lor (L,m) \neq (L_N, m_N).
    \end{cases}
\]
Furthermore, if $L=L_N$ and  $m = m_N$, then there is exactly one point $\theta^*$ where $\lmin(H|_{S(L)})$ changes from $<0$ to $>0$ which occurs in the interval
\[
    \theta^*\in \big[\lambda_{j*}(B)-\frac{1}{2}-1/P_2(N)\,, \lambda_{j*}(B)-\frac{2}{5}+1/P_2(N)\big].
\]
\end{lemma}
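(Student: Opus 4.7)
The strategy mirrors the one-parameter case of \cref{Lemma:Single_Square_Energy} almost verbatim, with the role of $\varphi$ replaced by $\theta$ and the numerical bounds $\pi^2/24$ vs.\ $1-\pi^2/24$ replaced by $1/2$ vs.\ $3/5$. First I would choose the Marker falloff $f(L)$ so that the negative marker contribution is sandwiched strictly between the two QTM energy bounds from \cref{Theorem:QTM_in_local_Hamiltonian_2_PRM}. Concretely, setting $f(L)=5+\log_4(L^b)$ (which is computable in time and space $\le kL$ for a constant $k$, as in the 1-CRT-PRM construction) yields $\tfrac{9}{4}\cdot 4^{-f(L)}=\tfrac{9}{16}\cdot\tfrac{1}{256L^b}$, and one checks
\[
\frac{1.05}{256L^b}\cdot\frac{1}{2}\;<\;\frac{9}{16}\cdot\frac{1}{256L^b}\;<\;\frac{0.99}{256L^b}\cdot\frac{3}{5}
\]
for all sufficiently large $L$, leaving room for the higher-order $9/4^{f(L)}$ correction in the upper bound of \cref{Theorem:Marker_Energy}. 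This is the reason we picked the specific constants $1/2$ and $2/5$ in \cref{Cor:Rejection_Probabilities_2-2}: they bracket $7/16$ with enough slack to absorb the Marker Hamiltonian's upper/lower discrepancy.

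Next, since $H=\HTM\otimes\1+\1\otimes\HM$ is a sum on disjoint tensor factors, its spectrum is simply $\spec(\HTM)+\spec(\HM)$, so by \cite[Lem.~F.1]{Bausch_Cubitt_Watson2019} the ground state factorises and the per-square energy is $\lmin(\HTM(L,m))+\lmin(\HM(L)|_S)$. The three cases in the statement then follow by combining this with the bounds from \cref{Theorem:QTM_in_local_Hamiltonian_2_PRM} and \cref{Theorem:Marker_Energy}: in the first case the QTM contribution is below $\tfrac{1.05}{256L^b}\cdot\tfrac{1}{2}$ while the marker contributes $\le -(9/4-9/4^{f(L)})\cdot 4^{-f(L)}$, giving a strictly negative total; in the second and third cases the QTM contribution is at least $\tfrac{0.99}{256L^b}\cdot\tfrac{3}{5}$, which strictly exceeds $\tfrac{9}{4}\cdot 4^{-f(L)}\ge|\lmin(\HM(L)|_S)|$, so the total is non-negative.

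For the uniqueness of the critical $\theta^*$ in the interval $[\lambda_{j^*}(B)-\tfrac12-1/P_2(N),\,\lambda_{j^*}(B)-\tfrac25+1/P_2(N)]$, I would invoke point~\ref{Point:Monotonicity_2} of \cref{Theorem:QTM_in_local_Hamiltonian_2_PRM}: on this interval $\lmin(\HTM(L_N,m_N))$ is strictly monotonically increasing in $\theta$, while $\lmin(\HM(L)|_S)$ is independent of $\theta$; thus the per-square energy is strictly monotonically increasing in $\theta$ throughout and crosses zero exactly once. The main obstacle worth care is only the arithmetic sandwiching in the first paragraph, i.e.\ verifying that the constants $1/2, 3/5, 9/16$ together with the $1.05$ and $0.99$ prefactors from \cref{Theorem:Precise_Energies} leave a genuinely positive gap for all sufficiently large $L$; everything else is a mechanical transcription of the one-parameter argument.
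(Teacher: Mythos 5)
Your proposal is correct and follows essentially the same route as the paper's proof: the same choice $f(L)=5+\log_4(L^b)$ sandwiching the Marker bonus between the $\tfrac{1.05}{256L^b}\cdot\tfrac12$ and $\tfrac{0.99}{256L^b}\cdot\tfrac35$ bounds, the product-spectrum argument via \cite[Lem.~F.1]{Bausch_Cubitt_Watson2019}, and uniqueness of $\theta^*$ from the $\theta$-monotonicity in point~\ref{Point:Monotonicity_2} of \cref{Theorem:QTM_in_local_Hamiltonian_2_PRM}. Your statement that $\lmin(\HTM)$ is strictly \emph{increasing} in $\theta$ is in fact the correct reading of that point (the paper's proof text says ``decreasing'', an apparent typo), so no changes are needed.
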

\begin{proof}
We choose the Marker falloff $f(L)$ such that
\[
    \frac{9}{4}4^{-f(L)} = \frac{9}{16}\frac{1}{256L^b} 
    \quad\Longrightarrow\quad
    f(L) = 5 + \log_4(L^b).
\]
Note $f(L)$ is trivially computable in time and space $kL$ for some constant $k$.

We now compare the energy of the marker Hamiltonian (given by \cref{Theorem:Marker_Energy}) with this $f(L)$  with the energy of the Hamiltonian encoding the QTM (given in \cref{Theorem:QTM_in_local_Hamiltonian_2_PRM}).
For all $L\ge 7$, one can show by explicit calculation that
\[
  \frac{1.05}{256L^b}\frac{1}{2} \leq 4^{-f(L)}\left(\frac{9}{4}- \frac{10}{2^L}\right)  \quad \text{and} \quad  \frac{9}{4}4^{-f(L)}   \leq \frac{0.99}{256L^b}\frac{3}{5}
\]
where $b$ is the runtime exponent of $T=T(L)=L^{b/2}$, as given in \cref{Theorem:QTM_in_local_Hamiltonian_2_PRM}.

As the spectrum is product by construction, the joint spectrum is then $\spec(H) = \spec(\HTM) + \spec(\HM)$, and the rest follows from \cite[Lem.~F.1]{Bausch_Cubitt_Watson2019}.

As per point \ref{Point:Monotonicity_2} of \cref{Theorem:QTM_in_local_Hamiltonian_2_PRM}), the ground state energy $\lmin(\HTM(L))$ is strictly monotonically decreasing for $\theta\in \big[\lambda_{j^*}(B) - 1/2+ 1/P_2(N)\,, \lambda_{j^*}(B)-2/5-1/P_2(N)\big]$. 
There is thus a unique point at which $|\lmin(\HTM(L))| = |\lmin(\HM(L)|_S)|$, and it occurs for energy values corresponding to $\theta$ in the given interval.
\end{proof}

As in the one-parameter case, the final step is then to apply \cref{Lemma:Finite_Size_Energy} to the Hamiltonian $H$ from \cref{Lemma:Single_Square_Energy_2}; the resulting Hamiltonian $H'$ is then combined with a trivial, a dense, and a guard Hamiltonian, to lift the ground state energy to a ground state energy density statement. 
As aforementioned, this modifies the Hamiltonian so that phase transitions can occur between the ground state of the checkerboard Hamiltonian and the ground state of a trivial zero energy state.
Analogous to \cref{Lemma:Two_Phases}, we obtain the following central result.
\begin{theorem} \label{Theorem:2-PRM-Crit_GS_Difference}
\checked{James}
Let $K_N\in (\C^d)^{\ox N}$ be a the Hamiltonian from \cite{Watson_Bausch_Gharibian_2020} described in \cref{Lemma:TI-APX-SIM_Completeness} such that $\delta = \Omega(N^{-D})$ for some constant $D$.
Define the order parameter $O_{A/B}$ acting on a const-sized subset of the lattice as in \cref{Lemma:Two_Phases}.
We can explicitly construct a Hamiltonian $H^\Lambda(N,\varphi, \theta)=\sum_{\langle i,j \rangle}h^N_{i,j}(\varphi, \theta) + \sum_{i\in \Lambda}h^{N}_i$ such that, in the infinite lattice size limit the following conditions hold.
For any $\varphi\in [\lmin(K_N)+\delta/3,\lmin(K_N)+2\delta/3]$ and supposing for the $\lambda_{j^*}(B)\in \{1,2\}$ which satisfies $|\bra{\psi_0}B\ket{\psi_0} - \lambda_{j^*}(B)|=\BigO(1/P_1(N))$, where $\ket{\psi_0}$ is the ground state of $K_N$, then:
\begin{itemize}
    \item if $\theta \ge \lambda_{j*}(B)-\frac{2}{5}+1/P_2(N)$:
    \begin{enumerate}[i]
        \item $H^\Lambda$ is gapped with spectral gap 1.
        \item product ground state.
        \item has order parameter expectation value $\langle O_{A/B}\rangle = 1$.
    \end{enumerate}
    \item if $\theta \le \lambda_{j*}(B)-\frac{1}{2}-1/P_2(N)$:
    \begin{enumerate}[i]
        \item $H^\Lambda$ is gapless.
        \item has a ground state with algebraically decaying correlations.
        \item has order parameter expectation value $\langle O_{A/B}\rangle = 0$.
    \end{enumerate}
\end{itemize}

Furthermore, for any $\varphi$ in the given interval, this Hamiltonian has exactly one critical point in terms of $\theta$, which we denote $\theta^*$. 
This occurs in the interval
\[
    \theta^* \in \big[\lambda_{j*}(B)-\frac{1}{2}-1/P_2(N)\,, \lambda_{j*}(B)-\frac{2}{5}+1/P_2(N)\big].
\]
\end{theorem}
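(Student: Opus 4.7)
The strategy mirrors the 1-CRT-PRM construction in \cref{Lemma:Two_Phases,Theorem:Crit_GS_Difference}, transplanted to the two-parameter setting where the role of the $\varphi$-dependent dichotomy is now played by the $\theta$-dependent dichotomy established in \cref{Lemma:Single_Square_Energy_2}. The first step is to take $H \coloneqq \HTM \otimes \1 + \1 \otimes \HM$ with $\HTM$ as in \cref{Theorem:QTM_in_local_Hamiltonian_2_PRM} and $\HM$ as in \cref{Theorem:Marker_Energy} (with $f(L)$ chosen as in \cref{Lemma:Single_Square_Energy_2}). Since the spectra are additive and since \cref{Lemma:Single_Square_Energy_2} already secures the per-square dichotomy $\lmin(H|_{S(L_N)})<0$ vs.\ $\ge 0$ under the two $\theta$ cases, applying the analogue of \cref{Lemma:Finite_Size_Energy} (an energy shift of $+1$ via a projector-sum, so incomplete squares contribute non-negatively) yields $H'$ with $\lmin(H'(\varphi,\theta)) \ge 1$ in the $\theta \ge \lambda_{j^*}(B) - 2/5 + 1/P_2(N)$ case, and $\lmin(H'(\varphi,\theta)) \to -\infty$ as the lattice grows in the $\theta \le \lambda_{j^*}(B) - 1/2 - 1/P_2(N)$ case.

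Next I would copy verbatim the lifting construction from the proof of \cref{Lemma:Two_Phases}: define
\[
H^\Lambda(N,\varphi,\theta) \coloneqq H' \otimes \1_2 \oplus 0_3 + \1_1 \otimes \Hdense \oplus 0_3 + 0_{1,2} \oplus \Htrivial + \Hguard,
\]
where $\Hdense$ is the critical 1D XY model (dense spectrum in $[0,\infty)$ with algebraically decaying correlations \cite{Lieb_Schultz_Mattis_1961}), $\Htrivial$ is diagonal with unique zero-energy product ground state $\ket{0}^{\otimes \Lambda}$ and gap $1$, and $\Hguard$ enforces the direct-sum splitting. The guard structure gives $\spec(H^\Lambda) = \{0\} \cup (\spec(H') + \spec(\Hdense)) \cup G$ with $G \subseteq [1,\infty)$. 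When $\lmin(H')\ge 1$, the trivial summand wins: $H^\Lambda$ has the product ground state $\ket{0}^{\otimes\Lambda}$, is gapped with gap $1$, and the order parameter $O_{A/B} = |F|^{-1}\sum_{i\in F}(0_{1,2}\oplus \ketbra{0}_3)^{(i)}$ evaluates to $1$. When $\lmin(H') \to -\infty$, the $H'\otimes\1_2$ summand is pulled arbitrarily low, $\Hdense$ contributes its dense spectrum above it, so $H^\Lambda$ becomes gapless with algebraically-decaying correlations inherited from $\Hdense$, and $\langle O_{A/B}\rangle = 0$.

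For uniqueness of the critical $\theta^*$ in the interval $[\lambda_{j^*}(B)-1/2-1/P_2(N),\,\lambda_{j^*}(B)-2/5+1/P_2(N)]$, I would invoke point~\ref{Point:Monotonicity_2} of \cref{Theorem:QTM_in_local_Hamiltonian_2_PRM}: $\lmin(H(L_N,m_N))$ is strictly monotonically increasing in $\theta$ throughout that interval. Combined with the $\theta$-independence of $\lmin(\HM(L)|_S)$, the crossing condition $|\lmin(\HTM(L_N))| = |\lmin(\HM(L_N)|_S)|$ is satisfied at a single value of $\theta$, and for each larger $L$ the alignment is preserved by how $f(L)$ was chosen. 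This single per-square crossing lifts immediately to a single phase transition point of $H^\Lambda$ in the thermodynamic limit, since the sign of $\lmin(H'(\varphi,\theta))-1$ is what selects between the trivial and dense summands. The $\varphi$-interval hypothesis $\varphi \in [\lmin(K_N)+\delta/3,\lmin(K_N)+2\delta/3]$ is used only insofar as it is what lets \cref{Lemma:Single_Square_Energy_2} apply; within it, nothing else varies with $\varphi$ in the argument above.

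The main obstacle, as in the 1-parameter case, is the precise matching between the $1/256 L^b$-scale penalty produced by the history-state construction and the $4^{-f(L)}$-scale Marker bonus: one must verify the two explicit inequalities
\[
\tfrac{1.05}{256L^b}\cdot\tfrac12 \le 4^{-f(L)}\bigl(\tfrac94 - \tfrac{10}{2^L}\bigr)
\quad\text{and}\quad
\tfrac94\cdot 4^{-f(L)} \le \tfrac{0.99}{256L^b}\cdot\tfrac35
\]
that make the dichotomy survive the shift to checkerboard bonuses; but this is already discharged inside \cref{Lemma:Single_Square_Energy_2} and needs no new work here. Everything else is an assembly of pieces already developed, with the crucial monotonicity input being \cref{Corollary:eta_2_monotonicity} promoted to a lower bound on $\partial\lmin(\HTM)/\partial\theta$ in the relevant interval via the standard-form penalty structure.
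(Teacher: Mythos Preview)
Your proposal is correct and follows essentially the same approach as the paper: the paper's own proof is a terse one-sentence reference back to the 1-parameter construction (\cref{Lemma:Finite_Size_Energy} followed by the $\Hdense/\Htrivial/\Hguard$ assembly of \cref{Lemma:Two_Phases}), replacing the $\varphi$-bounds from \cref{Lemma:Single_Square_Energy} with the $\theta$-bounds from \cref{Lemma:Single_Square_Energy_2}, and you have faithfully unpacked exactly that.
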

\begin{proof}
The Hamiltonian which is constructed is done in the same way as \cref{sec:phase-comparator-ham}---we create a Hamiltonian with an energy shifted up by 1, as per \cref{Lemma:Finite_Size_Energy} and construct a new Hamiltonian using the Hamiltonians $H_\mathrm{dense}$, $H_\mathrm{guard}$, and $H_\mathrm{trivial}$---except now we use the $\theta$ bounds from \cref{Lemma:Single_Square_Energy_2} (an equivalent to \cref{Lemma:Two_Phases} can trivially be proven using the new bounds).
\end{proof}

\subsection{Reduction of $\forall$-TI-APX-SIM to 2-CRT-PRM} \label{Sec:2-hardness-reduction}

\begin{figure}[!tb]
	\centering
	\hspace{-1cm}\includegraphics[height=9cm]{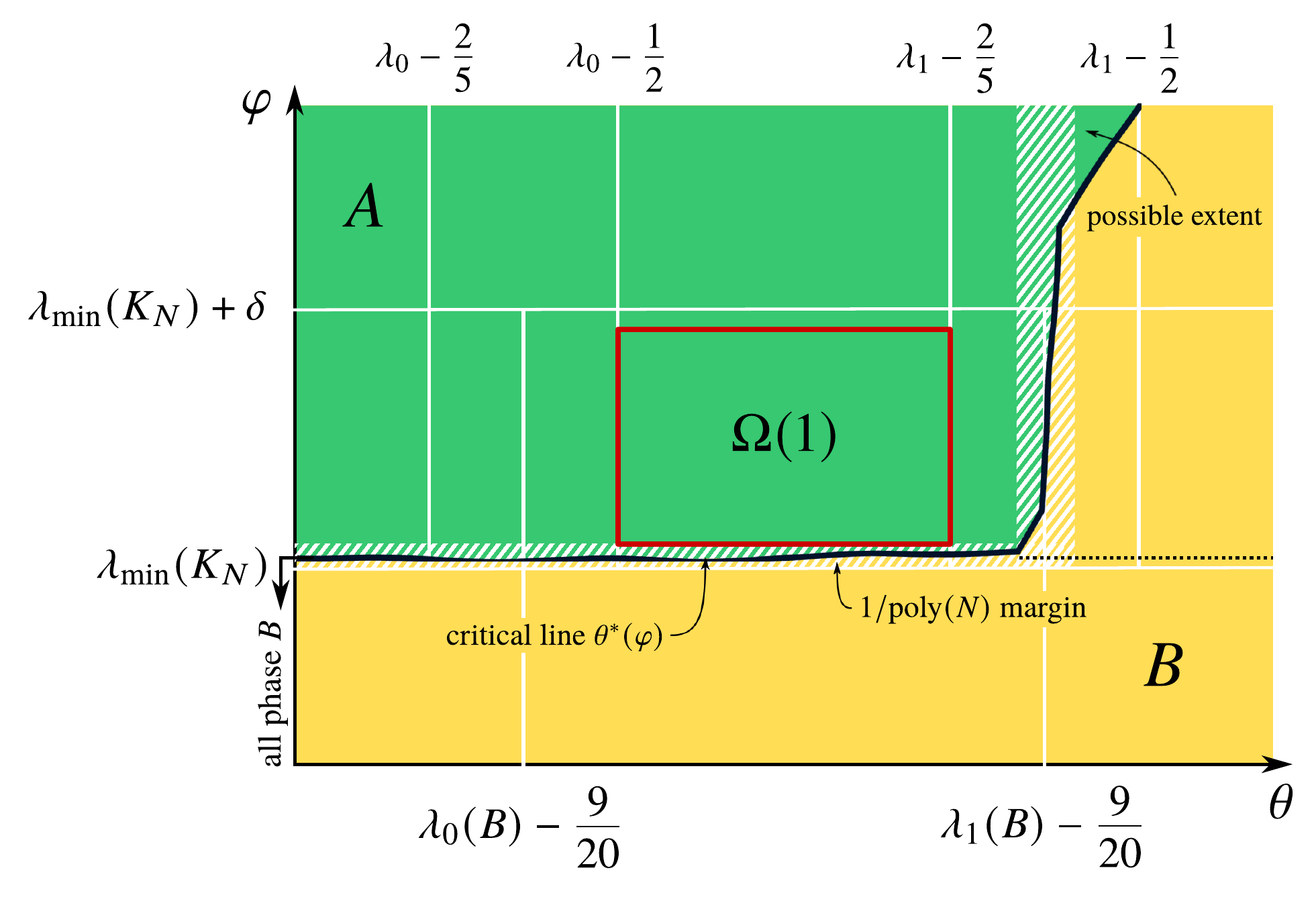}
	\caption{The YES case two-parameter phase diagram.
	The shaded white area shows an uncertainty region (of size $1/\poly N$); its \emph{inner} extent indicates the minimal area circumscribed by the critical line $\theta^*(\varphi)$; as shown in \cref{sec:unique-critical-point-2} we know that the true critical line has precisely one critical point whenever $\varphi$ is fixed and $\theta$ is varied, as well as vice versa; i.e., the critical line $\theta^*(\varphi)$ is a function, and monotonous, within an $\Omega(1)$ area of the phase space.
	It encompasses an $\Omega(1)$ area (given $\varphi$ is scaled such that effecively $\delta=\Omega(1)$, as explained in \cref{Corollary:2-PRM_Rescale}) of the phase space for which the system is guaranteed to be completely in phase $A$ in this case. 
	The location of the rectangle is efficiently computable relative to the point along the $\varphi$ axis below which the system is completely in phase $B$, irrespective of $\theta$.
	The NO case phase diagram is shown in \cref{Fig:2Param_Phase_Diagram-NO}.}
	\label{Fig:2Param_Phase_Diagram-YES}
\end{figure}

\begin{figure}[!tb]
	\centering
	\hspace{-1cm}\includegraphics[height=9cm]{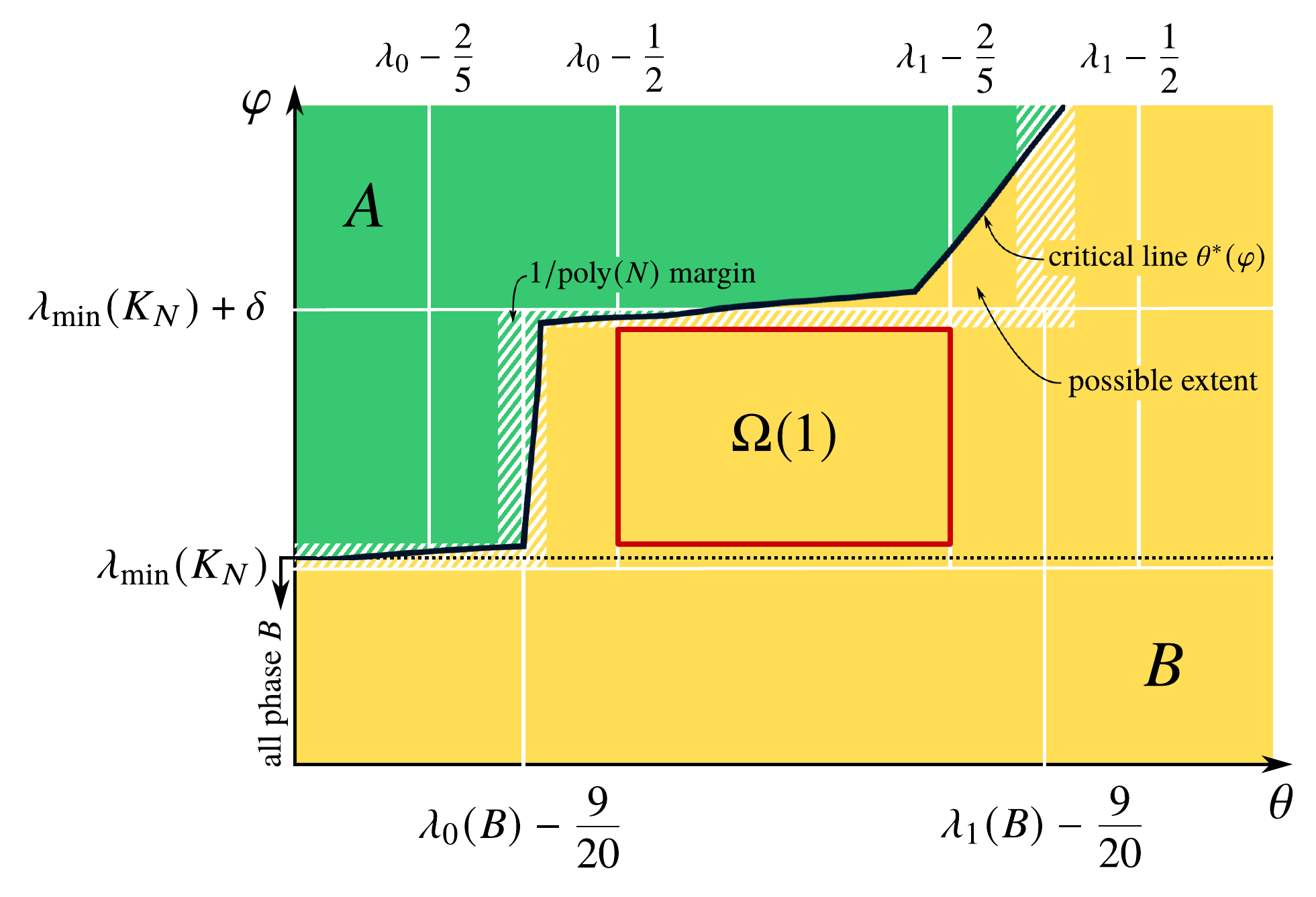}
	\caption{The NO case two-parameter phase diagram.
	The shaded white area shows an uncertainty region (of size $1/\poly N$); its \emph{outer} extent indicates the maximal area circumscribed by the critical line $\theta^*(\varphi)$
	It encompasses an $\Omega(1)$ area of the phase space for which the system is guaranteed to be completely in phase $B$ in this case, and its location is efficiently computable relative to the point along the $\varphi$ axis below which the system is completely in phase $B$, irrespective of $\theta$.
	The YES case phase diagram is shown in \cref{Fig:2Param_Phase_Diagram-YES}.}
	\label{Fig:2Param_Phase_Diagram-NO}
\end{figure}

As per the 1-CRT-PRM case, we will find it useful to rescale the QPE process implemented $\varphi$ as per \cref{rem:scaling-up-varphi}. 
This has the effect of mapping $\varphi \rightarrow N^{-D}\varphi$.
This allow us to write the following corollary:
\begin{corollary}[$\varphi$-Rescaled Hamiltonian]\label{Corollary:2-PRM_Rescale}
\checked{James}
\Cref{Theorem:2-PRM-Crit_GS_Difference} holds for a modified Hamiltonian if $\varphi\in [N^D(\lmin(K_N)+\delta/3),N^D(\lmin(K_N)+2\delta/3)]$ such that $N^D\delta = \Omega(1)$.
\end{corollary}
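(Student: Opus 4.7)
The plan is to apply \cref{rem:scaling-up-varphi} directly to the phase comparator QTM constructed in \cref{sec:qtm-2}, choosing the rescaling exponent $x = \lceil D \log_2 N \rceil$. With this choice, the QTM implements the comparison $\varphi \lessgtr 2^{-x} \lambda$, i.e.\ effectively compares $N^D \varphi$ against $\lambda$ rather than $\varphi$ itself. Consequently, the entire analysis of \cref{Lemma:Rejection_Probabilities_2,Cor:Rejection_Probabilities_2-2} goes through verbatim after the substitution $\varphi \longmapsto N^{-D} \varphi$; in particular, the critical comparison window of width $\delta = \Omega(N^{-D})$ in the unrescaled $\varphi$-coordinate is stretched to an interval of width $N^D \delta = \Omega(1)$ in the new coordinate.

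The first step of the proof is to confirm that the rescaled phase comparator still satisfies the gappedness hypothesis of \cite{Sheridan2008} required by \cref{rem:scaling-up-varphi}, namely that the phase of $U_\varphi$ is bounded away from zero. This is easy: by construction $\varphi$ lives in $[N^D(\lmin(K_N) + \delta/3), N^D(\lmin(K_N) + 2\delta/3)]$, which under the assumption $N^D \delta = \Omega(1)$ is bounded away from $0$ by a constant; shifting the origin of $\varphi$ by a constant if necessary ensures we may apply the remark. The second step is to verify that the amplitude error $\BigO(2^{-2t})$ introduced by the rescaled construction is absorbed into the same error budget already allocated to Trotterised Hamiltonian simulation (\cref{Lemma:2-PRM-Ham_Sim_Error}) and Solovay--Kitaev gate approximation (\cref{Lemma:SK_HamSim_Approximation_2}); as all three error sources are of order $2^{-2t}$, the triangle inequality yields a total deviation $\BigO(2^{-2t})$, which is still small enough to preserve the bounds in \cref{Cor:Rejection_Probabilities_2-2} and the monotonicity statement in \cref{Corollary:eta_2_monotonicity}.

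The third step is to propagate the rescaling into the history-state Hamiltonian of \cref{Theorem:QTM_in_local_Hamiltonian_2_PRM}: since the local terms only depend on $\varphi$ through the phase $\ee^{\ii \pi \varphi}$ in $h^{(2)}$, the modified QTM simply uses a different sequence of controlled rotations, which does not change the local dimension, the block-diagonal structure in $\Sbr(m)$, nor the matrix-entry classification. The combination with the Marker Hamiltonian in \cref{Lemma:Single_Square_Energy_2} is then entirely unaffected, and the proof of \cref{Theorem:2-PRM-Crit_GS_Difference} transfers directly, yielding the same dichotomy but now with respect to the rescaled $\varphi$ range.

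The main potential obstacle is the interaction between the rescaling and the monotonicity argument for $\eta$ with respect to $\varphi$: because the phase gradient circuit now contains controlled powers $U_\varphi^{2^m}$ with gate-count overhead $\poly(2^t)$, the derivative bound used in \cref{lem:eta-solovay} must be re-checked. However, the rescaling only multiplies the $\varphi$-dependent phases by the fixed factor $2^{-x}$, so $\partial \tilde\eta/\partial \varphi$ is simply rescaled by the same factor, and the Solovay--Kitaev precision $\epsilon = 2^{-2t}$ still dominates the deviation; hence the monotonicity in \cref{Corollary:eta_2_monotonicity} with respect to $\theta$ (which is what is actually needed for the uniqueness of the critical line) is untouched, and the corollary follows.
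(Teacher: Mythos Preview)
Your approach is correct and matches the paper's: both simply invoke \cref{rem:scaling-up-varphi} to rescale the phase comparator so that the effective comparison becomes $N^{-D}\varphi \lessgtr \lambda$, stretching the $\delta$-width window to $\Omega(1)$. The paper's own proof is essentially a one-line citation of \cref{rem:scaling-up-varphi} (the sentence immediately preceding the corollary already states ``This has the effect of mapping $\varphi \rightarrow N^{-D}\varphi$''), whereas you have spelled out the error-propagation and monotonicity checks explicitly; these extra details are all sound.

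One small slip: in your first paragraph you write that the rescaled QTM ``implements the comparison $\varphi \lessgtr 2^{-x}\lambda$'', but with $a=\lambda$ and $b=\varphi$ in \cref{rem:scaling-up-varphi} the comparison is $\lambda \lessgtr 2^{-x}\varphi$, i.e.\ $\varphi$ is compared against $N^D\lambda$ rather than $N^D\varphi$ against $\lambda$. Your subsequent substitution $\varphi\longmapsto N^{-D}\varphi$ and the gappedness check on $U_\varphi$ are the right way round, so this is only a notational wobble and does not affect the argument.
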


Having introduced this rescaling of $\varphi$, we now show that determining the phase transition point $\theta^*(\varphi)$ is $\PQMAEXP$-hard by showing that determining $\theta^*(\varphi)$ for a specific $\BigO(1)$ interval of $\varphi$ is gives the answer to a $\forall$-TI-APX-SIM instance.
\begin{theorem}[\PQMAEXP-hardness] \label{Lemma:2-CRT-PRM_QMA-hardness}
\checked{James}
There is a polynomial time Turing reduction from $\forall$-TI-APX-SIM to 2-CRT-PRM, and hence 2-CRT-PRM is \PQMAEXP-hard.
\end{theorem}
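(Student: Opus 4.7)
The plan is to reduce $\forall$-TI-APX-SIM (\PQMAEXP-complete by \cref{Lemma:TI-APX-SIM_Completeness}) to 2-CRT-PRM via a polynomial-time Turing reduction. Given an instance specified by the chain length $N$, the local term $k$ of $K_N$, and the observable $A$, the reduction produces a family of two-parameter translationally-invariant local interactions $h^{(l)}(\theta,\varphi)$ whose phase diagram's critical line encodes the APX-SIM answer.

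First I would invoke the $\varphi$-rescaled construction of \cref{Corollary:2-PRM_Rescale} to build $H^\Lambda(N,\varphi,\theta)$. By \cref{Theorem:2-PRM-Crit_GS_Difference}, once $\varphi$ lies in the rescaled window $W \coloneqq N^D[\lmin(K_N)+\delta/3,\,\lmin(K_N)+2\delta/3]$ (which has $\Omega(1)$ width after rescaling), the critical value $\theta^*(\varphi)$ is confined to an $\BigO(1/\poly N)$-neighbourhood of $\lambda_{j^*}(B)-7/16$, where $\lambda_{j^*}(B)\in\{1,2\}$ is the eigenvalue of $B$ singled out by the APX-SIM promise: one case corresponds to \YES and the other to \NO. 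Because the two admissible positions of $\theta^*(\varphi)$ are separated by a constant (namely $\lambda_1(B)-\lambda_0(B)=1$), the restriction of the phase diagram to $W$ sits in one of two configurations separated by a constant gap in $\theta$, which is precisely the YES/NO dichotomy asked by 2-CRT-PRM.

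The next step is to locate $W$. Since $\lmin(K_N)$ is not known a priori, I would first approximate it by binary search, each query comparing $\lmin(K_N)$ against a test value via an instance of the translationally-invariant Local Hamiltonian problem, which lies in \QMAEXP. Since $\delta=\Omega(N^{-D})$, a precision of $\BigO(N^{-D})$ suffices, and $\BigO(|N|\poly|N|)$ oracle queries are enough to achieve it. This yields a Turing reduction: polynomially many oracle calls, each of polynomial size, followed by the classical assembly of a 2-CRT-PRM instance whose rectangle $R=[\alpha_1,\beta_1]\times[\alpha_2,\beta_2]$ is placed so that one of its sides straddles the correct subinterval of $W$ along the $\varphi$-axis, and whose $\theta$-range is anchored around one of the two candidate positions of $\theta^*(\varphi)$.

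The main obstacle will be conforming exactly to \cref{def:2-crt-prm}, which requires the reference offset $y=\max\{\varphi^*:H(\theta,\varphi)\text{ in phase }B\ \forall\,\varphi<\varphi^*,\,\forall\,\theta\}$ to be well-defined and to serve as a natural anchor below which the system is uniformly in phase $B$. For the present construction this can be enforced by ensuring that, sufficiently below $W$ in $\varphi$, the phase-estimation acceptance amplitude $\eta''$ is negligible independently of $\theta$, so that the negative-energy marker bonus dominates and the system is gapless across the whole $\theta$ axis; a clean transition can then be arranged to occur as $\varphi$ enters $W$. Alternatively, the equivalent formulation \cref{def:2-crt-prm_2} sidesteps this bookkeeping entirely: once $W$ has been located by the binary search above, one directly supplies an interval $S_\kappa\subseteq W$ together with thresholds $\alpha_1,\beta_1$ bracketing the two candidate positions of $\theta^*(\varphi)$, and the YES/NO verdict of 2-CRT-PRM coincides exactly with the answer to the original $\forall$-TI-APX-SIM instance. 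Combined with the polynomial oracle query count, this closes the reduction and establishes \PQMAEXP-hardness; together with the matching containment in \cref{lem:tech-containment-3}, it proves \cref{th:main-2}.
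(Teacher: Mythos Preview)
Your overall understanding of the construction is right, but there is a genuine gap in the reduction itself: the binary-search step to locate $\lmin(K_N)$ is both unnecessary and, as written, invalid.

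It is unnecessary because the problem definition of 2-CRT-PRM already absorbs this work. In both \cref{def:2-crt-prm} and \cref{def:2-crt-prm_2}, the query region is specified \emph{relative} to the reference point $y$, which is part of the promise and implicitly determined by the instance. Concretely, in \cref{def:2-crt-prm_2} the interval $S_\kappa=[y+\kappa,\,y+2\kappa]$ is fixed once you supply $\kappa$; you never have to know $y$ (and hence never have to know $\lmin(K_N)$). The reduction is therefore simply: given the $\forall$-TI-APX-SIM instance $(K_N,A)$, output the 2-CRT-PRM instance consisting of the local terms of $H^\Lambda(N,\varphi,\theta)$ from \cref{Corollary:2-PRM_Rescale} together with $\kappa=\Theta(N^D\delta)=\Omega(1)$ and thresholds $\alpha_1,\beta_1$ separating the two candidate positions of $\theta^*$. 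By \cref{Theorem:2-PRM-Crit_GS_Difference}, the 2-CRT-PRM answer coincides with the APX-SIM answer. This is a many-one reduction, which is exactly what the paper's (terse) proof gives.

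It is invalid as written because a Turing reduction \emph{to} 2-CRT-PRM may only query a 2-CRT-PRM oracle. Your binary search queries a \QMAEXP oracle (via the Local Hamiltonian problem), which is the wrong oracle; this would only be acceptable if you first argued that those LH queries can themselves be answered by 2-CRT-PRM calls, which you do not do (and which is not needed once you drop the binary search).

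A minor quantitative slip: $\theta^*(\varphi)$ is only pinned to the $\BigO(1)$-width interval $[\lambda_{j^*}(B)-\tfrac12-\tfrac{1}{P_2(N)},\,\lambda_{j^*}(B)-\tfrac25+\tfrac{1}{P_2(N)}]$, not to a $1/\poly N$-neighbourhood; what matters (and what you correctly note) is that the two candidate intervals for $j^*\in\{0,1\}$ are separated by a constant.
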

\begin{proof}
We refer the reader to \cref{Fig:2Param_Phase_Diagram-YES} and \cref{Fig:2Param_Phase_Diagram-NO} to aid this proof.

From \cite{Watson_Bausch_Gharibian_2020} it is known that determining whether $\bra{\psi}B\ket{\psi}>\beta$ or $\bra{\psi}B\ket{\psi}<\alpha$ for states $\ket{\psi}\in S_\delta$ is \PQMAEXP-hard.
From \cref{Theorem:2-PRM-Crit_GS_Difference} we know that for all $\varphi\in [N^D(\lmin(K_N)+\delta/3),N^D(\lmin(K_N)+2\delta/3)]$ it holds that the critical point $\theta^*(\varphi)$ is determined by whether $\bra{\psi}B\ket{\psi}>\beta$ or $\bra{\psi}B\ket{\psi}<\alpha$ for states $\ket{\psi}\in S_\delta$ and $\beta-\alpha = \Omega(1)$.
\end{proof}

\begin{corollary}
2-CRT-PRM is \PQMAEXP-complete.
\end{corollary}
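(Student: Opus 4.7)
The plan is to assemble the corollary by directly combining the two main pieces already established: containment of 2-CRT-PRM in \PQMAEXP from \cref{lem:tech-containment-3}, and \PQMAEXP-hardness of 2-CRT-PRM from \cref{Lemma:2-CRT-PRM_QMA-hardness}. Since both ingredients are already available, the only real work is checking that the promises match up — i.e., that the hardness construction yields an instance on which the containment algorithm is guaranteed to run correctly.

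First, I would invoke \cref{lem:tech-containment-3}, which gives a \PQMAEXP-algorithm that, on input a local term describable in $n = \BigO(\log L)$ bits and satisfying either the local-global phase condition (\cref{Def:Local-Global_Phase}) or the local-global gap condition (\cref{Def:Local-Global_Gap}) with $L_0 = \poly N$, approximates the critical line $\theta^*(\varphi)$ to the precision required by \cref{def:2-crt-prm} or its equivalent reformulation \cref{def:2-crt-prm_2}. Concretely, the algorithm first approximates the reference offset $y$ along the $\theta = 0$ axis by binary search (using $\poly n$ many queries), then probes the order parameter (via \cref{Lemma:Order_Parameter_Complexity}) or the spectral gap (via \cref{Lemma:Approximate_Spectral_Gap}) inside the interval $S_\kappa$, again by polynomially many \QMAEXP-oracle queries.

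Second, I would invoke \cref{Lemma:2-CRT-PRM_QMA-hardness} (the \PQMAEXP-hardness via Turing reduction from $\forall$-TI-APX-SIM). The explicit family of translationally-invariant nearest-neighbour Hamiltonians constructed in \cref{Theorem:2-PRM-Crit_GS_Difference} — combined with the $\varphi$-rescaling of \cref{Corollary:2-PRM_Rescale} so that the critical window becomes $\Omega(1)$ wide — produces, for each $\forall$-TI-APX-SIM instance, a 2-CRT-PRM instance whose \YES/\NO answer matches the APX-SIM answer. Crucially, by construction this Hamiltonian family satisfies the local-global gap property of \cref{Def:Local-Global_Gap} (and likewise the phase version) with $L_0 = \poly N$, as follows from the same argument used to verify the promise in \cref{Lemma:Promise_Satisfied_Phase} for the one-parameter case (a single marker square of size $L_N = \poly N$ needs to fit inside the lattice before its negative energy contribution dominates).

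The only subtlety worth flagging is this compatibility of promises: the containment algorithm demands the local-global threshold $L_0$ to be poly-time computable and independent of the parameters $(\varphi,\theta)$, and both conditions are met by the hardness construction. With that check, hardness plus containment immediately give \PQMAEXP-completeness of 2-CRT-PRM, completing \cref{th:main-2}. There is no genuine obstacle here; the corollary is a bookkeeping step whose only non-trivial content lives in the two results being combined.
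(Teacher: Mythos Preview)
Your proposal is correct and follows essentially the same approach as the paper, which simply cites hardness from \cref{Lemma:2-CRT-PRM_QMA-hardness} and containment from \cref{lem:tech-containment-3}. One minor note: the promise verification for the two-parameter construction is actually recorded separately in the paper as \cref{Lemma:Promise_Satisfied_Phase_2}, so you can cite that directly rather than arguing by analogy with \cref{Lemma:Promise_Satisfied_Phase}.
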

\begin{proof}
Hardness and containment follow from \cref{Lemma:2-CRT-PRM_QMA-hardness,lem:tech-containment-3}, respectively.
\end{proof}

The two phase diagrams in the YES and NO cases are shown in \cref{Fig:2Param_Phase_Diagram-YES,Fig:2Param_Phase_Diagram-NO}.

\subsection{Verifying the Local-Global Promise}

\begin{lemma} \label{Lemma:Promise_Satisfied_Phase_2}
\checked{James}
Consider an instance of the Hamiltonian
\[
H^{\Lambda(L)}(N,\theta, \varphi)=\sum_{\langle i,j \rangle}h^N_{i,j}(\theta, \varphi) + \sum_{i\in \Lambda}h^{N}_i
\]
as defined in \cref{Corollary:2-PRM_Rescale} and \cref{Theorem:2-PRM-Crit_GS_Difference}, with local terms describable in $|N|$ bits.
Then the Hamiltonian satisfies the global-local phase assumption (\cref{Def:Local-Global_Phase}) for the order parameter $O_{A/B}$ given in \cref{Theorem:2-PRM-Crit_GS_Difference}, and for $L_0=N^{2+a+b}$.
It also satisfies the global-local gap promise in \cref{Def:Local-Global_Gap} for the same $L_0$.
\end{lemma}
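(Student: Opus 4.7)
The plan is to mirror the one-parameter argument in \cref{Lemma:Promise_Satisfied_Phase} essentially verbatim, substituting the two-parameter energy bounds from \cref{Lemma:Single_Square_Energy_2,Theorem:2-PRM-Crit_GS_Difference} for their one-parameter analogues. The Hamiltonian $H^{\Lambda(L)}(N,\theta,\varphi)$ has exactly the same macroscopic structure as in the 1-CRT-PRM case: the bonus comes from a 2D Marker tiling with falloff $f(L)=5+\log_4(L^b)$, the penalty comes from the phase-comparator history state Hamiltonian, and the overall Hamiltonian is shifted by 1 and combined with $\Hdense$, $\Htrivial$, $\Hguard$ by the energy-shift trick. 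What changes is only the \emph{criterion} under which a single checkerboard square $S(L_N)$ has negative ground state energy (namely the two-parameter condition on $(\theta,\varphi)$ from \cref{Theorem:2-PRM-Crit_GS_Difference}).

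First, I would fix $L>L_N=2+4t+t+N = \BigO(N)$ and apply the analogue of \cref{Lemma:Finite_Size_Energy}: in the parameter region where $\lmin(H|_{S(L_N)})\ge 0$ the ground state of the shifted Hamiltonian is the trivial product state $\ket{0}^{\Lambda(L)}$ with energy $1$ and expectation $\langle O_{A/B}\rangle = 1$; in the region where $\lmin(H|_{S(L_N)})<0$, the ground state is a checkerboard of computation squares with total energy $1 + \lfloor L/L_N\rfloor^2 \lmin(H|_{S(L_N)})$ and $\langle O_{A/B}\rangle = 0$. The transition from one regime to the other at finite $L$ happens when $\lfloor L/L_N\rfloor^2 \lmin(H|_{S(L_N)}) < -1$. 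Combining \cref{Theorem:Precise_Energies,Lemma:Single_Square_Energy_2} we have $\lmin(H|_{S(L_N)})\le -c_1/L_N^b$ for some constant $c_1>0$ whenever $(\theta,\varphi)$ sits in the negative-energy region of \cref{Lemma:Single_Square_Energy_2}; this gives the threshold
\[
L_0 \;\ge\; c_1^{1/2}\, L_N^{\,1+b/2} \;=\; \BigO(N^{1+b/2}\cdot L_N) \;=\; \BigO(N^{2+b}),
\]
independent of both $\theta$ and $\varphi$, and computable in $\poly(|N|)$ by computing $L_N$, $b$, and $c_1$.

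For the local--global \emph{phase} condition in \cref{Def:Local-Global_Phase}, I would then verify that for all $L\ge L_0$ and for all low-energy states $\ket\psi$ with $\bra\psi H^{\Lambda(L)}\ket\psi\le \lmin(H^{\Lambda(L)})+\omega$ with $\omega = \Omega(1/\poly L)$ chosen smaller than the $\BigO(1/L^2)$ finite-size gap of the checkerboard ground space, the order-parameter expectation $\langle O_{A/B}\rangle$ is either $\le 1/p(L)$ or $\ge 1/q(L)$ for appropriate polynomials with $1/p(L)-1/q(L)=\Omega(1/\poly L)$: in the trivial regime the low-energy states are $\Omega(1)$-close to $\ket{0}^{\otimes\Lambda(L)}$ so $\langle O_{A/B}\rangle\ge 1-o(1)$; in the checkerboard regime the low-energy states have negligible overlap with the $\Htrivial$ subspace (by the $\Hguard$ structure), giving $\langle O_{A/B}\rangle = \BigO(1/\poly L)$. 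Since these two regimes agree with the thermodynamic-limit phases $A$ and $B$ established in \cref{Theorem:2-PRM-Crit_GS_Difference}, the phase condition is satisfied with the stated $L_0$.

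For the local--global \emph{gap} condition in \cref{Def:Local-Global_Gap}, the same choice of $L_0$ works: in the trivial regime the spectral gap is inherited from $\Htrivial$ and is $\ge 1$, while in the nontrivial regime $\Hdense$ (the critical XY model) is pulled below the $\Htrivial$ ground state and contributes an $\BigO(1/L^2)$ gap, so choosing $p(L)=L$ and $q(L)=L^3$ (say) satisfies $1/p(L)-1/q(L)=\Omega(1/\poly L)$. The only step requiring care is checking that the $\varphi$-rescaling of \cref{Corollary:2-PRM_Rescale} does not affect these scalings, which follows since the rescaling is by a fixed polynomial factor $N^D$ that enters only through the local coupling constants and not through the lattice structure. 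The main (very mild) obstacle is making sure the $\omega$ parameter in \cref{Def:Local-Global_Phase} is selected below the finite-size gap $\BigO(1/L^2)$ so that the order-parameter computation is well-defined on the low-energy subspace; this is done by taking $\omega = 1/L^3$, which still fits inside the $\Omega(1/\poly L)$ requirement.
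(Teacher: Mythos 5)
Your proposal is correct and takes essentially the same route as the paper: the paper's own proof of this lemma simply states that it is identical to the one-parameter case (\cref{Lemma:Promise_Satisfied_Phase}), and your write-up reproduces exactly that argument with the two-parameter energy bounds from \cref{Lemma:Single_Square_Energy_2,Theorem:2-PRM-Crit_GS_Difference} substituted in, arriving at the same threshold $L_0 = \BigO(N^{2+b})$ and the same phase/gap verifications.
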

\begin{proof}
Same as \cref{Lemma:Promise_Satisfied_Phase}.
\end{proof}

We have already shown that $\theta^*(\varphi)$ is unique, see \cref{Theorem:2-PRM-Crit_GS_Difference}; this shows that for a fixed $\varphi$, $H(\theta, \varphi)$ is has a unique critical point.
Finally, we can show that if we restrict ourselves to the set of parameters $(\theta=\mathrm{const}, \varphi)$ (i.e.\ where $\theta$ is held constant, but $\varphi$ is varied), then there is a unique phase transition point $\varphi^*=\varphi^*(\theta) \in [0, \poly N]$. 

\begin{lemma}\label{lem:2-crt-satisfy-1-crt-props}
$H(\theta, \varphi)|_{\theta=0}$ has precisely one critical point $\varphi^*$ delineating the two phases $A$ and $B$.
\end{lemma}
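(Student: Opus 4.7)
The plan is to show that along the $\theta=0$ slice, the ground-state energy density of the constructed Hamiltonian changes sign at exactly one value of $\varphi$, following the template of the one-parameter analysis in \cref{Lemma:Single_Square_Energy,Theorem:Crit_GS_Difference}. First I would substitute $\theta=0$ into \cref{Eq:eta_Multiple_Variables} and invoke \cref{rem:etamax-eigenstate} to restrict the maximisation over $\ket\nu$ to eigenstates of $K_N$, which yields
$\eta_{\max}(\varphi) = \max_{g}\bigl(\sum_{x\le 0}|\alpha_x(N,g)|^2\bigr)\,\bra{g}B\ket{g}$: a pointwise maximum over the QPE acceptance masses of the individual eigenvalues $\lambda_g$ of $K_N$, weighted by positive factors $\bra{g}B\ket{g}\in[1,2]$.

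The next step is to show that $\eta_{\max}(\varphi)$ is monotonically non-decreasing in $\varphi$ and strictly increasing inside any transition window of width $\sim 2^{-t}$ centred at an eigenvalue $\lambda_g$. Each acceptance weight $\sum_{x\le 0}|\alpha_x(N,g)|^2$ is the cumulative QPE output mass on the acceptance side around phase $\varphi-\lambda_g$; the derivative bounds in \cref{lem:eta-monotonous} together with their Solovay--Kitaev and Hamiltonian-simulation-stable versions in \cref{lem:eta-solovay,Lemma:SK_HamSim_Approximation_2} carry over verbatim, so each summand is non-decreasing in $\varphi$ and strictly increasing on a $2^{-t}$-wide window around $\lambda_g$. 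A pointwise maximum of non-decreasing functions is non-decreasing, and it inherits strict monotonicity wherever a single argument strictly dominates.

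Feeding this into \cref{Lemma:Single_Square_Energy_2} gives that $\lmin(H(0,\varphi)|_{S(L)})$ is monotonically non-increasing in $\varphi$. For $\varphi$ well below $\lmin(K_N)$ we have $\eta_{\max}=\BigO(2^{-ct})$, so the output penalty $\tfrac{1}{256 L^b}(1-\eta_{\max})$ dominates the marker bonus of size $\tfrac{9}{16}\cdot\tfrac{1}{256 L^b}$, yielding strictly positive square energy and the trivial product phase; for $\varphi$ well above $\lmin(K_N)$, choosing $\ket\nu$ to be the eigenstate of $K_N$ with the largest $\bra{g}B\ket{g}$ pushes $\eta_{\max}$ close to $2$, so the penalty becomes negative and the square energy is strictly negative, placing the system in the dense phase via the splitting of \cref{Theorem:2-PRM-Crit_GS_Difference}. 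Monotonicity of the square energy then forces exactly one sign change. The main subtlety I foresee is that $\eta_{\max}$ can exhibit plateaus between eigenvalues of $K_N$ because of the outer $\max$ over $\ket\nu$, so the square energy need not be strictly decreasing everywhere; however, these plateaus can only occur where the square energy lies strictly on one side of zero, and the actual sign change is confined to a single transition window of width $\sim 2^{-t}$ around a specific $\lambda_g$ where the strict monotonicity applies. Hence the critical point $\varphi^*$ is uniquely pinned down, exactly as in the closing argument of \cref{Theorem:Crit_GS_Difference}.
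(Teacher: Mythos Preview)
Your proposal is correct and follows the same approach as the paper: the paper's proof simply asserts that the one-parameter analysis of \cref{Theorem:Crit_GS_Difference} carries over to the $\theta=0$ slice (with $G_N$ replaced by $K_N$), citing \cref{lem:eta-monotonous} for monotonicity in $\varphi$, while you spell out the details the paper leaves implicit---the factorisation $\eta_{\max}=\max_g\bra{g}B\ket{g}\cdot\sum_{x\le0}|\alpha_x(N,g)|^2$ at $\theta=0$, the monotonicity of each factor, and the resulting unique sign change of the square energy. One minor point: for the high-$\varphi$ bound it is cleaner to witness with the ground state $\ket{g_0}$ of $K_N$ (which already gives $\eta_{\max}\ge\bra{g_0}B\ket{g_0}\cdot(1-\pi^2/24)\ge 1-\pi^2/24>7/16$) rather than ``the eigenstate with largest $\bra{g}B\ket{g}$'', since the latter may have large energy and hence small acceptance weight when $\varphi$ is only just above $\lmin(K_N)$.
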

\begin{proof}
It is straightforward to check that the resulting Hamiltonian satisfies the preconditions of \cref{Theorem:Crit_GS_Difference}, but where now
\[
    \varphi^* \in \big[ \lmin(K_N) - N^{-4C} + \BigO(N^{-6C}, \lmin(K_N) - \BigO(N^{-6C} \big].
\]
The second claim then follows by monotonicity of $\eta$ in $\varphi$, \cref{lem:eta-monotonous}.
\end{proof}

By this result (which is slightly stronger than the necessary promise of \cref{def:2-crt-prm}, which only demands \cref{lem:2-crt-satisfy-1-crt-props} to hold for $\theta=0$) we can now use containment of 1-CRT-PRM in \PQMAEXP to answer the location of $y=\varphi^*$, i.e.\ the point in the 2D phase diagram below which the system is completely in phase $B$, as explained in \cref{sec:containment}.

\section{Discussion and Conclusion}\label{sec:discussion}

\paragraph{Completeness for Other Families of Hamiltonians.}
In the hardness results above we proved hardness for a family of Hamiltonians with a phase transition between a ``highly classical'' phase with product state eigenstates, zero correlations and with $\langle O_{A/B}\rangle=1$, and a separate ``highly quantum'' phase with algebraically decaying correlations and $\langle O_{A/B}\rangle=0$.
However, we can generalise our construction to other phases fairly trivially in a way that we outline here
\begin{remark}
Let $h_X,h_{\neg X}\in \mathcal{B}(\C^d\ox\C^d)$ be an interaction terms between neighbouring qudits such that for all lattices sizes $L\geq L_0$ the ground state Hamiltonians
\begin{align*}
    H_X^{\Lambda(L)} = \sum_{i\in \Lambda(L)}h_{X(i, i+1)} 
    \quad\text{and}\quad H_{\neg X}^{\Lambda(L)} = \sum_{i\in \Lambda(L)}h_{\neg X(i, i+1)}
\end{align*}
have and do not have some property $X$, respectively, and both have zero energy ground state $\lmin(H_X^{\Lambda(L)})=\lmin(H_{\neg X}^{\Lambda(L)})=0$ .
Further assume that the property $X$ can be efficiently distinguished (e.g.\ by a local order parameter) and that the Hamiltonian constructed in \cref{Lemma:Single_Square_Energy_2} does not have property $X$.
Then we can explicitly construct a Hamiltonian with a phase transition between two phases, one with property X and another without property X, such that 2-CRT-PRM is \PQMAEXP-complete.
\end{remark}
\begin{proof}
The proof is a simple modification of Hamiltonian used to prove \cref{Theorem:Crit_GS_Difference} and \cref{Theorem:2-PRM-Crit_GS_Difference}.
However, instead of using $H_{dense}$ and $H_{trivial}$, we replace them with $H_X^{\Lambda(L)}$ and $H_{\neg X}^{\Lambda(L)}$.
As before, define
\[
    \Hguard \coloneqq \sum_{i\sim j}\left( \1_{1,2}^{(i)}\otimes \1_3^{(j)} + \1_3^{(i)} \otimes \1_{1,2}^{(j)} \right).
\]
Let
\[
    H^\Lambda(N,\varphi, \theta) \coloneqq H'(\varphi, \theta) \otimes \1_2 \oplus 0_3 + \1_1 \otimes H_{\neg X} \oplus 0_3 + 0_{1,2} \oplus H_X  + \Hguard.
\]

Since $\lmin(H_X^{\Lambda(L)})=\lmin(H_{\neg X}^{\Lambda(L)})=0$, the phase transitions still occur in the same place as \cref{Theorem:2-PRM-Crit_GS_Difference} (i.e. at the point where $\lmin(H(\varphi, \theta))=0)$.
Furthermore, due to the properties of $H_X^{\Lambda(L)}$ and $H_{\neg X}^{\Lambda(L)}$, these phase can be distinguished by the property $X$.
Thus distinguishing where the phase changes for $\varphi\in [N^D(\lmin(K_N)+\delta/3),N^D(\lmin(K_N)+2\delta/3)]$ must be \PQMAEXP-hard by the same proof as \cref{Lemma:2-CRT-PRM_QMA-hardness}.
\end{proof}

A major constraint on this theorem is that $H(\varphi, \theta)$ must not have property $X$.
In particular, this means our construction means we cannot create a phase transition between two gapped phases, two frustration free phases, etc.
Cases such as these must be considered separately. 



\paragraph{Comparison to Undecidability and Uncomputability Results.}
We also take care to distinguish our results from the size driven quantum phase transitions \cite{Bausch_19_Size_Driven}.
Here we are promised that in the thermodynamic we are always in a particular phase, but that the transition takes place at some uncomputably large lattice size. 
Our result differs significantly in that the spectral gap and phase are explicitly computable for some finite size lattice, and the system can be gapped or gapless in the thermodynamic limit.

We also emphasise the differences to the previous undecidability results \cite{Cubitt_Perez-Garcia_Wolf2015, Bausch_2020_Undecidability, Bausch_Cubitt_Watson2019}.
There are two key differences here: the promise of the global-local gap/phase means that the gap/phase are computable. 
Furthermore, for undecidability or uncomputability to occur, it is often remarked that there must be an ``infinity'' to encode the different computational outputs.
The result is that in the previous works there must be an infinite number of phase transitions\footnote{For various technical reasons it is not possible to define phases for the systems studied in \cite{Cubitt_Perez-Garcia_Wolf2015, Bausch_2020_Undecidability}. Instead the authors remark that there are an uncountably infinite number of points where the system changes from being gapped to gapless.}.
The systems in our work contain either only a single phase transition, or a finite number, which arguably better reflects the systems we see in nature.

\paragraph{Containment of the 1-CRT-PRM Case.}
In the 1-CRT-PRM result, we prove that determining the gap/phase 1-CRT-PRM for an $\Omega(1)$ region of the $\varphi\in [0,1]$ parameter space is \QMAEXP-hard.
It is natural to ask whether we can prove containment in \QMAEXP or at least $\mathrm{P}^{\QMAEXP[\mathrm{const}]}$?
In particular, the fact that for the Hamiltonian constructed, the spectral gap is actually either $\geq 1/2$ or $\leq 1/\poly N$ suggests we might be able to distinguish the two cases by estimating the spectral gap to only constant precision in \cref{sec:containment} rather than the $1/\poly N$ precision we currently perform the algorithm to.
However, both of the local-global algorithms (used for determining the spectral gap or the order parameter at a given point, respectively) require knowledge about the ground state to the relevant precision.
In case we promise that the Hamiltonian's ground state energy can be resolved within a constant number of bits, containment in aforementioned stricter classes follows.
Naturally, this leaves open the question whether an algorithm exists that can answer the spectral gap or order parameter problems to constant precision without knowing the ground state energy to the same precision.

\paragraph{Precise Variant.}
As mentioned below the definition of 1-CRT-PRM, \cref{def:1-crt-prm}, there is a natural ``precise'' variant, Precise-1-CRT-PRM, where we want to approximate the critical point to exponential precision.
As explained in \cite[Th.~4.1]{Kohler2020}, for exponential precision one can allow the embedded computation to run for time $\exp \poly(L)$ in the size of the spin chain segment $L$; as such, one can extract exponentially many bits of the parameter $N$, and the distinction between translationally invariant and non-translationally-invariant models vanish.
In this case, as detailed in \cite[Cor.~29\&31]{Watson_Bausch_Gharibian_2020}, the APX-SIM variant is simply PSPACE-complete, by simulating a PSPACE computation within the history state.
As such, it follows that the Precise-1-CRT-PRM problem---and by a similar argument also Precise-2-CRT-PRM---are PSPACE-hard.
Containment in PSPACE, for a suitable definition of a local-global gap (that is now allowed to shrink exponentially in the system size),
follows from a precise variant of \citeauthor{Ambainis2013}'s algorithm to determine spectral gap; and because $\mathrm{P}^\mathrm{PreciseQMA} = \mathrm{P}^\mathrm{PSPACE} = \mathrm{PSPACE}$.
\cite{Deshpande_Gorshkov_Fefferman_2020}.

\paragraph{Open Questions.}
The following points are natural continuations of this line of work.
\begin{enumerate}
\item One major open question is to pin down the exact hardness of the 1-CRT-PRM problem, as we could show \QMAEXP hardness, but only an upper bound of \PQMAEXP, mostly because it is not obvious a priori how to reduce the algorithm that decides the Local-Global promise to constant precision with only constantly-many queries without negating them (as one would, otherwise, require co-\QMAEXP queries). In all likelihood, a problem variant as formulated in 1-CRT-PRM with the promise parameters $\alpha$ and $\beta$ provided as input is indeed contained in \QMAEXP, but that a more physically-motivated variant that just asks about the approximation to some precision (i.e., the non-decision variant; or a variant that just asks for the last bit of a poly-precision approximation to be 0 or 1) to be \PQMAEXP-hard.
\item The Knabe and Martingale methods for determining the spectral gap apply to frustration free Hamiltonians, but the Hamiltonians used to prove our results here are not.
Can we prove a similar result to ours for frustration free Hamiltonians, for the classes QMA$_1$EXP or similar?   
The class QMA$_1$ naturally characterises the Local Hamiltonian problem in where YES case correspond to frustration free Hamiltonians \cite{Bravyi_2006, Bravyi_Terhal_2010}.
\item In addition to proving a variant of this result for frustration free problems, it would be further interesting to see if there are frustration free constructions for the undecidability/uncomputability problems, or even hardness of the local Hamiltonian problem.
By previous results about the stability of spectral gaps \cite{Michalakis_Zwolak_2013}, these constructions may be stable to perturbations to the matrix elements.
\end{enumerate}
Last but not least, it is a natural question to ask whether the Hamiltonian constructed can be made more physically realistic.
While our couplings are translationally-invariant, the local spin dimension is not a variable we attempted to keep low; and even if it is finite, techniques such as those from \cite{Bausch2016,Bausch2017} or a reduction via perturbation-gadget-based universality results \cite{Piddock2020,Kohler2020} might be a viable way to bring the local dimension down considerably.

\section*{Acknowledgements}
We are grateful for early discussions with Ashley Montanaro, which prompted us to work on this idea, Benjamin Beri for a helpful conversation about finite-size criteria, and we gratefully acknowledge feedback on the final manuscript by Angelo Lucia, Emilio Onorati, and Sevag Gharibian.
J.\,D.\ W. is supported by
the EPSRC Centre for Doctoral Training in Delivering Quantum Technologies
(grant EP/L015242/1).
J.\,B.\ acknowledges support from the Draper's Research Fellowship at Pembroke College.

\printbibliography

\appendix

\end{document}